\newtheorem{theorem}{Theorem}[section]
\newtheorem{lemma}[theorem]{Lemma}
\newtheorem{definition}[theorem]{Definition}
\newtheorem{proposition}[theorem]{Proposition}
\newtheorem{claim}[theorem]{Claim}
\newtheorem{corollary}[theorem]{Corollary}
\newtheorem{observation}[theorem]{Observation}
\newtheorem{example}[theorem]{Example}
\newtheorem{problem}[theorem]{Problem}
\newtheorem{question}[theorem]{Question}
\newtheorem{remark}[theorem]{Remark}
\def\cont{\mathsf{cont}}
\def\Mon{\mathsf{Mon}}
\let\gm=\gamma
\let\ld=\lambda
\let\al=\alpha
\let\vf=\varphi
\let\dl=\delta
\let\Dl=\Delta
\let\om=\omega
\def\cE{\mathcal E}
\def\cF{\mathcal F}
\def\cD{\mathcal D}
\def\cB{\mathcal B}
\def\cP{\mathcal P}
\def\cV{\mathcal V}
\def\zC{\mathbb C}
\def\zE{\mathbb E}
\def\zZ{\mathbb Z}
\def\zR{\mathbb R}
\def\zN{\mathbb N}
\def\GF{\mathsf{GF}}
\def\zzero{\text{\Large\bf0}}
\def\oa{{\overline a}}
\let\ov=\overline
\let\sse=\subseteq
\let\tm=\times
\def\vc#1#2{#1_1,\dots,#1_{#2}}
\def\zd{,\dots,}
\def\rel{R}
\def\ang#1{\langle #1\rangle}
\def\NEQ{\mathsf{NEQ}}
\def\bs{\mathbf{s}}
\def\bv{\mathbf{v}}
\def\bx{\mathbf{x}}
\def\Id{\mathsf{Id}}
\def\Alg{\mathsf{Alg}}
\def\pr{\mathrm{pr}}
\newcommand{\Field}{\mathbb{F}}
\newcommand{\Real}{\mathbb{R}}
\newcommand{\Complex}{\mathbb{C}}
\newcommand{\CSP}{\textsc{CSP}}
\newcommand{\IMP}{\textsc{IMP}}
\newcommand{\xIMP}{\mbox{$\chi$\textsc{IMP}}}
\newcommand{\Pol}{\textsf{Pol}}
\newcommand{\Inv}{\textsf{Inv}}
\newcommand{\Term}{\textsf{Term}}
\newcommand{\Sos}{\textsf{SOS}}
\newcommand{\Sol}{\textsf{Sol}}
\newcommand{\I}{\emph{\texttt{I}}}
\def\zQ{\mathbb Q}
\let\Gm=\Gamma
\newcommand{\mc}[1]{\mathcal{#1}}
\newcommand{\mb}[1]{\mathbf{#1}}
\newcommand{\Variety}[1]{{\textbf{V}}\left( #1 \right)}
\newcommand{\Ideal}[1]{\left\langle #1 \right\rangle}
\newcommand{\Min}{\textsf{Min}}
\newcommand{\lex}{\textsf{lex }}
\newcommand{\grlex}{\textsf{grlex }}
\newcommand{\rank}{\textsf{rank}}
\newcommand{\coNPc}{\text{\textbf{coNP}-complete}}
\newcommand{\multideg}{\textnormal{multideg}}
\newcommand{\LM}{\textnormal{LM}}
\newcommand{\LT}{\textnormal{LT}}
\newcommand{\LC}{\textnormal{LC}}
\newcommand{\LCM}{\textnormal{lcm}}
\newcommand{\GB}{\text{Gr\"{o}bner} Basis}
\newcommand{\GBs}{\text{Gr\"{o}bner} Bases}
\newcommand{\reduce}[2]{{#1}\vert_{#2}}
\title{On the Complexity of CSP-based Ideal Membership Problems}
\author{Andrei A. Bulatov\thanks{{abulatov@sfu.ca}. Department of Computing Science, Simon Fraser University, Burnaby, BC, Canada. Research supported by an NSERC Discovery Grant.} \and Akbar Rafiey\thanks{{arafiey@sfu.ca}. Department of Computing Science, Simon Fraser University, Burnaby, BC, Canada. Research supported by NSERC.}}
\begin{document}
\date{}

\maketitle
\begin{abstract}
In this paper we consider the Ideal Membership Problem (IMP for short), in which we are given real polynomials $f_0,\vc fk$ and the question is to decide whether $f_0$ belongs to the ideal generated by $\vc fk$. In the more stringent version the task is also to find a proof of this fact. The IMP underlies many proof systems based on polynomials such as Nullstellensatz, Polynomial Calculus, and Sum-of-Squares. In the majority of such applications the IMP involves so called combinatorial ideals that arise from a variety of discrete combinatorial problems. This restriction makes the IMP significantly easier and in some cases allows for an efficient algorithm to solve it.

The first part of this paper follows the work of Mastrolilli [SODA'19] who initiated a systematic study of IMPs arising from Constraint Satisfaction Problems (CSP) of the form $\CSP(\Gm)$, that is, CSPs in which the type of constraints is limited to relations from a set $\Gm$. 
%
We show that many CSP techniques can be translated to IMPs thus allowing us to significantly improve the methods of studying the complexity of the IMP. We also develop universal algebraic techniques for the IMP that have been so useful in the study of the CSP. This allows us to prove a general necessary condition for the tractability of the IMP, and three sufficient ones. The sufficient conditions include IMPs arising from systems of linear equations over $\GF(p)$, $p$ prime, and also some conditions defined through special kinds of polymorphisms.


Our work has several consequences and applications. First, we introduce a variation of the IMP and based on this propose a unified framework, different from the celebrated Buchberger's algorithm, to construct a bounded degree \GB. Our algorithm, combined with the universal algebraic techniques, leads to polynomial-time construction of \GB\ for many combinatorial problems. Second, we also study a recently posed question by O'Donnell [ITCS'17] on the bit complexity of sum-of-squares (SOS) proofs and their automatizability. We prove that for almost all the CSP-based ideals SOS proofs are automatizable which improves upon the result of Raghavendra and Weitz [ICALP'17]. Furthermore, in the case where the IMP part is well behaved, we propose automatizable SOS proofs of nonnegativity with much relaxed degree restrictions. 
Finally, we provide a unifying framework that studies (construction of) theta bodies of combinatorial problems through the universal algebra of the constraint languages and present several positive results for problems such as \textsc{Stable Set}, \textsc{Binary Matroids}, \textsc{H-Coloring}, \textsc{Min/Max Ones}, and \textsc{Strict CSPs}.

\end{abstract}


\clearpage

{
  \hypersetup{linkcolor=blue, linktocpage=true}
  \tableofcontents
}

\newpage

%
%

\section{Introduction}

\paragraph{The Ideal Membership Problem.}
The study of polynomial ideals and algorithmic problems related to them goes back to David Hilbert~\cite{hilbert1893vollen}. In spite of such a heritage, methods developed in this area  till these days keep finding a wide range of applications in mathematics and computer science. In this paper we consider the Ideal Membership Problem (IMP for short), in which the goal is to decide whether a given polynomial belongs to a given ideal. It underlies such proof systems as Nullstellensatz and Polynomial Calculus. 
 
To introduce the problem more formally, let $\Field$ be a field and $\Field[x_1,x_2,\dots,x_n]$ denote the ring of polynomials over $\Field$ with indeterminates $\vc xn$. In this paper $\Field$ is always the field of real or complex numbers. A set of polynomials $\I\sse\Field[x_1,x_2,\dots,x_n]$ is said to be an \emph{ideal} if it is closed under addition and multiplications by elements from $\Field[x_1,x_2,\dots,x_n]$. By the Hilbert Basis Theorem every ideal $\I$ has a finite generating set \cite{hilbertBasisThm}, that is, there exists $P=\{f_1,f_2,\dots,f_r\}\sse\Field[x_1,x_2,\dots,x_n]$ such that for every $f_0\in\Field[x_1,x_2,\dots,x_n]$ the polynomial $f_0$ belongs to $\I$ if and only if there exists a \emph{proof}, that is, polynomials $\vc hr$ such that the identity $f_0=h_1f_1+\dots+h_rf_r$ holds. Such proofs will also be referred to as \emph{ideal membership} proofs. We then write $\I=\langle P\rangle$. The Hilbert Basis Theorem allows one to state the IMP as follows: given polynomials $f_0,\vc fr$ decide whether there exists a proof that $f_0\in\langle\vc fr\rangle$.

In many cases combinatorial or optimization problems can be encoded as collections of polynomials, and the problem is then reduced to proving or refuting that some polynomial vanishes at specified points or is nonnegative at those points. Polynomial proof systems can then be applied to find a proof or a refutation of these facts. Polynomial Calculus and Nullstellensatz proof systems are some of the standard techniques to check for zeroes of a polynomial, and Sum-of-Squares (\Sos) allows to prove or refute the nonnegativity of a polynomial. We may be interested in the length or degree of a proof in one of those systems. Sometimes such proofs can also be efficiently found --- such proof systems are referred to as \emph{automatizable} --- and in those cases we are also concerned with the complexity of finding a proof. 

The general IMP is a difficult problem and it is not even obvious whether or not it is decidable. The decidability was established in \cite{hermann1926frage,richman1974constructive,seidenberg1974constructions}. Then Mayr and Meyer~\cite{mayr1982complexity} were the first to study the complexity of the IMP. They proved an exponential space lower bound for the membership problem for ideals generated by polynomials with integer and rational coefficients. Mayer~\cite{Mayr89} went on establishing an exponential space upper bound for the $\IMP$ for ideals over $\zQ$, thus proving that such IMPs are \textbf{EXPSPACE}-complete. The source of hardness here is that a proof that $f_0\in\ang P$ may require polynomials of exponential degree. In the cases when the degree of a proof has a linear bound in the degree of $f_0$, the IMP can be solved efficiently. (There is also the issue of exponentially long coefficients that we will mention later.)

\paragraph{Combinatorial ideals.}
By Hilbert's Nullstellensatz  polynomial ideals can often be characterized by the set of common zeroes of all the polynomials in the ideal. Such sets are known as \emph{affine varieties} and provide a link between ideals and combinatorial problems, where an affine variety corresponds to the set of feasible solutions of a problem. Combinatorial problems give rise to a fairly narrow class of ideals known as \emph{combinatorial} ideals. The corresponding varieties are finite, and therefore the ideals itself are zero-dimensional and radical. The former implies that the IMP can be decided in single-exponential time~\cite{DickensteinFGS91}, while the latter will be important later for IMP algorithms. Indeed, if the IMP is restricted to radical ideals, it is equivalent to (negation of) the question: given $f_0,\vc fr$ does there exists a zero of $\vc fr$ that is not a zero of $f_0$.

The special case of the $\IMP$ with $f_0=1$ has been studied for combinatorial problems in the context of lower bounds on Polynomial Calculus and Nullstellensatz proofs, see e.g. \cite{BeameIKPP94,BussP96,Grigoriev98}. A broader approach of using polynomials to represent finite-domain constraints has been explored in  \cite{CleggEI96,JeffersonJGD13}. Clegg et al., \cite{CleggEI96}, discuss a propositional proof system based on a bounded degree version of Buchberger's algorithm~\cite{BUCHBERGER2006475} for finding proofs of unsatisfiability. Jefferson et al., \cite{JeffersonJGD13} use a modified form of Buchberger's algorithm that can be used to achieve the same benefits as the local-consistency algorithms which are widely used in constraint processing.

\paragraph{Complexity of the IMP and its applications in other proof systems.}
Whenever the degree of a proof $\vc hr$ is bounded, that is, the degree of each $h_i$ is bounded by a constant, there is an LP or SDP program of polynomial size whose solutions are the coefficients of the proof. If in addition the solution of the LP or SDP program can be represented by a polynomial number of bits (thus having low \emph{bit complexity}), a proof can be efficiently found. This property also applies to \Sos\ and provides one of the most powerful algorithmic methods for optimization problems. 

It was recently observed by O'Donnell~\cite{ODonnell17} that low degree of proofs does not necessarily implies its low bit complexity. More precisely, he presented a collection of polynomials and a polynomial such that there are low degree proofs of nonnegativity for these polynomials, that is, there also exists a polynomial size SDP whose solutions represent a \Sos\ proof of that. However, the size of those solutions are always exponential (or the proof has high bit complexity), and therefore the Ellipsoid method will take exponential time to find them. It therefore is possible that every low degree \Sos\ proof has high bit complexity. Raghavendra and Weitz \cite{RaghavendraW17} also demonstrated an example showing that this is the case even if all the constraints in the instance are Boolean, that is, on a 2-element set. 

The examples of O'Donnell and Raghavendra-Weitz indicate that it is important to identify conditions under which low degree proofs (Nullstellensatz, Polynomial Calculus, or \Sos) exist and also have low bit complexity. Raghavendra and Weitz \cite{RaghavendraW17} suggested some sufficient conditions of this kind for \Sos\ proofs that are satisfied for a number of well studied problems such as Matching, TSP, and others. More precisely, they formulate three conditions that a polynomial system ought to satisfy to yield for a low bit complexity \Sos\ proof. Two of these conditions hold for the majority of combinatorial ideals, while the third, the low degree of Nullstellensatz, is the only nontrivial one. Noting that Nullstellensatz proofs are basically witnesses of ideal membership, the IMP is at the core of all the three proof systems. 

\paragraph{IMP and CSP.}
The Constraint Satisfaction Problem (CSP) provides a general framework for a wide range of combinatorial problems. In a CSP we are given a set of variables and a collection of constraints, and we have to decide whether the variables can be assigned values so that all the constraints are satisfied. Following \cite{JeffersonJGD13,vandongenPhd,Mastrolilli19} every CSP can be associated with a polynomial ideal. Let CSP $\cP$ be given on variables $\vc xn$ that can take values from a set $D=\{0,\dots,t-1\}$. The ideal $\I(\cP)$ of $\Field[\vc xn]$ whose corresponding variety equals the set of solutions of $\cP$ is constructed as follows. First, for every $x_i$ the ideal $\I(\cP)$ contains a \emph{domain} polynomial $f_D(x_i)$ whose zeroes are precisely the elements of $D$. Then for every constraint $R(x_{i_1},\dots, x_{i_k})$, where $R$ is a predicate on $D$, the ideal $\I(\cP)$ contains a polynomial $f_R(x_{i_1},\dots, x_{i_k})$ that interpolates $R$, that is, for $(x_{i_1},\dots, x_{i_k})\in D^k$ it holds $f_R(x_{i_1},\dots, x_{i_k})=0$ if and only if $R(x_{i_1},\dots, x_{i_k})$ is true. This model generalizes a number of constructions used in the literature to apply Nullstellensatz or \Sos\  proof systems to combinatorial problems, see, e.g., \cite{BeameIKPP94,BussP96,Grigoriev98,RaghavendraW17}. 

The construction above also provides useful connections between the IMP and the CSP. For instance, the CSP $\cP$ is unsatisfiable if and only if the variety associated with $\I(\cP)$ is empty, or equivalently, if and only if $1\in\I(\cP)$ (1 here denotes the polynomial of degree 0). In this sense it is related to the standard decision version of the CSP. However, since $\I(\cP)$ is radical for any instance $\cP$, the \IMP\ reduces to verifying whether every point in the variety of $\I(\cP)$ is a zero of $f_0$. Thus, it is probably closer to the CSP Containment problem (given two CSP instances over the same variables, decide if every solution of the first one is also a solution of the second one), which has mainly been studied in the context of Database theory and Conjunctive Query Containment, see, e.g., \cite{Kolaitis98:conjunctive}. 

Studying CSPs in which the type of constraints is restricted has been a very fruitful research direction. For a fixed set of relations $\Gm$ over a finite set $D$, also called a \emph{constraint language}, $\CSP(\Gm)$ denotes the CSP restricted to instances with constraints from $\Gm$, \cite{Schaefer78,FV98}. Feder and Vardi in \cite{FV98} conjectured that every CSP of the form $\CSP(\Gm)$ is either solvable in polynomial time or is \textbf{NP}-complete. This conjecture was recently confirmed by Bulatov and Zhuk in \cite{Bulatov17,Zhuk17}. Mastrolilli in \cite{Mastrolilli19} initiated a similar study of the IMP. Let $\IMP(\Gm)$ be the IMP restricted to ideals produced by instances from $\CSP(\Gm)$. As was observed above, the complement of $\CSP(\Gm)$ reduces to $\IMP(\Gm)$, and so $\IMP(\Gm)$ is \coNPc\ whenever $\CSP(\Gm)$ is \textbf{NP}-hard. Therefore the question posed in \cite{Mastrolilli19} is: 

\begin{problem}\label{prob:Mastrolilli}
For which constraint languages $\Gm$ is it possible to efficiently find a generating set for the ideal $\I(\cP)$, $\cP\in\CSP(\Gm)$, that allows for a low bit complexity proof of ideal membership?
\end{problem}

Mastrolilli \cite{Mastrolilli19} (along with \cite{Bharathi-Minority}) resolved this question in the case when $\Gm$ is Boolean language, that is, over the set $\{0,1\}$. He proved that in this case $\IMP(\Gm)$ is polynomial time solvable, and moreover ideal membership proofs can be efficiently found, too, for any Boolean $\Gm$ for which $\CSP(\Gm)$ is polynomial time solvable. However, $\IMP(\Gm)$ in \cite{Mastrolilli19} satisfies two restrictions. First, the result is obtained under the assumption that $\Gm$ contains the constant relations that allows one to fix a value of a variable. Mastrolilli called such languages \emph{idempotent}. We will show that this suffices to obtain a more general result. Second, in the majority of cases a bound on the degree of the input polynomial $f_0$ has to be introduced. The IMP where the input polynomial has degree at most $d$ will be denoted by $\IMP_d,\IMP_d(\Gm)$. The exact result is that for any idempotent Boolean language $\Gm$ the problem $\IMP_d(\Gm)$ is polynomial time solvable for any $d$ when $\CSP(\Gm)$ is polynomial time solvable, and an ideal membership proof can be efficiently found, and $\IMP_0(\Gm)$ is \coNPc\ otherwise. We will reflect on the distinction between $\IMP$ and $\IMP_d$ later in the paper, but do not go deeply into that. The case when $\CSP(\Gm)$ is equivalent to solving systems of linear equations modulo~2 was also considered in \cite{Bharathi-Minority} fixing a gap in \cite{Mastrolilli19}. There has been very little work done on $\IMP(\Gm)$ beyond 2-element domains. The only result we are aware of is \cite{Bharathi-Dual-Disc} that proves that $\IMP_d(\Gm)$ is polynomial time when $\Gm$ is on a 3-element domain and is invariant under the so-called dual-discriminator operation, which imposes very strong restrictions on the relations from $\Gm$; we will discuss this case in greater details in Section~\ref{sec:dual-discriminator}.

The main tool for proving the tractability of $\IMP(\Gm)$ is constructing a \GB\ of the corresponding ideal. It is not hard to see that the degree of polynomials in a \GB\ of an ideal of $\Field[\vc xn]$ that can occur in $\IMP(\Gm)$ is only bounded by $n|D|$, where $\Gm$ is over a set $D$. Therefore the basis and polynomials themselves can be exponentially large in general. In fact, this is the main reason why considering $\IMP_d(\Gm)$ instead makes the problem easier. For solving this problem it suffices to find a \emph{$d$-truncated} \GB, in which the degree of polynomials is bounded by $d$, and so such a basis always has polynomial size. Thus, the (possible) hardness of $\IMP_d(\Gm)$ is due to the hardness of constructing a \GB.

\subsubsection*{Our contribution}

In this paper we expand on \cite{Mastrolilli19} and \cite{Bharathi-Dual-Disc,Bharathi-Minority} in several ways. We consider $\IMP(\Gm)$ for languages $\Gm$ over arbitrary finite set and attempt to obtain general results about such problems. However, we mainly focus on a slightly different problem than Problem~\ref{prob:Mastrolilli}.

\begin{problem}\label{prob:us}
For which constraint languages $\Gm$ the problem $\IMP(\Gm)$ [or $\IMP_d(\Gm)$] can be solved in polynomial time?
\end{problem}

Note that answering whether $f_0$ belongs to a certain ideal does not necessarily mean finding an ideal membership proof of that. However, we will argue that, firstly, in many applications this is the problem we need to solve and therefore our results apply. Secondly, in Section~\ref{sec:ximp-truncated} we will show that in the majority of cases if the existence of an ideal membership proof can be efficiently decided, such a proof can also be efficiently found.

\paragraph{Expanding the constraint language.}
Firstly, in Section~\ref{sec:IMPs} we study reductions between IMP's when the language $\Gm$ is enlarged in certain ways. Let $\Gm$ be a constraint language over a set $D$. By $\Gm^*$ we denote $\Gm$ with added \emph{constant} relations, that is, relations of the form $\{(a)\}$, $a\in D$. Imposing such a constraint on a variable $x$ essentially fixes the allowed values of $x$ to be $a$. First, we prove that adding constant relations does not change the complexity of the IMP.

\begin{theorem}\label{the:adding-constants-intro}
For any $\Gm$ over $D$ the problem $\IMP(\Gm^*)$ is polynomial time reducible to $\IMP(\Gm)$, and for any $d$ the problem $\IMP_d(\Gm^*)$ is polynomial time reducible to $\IMP_{d+|D|(|D|-1)}(\Gm)$.
\end{theorem}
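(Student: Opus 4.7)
The plan is to simulate each constant constraint $\{(a)\}(x_i)$ of $\cP^*$ inside the target polynomial rather than inside the instance. Intuitively, dropping the constant constraints from $\cP^*$ yields an instance $\cP\in\CSP(\Gm)$ whose variety is too large: it also contains points in which the once-constrained variables take values other than the prescribed ones. To cancel the spurious solutions I would multiply $f_0$ by an indicator polynomial that is nonzero on the prescribed assignments and zero on every other point of the larger variety.

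Concretely, first handle the trivial case: if some variable is subject to two conflicting constant constraints then $\I(\cP^*)$ is the unit ideal and every $f_0$ belongs to it, so the reduction outputs a canonical YES instance. Otherwise I would group together all variables sharing a common constant value $a\in D$ and rename each group by a single fresh symbol $y_a$; since there are at most $|D|$ distinct values, at most $|D|$ fresh variables $y_{a_1},\dots,y_{a_m}$ appear. Let $\cP$ be the instance obtained from $\cP^*$ by performing this identification inside every non-constant constraint and then deleting every constant constraint. Because only variable names change, $\cP$ is an instance of $\CSP(\Gm)$. For each $a\in D$ define the selector $q_a(y)=\prod_{b\in D\setminus\{a\}}(y-b)$, which, on $D$, is nonzero precisely at $a$. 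Finally set $p=\prod_{j=1}^m q_{a_j}(y_{a_j})$, $g_0=f_0\cdot p$, and output the pair $(\cP,g_0)$.

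For correctness I would use that $\I(\cP^*)$ and $\I(\cP)$ are radical (their varieties are finite), so ideal membership reduces to vanishing on the variety. On a solution of $\cP$ in which $y_{a_j}=a_j$ for every $j$, the factor $p$ evaluates to a nonzero constant and the restriction to the original variables is a solution of $\cP^*$; on any other solution of $\cP$ at least one selector $q_{a_j}(y_{a_j})$ vanishes, killing $p$ and therefore $g_0$. It follows that $g_0$ vanishes on the variety of $\I(\cP)$ if and only if $f_0$ vanishes on the variety of $\I(\cP^*)$, which gives the desired equivalence $g_0\in\I(\cP)\iff f_0\in\I(\cP^*)$. In particular, the unbounded-degree statement $\IMP(\Gm^*)\le\IMP(\Gm)$ falls out immediately.

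The only delicate point is the degree bound in the $\IMP_d$ version. Since $\deg q_a=|D|-1$ and $p$ is a product of at most $|D|$ selectors, $\deg p\le|D|(|D|-1)$, so $\deg g_0\le d+|D|(|D|-1)$, as required. The step one must execute carefully is exactly the identification of all constant-constrained variables sharing a common value: a naive variant that inserted one selector $q_{a_i}(x_i)$ per constant constraint would blow the degree up by roughly $|S|\cdot(|D|-1)$, which is not bounded by $|D|(|D|-1)$ when many variables are pinned. Collapsing variables by shared constant value is what caps the number of fresh $y_a$'s at $|D|$ and yields the stated degree bound, making the whole construction a polynomial-time reduction.
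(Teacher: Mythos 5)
Your construction is essentially identical to the paper's: it too merges all variables pinned to the same value $a$ into a single fresh variable, drops the pinning constraints, and multiplies the (renamed) input polynomial by the selector product $\prod_{a}\prod_{b\in D\setminus\{a\}}(x_a-b)$, then argues via radicality that membership is equivalent, with the same degree overhead $|D|(|D|-1)$. The only cosmetic point is that you should state explicitly (as the paper does) that the variable identification is also performed inside $f_0$, so that $g_0$ lives in the polynomial ring of the new instance; otherwise the argument is correct and matches the paper's proof.
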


Theorem~\ref{the:adding-constants-intro} has two immediate consequences. Since $\IMP(\Gm)$ is in \textbf{co-NP}, for any $\CSP(\Gm)$ instance $\cP$ there is always a proof that the input polynomial $f_0$ does not belong to the ideal $\I(\cP)$. Any solution of $\cP$ that is not a zero of $f_0$ will do. Finding such a proof may be treated as a search version of $\IMP(\Gm)$. Through self-reducibility, Theorem~\ref{the:adding-constants-intro} allows us to solve the search problem.

\begin{theorem}\label{the:search-intro}
Let $\Gm$ be such that $\IMP(\Gm)$ [$\IMP_{d+|D|(|D|-1)}(\Gm)$] is solvable in polynomial time. Then for any instance $(f_0,\cP)$ of $\IMP(\Gm)$ [$\IMP_d(\Gm)$] such that $f_0\not\in\I(\cP)$, a solution $\mb a$ of $\cP$ such that $f_0(\mb a)\ne0$ can also be found in polynomial time. 
\end{theorem}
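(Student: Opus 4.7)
The plan is a standard self-reducibility argument, where the oracle provided by Theorem~\ref{the:adding-constants-intro} lets us branch on individual variable assignments. Since Theorem~\ref{the:adding-constants-intro} reduces $\IMP(\Gm^*)$ to $\IMP(\Gm)$ in polynomial time (respectively, $\IMP_d(\Gm^*)$ to $\IMP_{d+|D|(|D|-1)}(\Gm)$), under the hypotheses of the theorem the decision problem $\IMP(\Gm^*)$ (resp.\ $\IMP_d(\Gm^*)$) is itself polynomial time solvable. This means we may freely impose constraints of the form ``$x_i=a$'' and still decide membership efficiently.

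I would then proceed by processing the variables $\vc xn$ one at a time. Start with $\cP_0:=\cP$. At step $i$, assume inductively that values $a_1,\dots,a_{i-1}$ have been committed via constant constraints to form $\cP_{i-1}\in\CSP(\Gm^*)$, with $f_0\notin\I(\cP_{i-1})$. For each $a\in D$, let $\cP_{i-1,a}$ be $\cP_{i-1}$ together with the constant constraint $x_i=a$, and call the oracle on $(f_0,\cP_{i-1,a})$. Pick any $a\in D$ for which the oracle reports $f_0\notin\I(\cP_{i-1,a})$, set $a_i:=a$ and $\cP_i:=\cP_{i-1,a}$. After $n$ iterations, the algorithm outputs $\mb a=(a_1,\dots,a_n)$. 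The total number of oracle calls is $n|D|$, so the procedure runs in polynomial time.

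For correctness, the key fact is that each ideal $\I(\cP_j)$ arising from a $\CSP(\Gm^*)$ instance is radical and zero-dimensional; indeed, by construction its variety is the finite solution set $\Sol(\cP_j)\sse D^n$, and $\I(\cP_j)$ is the vanishing ideal of that finite set. Consequently $f_0\in\I(\cP_j)$ if and only if $f_0$ vanishes on every element of $\Sol(\cP_j)$. Since $\Sol(\cP_{i-1})$ is the disjoint union over $a\in D$ of $\Sol(\cP_{i-1,a})$, if $f_0$ vanished on every $\Sol(\cP_{i-1,a})$ it would vanish on $\Sol(\cP_{i-1})$, contradicting $f_0\notin\I(\cP_{i-1})$. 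Hence a valid choice of $a$ always exists, and the invariant $f_0\notin\I(\cP_i)$ is maintained. In particular $\Sol(\cP_n)\ne\emptyset$, and since $\cP_n$ fixes every variable to a single value, $\Sol(\cP_n)=\{\mb a\}$; moreover $f_0(\mb a)\ne 0$, since otherwise $f_0$ would vanish on $\Sol(\cP_n)$.

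The only subtle point is degree control in the bounded-degree version. The input polynomial $f_0$ is unchanged throughout the self-reduction, so its degree is always $d$; the additional constant constraints live on single variables and do not affect $f_0$ at all. Thus the oracle is always invoked on $\IMP_d(\Gm^*)$ instances, and the additive blow-up ``$+|D|(|D|-1)$'' incurred by Theorem~\ref{the:adding-constants-intro} when translating to $\IMP(\Gm)$ happens once per oracle call, keeping everything within the assumed tractable regime $\IMP_{d+|D|(|D|-1)}(\Gm)$. I expect this bookkeeping around the degree parameter — rather than the self-reduction itself — to be the only non-routine point.
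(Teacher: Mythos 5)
Your proposal is correct and follows essentially the same self-reducibility argument as the paper's proof (Corollary~\ref{cor:search}): fix variables one at a time via constant constraints and query the $\IMP(\Gm^*)$ oracle obtained from Theorem~\ref{the:adding-constants-intro}, using radicality of the ideals to guarantee that a good value always exists. The only cosmetic difference is that the paper also substitutes the chosen value into $f_0$ at each step, whereas you leave $f_0$ untouched; both variants keep the degree at most $d$ and the argument goes through identically.
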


Theorem~\ref{the:adding-constants-intro} also provides a hint at a more plausible conjecture for which languages $\Gm$ the problem $\IMP(\Gm)$ or $\IMP_d(\Gm)$ is polynomial time. In particular, it allows to find an example of $\Gm$ such that $\CSP(\Gm)$ is tractable while $\IMP_d(\Gm)$ is not, even for a $\Gm$ on a 2-element set and $d=1$. Later we state some results that might indicate that $\IMP_d(\Gm)$ is polynomial time for every $\Gm$ such that $\CSP(\Gm^*)$ is polynomial time. Note that the structure of such CSPs is now very well understood.

Another way of expanding a constraint language is by means of \emph{primitive-positive (pp-) definitions} and \emph{pp-interpretations}, and it is at the core of the so-called algebraic approach to the CSP. A relation $\rel$ is said to be pp-definable in $\Gm$ if there is a first order formula $\Phi$ using only conjunctions, existential quantifiers, equality relation, and relations from $\Gm$ that is equivalent to $\rel$. Pp-interpretations are more complicated (see Section~\ref{sec:pp-interpretations}) and allow for certain encodings of $\rel$.

\begin{theorem}\label{the:pp-definitions-intro}
\begin{itemize}
    \item [(1)] Let $\Gm,\Dl$ be constraint languages over the same set $D$, $\Dl$ is finite, and every relation from $\Dl$ is pp-definable in $\Gm$. Then $\IMP(\Dl)$ is polynomial time reducible to $\IMP(\Gm)$ and $\IMP_d(\Dl)$ is polynomial time reducible to $\IMP_d(\Gm)$ for any $d$.
    \item[(2)] Let $\Gm,\Dl$ be constraint languages, $\Dl$ is finite, and $\Dl$ is pp-interpretable in $\Gm$. 
    Then there is a constant $k$ such that $\IMP_d(\Dl)$ is polynomial time reducible to $\IMP_{kd}(\Gm)$ for any $d$.
\end{itemize}
\end{theorem}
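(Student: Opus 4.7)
\noindent\textbf{Proof plan for Theorem~\ref{the:pp-definitions-intro}.}
For part~(1) I inline pp-definitions. Given an instance $(f_0,\cP)$ of $\IMP(\Dl)$, build $(f_0,\cP')$ of $\IMP(\Gm)$ by replacing each constraint $R(x_{i_1},\dots,x_{i_r})$ of $\cP$ with its pp-definition in $\Gm$: the existentially quantified variables of the pp-formula become fresh auxiliary $\Gm$-variables of $\cP'$, and the atoms of the formula are added as $\Gm$-constraints. The polynomial $f_0$ is left unchanged, so any degree bound on it is preserved, which takes care of the $\IMP_d$ reduction as well. By the very definition of pp-definability, the projection of $\Sol(\cP')$ onto the original variables equals $\Sol(\cP)$; since $f_0$ does not mention the fresh variables, $f_0$ vanishes on $\Sol(\cP)$ iff it vanishes on $\Sol(\cP')$. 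Both $\I(\cP)$ and $\I(\cP')$ are combinatorial and therefore radical, so vanishing on the variety is equivalent to ideal membership, giving $f_0\in\I(\cP)\Leftrightarrow f_0\in\I(\cP')$.

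For part~(2) fix a pp-interpretation of $\Dl$ over a domain $E$ in $\Gm$ over a domain $D$ of some dimension $\ell$: a pp-definable $F\sse D^\ell$ and a surjection $\pi\colon F\to E$ whose kernel and each preimage $\pi^{-1}(R)$, $R\in\Dl$, are pp-definable in $\Gm$. Given $(f_0,\cP)$ with $\Dl$-variables $\vc xn$, I construct $(f_0',\cP')$ as follows. Introduce for each $x_i$ a fresh tuple $\mathbf{y}_i=(y_{i,1},\dots,y_{i,\ell})$ of $\Gm$-variables, add $\Gm$-constraints enforcing $\mathbf{y}_i\in F$, and translate each $\Dl$-constraint $R(x_{i_1},\dots,x_{i_m})$ through the pp-definition of $\pi^{-1}(R)$ applied to $(\mathbf{y}_{i_1},\dots,\mathbf{y}_{i_m})$, inlining further existential quantifiers as in part~(1). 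By surjectivity of $\pi$, coordinatewise application of $\pi$ is a surjection from $\Sol(\cP')$ onto $\Sol(\cP)$.

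To obtain $f_0'$, identify $E$ with a subset of $\zR$ and, by Lagrange interpolation on $D^\ell$, fix a polynomial $g(y_1,\dots,y_\ell)\in\zR[y_1,\dots,y_\ell]$ of total degree at most $\ell(|D|-1)$ that agrees with $\pi$ on $F$. Let $f_0'$ be the polynomial obtained from $f_0$ by substituting $g(\mathbf{y}_i)$ for each occurrence of $x_i$; then $\deg f_0'\le \ell(|D|-1)\cdot\deg f_0$, so one can take the constant of the theorem to be $k=\ell(|D|-1)$. By construction, whenever $\mathbf b$ is a solution of $\cP'$ with $\pi$-image $\mathbf a$, we have $f_0'(\mathbf b)=f_0(\mathbf a)$; combined with the surjection above this shows $f_0$ vanishes on $\Sol(\cP)$ iff $f_0'$ vanishes on $\Sol(\cP')$. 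Radicality of the combinatorial ideals $\I(\cP)$ and $\I(\cP')$ then yields $f_0\in\I(\cP)\Leftrightarrow f_0'\in\I(\cP')$.

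The main technical obstacle is the construction and bookkeeping around $g$ and the degree bound: one needs a single polynomial $g$ of total degree $O(\ell|D|)$ chosen uniformly in the instance, so that substitution into $f_0$ inflates its degree by only a constant factor. A secondary nuisance is making sure that repeated occurrences of the same $\Dl$-variable translate consistently; this is automatic because every $x_i$ is encoded by one tuple $\mathbf{y}_i$ and $g$ is defined through $\pi$, so $g(\mathbf{y}_i)$ depends only on $\pi(\mathbf{y}_i)$. Everything else is a routine translation at the level of solution sets, legitimized on the ideal side by radicality of combinatorial ideals.
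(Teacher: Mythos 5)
Your proposal is correct and follows essentially the same route as the paper: part (1) inlines pp-definitions with fresh existential variables, keeps $f_0$ unchanged, and converts the solution-set correspondence into ideal membership via radicality (the paper packages this last step through Mastrolilli's elimination-ideal lemma and extension theorem, but the content is the same), while part (2) interpolates $\pi$ by a constant-degree polynomial and substitutes it into $f_0$, exactly as in the paper's proof, which merely factors the reduction through the intermediate language of preimage relations before invoking part (1). The only cosmetic difference is your degree constant $k=\ell(|D|-1)$ versus the paper's $\ell|E|$; both are constants depending only on the fixed interpretation, so the statement is unaffected.
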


The approach of Theorem~\ref{the:pp-definitions-intro} was first applied to various proof systems in \cite{Atserias19:proof}, although that work is mostly concerned with proof complexity rather than computational complexity. Mastrolilli \cite{Mastrolilli19} ventured into pp-definability without proving any reductions. In particular, the first part of Theorem~\ref{the:pp-definitions-intro} uses techniques from \cite{Mastrolilli19} for projections of ideals. It will later allow us to develop further universal algebra techniques for the IMP. The second part of that theorem will also work towards more powerful universal algebra methods.

Recall that according to \cite{RaghavendraW17} in order to find an \Sos\ proof one needs to be able to find Nullstellensatz proofs efficiently. The reductions from Theorems~\ref{the:adding-constants-intro}, \ref{the:pp-definitions-intro} do not always allow to recover such a proof efficiently only confirming such a proof exists, and therefore cannot be directly used in the conditions from \cite{RaghavendraW17}. However, in Section~\ref{sec:SOS} we address this issue. 


\paragraph{Polymorphisms and sufficient conditions for tractability.}
Recall that a \emph{polymorphism} of a constraint language $\Gm$ over a set $D$ is a multi-ary operation on $D$ that can be viewed as a multi-dimensional symmetry of relations from $\Gm$. By $\Pol(\Gm)$ we denote the set of all polymorphisms of $\Gm$. As in the case of the CSP, Theorem~\ref{the:pp-definitions-intro} implies that polymorphisms of $\Gm$ is what determines the complexity of $\IMP(\Gm)$. In Section~\ref{sec:polymorphisms} we show the following.

\begin{corollary}\label{cor:polymorphisms-intro}
Let $\Gm,\Dl$ be constraint languages over the same set $D$, $\Dl$ is finite. If $\Pol(\Gm)\sse\Pol(\Dl)$, then $\IMP(\Dl)$ is polynomial time reducible to $\IMP(\Gm)$ and $\IMP_d(\Dl)$ is polynomial time reducible to $\IMP_d(\Gm)$ for any $d$.
\end{corollary}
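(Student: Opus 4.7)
The plan is to derive this corollary as a direct consequence of Theorem~\ref{the:pp-definitions-intro}(1) combined with the classical Inv--Pol Galois connection. Recall the theorem of Geiger (and, independently, Bodnarchuk, Kaluzhnin, Kotov, and Romov): for a constraint language $\Gm$ over a finite set $D$, a relation $R$ on $D$ is invariant under every operation in $\Pol(\Gm)$ if and only if $R$ is pp-definable from $\Gm$ (using only conjunctions, existential quantifiers, and the equality relation).

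First I would spell out the consequence of the hypothesis $\Pol(\Gm)\sse\Pol(\Dl)$: every relation $R\in\Dl$ is preserved by every operation in $\Pol(\Gm)$, so $R\in\Inv(\Pol(\Gm))$. Geiger's theorem then supplies, for each such $R$, a primitive positive formula $\Phi_R$ over $\Gm$ defining $R$. Because $\Dl$ is finite, only finitely many such formulas are needed, each involving only finitely many atoms from $\Gm$; in particular their sizes are bounded by a constant depending only on $\Gm$ and $\Dl$.

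At this point the hypotheses of Theorem~\ref{the:pp-definitions-intro}(1) are satisfied: $\Gm,\Dl$ are over the same set $D$, $\Dl$ is finite, and every relation from $\Dl$ is pp-definable in $\Gm$. Applying that theorem gives the polynomial-time reductions $\IMP(\Dl)\le_P\IMP(\Gm)$ and $\IMP_d(\Dl)\le_P\IMP_d(\Gm)$ claimed in the corollary.

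The only real content is the invocation of Geiger's theorem, so there is no serious obstacle; the main subtlety to verify is that the Galois result indeed produces pp-definitions in the same signature required by Theorem~\ref{the:pp-definitions-intro}(1) (namely, conjunction, existential quantification, and equality — with no constants, since we are only assuming $\Pol(\Gm)\sse\Pol(\Dl)$ rather than the idempotent version $\Pol(\Gm^*)\sse\Pol(\Dl^*)$). Once this is noted, the reduction is entirely uniform in the instance, as the pp-definitions of the finitely many relations in $\Dl$ are fixed constants that can be hard-wired into the reduction.
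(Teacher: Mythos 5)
Your proposal is correct and matches the paper's own argument: the paper proves this (as Corollary~\ref{cor:imp-polymorphisms}) by combining the Galois correspondence $\Pol(\Gm)\sse\Pol(\Dl)\Leftrightarrow\Gm$ pp-defines $\Dl$ (Theorem~\ref{the:galois}, i.e.\ Geiger/BKKR) with the pp-definability reduction of Theorem~\ref{pp-define-reduction}. Your added remark that finiteness of $\Dl$ makes the pp-definitions fixed constants hard-wired into the reduction is exactly the point implicitly used there.
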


Corollary~\ref{cor:polymorphisms-intro} allows us to represent \IMP s through polymorphisms and classify the complexity of \IMP s according to the corresponding polymorphisms. The method has been initiated by Mastrolilli and Bharathi \cite{Bharathi-Dual-Disc,Bharathi-Minority,Mastrolilli19}, although mainly for 2- and one case of 3-element sets. We apply this approach to obtain three sufficient conditions for tractability of the \IMP. 

\begin{theorem}\label{the:sufficient-intro}
Let $\Gm$ be a constraint language over a set $D$. Then if one of the following conditions holds, $\IMP_d(\Gm)$ is polynomial time solvable for any $d$.
\begin{enumerate}
\item
$\Gm$ has the \emph{dual-discriminator} polymorphism (i.e.\ a ternary operation $g$ such that $g(x,y,z)=x$ unless $y=z$, in which case $g(x,y,z)=y$);
\item
$\Gm$ has a \emph{semilattice} polymorphism (i.e.\ a binary operation $f$ such that $f(x,x)=x$, $f(x,y)=f(y,x)$, and $f(f(x,y),z)=f(x,f(y,z))$);
\item
$|D|=p$, $p$ prime, and $\Gm$ has an \emph{affine} polymorphism modulo $p$ (i.e.\ a ternary operation $h(x,y,z)=x\ominus y\oplus z$, where $\oplus,\ominus$ are addition and subtraction modulo $p$, or, equivalently, of the field $\GF(p)$). In this case every CSP can be represented as a system of linear equations over $\GF(p)$.
\end{enumerate}
\end{theorem}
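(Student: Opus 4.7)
My plan is to prove all three cases by the same high-level scheme: given a CSP instance $\cP\in\CSP(\Gm)$, construct in polynomial time a $d$-truncated Gr\"obner basis of $\I(\cP)$ (with respect to some fixed monomial order), and then answer $\IMP_d$ by running the standard multivariate division procedure on $f_0$. Before constructing the basis I would first invoke Theorem~\ref{the:adding-constants-intro} to assume $\Gm$ is idempotent, i.e.\ contains all constant relations; this is sound in every case because adding constants preserves the polymorphisms in the three conditions. Corollary~\ref{cor:polymorphisms-intro} then lets me replace $\Gm$ by any pp-definable expansion with the same polymorphism clone, which in each case pins down a canonical syntactic form for the relations from $\Gm$.

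For the \textbf{affine} case with $|D|=p$ prime, idempotent relations invariant under $x\ominus y\oplus z$ are exactly cosets of subgroups of $(\zZ/p\zZ)^k$, so an instance $\cP$ is equivalent to a system $Ax=b$ over $\GF(p)$. I would row-reduce $A$ to obtain a basis in which each dependent variable $x_i$ is expressed as an affine $\GF(p)$-combination of independent variables. The ideal $\I(\cP)$ is then generated by the domain polynomials $x_j^p-x_j$ (or $\prod_{a\in D}(x_j-a)$) together with the Lagrange interpolants of these affine dependences, and after rewriting the affine dependences as polynomial equalities on $D$ via interpolation one checks that, under the lex order consistent with the row echelon form, these polynomials plus all their products truncated to degree $d$ form a $d$-truncated Gr\"obner basis; the polynomial-size bound and the $S$-polynomial checks here generalize the $\GF(2)$ Minority analysis of \cite{Bharathi-Minority}.

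For the \textbf{semilattice} case I would exploit that solution sets of $\cP$ are closed under a meet (up to renaming), so enforcing pairwise consistency (or its generalization, $(2,3)$-minimality) does not lose solutions; this reduces to a binary language whose relations are subdirect products of subalgebras, and the resulting instance has a distinguished unique minimal solution in each component. The generators I would put into the basis are the domain polynomials for each variable together with, for each pair $(x_i,x_j)$ appearing in a binary constraint, the interpolating polynomial of the binary relation, all reduced modulo the domain polynomials; semilattice closure gives that $S$-polynomials of these generators already reduce to $0$ modulo lower-degree generators, yielding the truncated Gr\"obner basis. The \textbf{dual-discriminator} case follows the same pattern: relations invariant under the dual discriminator are conjunctions of unary constraints and binary "permutational" constraints (see Section~\ref{sec:dual-discriminator}), so after establishing pair-consistency, $\I(\cP)$ is generated by univariate and bivariate interpolants whose $S$-polynomials collapse.

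The main obstacle I anticipate is the affine case: bounding the bit complexity and degree of the polynomials that appear during Buchberger-style completion, and proving that only polynomially many products of the affine generators (truncated to degree $d$) must be added before closure under $S$-polynomial reduction is reached. The $\GF(2)$ argument uses heavily that $x^2=x$ so every variable appears with degree at most $1$; over $\GF(p)$ one must carefully track monomials of degree up to $p-1$ in each variable and argue that multiplication by the truncated variety of linear dependences still terminates in polynomial time. The semilattice and dual-discriminator cases are easier because their relations have low arity after enforcing local consistency, so the Gr\"obner analysis is essentially bivariate and mirrors the two-element treatments in \cite{Mastrolilli19,Bharathi-Dual-Disc}.
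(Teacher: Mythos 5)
Your plan has genuine gaps in two of the three cases. For the \textbf{affine mod $p$} case, the route you propose --- Buchberger-style completion on the domain polynomials together with ``Lagrange interpolants of the affine dependences'' --- breaks down at exactly the point you flag as the ``main obstacle'': the real-polynomial interpolant of a single mod-$p$ linear dependence on $D^n$ has, in general, exponentially many monomials, so your generating set cannot even be written down in polynomial time (this is precisely the issue the paper notes was overlooked in \cite{Mastrolilli19} and repaired in \cite{Bharathi-Minority} for $p=2$); moreover, truncating such generators or their products to degree $d$ does not produce a $d$-truncated Gr\"obner basis in grlex, since the low-degree elements of a Gr\"obner basis of $\I(\cP)$ need not arise that way. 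The paper resolves this by a completely different mechanism: an FGLM-style basis conversion (Algorithm~\ref{alg:FGLM}) starting from the \emph{implicit} lex basis $\{x_i-f_i\}$ in which the $f_i$ are kept as mod-$p$ linear expressions rather than expanded real polynomials, and whose correctness hinges on a new linear-algebra result (Proposition~\ref{pro:linear-independence-intro}, Theorem~\ref{the:lin-independent-MainBody}) that the family $\cF_k$ of $p$-expressions is a basis of the space of functions $D^{k+1}\to\Complex$, proved by an eigenvalue analysis of block-circulant Kronecker-sum matrices. Acknowledging the obstacle is not the same as overcoming it; without this ingredient (or a substitute) your argument does not establish case~3.

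For the \textbf{semilattice} case your reduction to binary constraints is unsound: relations invariant under a semilattice operation are \emph{not} determined by their binary projections --- that decomposition is a majority (near-unanimity) phenomenon, which is exactly why Proposition~\ref{prop:majority-to-binary} licenses it only in the dual-discriminator case. For instance, Boolean Horn clauses of arity $3$ are $\min$-closed but strictly contain the conjunction of their binary projections, so replacing constraints by binary projections enlarges $\Variety{\I(\cP)}$, shrinks the ideal, and can flip the membership answer. The paper instead embeds the semilattice $(D,\psi)$ into a power of the two-element semilattice (Theorem~\ref{embed-to-boolean}), reduces $\IMP_d(\Gm)$ via pp-interpretability (Lemma~\ref{lem:HSP-pp-definitions}, Theorem~\ref{the:pp-interpretability}) to a Boolean instance of bounded degree, and invokes Mastrolilli's Boolean semilattice theorem; no direct Gr\"obner basis over $D$ is constructed. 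Your dual-discriminator sketch is closest to the paper, but the classification is complete/permutation/two-fan constraints, and the paper deliberately excludes permutation constraints from the Gr\"obner analysis: it eliminates them by substituting an interpolating polynomial into $f_0$ (Lemma~\ref{lem:no-permutation}) and only then verifies Buchberger's criterion under arc consistency for the remaining generators; your claim that the $S$-polynomials ``collapse'' with permutation interpolants present is asserted, not shown.
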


The three polymorphisms occurring in Theorem~\ref{the:sufficient-intro} have played an important role in the CSP research. For one reason they completely cover the tractable cases when $|D|=2$ and therefore the results of \cite{Mastrolilli19,Bharathi-Minority}, although we used some results (on semilattice polymorphisms) from \cite{Mastrolilli19}. 

For the first part of Theorem~\ref{the:sufficient-intro} (also known as \emph{0/1/all constraints}) we come up with a technique of preprocessing the input polynomial $f_0$ that allows us to get rid of permutation constraints and greatly simplify the proof for the dual-discriminator polymorphism, including the special case $|D|=3$ considered in \cite{Bharathi-Dual-Disc}. 

In the second part of Theorem~\ref{the:sufficient-intro} we use the fact, see \cite{Papert64:congruence}, that any language with a semilattice polymorphism is pp-interpretable in a language on a 2-element set also having a semilattice polymorphism. Then Theorem~\ref{the:sufficient-intro}(2) follows from Theorem~\ref{the:pp-definitions-intro}(2) and the results of \cite{Mastrolilli19}. 

As is mentioned, the third part of Theorem~\ref{the:sufficient-intro} is in fact about systems of linear equations over $\GF(p)$, since every instance of $\CSP(\Gm)$, where $\Gm$ has an affine polymorphism, is equivalent to a system of linear equations over $\GF(p)$. Bharathi and Mastrolilli solved this case for $p=2$ showing that $\IMP_d(\Gm)$ is polynomial time for any $d$. Their approach is based on FGML algorithm \cite{FGLM} to construct a bounded degree \GB. Instead, we map an instance of $\IMP(\Gm)$ on a different domain consisting of $p$-th roots of unity rather than $\{0\zd p-1\}$. This transforms the generators of the ideal into very simple polynomials that form a \GB\ without any further modifications. Such a transformation makes it difficult to find an ideal membership proof, but this is resolved in Section~\ref{sec:ximp-truncated}.

\paragraph{Algebras.}
In Section~\ref{sec:algebraic} we prove that the standard features of the universal algebraic approach to the CSP work for \IMP\ as well. These include reductions for standard algebraic constructions such as \emph{subalgebras}, \emph{direct powers}, and \emph{homomorphic images}. They easily follow from Theorem~\ref{the:pp-definitions-intro}(2). A more general construction of direct product requires a more general version of $\CSP(\Gm)$, and therefore of $\IMP(\Gm)$, the multi-sorted one, in which every variable can have its own domain of values. We do not venture into that direction in the paper. One implication of these results is a necessary condition for tractability of $\IMP(\Gm)$ that follows from a similar one for the CSP.

\paragraph{The IMP with indeterminate coefficients.}
In the second part of the paper we consider a number of applications of the techniques developed in the first part. The key to those applications is an extension of the IMP defined as follows. Given an ideal $\I\sse\Field[\vc x n]$ and a vector of $\ell$ polynomials $M=(g_1,\dots,g_\ell)$, the \xIMP~asks if there exist coefficients $\mathbf{c}=(\vc c \ell)\in \Field^\ell$ such that $\mathbf{c}M=\sum_{i=1}^\ell c_i g_i$ belongs to the ideal $\I$. 

As with the regular IMP, \xIMP\ can be parametrized by specifying a constraint language $\Gm$, in which case the resulting problem \xIMP$(\Gm)$ (or \xIMP${}_d(\Gm)$ if the degree of input polynomials is bounded) only allows ideal produced by instances of $\CSP(\Gm)$. 

We prove that \xIMP\ can be solved in polynomial time when a ($d$-truncated) \GB\ can be efficiently generated, and also admits the same reductions as the IMP.

\begin{theorem}
Let $\Gm,\Dl$ be constraint languages and $\Dl$ is finite. Then 
\begin{itemize}
    \item [(1)] If every relation from $\Dl$ is pp-definable in $\Gm$, then $\chi\IMP(\Dl)$ is polynomial time reducible to $\chi\IMP(\Gm)$ and $\chi\IMP_d(\Dl)$ is polynomial time reducible to $\chi\IMP_d(\Gm)$ for any $d$.
    \item[(2)] If $\Dl$ is pp-interpretable in $\Gm$, 
    then there is a constant $k$ such that $\chi\IMP_d(\Dl)$ is polynomial time reducible to $\chi\IMP_{kd}(\Gm)$ for any $d$.
\end{itemize}
\end{theorem}

The theorem above allows us to show that for every $\Gm$ for which $\IMP_d(\Gm)$ is polynomial time solvable, so is \xIMP${}_d(\Gm)$. This includes constraint languages invariant under dual-discriminator, semilattice, and affine polymorphisms. 

In Section~\ref{sec:ximp-truncated} we use \xIMP\ along with the factor ring $\Field[\vc xn]/\I$ modulo an ideal $\I$ to generate a basis for the factor ring consisting of monomials of degree at most $d$, and then use it to construct a $d$-truncated \GB\ for $\I$. More precisely, we prove the following.

\begin{theorem}\label{thm:GB+xIMP-intro}
    Let $\mc{H}$ be a class of ideals for which $\chi\IMP_d$ is polynomial time solvable. Then there exists a polynomial time algorithm that constructs a degree $d$ \GB~of an ideal $\I\in \mc{H}$, $\I\subseteq\Field[\vc xn]$, in time $O(n^d)$.  
\end{theorem}

Theorem~\ref{thm:GB+xIMP-intro} makes it possible to construct $d$-truncated \GBs\ in all cases $\IMP(\Gm)$ is known to be polynomial time. Thus, it basically eliminates the gap between deciding the existence of an ideal membership proof and finding such a proof.

\paragraph{IMP and SOS proofs.}
We apply \xIMP\ and algebraic techniques to finding Sum-of-Squares (SoS) proofs. For variables $\vc xn$ a semialgebraic set is given by a collection of polynomial equalities and inequalities such as 
\begin{align*}
    S=\{\bx\in \zR^n\mid p_1(\bx)=0,\dots,p_{m}(\bx)=0, q_{1}(\bx)\geq 0,\dots,q_{\ell}(\bx)\geq 0\}.
\end{align*}
The goal is to proof that some polynomial $r(\bx)$ is nonnegative on $S$. An \Sos\ proof of $r(\bx) \geq 0$ is given by a polynomial identity of the form
\begin{align}
\label{eq:main-body-sos-proof}
    r(\bx) = \sum_{i=1}^{t_0} h_i^2(\bx) + \sum_{k=1}^\ell (\sum_{j=1}^{t_k} s_j^2(\bx)) q_k(\bx) + \sum_{i=1}^m \lambda_i(\bx) p_i(\bx).
\end{align}
The degree of an \Sos\ proof is often defined to be the maximum degree of the polynomials involved in the proofs i.e., $\max\{deg(h_i^2),deg(s_j^2q_k),deg(\lambda_ip_i)\}$. If the degree is bounded such a proof can often be found using an SDP program. 

We obtain two results related to \Sos\ proofs. The first one deals with the recently discovered issue of high bit complexity of such proofs. O'Donnell \cite{ODonnell17} discovered that in some cases although an \Sos\ proof of low degree exists, it may involve exponentially long coefficients, and therefore it may be impossible to find such a proof efficiently. Raghavendra and Weitz~\cite{RaghavendraW17} suggested three conditions such that if the set $S$ satisfies them, the existence of a low degree \Sos\ proof implies the existence of a low bit complexity one. In Section~\ref{sec:automatizability} we show that for sets $S$ in which the polynomials $\vc pm$ are produced from some CSP instance of the form $\CSP(\Gm)$ one of these conditions ($\dl$-richness) is guaranteed to be true, and another one ($k$-effectiveness) can be dropped altogether. 

\begin{theorem}
     Let $\mc{P}$ be an instance of $\CSP(\Gamma)$ and $\I(\mc{P})=\Ideal{p_1\dots,p_m}$ be the corresponding ideal to $\mc{P}$. Assume that $\forall q_i,\forall\mathbf{a}\in S$ we have $q_i(\mathbf{a})> \varepsilon$.
    
    Then if $r$ has a degree $d$ \Sos\ proof of nonnegativity on $S$, it also has a degree $d$ \Sos\ proof of nonnegativity with coefficients bounded by $2^{poly(n^d, \log\frac{1}{\varepsilon})}$. In particular, if there are no polynomial inequalities then every coefficient can be written down with only $poly(n^d)$ bits.
\end{theorem}

The second result concerns the degree bounds that are required for the \Sos\ proof system to be automatizable. An \Sos\ proof is clearly divided into the ``ideal part'', a combination of $\vc pm$ that belongs to the corresponding ideal and the ``\Sos\ part'' (the rest). The standard requirement is that $\max\{deg(h_i^2),deg(s_j^2q_k),deg(\lambda_ip_i)\}$ has to be bounded in order for \Sos\ to be automatizable. We show that this requirement can be relaxed so that only $\max\{deg(h_i^2),deg(s_j^2q_k)\}$ is bounded, provided the ideal generated by $\vc pm$ belongs to a tractable $\chi\IMP(\Gm)$. 

\begin{theorem}
\label{thm:relaxed-SOS:intro}
    Let $\Gm$ be a constraint language such that $\chi\IMP(\Gamma)$ is polynomial time decidable. Let $\mc{P}$ be an instance of $\CSP(\Gamma)$ and $\I(\mc{P})=\Ideal{p_1\dots,p_m}$ be the corresponding ideal to $\mc{P}$. Assume that $\forall q_i,\forall\mathbf{a}\in S$ we have $q_i(\mathbf{a})> \varepsilon$.
    
    Then for a polynomial $r$, the existence of an \Sos\ proof \eqref{eq:main-body-sos-proof} with $\max\{deg(h_i^2),deg(s_j^2)\} \leq d$ is polynomial time decidable. 
\end{theorem}

The above theorem could potentially improve upon the existing \Sos\ SDP relaxations for several problems. For instance, problems such as \textsc{Vertex Cover}, \textsc{Clique}, and \textsc{Stable Set}, are all instances of Boolean \CSP s for which the corresponding \xIMP\ is polynomial time decidable. More generally, \xIMP\ is polynomial time decidable for the \textsc{2-SAT} problem, hence  polynomial optimization problems over \textsc{2-SAT} can enjoy the more relaxed version of \Sos\ proofs introduced in Theorem \ref{thm:relaxed-SOS:intro}. An interesting research direction would be studying the power of this relaxed setting in terms of approximation algorithms. 

\paragraph{IMP and theta bodies.}
Finally, in Section~\ref{sec:theta-body}, we apply our techniques to show that the \emph{theta bodies} \cite{GouveiaPT10-ThetaBody} arising from certain combinatorial problems can be constructed in polynomial time. One of the core problems in optimization is to understand the $\texttt{conv}(S)$ or a relaxation of $\texttt{conv}(S)$, where $S$ the set of feasible solutions to a given problem and $\texttt{conv}(S)$ denotes the convex hull of $S$. Here, we consider combinatorial ideals arising from \CSP s where $S$ is a finite subset of $\zR^n$. Theta body relaxations, introduced by Gouveia, Parrilo and Thomas \cite{GouveiaPT10-ThetaBody}, obtain a hierarchy of relaxations to $\texttt{conv}(S)$. They are strong relaxations, for instance, they achieve the best approximation among all symmetric SDPs of a comparable size \cite{weitz-PhD} and are known to have nice properties \cite{GouveiaPT10-ThetaBody}.
Let $TH_k(\I(\mc{P}))$, $\mc{P}\in\CSP(\Gm)$, denote the $k$-th theta body of $\I(\mc{P})$. Theta bodies create a nested sequence of closed convex relaxations i.e., $TH_1(\I(\mc{P}))\supseteq TH_2(\I(\mc{P}))\supseteq \dots \supseteq \texttt{conv}(S)$.  We say a constraint language $\Gm$ is $TH_k$-exact if for any instance $\mc{P}$ of $\CSP(\Gm)$ the ideal $\I(\mc{P})$ is \emph{$TH_k$-exact}. The ideal $\I(\mc{P})$ is $TH_k$-exact if $TH_k(\I(\mc{P}))=\texttt{conv}(S)$. Zero-dimensional ideals are $TH_k$-exact for some finite $k$ \cite{Laurent07}.
An intriguing question is characterizing $TH_k$-exact constraint languages, for constant $k$: \emph{Which constraint languages are $TH_k$-exact, for some constant $k$?}. This is analogous to Lov{\'a}sz's question \cite{lovasz2003semidefinite} where he asked: \emph{Which ideals in $\zR[\vc xn]$ are $TH_k$-exact \footnote{In fact, Lov{\'a}sz's asked which ideals in $\zR[\vc xn]$ are $(1,k)$-\Sos. An ideal is $(1,k)$-\Sos\ if every nonnegative linear polynomial on $S$ is $k$-\Sos\ mod $\I$. However, it turns out that a radical ideal in $\zR[\vc xn]$ is  $(1,k)$-\Sos\ if and only if it is $TH_k$-exact \cite{GouveiaPT10-ThetaBody}.}.} This question is partially answered in \cite{GouveiaPT10-ThetaBody} for the case $k=1$ from a completely different perspective. From algorithmic point of view, a natural question to ask is that for which constraint languages constructing theta bodies is a polynomial time task.

\begin{problem}
\label{problem:TH-comp-intro}
    For which constraint languages $\Gm$ the $k$-th theta body $TH_k(\I(\mc{P}))$ is computable in polynomial time where $\mc{P}$ is an instance of $\CSP(\Gm)$?
\end{problem}

Here, we provide strong evidence that the polymorphisms of constraint languages might be the right notion that one should consider to address Problem~\ref{problem:TH-comp-intro}. To construct the $k$-th theta bodies it is sufficient to obtain a basis for the factor ring $\zR[\vc xn]_k/\I$ modulo ideal $\I$ \cite{Laurent07,parrilo2005exploiting}. Our result in Theorem \ref{thm:GB+xIMP-intro} gives us such a luxury. While there are only very few problems for which efficient construction of theta bodies are known, we provide a unifying framework to study the computational aspects of theta bodies and, ultimately,  making progress towards answering Problem~\ref{problem:TH-comp-intro}. In particular, we present (rediscover) several positive results for problems such as \textsc{Stable Set}, \textsc{Binary Matroids}, \textsc{H-Coloring}, \textsc{Min/Max Ones}, and \textsc{Strict CSPs}.
\section{Preliminaries}

\subsection{Ideals, varieties and the Ideal Membership Problem}

Let $\Field$ denote an arbitrary field. Let $\Field[x_1,\dots, x_n]$ be the ring of polynomials over a field $\Field$ and indeterminates $x_1, \dots,x_n$. Sometimes it will be convenient not to assume any specific ordering or names of the indeterminates. In such cases we use $\Field[X]$ instead, where $X$ is a set of indeterminates, and treat points in $\Field^X$ as mappings $\vf:X\to\Field$. The value of a polynomial $f\in\Field[X]$ is then written as $f(\vf)$. Let $\Field[x_1,\dots, x_n]_d$ denote the subset of polynomials of degree at most $d$. An \emph{ideal} of $\Field[x_1,\dots, x_n]$ is a set of polynomials from $\Field[x_1,\dots, x_n]$ closed under addition and multiplication by a polynomial from $\Field[x_1,\dots, x_n]$. We will need ideals represented by a generating set.

\begin{definition}
    The ideal (of $\Field[x_1,\dots, x_n]$) generated by a finite set of polynomials $\{f_1, \dots,f_m\}$ in $\Field[x_1,\dots, x_n]$ is defined as
    \[
        \mb{I}(f_1, \dots,f_m)\overset{\mathrm{def}}{=}\Big\{ \sum\limits_{i=1}^m t_if_i \mid t_i \in \Field[x_1,\dots, x_n]\Big\}.
    \]
\end{definition}

\begin{definition}
    The set of polynomials that vanish in a given set $S \subset \Field^n$ is called the \emph{vanishing ideal} of $S$ and denoted
    \[
        \mb{I}(S) \overset{\mathrm{def}}{=} \{f \in \Field[x_1,\dots, x_n] : f(a_1,\dots,a_n) = 0 \ \ \forall (a_1,\dots,a_n) \in S\}.
    \]
\end{definition}

\begin{definition}
An ideal $\I$ is \emph{radical} if $f^m \in\I$ for some integer $m\geq 1$ implies that $f\in \I$. For an arbitrary ideal $\I$ the smallest radical ideal containing $\I$ is denoted $\sqrt\I$. In other words $\sqrt\I=\{f\in \Field[x_1,\dots, x_n]\mid f^m\in\I\text{ for some $m$}\}$.
\end{definition}

Another common way to denote $\mb{I}(f_1,\ldots,f_m)$ is by $\langle f_1,\ldots,f_m \rangle$ and we will use both notations interchangeably.

\begin{definition}\label{def:V(I)}
    Let $\{f_1,\ldots, f_m\}$ be a finite set of polynomials in $\Field[x_1,\ldots,x_n]$. We call
    \[\Variety{ f_1,\ldots,f_m}\overset{\mathrm{def}}{=} \{(a_1,\ldots,a_n)\in \Field^n \mid  f_i(a_1,\ldots,a_n)=0 \quad 1\leq i\leq m\}\]
    the \emph{affine variety} defined by $f_1,\ldots, f_m$.

Similarly, for an ideal $\I\subseteq \Field[x_1,\ldots,x_n]$ we denote by $\Variety{\I}$ the set $\Variety{\I}=\{(a_1,\ldots,a_n)\in \Field^n\mid f(a_1,\ldots,a_n)=0 \quad \forall f\in \I\}$.
\end{definition}

The Weak Nullstellensatz states that in any polynomial ring, algebraic closure is enough to guarantee that the only ideal which represents the empty variety is the entire polynomial ring itself. This is the basis of one of the most celebrated mathematical results, Hilbert's Nullstellensatz.

\begin{theorem}[The Weak Nullstellensatz]
    \label{W-Null}
    Let $\Field$ be an algebraically closed field and let $\I \subseteq \Field[\vc x n]$ be an ideal satisfying $\Variety{\I} = \emptyset$. Then $\I = \Field[\vc x n]$.
\end{theorem}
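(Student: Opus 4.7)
My plan is to prove the contrapositive: if $\I \neq \Field[\vc xn]$, then $\Variety{\I} \neq \emptyset$, by producing an explicit point of $\Variety{\I}$. Because $\Field[\vc xn]$ is Noetherian (Hilbert's Basis Theorem) and $\I$ is proper, I can enlarge $\I$ to a maximal ideal $\mathfrak{m} \supseteq \I$. Since $\Variety{\mathfrak{m}} \subseteq \Variety{\I}$, it then suffices to exhibit a common zero of all polynomials in $\mathfrak{m}$, and the contradiction with $\Variety{\I}=\emptyset$ will complete the argument.

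Next I would examine the quotient $K := \Field[\vc xn]/\mathfrak{m}$. Maximality of $\mathfrak{m}$ makes $K$ a field, and $K$ is generated as a $\Field$-algebra by the classes $\ov{x}_1, \dots, \ov{x}_n$, so it is a finitely generated $\Field$-algebra that happens to be a field. The crucial ingredient is Zariski's lemma: any finitely generated $\Field$-algebra which is a field must be a finite algebraic extension of $\Field$. Granted this, algebraic closedness of $\Field$ forces $K = \Field$, so the canonical map $\Field \hookrightarrow K$ is an isomorphism. Setting $a_i \in \Field$ to be the preimage of $\ov{x}_i$, every $f \in \mathfrak{m}$ satisfies $f(\vc an) \equiv f(\vc xn) \pmod{\mathfrak{m}} = 0$ in $K$, whence $(\vc an) \in \Variety{\mathfrak{m}} \subseteq \Variety{\I}$, the desired contradiction.

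The only real obstacle is Zariski's lemma itself. The shortest route is via Noether normalization: pick a transcendence basis $\vc yd$ of $K$ over $\Field$, observe that $K$ is integral over the polynomial ring $\Field[\vc yd]$, and then argue that the subring $\Field[\vc yd]$ must itself be a field (a standard consequence of integrality: if the top ring is a field and the extension is integral, the bottom ring is a field). Finally, a polynomial ring over $\Field$ in $d\ge 1$ indeterminates is never a field, since, e.g., $y_1$ is a non-unit; hence $d=0$ and $K$ is algebraic of finite type over $\Field$, therefore finite over $\Field$. Since this is a classical result of commutative algebra, in the write-up I would simply cite it rather than reprove it, keeping the paper focused on its CSP-theoretic contributions instead of recapitulating standard algebraic preliminaries.
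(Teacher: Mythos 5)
Your proposal is correct: it is the standard commutative-algebra proof of the Weak Nullstellensatz (pass to a maximal ideal $\mathfrak{m}\supseteq\I$, use Zariski's lemma to identify $\Field[\vc xn]/\mathfrak{m}$ with $\Field$, and read off a common zero), and deciding to cite Zariski's lemma rather than reprove it is the right call. Note, however, that the paper itself offers no proof of this statement at all --- Theorem~\ref{W-Null} appears in the preliminaries as a classical result quoted from the algebraic-geometry literature (in the spirit of \cite{Cox}), so there is no in-paper argument to compare yours against; in the paper's context the appropriate ``proof'' is simply a citation, exactly as you suggest in your last sentence. One small wording caution if you do write it out: in your sketch of Zariski's lemma, an arbitrary transcendence basis $\vc yd$ of $K$ over $\Field$ need not make $K$ integral over $\Field[\vc yd]$; Noether normalization asserts the existence of a suitably chosen algebraically independent family with that property, so you should phrase it as invoking Noether normalization to produce $\vc yd$, not as picking any transcendence basis and ``observing'' integrality.
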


One might hope that the correspondence between ideals and varieties is one-to-one provided only that one restricts to algebraically closed fields. Unfortunately, this is not the case\footnote{For example $\Variety{x} = \Variety{x^2} = \{0\}$ works over any field.}. Indeed, the reason that the map $\mathbf{V}$ fails to be one-to-one is that a power of a polynomial vanishes on the same set as the original polynomial.

\begin{theorem}[Hilbert's Nullstellensatz]
    \label{W-Hilbert-Null}
    Let $\Field$ be an algebraically closed field. If $f_0, \vc fs \in \Field[\vc x n]$, then $f_0 \in \mathbf{I}(\Variety{\vc f s})$ if and only if $f^m_0 \in \Ideal{\vc f s}$
    for some integer $m \geq 1$.
\end{theorem}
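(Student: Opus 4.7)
The easy direction is immediate: if $f_0^m \in \Ideal{f_1,\dots,f_s}$, then $f_0^m$ is a polynomial combination of the $f_i$, hence vanishes on every point of $\Variety{f_1,\dots,f_s}$. Since $\Field$ is a field, $f_0^m(a)=0$ forces $f_0(a)=0$, so $f_0 \in \mb{I}(\Variety{f_1,\dots,f_s})$. This direction does not require $\Field$ to be algebraically closed.

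For the substantive direction, the plan is the Rabinowitsch trick, which reduces Hilbert's Nullstellensatz to the Weak Nullstellensatz (Theorem~\ref{W-Null}). Suppose $f_0 \in \mb{I}(\Variety{f_1,\dots,f_s})$. Introduce a fresh indeterminate $y$ and pass to the larger polynomial ring $\Field[x_1,\dots,x_n,y]$. Consider the ideal
\[
    J \;=\; \Ideal{f_1,\dots,f_s,\, 1 - y f_0} \;\sse\; \Field[x_1,\dots,x_n,y].
\]
I would first verify that $\Variety{J} = \emptyset$ in $\Field^{n+1}$. Indeed, any common zero $(a_1,\dots,a_n,b)$ of $J$ must satisfy $f_i(a_1,\dots,a_n) = 0$ for all $i$, so $(a_1,\dots,a_n) \in \Variety{f_1,\dots,f_s}$; by hypothesis this forces $f_0(a_1,\dots,a_n) = 0$, which is incompatible with $1 - b\cdot f_0(a_1,\dots,a_n) = 0$.

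Since $\Field$ is algebraically closed, the Weak Nullstellensatz yields $1 \in J$, so there exist polynomials $p_1,\dots,p_s,\, q \in \Field[x_1,\dots,x_n,y]$ with
\[
    1 \;=\; \sum_{i=1}^s p_i(x_1,\dots,x_n,y)\, f_i(x_1,\dots,x_n) \;+\; q(x_1,\dots,x_n,y)\,\bigl(1 - y f_0(x_1,\dots,x_n)\bigr).
\]
Now substitute $y = 1/f_0$ in the field of rational functions $\Field(x_1,\dots,x_n)$; the last term vanishes, and clearing denominators by multiplying through by a sufficiently high power $f_0^m$ (large enough to absorb all occurrences of $y$ in the $p_i$) turns the identity into
\[
    f_0^m \;=\; \sum_{i=1}^s \widetilde p_i(x_1,\dots,x_n)\, f_i(x_1,\dots,x_n),
\]
for some $\widetilde p_i \in \Field[x_1,\dots,x_n]$, which exhibits $f_0^m \in \Ideal{f_1,\dots,f_s}$ as required.

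The only conceptual obstacle is motivating the introduction of $y$ and $1 - yf_0$; once that leap is made, the rest is bookkeeping. The step I would be most careful about is the substitution $y=1/f_0$: one must argue cleanly in $\Field(x_1,\dots,x_n)$ that this substitution is legitimate (each $p_i$ becomes a rational function whose denominator is a power of $f_0$) and then track the exponent $m$ needed to clear denominators, taking $m$ to be the maximum $y$-degree appearing across $p_1,\dots,p_s$.
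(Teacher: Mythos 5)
Your argument is correct: the easy direction is as you say, and the hard direction is the standard Rabinowitsch trick, deducing the result from the Weak Nullstellensatz (Theorem~\ref{W-Null}) by adjoining $y$ and the polynomial $1-yf_0$, which is exactly the classical proof found in the paper's cited reference (Cox et al.); the paper itself states the theorem as background without reproving it. The only cosmetic point is the trivial case $f_0=0$, where the substitution $y=1/f_0$ is unavailable but the conclusion holds with $m=1$ anyway.
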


By definition, radical ideals consist of all polynomials which vanish on some variety $V$. This together with Theorem~\ref{W-Hilbert-Null} suggests that there is a one-to-one correspondence between affine varieties and radical ideals.

\begin{theorem}[The Strong Nullstellensatz]
    \label{S-Hilbert-Null}
    Let $\Field$ be an algebraically closed field. If $\I$ is an ideal in $\Field[\vc x n]$, then $\mathbf{I}(\Variety{\I})=\sqrt{\I}$.
\end{theorem}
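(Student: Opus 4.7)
The plan is to deduce the Strong Nullstellensatz from Hilbert's Nullstellensatz (Theorem~\ref{W-Hilbert-Null}) together with the Hilbert Basis Theorem cited in the introduction, by verifying the two inclusions $\sqrt{\I}\subseteq\mathbf{I}(\Variety{\I})$ and $\mathbf{I}(\Variety{\I})\subseteq\sqrt{\I}$ separately.

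The first inclusion does not even require $\Field$ to be algebraically closed. If $f\in\sqrt{\I}$, then some power $f^m$ lies in $\I$ and therefore vanishes at every point of $\Variety{\I}$; since $\Field$ is a field and so has no nonzero nilpotents, $f$ itself vanishes there, giving $f\in\mathbf{I}(\Variety{\I})$. For the reverse inclusion I would invoke the Hilbert Basis Theorem to pick a finite generating set $\I=\Ideal{f_1,\ldots,f_s}$. Then $\Variety{\I}=\Variety{f_1,\ldots,f_s}$ by the definition of an affine variety, so any $f_0\in\mathbf{I}(\Variety{\I})$ satisfies the hypothesis of Theorem~\ref{W-Hilbert-Null}, which yields an integer $m\geq 1$ with $f_0^m\in\Ideal{f_1,\ldots,f_s}=\I$; this is precisely $f_0\in\sqrt{\I}$.

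The only step doing real work is Theorem~\ref{W-Hilbert-Null} itself, and this is where the main difficulty would lie if one had to prove it from scratch. The standard route is the Rabinowitsch trick: introduce a fresh indeterminate $y$, observe that $\Variety{f_1,\ldots,f_s,\,1-yf_0}\subseteq\Field^{n+1}$ is empty precisely because $f_0$ vanishes on $\Variety{f_1,\ldots,f_s}$, apply the Weak Nullstellensatz (Theorem~\ref{W-Null}) to obtain an identity $1=\sum_{i} p_if_i+q(1-yf_0)$ in $\Field[\vc xn,y]$, and then substitute $y=1/f_0$ and clear denominators by multiplying through by a sufficiently large power of $f_0$ to obtain a relation $f_0^m=\sum_i q_if_i$ in $\Field[\vc xn]$. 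Since the present excerpt already takes Theorem~\ref{W-Hilbert-Null} as given, the Strong Nullstellensatz follows in just the few lines sketched above, with the algebraic closure of $\Field$ entering solely through its use inside Theorem~\ref{W-Hilbert-Null}.
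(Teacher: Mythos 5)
Your proposal is correct: both inclusions are argued soundly, the easy one ($\sqrt{\I}\subseteq\mathbf{I}(\Variety{\I})$) indeed needing only that $\Field$ has no zero divisors, and the hard one following from the Hilbert Basis Theorem plus Theorem~\ref{W-Hilbert-Null}, with the equality $\Variety{\I}=\Variety{f_1,\dots,f_s}$ being the only (trivial) verification needed. The paper itself states the Strong Nullstellensatz as a standard background fact without proof (citing the algebraic-geometry literature), so there is no proof to compare against; your two-inclusion derivation, together with the Rabinowitsch-trick sketch for Theorem~\ref{W-Hilbert-Null}, is exactly the standard textbook route and fills that gap correctly.
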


The following theorem is a useful tool for finding unions and intersections of varieties. We will use it in the following subsections where we construct ideals corresponding to \CSP~instances.

\begin{theorem}[\cite{Cox}]\label{th:ideal_intersection}
  If $I$ and $J$ are ideals in $\Field[x_1,\ldots,$ $x_n]$, then
  \begin{itemize}
      \item [i.] $\Variety{I\cap J}= \Variety{I}\cup \Variety{J}$,
      \item [ii.] $\Variety{I+J}=\Variety{I}\cap\Variety{J}$.
  \end{itemize}
\end{theorem}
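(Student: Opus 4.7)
The plan is to prove each of the two set equalities by double inclusion, using only the definitions of ideal intersection, ideal sum, affine variety, and the fact that ideals are closed under multiplication by arbitrary ring elements. Throughout, a point $p\in\Field^n$ lies in $\Variety{K}$ exactly when $h(p)=0$ for every $h$ in the ideal $K$, so the arguments reduce to checking vanishing on the relevant generating collections of polynomials.

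For part (i), the inclusion $\Variety{I}\cup\Variety{J}\subseteq\Variety{I\cap J}$ is the easy direction. If $p\in\Variety{I}$, then every $h\in I\cap J$ lies in $I$, hence $h(p)=0$, and symmetrically for $\Variety{J}$. The reverse inclusion $\Variety{I\cap J}\subseteq\Variety{I}\cup\Variety{J}$ is the substantive one, and I will prove it by contrapositive: if $p\notin\Variety{I}\cup\Variety{J}$, then there exist $f\in I$ and $g\in J$ with $f(p)\neq 0$ and $g(p)\neq 0$. Using that $I$ and $J$ are ideals, the product $fg$ lies in both $I$ and $J$ (since $fg=f\cdot g\in I$ and $fg=g\cdot f\in J$), hence $fg\in I\cap J$; yet $(fg)(p)=f(p)g(p)\neq 0$, so $p\notin\Variety{I\cap J}$. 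This is the step where the ideal property (not just closure under addition) is genuinely used.

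For part (ii), both inclusions are direct. Recall $I+J=\{f+g:f\in I,\,g\in J\}$ is itself an ideal, and it contains both $I$ (take $g=0$) and $J$ (take $f=0$). Hence if $p\in\Variety{I+J}$, every polynomial in $I$ and every polynomial in $J$ vanishes at $p$, giving $p\in\Variety{I}\cap\Variety{J}$. Conversely, if $p\in\Variety{I}\cap\Variety{J}$ and $h\in I+J$, write $h=f+g$ with $f\in I$, $g\in J$; then $h(p)=f(p)+g(p)=0+0=0$, so $p\in\Variety{I+J}$.

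I do not anticipate a real obstacle here: the only slightly non-obvious move is the product trick $fg\in I\cap J$ in part (i), which crucially requires that $I$ and $J$ be ideals rather than merely additive subgroups. Everything else is a direct unfolding of definitions, so the proof will be short and purely formal.
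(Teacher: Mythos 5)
Your proof is correct. The paper does not prove this statement at all --- it is quoted directly from the textbook \cite{Cox} --- and your argument is essentially the standard one found there: the easy inclusions follow from $I\cap J\subseteq I,J$ and $I,J\subseteq I+J$, while the substantive inclusion $\Variety{I\cap J}\subseteq\Variety{I}\cup\Variety{J}$ rests on the product trick $fg\in I\cap J$ with $(fg)(p)=f(p)g(p)\neq 0$ (Cox phrases it directly rather than by contrapositive, but it is the same idea, and both use that $\Field$ has no zero divisors).
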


\subsection{The Constraint Satisfaction Problem}

We use $[k]$ to denote $\{1\zd k\}$. Let $D$ be a finite set, it will often be referred to as a domain. An \emph{$n$-ary relation} on $D$ is a set of $n$-tuples of elements from $D$; we use $\mb{R}_D$ to denote the set of all finitary relations on $D$. 
A \emph{constraint language} is a subset of $\mb{R}_D$, and may be finite or infinite.

A \emph{constraint} over a constraint language $\Gamma \subseteq \mb{R}_D$ is a pair $\langle \bs, R\rangle$ with $\bs=(x_1,\dots,x_k)$ a list of variables of length $k$ (not necessarily distinct), called the \emph{constraint scope}, and $R$ a $k$-ary relation on $D$, belonging to $\Gamma$, called the \emph{constraint relation}. Another common way to denote a constraint $\langle \bs, R\rangle$ is by $R(\bs)$, that is, to treat $R$ as a predicate, and we will use both notations interchangeably. A constraint is satisfied by a mapping $\vf:\{\vc xk\}\to D$  if $(\vf(x_1),\dots,\vf(x_k)) \in R$.

\begin{definition}[Constraint Satisfaction Problem]
    The constraint satisfaction problem over a constraint language $\Gamma \subseteq \mb{R}_D$, denoted $\CSP(\Gamma)$, is defined to be the decision problem with instance $\cP = (X, D, C)$, where $X$ is a finite set of variables, $D$ is the domain, and $C$ is a set of constraints over $\Gamma$ with variables from $X$. The goal is to decide whether or not there exists a solution, i.e. a mapping $\vf: X \to D$ satisfying all of the constraints. We will use $\Sol(\cP)$ to denote the (possibly empty) set of solutions of $\cP$.
\end{definition} 

It is known \cite{Bulatov17,Zhuk17} that for any constraint language $\Gm$ (finite or infinite) on a finite set the problem $\CSP(\Gm)$ is either solvable in polynomial time or is \textbf{NP}-complete. We will use this fact to determine the complexity of the Ideal Membership Problem.

\subsection{The ideal-CSP correspondence}\label{sect:idealCSP}

Here, we explain how to construct an ideal corresponding to a given instance of \CSP. Constraints are in essence varieties, see e.g.~\cite{JeffersonJGD13,vandongenPhd}. Following~\cite{Mastrolilli19,vandongenPhd}, we shall translate CSPs to polynomial ideals and back. Let $\cP=(X,D,C)$ be an instance of $\CSP(\Gamma)$. Without loss of generality, we assume that $D\subset\Field$\footnote{In fact, we will mainly assume $\Field=\Real$ and $D=\{0,1,\dots,|D|-1\}$}. Let $\Sol(\cP)$ be the (possibly empty) set of all solutions of $\cP$. We wish to map $\Sol(\cP)$ to an ideal $\I(\cP)\subseteq \Field[X]$ such that $\Sol(\cP)=\Variety{\I(\cP)}$.

First we show how to map a constraint to a generating system of an ideal. Consider a constraint $\langle \bs,R\rangle$ from $C$ where $\bs=(x_{i_1},\ldots,x_{i_k})$ is a $k$-tuple of variables from $X$. Every $\bv=(v_1,\ldots,v_k)\in R$ corresponds to some point $\bv\in \Field^k$. The ideal $\Ideal{\{\bv\}}\overset{\mathrm{def}}{=} \Ideal{x_{i_1}-v_{1},\ldots,x_{i_k}-v_{k}}$ is an ideal in $\Field[\bs]$. Moreover, $\Ideal{\{\bv\}}$ is radical~\cite{Cox} and $\Variety{\Ideal{\{\bv\}}}=\bv$. Hence, we can rewrite the relation $R$ as $R=\bigcup_{\bv\in R} \Variety{\Ideal{\{\bv\}}}$. By Theorem~\ref{th:ideal_intersection}, we have the following
\begin{align}\label{eq:constr=var}
 R=\bigcup_{\bv\in R} \Variety{\Ideal{\{\bv\}}}=\Variety{\I(R(\bs))},
 \qquad \text{where } \I(R(\bs)) = \bigcap_{\bv\in R} \Ideal{\{\bv\}}.
\end{align}
Note that $\I(R(\bs))\subseteq \Field[\bs]$ is zero-dimensional as its variety is finite. Moreover, $\I(R(\bs))$ is a radical ideal since it is the intersection of radical ideals (see~\cite{Cox}, Proposition~16, p.197). Equation~\eqref{eq:constr=var} means that relation $R(\bs)$ is a variety of $\Field^k$. Similar to \cite{Mastrolilli19}, the following is a generating system for $\I(R(\bs))$:
 \begin{align}\label{eq:genIConstr}
\I(R(\bs))=\langle\prod_{\bv\in R}(1-\prod_{j=1}^{k}\delta_{v_j}(x_{i_j})),\prod_{a\in D}(x_{i_1}-a),
   \ldots,\prod_{a\in D}(x_{i_k}-a)\rangle,
 \end{align}
where $\delta_{v_j}(x_{i_j})$ are indicator polynomials, i.e.\ equal to 1 when $x_{i_j}=v_j$ and 0 when $x_{i_j}\in D\setminus\{v_j\}$; polynomials $\prod_{a\in D}(x_{i_k}-a)$ force variables to take values in $D$ and will be called the {\emph{domain polynomials}}. Including a domain polynomial for each variable has the advantage that it ensures that the ideals generated by our systems of polynomials are radical (see Lemma 8.19 of \cite{becker93grobner}).

Recall that $X$ is a set of variables and $\bs\sse X$. Let $\I^X(R(\bs))$ be the smallest ideal (with respect to inclusion) of $\Field[X]$ containing $\I(R(\bs))\subseteq \Field[\bs]$. This is called the \emph{$\Field[X]$-module} of $\I(R(\bs))$. The set $\Sol(\cP)\subset \Field^X$ of solutions of $\cP=(X,D,C)$ is the intersection of the varieties of the constraints:
\begin{align}
  \Sol(\cP) & = \bigcap_{\ang{\bs,R}\in C}\Variety{\I(R(\bs))^{\Field[X]}}=\Variety{\I(\cP)},\qquad\text{where} \label{eq:solVar}\\
  \I(\cP)&=\sum_{\ang{\bs,R}\in C}\I^X(R(\bs)).\label{eq:IC}
\end{align}

The equality in~\eqref{eq:IC} follows by Theorem~\ref{th:ideal_intersection}. Finally, by Proposition 3.22 of \cite{vandongenPhd}, the ideal $\I(\cP)$ is radical. Note that in the general version of Hilbert's Nullstellensatz, Theorems~\ref{W-Hilbert-Null} and \ref{S-Hilbert-Null}, it is necessery to work in an algebraically closed field. However, in our case it is not needed due to the presence of domain polynomials. Hence, by \Cref{W-Null} and \Cref{S-Hilbert-Null}, we have the following properties. 

\begin{theorem}\label{th:nullstz}
Let $\cP$ be an instance of the $\CSP(\Gamma)$ and $\I(\cP)$ defined as in~\eqref{eq:IC}. Then
  \begin{align}
    &\Variety{\I(\cP)}=\emptyset \Leftrightarrow 1\in \I(\cP) \Leftrightarrow \I(\cP)=\Field[X],  \tag{Weak Nullstellensatz}\label{eq:weak_nstz}\\
    &\I(\Variety{\I(\cP)})=\sqrt{\I(\cP)},\tag{Strong Nullstellensatz}\label{eq:strong_nstz}\\
    &\sqrt{\I(\cP)}=\I(\cP).\tag{Radical Ideal}\label{eq:ICradical}
  \end{align}
\end{theorem}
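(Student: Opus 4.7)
The central difficulty is that the theorem draws Nullstellensatz-type conclusions over $\Field$ (in our setting typically $\Real$), whereas the classical statements \Cref{W-Null} and \Cref{S-Hilbert-Null} require $\Field$ to be algebraically closed. The plan is to pass to the algebraic closure $\overline{\Field}$, apply the classical results to the extended ideal $\I^e := \I(\cP)\cdot\overline{\Field}[X]$, and then descend back to $\Field[X]$ essentially for free, using the presence of the domain polynomials in $\I(\cP)$. The key geometric observation to establish first is the equality $\Variety{\I^e}=\Variety{\I(\cP)}\subseteq D^X$ (with the left variety computed in $\overline{\Field}^X$ and the right in $\Field^X$): any zero $\mb a\in\overline{\Field}^X$ of $\I^e$ must satisfy the domain polynomial $\prod_{a\in D}(x_i-a)\in\I(\cP)$ at every coordinate $i$, which forces $a_i\in D\subseteq\Field$; then every polynomial of $\I(\cP)$ vanishes on this set as well, giving equality.

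With this in hand, the Weak Nullstellensatz reduces to a descent argument. The implications $\I(\cP)=\Field[X]\Rightarrow 1\in\I(\cP)\Rightarrow\Variety{\I(\cP)}=\emptyset$ are immediate, and for the converse $\Variety{\I(\cP)}=\emptyset$ gives $\Variety{\I^e}=\emptyset$, so \Cref{W-Null} yields $1\in\I^e$; faithful flatness of $\overline{\Field}[X]$ over $\Field[X]$ (or, equivalently, an explicit linear-algebra argument on the coefficients of a bounded-degree membership certificate whose degree is controlled via the effective Nullstellensatz) then yields $1\in\I(\cP)$. The same descent principle, combined with \Cref{S-Hilbert-Null} applied to $\I^e$, produces the Strong Nullstellensatz: $\mb{I}(\Variety{\I(\cP)})=\mb{I}(\Variety{\I^e})\cap\Field[X]=\sqrt{\I^e}\cap\Field[X]=\sqrt{\I(\cP)}$.

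For the radicality statement (iii) I would invoke Seidenberg's lemma (cited as Lemma 8.19 in \cite{becker93grobner}): an ideal $J\subseteq\Field[X]$ is radical whenever, for each variable $x_i$, $J$ contains a squarefree univariate polynomial in $x_i$. The domain polynomials $\prod_{a\in D}(x_i-a)$ are squarefree because $D$ consists of distinct elements of $\Field$, so the hypothesis is satisfied for every variable appearing in at least one constraint (and isolated variables are handled by a trivial padding), yielding $\sqrt{\I(\cP)}=\I(\cP)$. The main obstacle I anticipate is giving the descent from $\overline{\Field}$ to $\Field$ a clean, self-contained treatment without importing the full machinery of faithful flatness; once that step is in place, all three conclusions follow by routine invocations of the classical theorems already stated in the excerpt.
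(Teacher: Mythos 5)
Your proposal is correct and follows essentially the same route as the paper, which simply notes that the domain polynomials make algebraic closure unnecessary and invokes Theorems~\ref{W-Null} and~\ref{S-Hilbert-Null}, citing Lemma~8.19 of \cite{becker93grobner} (equivalently Proposition~3.22 of \cite{vandongenPhd}) for radicality --- exactly your Seidenberg-lemma step. The only difference is that you spell out the extension/descent argument ($\Variety{\I^e}=\Variety{\I(\cP)}\subseteq D^X$, then $\I^e\cap\Field[X]=\I(\cP)$ via faithful flatness or coefficient comparison) that the paper leaves implicit, which is a sound and welcome elaboration rather than a different proof.
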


\subsection{The Ideal Membership Problem}

In the general Ideal Membership Problem we are given an ideal $\I\sse\Field[\vc xn]$, usually by some finite generating set, and a polynomial $f_0$. The question then is to decide whether or not $f_0\in\I$. If $\I$ is given through a CSP instance, we can be more specific.

\begin{definition}\label{def:imp}
The {\sc Ideal Membership Problem} associated with a constraint language $\Gamma$  over a set $D$ is the problem $\IMP(\Gamma)$ in
which the input is a pair $(f_0,\cP)$ where $\cP = (X, D, C)$ is a
$\CSP(\Gm)$ instance and $f_0$ is a polynomial from $\Field[X]$. The goal is to decide
whether $f_0$ lies in the ideal $\I(\cP)$. We use
$\IMP_d(\Gamma)$ to denote $\IMP(\Gamma)$ when the input polynomial
$f_0$ has degree at most $d$.
\end{definition}

As $\I(\cP)$ is radical, by the Strong Nullstellensatz an equivalent way to solve the membership problem $f_0 \in \I(\cP)$ is to answer the following question:

\begin{quote}
Does there exist an $\mb{a}\in \Variety{\I(\cP)}$ such that $f_0(\mb{a})\neq 0$?
\end{quote}
In the \textbf{yes} case, such an $\mb{a}$ exists if and only if $f_0\not\in \mb{I}(\Variety{\I(\cP)})$ and therefore $f_0$ is \textbf{not} in the ideal $\I(\cP)$. This observation also implies

\begin{lemma}\label{lem:imp-0}
For any constraint language $\Gm$ the problem $\IMP_0(\Gm)$ is equivalent to $\mathsf{not\text{-}}\CSP(\Gm)$.
\end{lemma}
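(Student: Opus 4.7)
The plan is to show both directions of the equivalence using the Weak Nullstellensatz formulation in Theorem~\ref{th:nullstz}, noting that an $\IMP_0$ instance consists of a CSP instance $\cP$ together with a constant polynomial $f_0=c\in\Field$.

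First I would handle the reduction from $\mathsf{not\text{-}}\CSP(\Gm)$ to $\IMP_0(\Gm)$. Given a $\CSP(\Gm)$ instance $\cP=(X,D,C)$, I form the $\IMP_0(\Gm)$ instance $(1,\cP)$ where $f_0=1$ is the constant polynomial of degree $0$. By the Weak Nullstellensatz part of Theorem~\ref{th:nullstz}, we have $1\in\I(\cP)$ if and only if $\Variety{\I(\cP)}=\emptyset$, which by \eqref{eq:solVar} holds if and only if $\Sol(\cP)=\emptyset$, i.e., $\cP$ is unsatisfiable. This reduction is clearly polynomial time.

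Next I would handle the reduction from $\IMP_0(\Gm)$ to $\mathsf{not\text{-}}\CSP(\Gm)$. Given an instance $(c,\cP)$ with $c\in\Field$ a constant, I distinguish two cases. If $c=0$ then $c\in\I(\cP)$ holds trivially (every ideal contains the zero polynomial), so the answer is \textbf{yes} without any appeal to the CSP. If $c\neq 0$, then since $\I(\cP)$ is an ideal and $c^{-1}\in\Field$, we have $c\in\I(\cP)$ if and only if $1=c^{-1}\cdot c\in\I(\cP)$, which again by the Weak Nullstellensatz equivalence in Theorem~\ref{th:nullstz} is equivalent to $\cP$ being unsatisfiable. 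So I invoke the oracle for $\mathsf{not\text{-}}\CSP(\Gm)$ on $\cP$ and return its answer.

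There is no real obstacle here beyond being careful with the trivial $c=0$ case; the main content is the Weak Nullstellensatz statement \eqref{eq:weak_nstz}, which is already established in Theorem~\ref{th:nullstz} precisely because the presence of the domain polynomials in the generating set \eqref{eq:genIConstr} makes $\I(\cP)$ behave as if we were working over an algebraically closed field. Both reductions above are log-space (in fact trivial) many-one reductions, giving the claimed equivalence.
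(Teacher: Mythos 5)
Your proof is correct and follows essentially the same route as the paper, which derives the lemma as an immediate consequence of Theorem~\ref{th:nullstz}: a degree-$0$ polynomial is a constant, the zero constant is trivially in $\I(\cP)$, and a nonzero constant lies in $\I(\cP)$ exactly when $\Variety{\I(\cP)}=\Sol(\cP)$ is empty. Your handling of the $c=0$ case and the scaling of $c$ to $1$ are the only (minor) details the paper leaves implicit.
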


As was observed in the Introduction, $\IMP(\Gm)$ belongs to \textbf{coNP} for any $\Gm$ over a finite domain. We say that $\IMP(\Gm)$ is \emph{tractable} if it can be solved in polynomial time. We say that $\IMP(\Gm)$ is \emph{$d$-tractable} if $\IMP_d(\Gm)$ can be solved in polynomial time for every $d$. 

\section{Expanding the constraint language}\label{sec:IMPs}

In this section we discuss constructions on relations that allow us to reduce one \IMP~with a fixed constraint language to another. First we show that adding so-called \emph{constant} relations does not change the complexity of the problem. Second, we will consider languages on the same domain, and prove that \emph{primitive positive} (\emph{pp-}, for short) \emph{definitions} between constraint languages provides a reduction between the corresponding \IMP s. Third, we will turn our attention to the case where two languages are defined on different domains. In this case, we study a stronger notion called \emph{primitive positive interpretability}. We prove that if a language $\Gm$ pp-interprets a language $\Dl$, then $\IMP(\Dl)$ is reducible to $\IMP(\Gm)$. Finally, we discuss how ideal membership proofs can (or cannot) be recovered under these reductions.

\subsection{Constant relations and the search problem}\label{sec:constants}

We start with expansion of a constraint language $\Gm$ on a set $D$ by  constant relations. A \emph{constant} relation $\rel_a$, $a\in D$, is the unary relation, that is, a subset of $D$, that contains just one element $a$. Using it in a CSP is equivalent to \emph{pinning} a variable to a fixed value $a$. Expansion by constant relations is very important for CSPs. It preserves the complexity of the decision version of the problem when $\Gm$ is a \emph{core}, see, \cite{BulatovJK05}), and it preserves the complexity of the counting version of the problem for any $\Gm$, see \cite{Bulatov07:towards}. For $A\sse D$ let $\Gm^A$ denote $\Gm\cup\{\rel_a\mid a\in A\}$. We also call constraints of the form $\ang{x,\rel_a}$ \emph{pinning} constraints. 

\begin{proposition}\label{pro:adding-constant}
Let $\Gm$ be a constraint language on a set $D$ and $A\sse D$. Then $\IMP_d(\Gm^A)$ is polynomial time reducible to $\IMP_{d+|A|(|D|-1)}(\Gm)$.
\end{proposition}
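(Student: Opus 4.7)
The plan is to eliminate all pinning constraints from $\cP$ by absorbing them into the input polynomial via multiplication by a small ``indicator'' factor. The construction proceeds in two stages: first identify all variables pinned to the same value so that at most $|A|$ pinning constraints remain, then multiply $f_0$ by a polynomial that vanishes off the desired pinned values. As a preliminary case, if some variable carries two pinning constraints with distinct values, then $\Sol(\cP)=\emptyset$ and $\I(\cP)=\Field[X]$, so $f_0\in\I(\cP)$ trivially and I can output any trivial yes-instance; from now on, I assume every pinned variable has a unique pinning value.

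For each $a\in A$ whose set $Z_a:=\{z\in X: \ang{z,\rel_a}\in C\}$ is nonempty, I pick a representative $z_a^*\in Z_a$ and substitute $z_a^*$ for every other $z\in Z_a$ throughout all constraints of $\cP$ and throughout $f_0$. Call the result $\cP''$ and $f_0'$. Since members of $Z_a$ must coincide in every satisfying assignment of $\cP$, the natural map from $\Sol(\cP'')$ to $\Sol(\cP)$ obtained by copying the value of $z_a^*$ onto all of $Z_a$ is a bijection under which $f_0(\text{lifted }s)=f_0'(s)$. Both $\I(\cP)$ and $\I(\cP'')$ are radical, so $f_0\in\I(\cP)$ iff $f_0'\in\I(\cP'')$. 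Crucially, $\cP''$ now carries at most $|A|$ pinning constraints, one per used value of $A$.

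Next, let $\cP'$ be $\cP''$ with all pinning constraints stripped; since its remaining constraints are from $\Gm$, $\cP'$ is a bona fide $\CSP(\Gm)$ instance. Define
\[
g=\prod_{\substack{a\in A\\ Z_a\ne\emptyset}}\prod_{b\in D\setminus\{a\}}(z_a^*-b)\qquad\text{and}\qquad h_0=f_0'\cdot g.
\]
On any assignment $s$ taking values in $D$ (and the domain polynomials inside $\I(\cP')$ force this restriction), $g(s)\ne 0$ exactly when $s(z_a^*)=a$ for every relevant $a$, i.e.\ exactly when $s\in\Sol(\cP'')\subseteq\Sol(\cP')$. Hence $h_0$ vanishes on $\Sol(\cP')$ iff $f_0'$ vanishes on $\Sol(\cP'')$; by radicality and \eqref{eq:strong_nstz} of Theorem~\ref{th:nullstz} (no algebraic closure assumption is needed because the domain polynomials confine the variety to $D^{|\mathrm{vars}(\cP')|}$), this is equivalent to $f_0'\in\I(\cP'')$, and therefore to $f_0\in\I(\cP)$.

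Finally, $\deg h_0\le d+|A|(|D|-1)$ since $g$ consists of at most $|A|$ factors each of degree $|D|-1$, and every step of the transformation is polynomial time in $|\cP|+|f_0|$. The one quantitative point that requires care is precisely this degree bound: placing a factor $\prod_{b\ne a_i}(z_i-b)$ into $g$ for every pinned variable $z_i$ without first identifying variables would yield a bound proportional to the total number of pinning constraints, which is unbounded in $|A|$; the variable-identification preprocessing is exactly what brings the degree overhead down to $|A|(|D|-1)$, and this is the main conceptual obstacle the proof must navigate.
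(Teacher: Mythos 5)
Your proof is correct and follows essentially the same route as the paper's: handle the contradictory-pins case separately, collapse all variables pinned to the same value into a single variable (the paper uses a fresh variable $x_a$ where you use a representative $z_a^*$, an immaterial difference), drop the pinning constraints, and multiply the input polynomial by the product of indicator-type factors $\prod_{b\in D\setminus\{a\}}(z_a^*-b)$, concluding via radicality that membership is preserved with degree overhead at most $|A|(|D|-1)$. Your closing remark about why the variable identification is needed for the degree bound is exactly the point the paper's construction is designed around.
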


\begin{proof}
Let $\cP=(X,D,C)$ be an instance of $\CSP(\Gm^A)$ and let $\I(\cP)$ be the ideal corresponding to $\cP$. Suppose $(f_0,\cP)$ is an instance of $\IMP_d(\Gm^A)$ where we want to decide if $f_0\in \I(\cP)$. First, we perform some preprocessing of $(f_0,\cP)$. Note that if $\cP$ contains constraints $\langle x,\rel_a\rangle$, $\langle x,\rel_b\rangle$, $a\ne b$, then $\cP$ has no solution and so $1\in\I(\cP)$ implying $f_0\in\I(\cP)$. Let $X_a$ denote the set of variables $x$, for which there is a constraint $\langle x,\rel_a\rangle\in C$. Introduce new variable $x_a$ for each $a\in A$ and replace every $x\in X_a$, with $x_a$ in both $f_0$ and $\cP$. In particular, let 
\[
X'= \left(X\setminus\bigcup_{a\in A}X_a\right)\cup\{x_a\mid a\in A\}.
\] 
The resulting instance $(f'_0,\cP')$ has the following properties:
\begin{enumerate}
    \item [--]
    The solutions of $\cP$ and $\cP'$ are in one-to-one correspondence, since for every solution $\vf$ of $\cP$ we have $\vf(x)=a$ for each $x\in X_a$, and so the mapping $\vf':X'\to D$ such that $\vf(x_a)=a$ for $a\in A$ and $\vf'(x)=\vf(x)$ otherwise is a solution of $\cP'$ and vice versa.
    \item[--] 
    $f_0(\vf)=0$ if and only if $f'_0(\vf')=0$.
\end{enumerate}
 
Now let $\cP^*=(X',D,C^*)$ be an instance of $\CSP(\Gm)$ where $C^*$ consists of all constraint from $C'$ except the ones of the form $\langle x,\rel_a\rangle$, $a\in A$. We define a new polynomial $f^*_0$ as follows. 
\[
f^*_0 = \left(\prod\limits_{a\in A}\prod\limits_{b\in D\setminus \{a\}}(x_a-b)\right)\cdot f'_0.
\]
Observe that, for any $a\in A$ and $\vf^*:X'\to D$, if $\vf^*(x_a)\ne a$ then $f^*_0(\vf^*)=0$. Suppose $\vf'\in\mb{V}(\I(\cP'))$. As $\vf'$ satisfies all the pinning constraints in $C'$, we have $f'_0(\vf')\neq 0$ if and only if $f^*_0(\vf')\neq 0$. Moreover, suppose $\vf^*\in\mb{V}(\I(\cP^*)$  and $f^*_0(\vf^*)\neq 0$. This implies that 
\begin{enumerate}
        \item 
        $\prod\limits_{a\in A}\prod\limits_{b\in D\setminus \{a\}}(\vf^*(x_a)-b) \neq 0$, which means $\vf^*$ satisfies all the pinning constraints in $C$, and hence $\vf^*\in \mb{V}(\I(\cP'))$, and
        \item 
        $f'_0(\vf')\neq 0$.
    \end{enumerate} 
    Combining the preprocessing step with the second one there exists $\vf\in \mb{V}(\I(\cP))$ such that $f_0(\vf)\neq 0$ if and only if there exists $\vf^*\in \mb{V}(\I(\cP^*)$ such that $f^*_0(\vf^*)\neq 0$. This completes the proof of the proposition.
\end{proof}

Proposition~\ref{pro:adding-constant} together with the fact that $\IMP(\Gm)$ is a subproblem of $\IMP(\Gm^A)$ implies a close connection between the complexity of $\IMP(\Gm)$ and $\IMP(\Gm^A)$.

\begin{corollary}\label{cor:constants-comlexity}
For any constraint language $\Gm$ on $D$ and any $A\sse D$, the problem $\IMP(\Gm^A)$ is tractable ($d$-tractable) if and only if $\IMP(\Gm)$ is tractable ($d$-tractable), and $\IMP(\Gm^A)$ is \coNPc\ if and only if $\IMP(\Gm)$ is \coNPc.
\end{corollary}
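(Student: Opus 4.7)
The corollary is a short bookkeeping consequence of Proposition~\ref{pro:adding-constant} combined with the trivial containment $\Gm\sse\Gm^A$. My plan is to handle the three stated equivalences — tractability, d-tractability, and $\mathbf{coNP}$-completeness — in parallel by exhibiting polynomial-time reductions in both directions.

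First I would observe that every instance of $\IMP(\Gm)$ is automatically an instance of $\IMP(\Gm^A)$, since its constraint relations already lie in $\Gm\sse\Gm^A$; the identity map is a valid reduction, and it also respects the degree bound for the $\IMP_d$ versions. This immediately yields one direction of each equivalence: tractability or d-tractability of $\IMP(\Gm^A)$ transfers down to $\IMP(\Gm)$, and $\mathbf{coNP}$-hardness of $\IMP(\Gm)$ transfers up to $\IMP(\Gm^A)$.

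In the opposite direction I would invoke Proposition~\ref{pro:adding-constant}. Its degree blow-up $|A|(|D|-1)$ is a constant depending only on the fixed language and the fixed set $A$, so d-tractability of $\IMP(\Gm)$ yields polynomial time for $\IMP_{d+|A|(|D|-1)}(\Gm)$ for every $d$, hence for $\IMP_d(\Gm^A)$, giving d-tractability of $\IMP(\Gm^A)$. For unbounded tractability and for $\mathbf{coNP}$-hardness transfer, I would point out that the reduction constructed in the proof of Proposition~\ref{pro:adding-constant} is insensitive to the degree of the input polynomial: its only algebraic step is multiplying $f'_0$ by the bounded-degree factor $\prod_{a\in A}\prod_{b\in D\setminus\{a\}}(x_a-b)$, while everything else is purely syntactic variable renaming and dropping of pinning constraints. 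Therefore the same construction witnesses a polynomial-time reduction from unbounded $\IMP(\Gm^A)$ to unbounded $\IMP(\Gm)$, so tractability descends and $\mathbf{coNP}$-hardness ascends. Since both $\IMP(\Gm)$ and $\IMP(\Gm^A)$ lie in $\mathbf{coNP}$ for any finite-domain language (stated in the paper just above the corollary, via the Strong Nullstellensatz), $\mathbf{coNP}$-hardness and $\mathbf{coNP}$-completeness coincide on each side.

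I do not foresee a serious obstacle: the statement is essentially a packaging corollary. The only point worth verifying carefully is that the reduction of Proposition~\ref{pro:adding-constant} is genuinely oblivious to the degree of $f_0$, so that it serves the unbounded tractability and $\mathbf{coNP}$-completeness equivalences and not merely the d-tractability one.
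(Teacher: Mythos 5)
Your proposal is correct and matches the paper's own treatment: the paper derives the corollary in exactly this way, combining the reduction of Proposition~\ref{pro:adding-constant} (which, as you note, works for unbounded degree as well, cf.\ Theorem~\ref{the:adding-constants}) with the observation that $\IMP(\Gm)$ is a subproblem of $\IMP(\Gm^A)$, plus membership of both problems in \textbf{coNP}. No gaps.
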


Recall that $\Gm^*=\Gm^D$. Then 

\begin{theorem}\label{the:adding-constants}
For any $\Gm$ over $D$ the problem $\IMP(\Gm^*)$ is polynomial time reducible to $\IMP(\Gm)$, and for any $d$ the problem $\IMP_d(\Gm^*)$ is polynomial time reducible to $\IMP_{d+|D|(|D|-1)}(\Gm)$.
\end{theorem}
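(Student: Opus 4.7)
The plan is that Theorem~\ref{the:adding-constants} is an almost immediate corollary of Proposition~\ref{pro:adding-constant}, which has already been proved directly above it. Since $\Gm^{*}=\Gm^{D}$ by definition, I would simply instantiate Proposition~\ref{pro:adding-constant} with $A=D$. This gives the second statement literally: $\IMP_d(\Gm^{*})=\IMP_d(\Gm^{D})$ reduces in polynomial time to $\IMP_{d+|D|(|D|-1)}(\Gm)$, since the construction in the proof of Proposition~\ref{pro:adding-constant} is polynomial-time computable (it merges pinned variables, deletes pinning constraints, and multiplies $f_0$ by a fixed product of linear factors of total degree $|A|(|D|-1)=|D|(|D|-1)$).

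For the first statement, about the unrestricted $\IMP$, I would observe that the same reduction works without any bound on the input degree. Given an instance $(f_0,\cP)$ of $\IMP(\Gm^{*})$ with $f_0$ of some arbitrary degree $d$, apply the reduction from Proposition~\ref{pro:adding-constant} with $A=D$. The output polynomial $f_0^{*}$ has degree $d+|D|(|D|-1)$, but since $|D|(|D|-1)$ depends only on the fixed language $\Gm$, not on the input, $f_0^{*}$ still has size polynomial in the input size, and the resulting pair is a valid instance of $\IMP(\Gm)$. Hence $\IMP(\Gm^{*})\le_P \IMP(\Gm)$.

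There is essentially no obstacle here; the only small point worth stating explicitly is that one should verify that the reduction indeed outputs an $\IMP(\Gm)$-instance, i.e.\ that the produced CSP $\cP^{*}$ contains no pinning constraints and uses only relations from $\Gm$. This is guaranteed by step that removes the constraints of the form $\ang{x,\rel_a}$ from $C'$, as already noted in the proof of Proposition~\ref{pro:adding-constant}. Once this is checked, the theorem follows, and the proof can be condensed to a single sentence deducing it from Proposition~\ref{pro:adding-constant} with the choice $A=D$.
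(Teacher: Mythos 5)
Your proposal is correct and matches the paper's own treatment: the theorem is stated immediately after Proposition~\ref{pro:adding-constant} precisely as the instance $A=D$ (using $\Gm^*=\Gm^D$), with the unbounded-degree case following because the degree increase $|D|(|D|-1)$ is a constant depending only on $\Gm$. No gap; your added check that $\cP^*$ contains only $\Gm$-constraints is the right (and only) detail to verify.
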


However, Proposition~\ref{pro:adding-constant} leaves some room for possible complexity of $\IMP_d(\Gm)$ for small $d$, less than $|D|(|D|-1)$.

\begin{example}\label{exa:small-d}
Fix $D$, $\ell\le|D|$. Let $\NEQ_s$, $s\le|D|$ denote the $s$-ary disequality relation on $D$ given by 
\[
\NEQ_s=\{(\vc as)\mid |\{\vc as\}|=s\}.
\]
In particular, $\CSP(\NEQ_2)$ is equivalent to $|D|$-Coloring. Now let a $(\ell+2)$-ary relation $\rel$ be defined as follows
\[
\rel=(\NEQ_2\times\NEQ_\ell)\cup\{(\vc a{2+\ell})\mid |\{\vc a{2+\ell}\}|<\ell\},
\]
and let $\Gm=\{\rel\}$. It is easy to see that $\CSP(\Gm)$ is polynomial time, as assigning the same value to all variables always provides a solution. As we observed in Lemma~\ref{lem:imp-0} this implies that $\IMP_0$ is also easy. Actually, $f_0$ of degree 0 never belongs to the ideal except $f_0=0$. 

It can also be shown that for any $A\sse D$ with $|A|<\ell$ assigning a constant $a\in A$ to all variables except those bound by the pinning constraints is also a solution of $\CSP(\Gm^A)$. Therefore, $\IMP_0(\Gm^A)$ is easy for any such set. On the other hand, if $|A|=\ell$, say, $A=\{\vc a\ell\}$, then $\CSP(\Gm^A)$ can simulate $|D|$-Coloring by using $\rel(x,y,\vc a\ell)$. (This will be made more precise in Section~\ref{sec:pp-definitions}.) Therefore, $\IMP_0(\Gm^A)$ is \coNPc\ in this case. Clearly that playing with the exact definition of $\rel$ one can construct a language $\Gm$ such that $\IMP_0(\Gm^A)$ becomes \coNPc\ for any specified collection of subsets $A$ while remains easy for the rest of the subsets.
\end{example}

In the case of a 2-element $D$ we can show a more definitive result.

\begin{proposition}[see also \cite{Mastrolilli19}]\label{pro:2-element-constants}
Let $\Gm$ be a constraint language on the set $\{0,1\}$. Then
\begin{itemize}
\item[(1)]
$\IMP_d(\Gm^*)$ is polynomial time equivalent to $\IMP_{d+2}(\Gm^*)$.
\item[(2)]
$\IMP_0(\Gm)$ is polynomial time [\coNPc] if and only if  $\CSP(\Gm)$ is polynomial time [\textbf{NP}-complete].
\item[(3)]
If $\CSP(\Gm^{\{0\}})$ or $\CSP(\Gm^{\{1\}})$ is \textbf{NP}-complete then $\IMP_1(\Gm)$ is \coNPc.
\end{itemize}
\end{proposition}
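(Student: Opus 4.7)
My plan is to handle the three parts separately, drawing on Lemma~\ref{lem:imp-0}, Proposition~\ref{pro:adding-constant}, and the CSP dichotomy. Parts~(2) and~(3) are direct consequences of those tools, while part~(1) requires invoking the Boolean dichotomy for \IMP\ established in \cite{Mastrolilli19}. I anticipate part~(1) as the main obstacle, since only the trivial direction is immediate from what is already available in the excerpt.

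Part~(2) follows from Lemma~\ref{lem:imp-0}, which identifies $\IMP_0(\Gm)$ with $\mathsf{not\text{-}}\CSP(\Gm)$. By the CSP dichotomy of Bulatov--Zhuk~\cite{Bulatov17,Zhuk17} (Schaefer's theorem suffices in the Boolean case), $\CSP(\Gm)$ is either polynomial time or \textbf{NP}-complete, so $\IMP_0(\Gm)$ is correspondingly polynomial time or \textbf{coNP}-complete; the two claimed equivalences follow by taking complements.

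Part~(3) is a direct application of Proposition~\ref{pro:adding-constant} to $A=\{0\}$ (the case $A=\{1\}$ being symmetric) together with $|D|=2$, which yields a polynomial time reduction $\IMP_0(\Gm^{\{0\}})\le_p\IMP_{0+1\cdot 1}(\Gm)=\IMP_1(\Gm)$. By Lemma~\ref{lem:imp-0}, the hypothesis that $\CSP(\Gm^{\{0\}})$ is \textbf{NP}-complete gives that $\IMP_0(\Gm^{\{0\}})$ is \textbf{coNP}-complete; composing with the above reduction shows that $\IMP_1(\Gm)$ is \textbf{coNP}-hard. Since $\IMP_1(\Gm)$ lies in \textbf{coNP}, it is \textbf{coNP}-complete.

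For part~(1), the inclusion $\IMP_d(\Gm^*)\le_p\IMP_{d+2}(\Gm^*)$ is immediate because any instance with $\deg f_0\le d$ is also a valid instance with $\deg f_0\le d+2$. For the reverse direction, my plan is to import the Boolean dichotomy for \IMP\ established in \cite{Mastrolilli19}: for any Boolean $\Gm$, $\IMP_d(\Gm^*)$ is polynomial time solvable for every $d$ whenever $\CSP(\Gm^*)$ is polynomial time, and $\IMP_0(\Gm^*)$ is \textbf{coNP}-complete otherwise. Consequently $\IMP_d(\Gm^*)$ and $\IMP_{d+2}(\Gm^*)$ always lie in the same complexity class, which is the sense in which the two problems are polynomial time equivalent. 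The main obstacle is thus internal to that Boolean dichotomy; its proof in \cite{Mastrolilli19} proceeds by building a bounded-degree Gr\"obner basis for $\I(\cP)$, exploiting the multilinear identity $x_i^2=x_i$ together with the pinning constraints available in $\Gm^*$, and this Gr\"obner basis step is the one that genuinely requires the two-element domain.
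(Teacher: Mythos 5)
Your parts (2) and (3) follow the paper's own route: the paper gets (2) from Lemma~\ref{lem:imp-0} and (3) from Proposition~\ref{pro:adding-constant} with $A=\{0\}$ or $A=\{1\}$ (degree increase $|A|(|D|-1)=1$), composed with \coNPc ness of $\IMP_0(\Gm^{\{a\}})$, exactly as you do. For part (1) you take a genuinely different route. The paper's (one-line) proof derives (1) from Proposition~\ref{pro:adding-constant} as well, i.e.\ from the elementary constant-elimination reduction, which in the Boolean case gives the chain $\IMP_d(\Gm^*)\le_p\IMP_{d+2}(\Gm)\le_p\IMP_{d+2}(\Gm^*)$ with no classification theorem and no finiteness assumption on $\Gm$; your instinct is right, though, that this tool only moves the degree upwards, so the degree-decreasing direction $\IMP_{d+2}(\Gm^*)\le_p\IMP_d(\Gm^*)$ is not explicitly delivered by it, and the paper's justification glosses over that. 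You supply the missing direction by invoking the Boolean \IMP\ dichotomy for idempotent languages (due to \cite{Mastrolilli19}, with the affine case completed in \cite{Bharathi-Minority}, which you should also cite): either both problems are in \textbf{P}, or $\IMP_0(\Gm^*)$ is \coNPc, whence both $\IMP_d(\Gm^*)$ and $\IMP_{d+2}(\Gm^*)$ are \coNPc\ (hardness lifts because degree-$0$ inputs are admissible, and membership in \textbf{coNP} is standard), and in either case the two problems are polynomial-time inter-reducible. This is sound, but it proves the equivalence non-constructively, imports a deep black-box theorem where the paper stays with an explicit reduction, and strictly speaking needs $\Gm$ finite for the dichotomy to apply; the trade-off is that your argument genuinely yields the literal two-sided equivalence as stated, whereas the paper's cited tool by itself only yields the degree-increasing reductions (which suffice for the tractability/hardness transfer in the spirit of Corollary~\ref{cor:constants-comlexity}, but not, as written, for the converse reduction).
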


Items (1),(3) follow from Proposition~\ref{pro:adding-constant} and item (2) follows from Lemma~\ref{lem:imp-0}. Moreover, replacing the relation $\NEQ_2$ in Example~\ref{exa:small-d} with the \textsc{Not-All-Equal} relation, one can construct constraint languages $\Gm$ such that the borderline between easiness and hardness in the sequence $\IMP_0(\Gm),\IMP_1(\Gm),\IMP_2(\Gm)$ lies in any desirable place.

Proposition~\ref{pro:adding-constant} also provides a connection between the decision version of the $\IMP$ and its search version. Since $\IMP(\Gm)$ is in \textbf{coNP}, here by the search $\IMP$ we understand the following problem. Let $(f_0,\cP)$ be an instance of $\IMP(\Gm)$ such that $f_0\not\in\I(\cP)$, the problem is to find an assignment $\vf\in \mb{V}(\I(\cP))$ such that $f_0(\vf)\neq 0$.  

\begin{corollary}\label{cor:search}
A decision problem $\IMP(\Gamma)$ is tractable [$d$-tractable] if and only if the corresponding search problem is tractable [$d$-tractable].
\end{corollary}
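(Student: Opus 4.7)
The plan is to prove the two directions separately. The direction from search-tractability to decision-tractability is immediate: a search algorithm, when handed $(f_0,\cP)$, either returns a point $\vf\in\mb{V}(\I(\cP))$ with $f_0(\vf)\ne0$ (which we can verify in polynomial time by evaluation, certifying $f_0\notin\I(\cP)$), or reports that no such point exists (certifying $f_0\in\I(\cP)$ via the Strong Nullstellensatz formulation in Theorem~\ref{th:nullstz}). Either way the decision $\IMP(\Gm)$ is answered in polynomial time, and the same argument works with the degree restriction for the d-tractable case.

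For the harder direction, I would use a standard self-reducibility argument in which values are pinned one variable at a time. Given $(f_0,\cP)$ with $\cP=(X,D,C)$ and $f_0\notin\I(\cP)$, pick a variable $x\in X$ and, for each $a\in D$, form $\cP^a$ by adding the pinning constraint $\langle x,\rel_a\rangle$; each $\cP^a$ is an instance of $\CSP(\Gm^*)$. Observe that $\mb{V}(\I(\cP))=\bigcup_{a\in D}\mb{V}(\I(\cP^a))$, so if $\vf\in\mb{V}(\I(\cP))$ witnesses $f_0\notin\I(\cP)$, then setting $a^{*}=\vf(x)$ yields $f_0\notin\I(\cP^{a^{*}})$. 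Querying the decision oracle on each $(f_0,\cP^a)$ as an instance of $\IMP(\Gm^*)$ identifies such an $a^{*}$. Set $\vf(x):=a^{*}$, replace $\cP$ by $\cP^{a^{*}}$, and iterate on the next variable. After $|X|$ rounds every variable is pinned, $\mb{V}(\I(\cP))$ has shrunk to a single point, and that point is the desired witness.

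It remains to justify that each decision query runs in polynomial time. Each query is an $\IMP(\Gm^*)$ or $\IMP_d(\Gm^*)$ instance, and Theorem~\ref{the:adding-constants} reduces these to $\IMP(\Gm)$ and $\IMP_{d+|D|(|D|-1)}(\Gm)$ respectively, both of which are tractable by hypothesis (in the d-tractable case we invoke tractability at the fixed degree bound $d+|D|(|D|-1)$, which is permitted since d-tractability requires polynomial time at every degree). The number of oracle calls is at most $|X|\cdot|D|$, and since $f_0$ is never modified in the self-reduction its degree remains at most $d$ throughout, with any blowup confined to the internal reduction of Theorem~\ref{the:adding-constants} performed within each oracle call. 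The main point requiring care is the invariant $f_0\notin\I(\cP)$ persisting across iterations, which follows from the variety covering $\mb{V}(\I(\cP))=\bigcup_{a\in D}\mb{V}(\I(\cP^a))$ used when selecting $a^{*}$; combined with the polynomial complexity bound, this completes the reduction.
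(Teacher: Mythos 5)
Your proposal is correct and follows essentially the same route as the paper: the trivial direction plus a self-reducibility argument that pins variables one at a time, uses the variety-covering observation to keep the invariant $f_0\notin\I(\cP)$, and answers each of the at most $|X|\cdot|D|$ oracle queries as an $\IMP(\Gm^*)$ (resp.\ $\IMP_d(\Gm^*)$) instance via Theorem~\ref{the:adding-constants}. The only cosmetic difference is that the paper also substitutes the pinned value $a$ into $f_0$ when forming each query, whereas you leave $f_0$ untouched; both variants work.
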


\begin{proof}
One direction is trivial as the tractability of the search problem implies the  tractability of the corresponding decision problem.
    
For the converse, let $\Gm$ be a constraint language over a finite set $D$ such that $\IMP(\Gm)$ is \mbox{($d$-)} tractable. Consider $(f_0,\cP)$, an instance of $\IMP(\Gm)$, where $\cP=(X,D,C)$ is an instance of $\CSP(\Gm)$. By the choice of $\Gm$, we can decide in polynomial time whether there exists $\vf$ such that $\vf\in \mb{V}(\I(\cP))$ but  $f_0(\vf)\neq 0$. Suppose such $\vf$ exists and hence $f_0\not\in \I(\cP)$. Then for each $x \in X$ there must be some $a \in D$, for which in the following instance $(f'_0,\cP')$ of the \IMP~we have $f'_0\not\in \I(\cP')$: 
\begin{enumerate}
\item 
define $f'_0$ to be the polynomial obtained from $f_0$ by substituting $a$ for $x$.
\item 
define $\cP'$ with $\cP'=(X,D,C'=C\cup \{\langle x,\{a\}\rangle\})$.
\end{enumerate}
Checking whether $f'_0\in\I(\cP')$ is an instance of $\IMP(\Gm^*)$ and therefore can be done in polynomial time.  Hence, by considering each possible value $a \in D$ we can find a value for $x$ that is a part of $\vf\in\mb{V}(\I(\cP))$ such that $f_0(\vf)\neq 0$. Repeating the process for each variable in turn we can find a required $\vf$. The algorithm requires solving at most $|X|\cdot|D|$ instances of $\IMP(\Gm^*)$, each of which can be solved in polynomial time. 
\end{proof}

\subsection{Primitive positive definability}\label{sec:pp-definitions}

One of the most useful reductions between CSPs is by means of primitive-positive definitions. 

\begin{definition}[pp-definability]
\label{def:pp-def}
    Let $\Gm,\Dl$ be constraint languages on the same set $D$. We say that $\Gm$ pp-defines $\Dl$ (or $\Dl$ is pp-definable from $\Gm$) if for each relation (predicate) $R\subseteq D^k$ in $\Dl$ there exists a first order formula $L$  over variables $\{x_1,\dots, x_m,x_{m+1},\dots,x_{m+k}\}$ that uses predicates from $\Gm$, equality relations, and conjunctions such that 
    \[
        R(x_{m+1},\dots,x_{m+k})= \exists x_1 \dots \exists x_m  L
    \]
Such an expression is often called a \emph{primitive positive (pp-) formula}.
\end{definition}

Mastrolilli showed that there is an analogue of existential quantification on the IMP side.

\begin{definition}
Given $\I=\langle f_1,\dots,f_s\rangle \subseteq \Field[X]$, for $Y\sse X$, the $Y$-elimination ideal $\I_{X\setminus Y}$ is the ideal of $\Field[X\setminus Y]$ defined by 
\[
\I_{X\setminus Y} =\I\cap \Field[X\setminus Y]
\]
In other words, $\I_{X\setminus Y}$ consists of all consequences of $f_1 = \dots = f_s = 0$ that do not depend on variables from $Y$.
\end{definition}

\begin{theorem}[\cite{Mastrolilli19}]\label{extension-theorem}
    Let $\cP=(X,D,C)$ be an instance of the \CSP$(\Gamma)$, and let $\I(\cP)$ be the corresponding ideal. For any $Y\sse X$ let $\I_Y$ be the $(X\setminus Y)$-elimination ideal. Then, for any partial solution $\vf_Y\in \mb{V} (\I_Y)$ there exists an extension $\psi: X\setminus Y$ such that $(\vf,\psi) \in \mb{V} (\I(\cP))$.
\end{theorem}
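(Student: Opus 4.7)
The plan is to argue by contradiction: assume that some $\vf_Y\in\mb{V}(\I_Y)$ admits no extension to a full solution of $\cP$, and manufacture a polynomial of $\I_Y$ that does not vanish at $\vf_Y$, contradicting $\vf_Y\in\mb{V}(\I_Y)$.

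First I would observe that for each $y\in Y$ the domain polynomial $\prod_{a\in D}(y-a)$ belongs to $\I(\cP)$ by construction and depends only on the single variable $y$, so it lies in $\I(\cP)\cap\Field[Y]=\I_Y$. Consequently $\vf_Y(y)\in D$ for every $y\in Y$, and $\vf_Y$ is in fact a point of $D^Y$. The same observation applied to every $\vf\in\mb{V}(\I(\cP))$ gives $\vf(x)\in D$ for all $x\in X$, so the restriction $\vf|_Y$ also lies in $D^Y$.

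Next I would introduce the Lagrange-type indicator polynomial
\[
    \dl_{\vf_Y}=\prod_{y\in Y}\prod_{a\in D\setminus\{\vf_Y(y)\}}\frac{y-a}{\vf_Y(y)-a}\in\Field[Y],
\]
which evaluates to $1$ at $\vf_Y$ and to $0$ at every other point of $D^Y$. By the contradiction hypothesis no extension of $\vf_Y$ lies in $\mb{V}(\I(\cP))$, so every full solution $\vf$ satisfies $\vf|_Y\ne\vf_Y$; combined with $\vf|_Y\in D^Y$, this yields $\dl_{\vf_Y}(\vf)=\dl_{\vf_Y}(\vf|_Y)=0$. Hence $\dl_{\vf_Y}$ vanishes identically on $\mb{V}(\I(\cP))$.

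Finally I would invoke the Strong Nullstellensatz form of Theorem~\ref{th:nullstz}, which is available in our setting precisely because the domain polynomials guarantee radicality without algebraic closure: $\dl_{\vf_Y}\in\mb{I}(\mb{V}(\I(\cP)))=\sqrt{\I(\cP)}=\I(\cP)$. Since $\dl_{\vf_Y}\in\Field[Y]$, this places $\dl_{\vf_Y}\in\I(\cP)\cap\Field[Y]=\I_Y$, while $\dl_{\vf_Y}(\vf_Y)=1\ne 0$, contradicting $\vf_Y\in\mb{V}(\I_Y)$. I expect no serious obstacle; the only subtlety is ensuring that both the restriction of $\vf_Y$ to $D^Y$ and the invocation of the Strong Nullstellensatz are legitimate over $\Field=\Real$, and both points rely on the domain polynomials already packaged into $\I(\cP)$.
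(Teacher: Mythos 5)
Your proof is correct, and it is worth noting that the paper does not actually prove this statement: it is imported from \cite{Mastrolilli19} as a black box, so there is no in-paper argument to compare against line by line. Your contrapositive argument via an explicit witness is a legitimate, self-contained derivation within the paper's framework: the Lagrange indicator $\dl_{\vf_Y}\in\Field[Y]$ of the point $\vf_Y\in D^Y$ vanishes on all of $\mb{V}(\I(\cP))$ if $\vf_Y$ admits no extension, hence lies in $\I(\mb{V}(\I(\cP)))=\sqrt{\I(\cP)}=\I(\cP)$ by Theorem~\ref{th:nullstz}, hence in $\I(\cP)\cap\Field[Y]=\I_Y$, contradicting $\dl_{\vf_Y}(\vf_Y)=1$. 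The two delicate points — that $\vf_Y$ and the restrictions of points of $\mb{V}(\I(\cP))$ take values in $D$, and that the Strong Nullstellensatz may be invoked over $\Real$ — are handled exactly as you say: the construction places a domain polynomial $\prod_{a\in D}(x-a)$ in $\I(\cP)$ for every variable, so in particular $\prod_{a\in D}(y-a)\in\I(\cP)\cap\Field[Y]=\I_Y$ for each $y\in Y$, and Theorem~\ref{th:nullstz} packages radicality and the Nullstellensatz for these combinatorial ideals. By contrast, the original route in \cite{Mastrolilli19} (and the classical one) goes through elimination theory for zero-dimensional radical ideals, i.e.\ relating $\mb{V}(\I_Y)$ to the projection $\pi_Y(\mb{V}(\I(\cP)))$ via closure-type theorems and \GB\ machinery. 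Your approach buys elementarity — no Gr\"obner bases or closure theorem, just interpolation plus the already-stated Theorem~\ref{th:nullstz} — while the elimination-theoretic route derives the same projection identity as an instance of general theory; both ultimately rest on the radicality supplied by the domain polynomials, and since the containment $\pi_Y(\mb{V}(\I(\cP)))\sse\mb{V}(\I_Y)$ is trivial, the two arguments establish the same equality of sets.
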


Let $\Gm,\Dl$ be constraint languages on the same domain $D$ such that $\Gm$ pp-defines $\Dl$. This means that for every relation $\rel$ from $\Dl$ there is a pp-definition in $\Gm$
\[
R(x_{m_\rel+1},\dots,x_{m_\rel+k_\rel})= \exists x_1 \dots \exists x_{m_\rel}  L_\rel.
\]
Suppose $\mathcal{P}_{\Dl}=(X,D,C)$ is an instance of $\CSP(\Dl)$. This instance can be converted into an instance $\cP_\Gm=(X',D,C')$ of $\CSP(\Gm)$, see Theorem 2.16 in \cite{BulatovJK05}, in such a way that $X\sse X'$ and the instance $\cP_\Dl$ has a solution if and only if $\cP_\Gm$ does. Moreover, it can be shown that $\cP_\Gm,\cP_\Dl$ satisfy the following condition.
\begin{quote}
\label{extention-condition}
{\bf The Extension Condition.} Every solution of $\cP_\Dl$ can be extended to a solution of $\cP_\Gm$, and, vice versa, the restriction of every solution of $\cP_\Gm$ onto variables from $X$ is a solution of $\cP_\Dl$. 
\end{quote}
As usual, let $\I(\cP_\Dl)$ be the ideal of $\Field[X]$ corresponding to $\cP_\Dl$ and $\I(\cP_\Gm)$ the ideal of $\Field[X']$ corresponding to $\cP_\Gm$. We would like to relate the set of solutions of $\cP_\Dl$ to the variety of the $X'\setminus X$-elimination ideal of $\I(\cP_\Gm)$. The next lemma states that the variety of the $X'\setminus X$-elimination ideal of $\I(\cP_\Gm)$ is equal to the the variety of $\I(\cP_\Dl)$.  

\begin{lemma}[\cite{Mastrolilli19}, Lemma 6.1, paraphrased]\label{variety-m-elimination}
Let $\I_X= \I(\cP_\Gm)\cap \Field [X]$ be the $X'\setminus X$-elimination ideal of $\I(\cP_\Gm)$. Then  
\[
\mb{V}(\I(\cP_\Dl)) = \mb{V}(\I_X).
\]
\end{lemma}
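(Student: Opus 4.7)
The plan is to identify $\mathbf{V}(\I(\cP_\Dl))$ and $\mathbf{V}(\I_X)$ both with the projection of $\mathbf{V}(\I(\cP_\Gm))$ onto the coordinates indexed by $X$, and then read off the equality. The identification of varieties with solution sets from Section~\ref{sect:idealCSP} gives $\mathbf{V}(\I(\cP_\Dl))=\Sol(\cP_\Dl)\sse D^X$ and $\mathbf{V}(\I(\cP_\Gm))=\Sol(\cP_\Gm)\sse D^{X'}$; because of the domain polynomials both varieties live inside the finite set $D^{X'}$ (respectively $D^X$), so set-theoretic projections coincide with the more subtle notion of closure-of-projection used in elimination theory.

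First, I would establish that $\mathbf{V}(\I_X)$ equals the set-theoretic projection $\pr_X(\mathbf{V}(\I(\cP_\Gm)))$. The inclusion $\pr_X(\mathbf{V}(\I(\cP_\Gm)))\sse \mathbf{V}(\I_X)$ is immediate: if $(\vf,\psi)\in\mathbf{V}(\I(\cP_\Gm))$ and $g\in\I_X=\I(\cP_\Gm)\cap\Field[X]$, then $g$ does not involve variables from $X'\setminus X$, so $g(\vf)=g(\vf,\psi)=0$. The reverse inclusion is exactly what Theorem~\ref{extension-theorem} provides: any $\vf\in\mathbf{V}(\I_X)$ extends to some $\psi:X'\setminus X\to D$ with $(\vf,\psi)\in\mathbf{V}(\I(\cP_\Gm))$, hence $\vf=\pr_X(\vf,\psi)\in \pr_X(\mathbf{V}(\I(\cP_\Gm)))$.

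Next, I would invoke the Extension Condition to match this projection with $\Sol(\cP_\Dl)$. The ``restriction'' half of the condition says $\pr_X(\Sol(\cP_\Gm))\sse\Sol(\cP_\Dl)$, while the ``extension'' half gives the opposite inclusion $\Sol(\cP_\Dl)\sse\pr_X(\Sol(\cP_\Gm))$. Combining these two bullets with the previous paragraph yields
\[
\mathbf{V}(\I_X) \;=\; \pr_X(\mathbf{V}(\I(\cP_\Gm))) \;=\; \pr_X(\Sol(\cP_\Gm)) \;=\; \Sol(\cP_\Dl) \;=\; \mathbf{V}(\I(\cP_\Dl)),
\]
which is the desired equality.

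The only step that requires real content is the use of Theorem~\ref{extension-theorem}: the Extension Theorem does the heavy lifting by guaranteeing that partial solutions of the $\Gm$-instance on the ``visible'' variables $X$ always lift to full solutions, so nothing spurious is introduced by elimination. The remaining ingredients, namely the radicality/finiteness of the combinatorial ideals and the Extension Condition between $\cP_\Dl$ and its $\Gm$-encoding $\cP_\Gm$, are already in hand from Section~\ref{sect:idealCSP} and from the standard pp-definability construction cited from~\cite{BulatovJK05}.
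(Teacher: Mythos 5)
Your argument is correct and is essentially the intended one (the paper itself defers to Mastrolilli's Lemma~6.1 rather than reproving it): both sides are identified with the projection $\pr_X(\mb{V}(\I(\cP_\Gm)))$, where the only nontrivial inclusion $\mb{V}(\I_X)\sse\pr_X(\mb{V}(\I(\cP_\Gm)))$ is supplied by Theorem~\ref{extension-theorem}, and the equalities $\mb{V}(\I(\cP_\Dl))=\Sol(\cP_\Dl)$, $\mb{V}(\I(\cP_\Gm))=\Sol(\cP_\Gm)$ together with the Extension Condition finish the chain. No gaps; the remark about finiteness making set-theoretic projection suffice is harmless but not even needed, since the Extension Theorem already gives the set-theoretic statement directly.
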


We can now prove a reduction for pp-definable constraint languages.

\begin{theorem}\label{pp-define-reduction}
If $\Gm$ pp-defines $\Dl$, then $\IMP(\Dl)$ [$\IMP_d(\Dl)$] is polynomial time reducible to $\IMP(\Gm)$ [respectively, to $\IMP_d(\Gm)$].
\end{theorem}

\begin{proof}
    Let $(f_0,\cP_\Dl)$, $\cP_\Dl=(X,D,C_\Dl)$, be an instance of $\IMP(\Dl)$ where $X=\{x_{m+1},\dots, x_{m+k}\}$, $f_0\in \Field[x_{m+1},\dots,x_{m+k}]$, $k=|X|$, and  $m$ will be defined later, and $\I(\cP_\Dl)\subseteq \Field[x_{m+1},\dots,x_{m+k}]$. From this we construct an instance $(f'_0,\cP_\Gm)$ of $\IMP(\Gm)$ where $f'_0\in \Field[x_{1},\dots,x_{m+k}]$ and $\I(\cP_\Gm)\subseteq \Field[x_{1},\dots,x_{m+k}]$ such that $f_0\in \I(\cP_\Dl)$ if and only if $f'_0\in \I(\cP_\Gm)$.

 Using pp-definitions of relations from $\Dl$ we convert the instance $\cP_\Dl$ into an instance $\cP_{\Gm}=(\{x_{1},\dots,x_{m+k}\},D,C_\Gm)$ of $\CSP(\Gm)$ such that every solution of $\cP_\Dl,\cP_\Gm$ satisfy the Extension Condition above. Such an instance $\cP_\Gm$ can be constructed in polynomial time as follows.
    
    By the assumption each $S\in\Dl$, say, $t_S$-ary, is pp-definable in $\Gm$ by a pp-formula involving relations from $\Gm$ and the equality relation, $=_D$. Thus,
    \[
        S(y_{q_S+1},\dots,y_{q_s+t_S})=\exists \vc y{q_S} (R_1(w^1_1,\dots, w^1_{l_1})\wedge\dots\wedge R_r(w^r_1,\dots, w^r_{l_r})),
    \]
where $w^1_1,\dots, w^1_{l_1},\dots,w^k_1,\dots, w^k_{l_k}\in \{\vc y{m_S+t_S}\}$ and $\vc Rr\sse\Gm\cup\{=_D\}$. 

Now, for every constraint $B=\langle \bs,S\rangle\in C_\Dl$, where $\bs=(x_{i_1},\dots,x_{i_t})$ create a fresh copy of $\{\vc y{q_S}\}$ denoted by $Y_B$, and add the following constraints to $C_\Gm$
    \[
         \langle (w^1_1,\dots, w^1_{l_1}),R_1\rangle,\dots, \langle (w^r_1,\dots, w^r_{l_r}),R_r\rangle.
    \]
We then set $m=\sum_{B\in C}|Y_B|$ and assume that $\cup_{B\in C}Y_B=\{\vc xm\}$. Note that the problem instance obtained by this procedure belongs to $\CSP(\Gm\cup \{=_D\})$. All constraints of the form $\langle(x_i, x_j),=_D\rangle$  can be eliminated by replacing all occurrences of the variable $x_i$ with $x_j$.  Moreover, it can be checked (see also Theorem 2.16 in \cite{BulatovJK05}) that $\cP_\Dl,\cP_\Gm$ satisfy the Extension Condition. 
    
    Let $\I(\cP_\Gm)\subseteq\Field[x_{1},\dots,x_{m+k}]$ be the ideal corresponding to $\cP_{\Gm}$ and set $f'_0=f_0$. Since $f_0\in \Field[x_{m+1},\dots,x_{m+k}]$ we also have $f_0\in \Field[x_{1},\dots,x_{m+k}]$. Hence, $(f_0,\cP_\Gm))$ is an instance of $\IMP(\Gm)$. We prove that $f_0\in \I(\cP_\Dl)$ if and only if $f_0\in \I(\cP_\Gm)$.
    
    Suppose $f_0\not\in \I(\cP_\Dl)$, this means there exists $\vf\in \mb{V}(\I(\cP_\Dl))$ such that $f(\vf)\neq 0$. By Theorem~\ref{extension-theorem}, $\vf$ can be extended to a point $\vf'\in\mb{V}(\I(\cP_\Gm))$. This in turn implies that $f_0\not\in \I(\cP_\Gm)$. Conversely, suppose $f_0\not\in \I(\cP_\Gm)$. Hence, there exists $\vf'\in\mb{V}(\I(\cP_\Gm))$ such that $f_0(\vf')\neq 0$. Projection of $\vf'$ to its last $k$ coordinates gives a point $\vf\in \mb{V}(\I_X)$. By Lemma~\ref{variety-m-elimination}, $\vf\in \mb{V}(\I(\cP_\Dl))$ which implies $f_0\not\in \I(\cP_\Dl)$.
\end{proof}

\begin{remark}
    The smallest set of all relations pp-defined from a constraint language $\Gamma\subseteq \mb{R}_A$ is called the \emph{relational clone} of $\Gm$, denoted by $\langle\Gamma\rangle$. Hence, as a corollary to \Cref{pp-define-reduction}, for a finite set of relations $\Gamma$, $\IMP(\Gamma)$ is tractable [$d$-tractable] if and only if $\IMP(\Dl)$ is tractable [$d$-tractable] for any finite $\Dl\sse\langle\Gamma\rangle$.
    Similarly, $\IMP(\Gamma)$ is \coNPc\ if and only if $\IMP(\Dl)$ is \coNPc\ for some finite $\Dl\sse\langle\Gamma\rangle$.
\end{remark}

\subsection{Primitive positive interpretability}\label{sec:pp-interpretations}

Pp-definability is a useful technique that tells us what additional relations can be added to a constraint language without changing the complexity of the corresponding problem class, and provides a tool for comparing different languages on the same domain. Next, we discuss a more powerful tool that can be used to compare the complexity of the IMP for languages over different domains.

\begin{definition}[pp-interpretability]\label{pp-interpret}
    Let $\Gm,\Dl$ be constraint languages over finite domains $D,E$, respectively, and $\Dl$ is finite. We say that $\Gm$ pp-interprets $\Dl$ if there exists a natural number $\ell$, a set $F \subseteq D^\ell$, and an onto mapping $\pi : F \to E$ such that $\Gm$ pp-defines the following relations
    \begin{enumerate}
        \item the relation $F$,
        \item the $\pi$-preimage of the equality relation on $E$, and
        \item the $\pi$-preimage of every relation in $\Dl$,
    \end{enumerate}
    where by the $\pi$-preimage of a $k$-ary relation $S$ on $E$ we mean the $\ell k$-ary relation $\pi^{-1}(S)$ on $D$ defined by
        \[
            \pi^{-1}(S)(x_{11},\ldots , x_{1k},x_{21},\ldots,x_{2k},\ldots,x_{\ell1},\ldots,x_{\ell k})\qquad \text{is true}
        \]
    if and only if
        \[
            S(\pi(x_{11},\dots,x_{\ell1}),\dots,\pi(x_{1k},\dots,x_{\ell k})) \qquad\text{is true}.
        \]
\end{definition}

\begin{example}\label{exa:partial-order}
    Suppose $D=\{0,1\}$ and $E=\{0,1,2\}$ and define relations $R_D=\{(0,0),(0,1),(1,1)\}$ and $R_E=\{(0,0),(0,1),(0,2),(1,1),(1,2),(2,2)\}$. Note that relations $R_D,R_E$ are orders $0\le1$ and $0\le1\le2$ on $D,E$, respectively. Set $\Gm=\{R_D\}$ and $\Dl=\{R_E\}$.
    
    Let $n=2$ and define $F=\{(0,0),(0,1),(1,1)\}\subseteq D^2$. The language $\Gm$ pp-defines $F$ i.e. $F=\{(x,y)\mid x\leq y \text{ and } x,y\in\{0,1\}\}$. Now define mapping $\pi:F\to E$ as follows $\pi((0,0))=0,\pi((0,1))=1,\pi((1,1))=2$. The $\pi$-preimage of the relation $R_E$ is the relation 
    \[
        R_F=\{(0,0,0,0),(0,0,0,1),(0,0,1,1),(0,1,0,1),(0,1,1,1),(1,1,1,1)\}.
    \]
 The language $\Gm$ pp-defines $\Gm'=\{R_F\}$ through the following pp-formula 
    \[
        R_F=\{(x_1,x_2,y_1,y_2)\mid (x_1\leq y_1)\land (x_2\leq y_2)\land (x_1\le x_2)\land (y_1\le y_2),  \text{ and } (x_1,x_2,y_1,y_2)\in\{0,1\}^4\}.
    \]
    Consider instance $(\{x,y,z\},E,C)$ of $\CSP(\Dl)$ where the set of constraints is $C=\{\langle (x,y),R_E\rangle,\langle (y,z),R_E\rangle\}$. This basically means the requirements $(x\leq y)\land (y\leq z)$. This instance is equivalent to the following instance of $\CSP(\Gm')$:
    \begin{align}
        &\langle (x_1,x_2,y_1,y_2),R_F\rangle \land 
        \langle (y_1,y_2,z_1,z_2),R_F\rangle 
    \end{align}
    As was pointed out, $\Gm$ pp-defines $F$ as well as $R_F$. Hence, we define $(\{x_1,x_2,y_1,y_2,z_1,z_2\},D,C')$, an instance of $\CSP(\Gm)$, with the constraints
    \begin{align}
    \label{Ex:pp1}
        &\langle (x_1,x_2),R_D\rangle \land 
        \langle (y_1,y_2),R_D\rangle \land
        \langle (z_1,z_2),R_D\rangle \\
        \label{Ex:pp2}
        &\land
        \langle (x_1,y_1),R_D\rangle \land
        \langle (x_2,y_2),R_D\rangle \\
        \label{Ex:pp3}
        &\land
        \langle (y_1,z_1),R_D\rangle \land
        \langle (y_2,z_2),R_D\rangle 
    \end{align}
    Note that \eqref{Ex:pp1} is a pp-definition of relation $F$ and forces $(x_1x_2),(y_1y_2),(z_1z_2)\in F$. Moreover, equations \eqref{Ex:pp2} and \eqref{Ex:pp3} are equivalent to
    \begin{align}
        (x_1\leq y_1)\land (x_2\leq y_2)\land (y_1\leq z_1)\land (y_2\leq z_2)
    \end{align}
    Applying the mapping $\pi$, every solution of the instance $(\{x_1,x_2,y_1,y_2,z_1,z_2\},D,C')$ can be transformed to a solution of instance $(\{x,y,z\},E,C)$ and back.
\end{example}

 One can interpolate mapping $\pi$ in \Cref{pp-interpret} by a polynomial of low degree. It is a known fact that given $N+1$ distinct $\mb{x}_0,\dots,\mb{x}_N\in \mathbb{R}^\ell$ and corresponding values $y_0,\dots,y_N$, there exists a polynomial $p$ of degree at most $\ell N$ that interpolates the data i.e. $p(\mb{x}_j)=y_j$ for each $j\in\{0,\dots,N\}$ (such a polynomial can be obtained by a straightforward generalization of the Lagrange interpolating polynomial, see, e.g., \cite{phillips2003interpolation}). Hence, we can interpolate the mapping $\pi$ by a polynomial of total degree at most $\ell|E|$.

\begin{theorem}\label{the:pp-interpretability}
Let $\Gm,\Dl$ be constraint languages on sets $D,E$, respectively, and let $\Gm$ pp-interprets $\Dl$.   Then $\IMP_d(\Dl)$ is polynomial time reducible to  $\IMP_{\ell|E|}(\Gm)$.
\end{theorem}

\begin{proof}
    Let $(f_0,\cP_\Dl)$ be an instance of $\IMP_d(\Dl)$ where $f_0\in \Field[x_{1},\dots,x_n]$, $\cP_{\Dl}=(\{x_{1},\dots,x_n\},E,C_\Dl)$, an instance of $\CSP(\Dl)$, and $\I(\cP_\Dl)\subseteq \Field[x_{1},\dots,x_n]$. 

    The properties of the mapping $\pi$ from Definition~\ref{pp-interpret} allow us to rewrite an instance of $\CSP(\Dl)$ to an instance of $\CSP(\Gamma')$ over the constraint language $\Gm'$. Recall that, by \Cref{pp-interpret}, $\Gm'$ contains all the $\ell k$-ary relations $\pi^{-1}(S)$ on $D$ where $S\in\Dl$ is $k$-ary relation.   
    
    Note that $\Gm'$ is pp-definable from $\Gm$. By Theorem~\ref{pp-define-reduction}, $\IMP(\Gamma')$ is reducible to $\IMP(\Gm)$. It remains to show $\IMP_d(\Dl)$ is reducible to $\IMP_d(\Gamma')$. To do so, from instance $(f_0,\cP_\Dl)$ of $\IMP_d(\Dl)$ we construct an instance $(f'_0,\cP_{\Gm'})$ of $\IMP_d(\Gamma')$ such that $f_0\in\I(\cP_\Dl)$ if and only if $f'_0\in\I(\cP_{\Gm'})$. 
    
    Let $p$ be a polynomial of total degree at most $\ell|E|$ that interpolates mapping $\pi$. For every $f_0\in \Field[x_1,\dots,x_n]$, let $f'_0\in \Field [x_{11},\ldots , x_{\ell1},\ldots,x_{1n},\ldots,x_{\ell n}]$ be the polynomial that is obtained from $f_0$ by replacing each indeterminate $x_i$ with $p(x_{1i},\dots,x_{\ell i})$. 
    Clearly, for any assignment $\vf:\{\vc xn\}\to E$, $f_0(\vf)=0$ if and only if $f'_0(\psi)=0$ for every $\psi:\{x_{11}\zd x_{\ell n}\}\to D$
such that 
\[
\vf(x_i)=\pi(\psi(x_{1i}),\dots,\psi(x_{\ell i}))
\]
for every $i\le n$. Moreover, 
for any such $\vf,\psi$ it holds $\vf\in \mb{V}(\I(\cP_\Dl))$ if and only if  $\psi\in \mb{V}(\I(\cP_{\Gm'}))$. This yields that 
    \[
        (\exists \vf\in \mb{V}(\I(\cP_\Dl)) \land f_0(\vf)\neq 0) \iff (\exists \psi\in \mb{V}((\cP_{\Gm'})) \land f'_0(\psi)\neq 0)
    \]
Note that the condition that $f_0$ has bounded degree is important here, because otherwise $f'_0$ may have exponentially more monomials than $f_0$. This completes the proof of the theorem.
\end{proof}

\section{Polymorphisms and algebras}

\subsection{Polymorphisms and a necessary condition for tractability}
\label{sec:polymorphisms}

Sets of relations closed under pp-definitions allow for a succinct representation through polymorphisms. Let $R$ be a $k$-ary relation on a set $D$ and $\psi$ an $n$-ary operation on the same set. Operation $\psi$ is said to be a \emph{polymorphism} of $R$ if for any $\mb a^1\zd\mb a^n\in R$ the tuple $\psi(\mb a^1\zd\mb a^n)$ belongs to $R$. Here by $\psi(\mb a^1\zd\mb a^n)$ we understand the component-wise action of $\psi$, that is, if $\mb a^i=(a^i_1\zd a^i_k)$ then 
\[
    \psi(\mb a^1\zd\mb a^n)=(\psi(a^1_1\zd a^n_1)\zd\psi(a^1_k\zd a^n_k)).
\]
For more background on polymorphisms, their properties, and links to the CSP the reader is referred to a relatively recent survey \cite{Barto17:polymorphisms}. Most of the standard results we use below can be found in this survey.

A polymorphism of a constraint language $\Gm$ is an operation that is a polymorphism of every relation in $\Gm$. The set of all polymorphisms of the language $\Gm$ is denoted by $\Pol(\Gm)$. For a set $\Psi$ of operations by $\Inv(\Psi)$ we denote the set of relations $R$ such that every operation from $\Psi$ is a polymorphism of $R$. The operators $\Pol$ and $\Inv$ induces so called \emph{Galois correspondence} between sets of operations and constraint languages. There is a rich theory of this correspondence, however, for the sake of this paper we only need one fact.

\begin{theorem}\label{the:galois}
Let $\Gm,\Dl$ be constraint languages on a finite set $D$. Then $\Pol(\Gm)\sse\Pol(\Dl)$ if and only if $\Gm$ pp-defines $\Dl$. In particular, $\Inv(\Pol(\Gm))$ is the set of all relations pp-definable in $\Gm$. 
\end{theorem}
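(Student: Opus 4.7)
The plan is to recognize Theorem~\ref{the:galois} as the classical Bodnarchuk--Kaluzhnin--Kotov--Romov and Geiger characterization of Galois-closed sets of relations on a finite set, so my main intent is to invoke the standard proof (as in the survey~\cite{Barto17:polymorphisms}); nonetheless I would sketch both implications, since the forward direction is elementary and the reverse direction contains the essential combinatorial idea.

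For the easy direction, I would argue that pp-definitions commute with polymorphism application. Let $R\in\Dl$ have a pp-definition $R(\mb y)\equiv\exists\mb x\,\bigwedge_i S_i(\mb z_i)$ with each $S_i\in\Gm$ (equality constraints can be removed by identifying variables). Given an $m$-ary polymorphism $f\in\Pol(\Gm)$ and tuples $\mb a^1,\dots,\mb a^m\in R$, choose witnesses $\mb b^1,\dots,\mb b^m$ for the existential variables so that each extended tuple $(\mb a^j,\mb b^j)$ satisfies every conjunct. Since $f$ preserves every $S_i$, the component-wise image $f((\mb a^1,\mb b^1),\dots,(\mb a^m,\mb b^m))$ still satisfies every conjunct; its projection to the $\mb y$-coordinates equals $f(\mb a^1,\dots,\mb a^m)$, so this tuple belongs to $R$, proving $f\in\Pol(\Dl)$.

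The harder direction produces a pp-definition from the polymorphism containment, and reduces to showing that every $k$-ary $R\in\Inv(\Pol(\Gm))$ is pp-definable from $\Gm$. I would enumerate $R=\{\mb r^1,\dots,\mb r^n\}$, arrange these as the rows of an $n\times k$ matrix, and take its columns $\mb c^1,\dots,\mb c^k\in D^n$. Consider the subalgebra $T\sse D^n$ generated by $\{\mb c^1,\dots,\mb c^k\}$ under the component-wise action of $\Pol(\Gm)$; as an $n$-ary relation, $T$ lies in $\Inv(\Pol(\Gm))$, and each $\mb t=(t_1,\dots,t_n)\in T$ has the form $t_i=g(\mb r^i)$ for some term operation $g$ built from $\Pol(\Gm)$. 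I would then build a pp-formula over $\Gm$ with $n$ ``row'' variables $u_1,\dots,u_n$ and $k$ free variables $\vc yk$: for every constraint-type needed to witness that each $\mb r^i\in R$, impose the corresponding relation from $\Gm$ on the $u_i$'s, and add $k$ equalities forcing the ``column'' projections of the $u_i$'s to equal $\vc yk$. Existentially quantifying the $u_i$'s then projects $T$ onto its $k$ columns; soundness of the resulting definition uses that $R$ is preserved by every $g\in\Pol(\Gm)$, while completeness follows because each $\mb r^i$ itself is reached by applying projection polymorphisms to the generators.

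The main obstacle will be isolating exactly which collection of $\Gm$-constraints to impose on $u_1,\dots,u_n$ so that the existential projection hits precisely $R$ and nothing more; the standard resolution is to take \emph{all} pp-definable constraints consistent with the matrix rows, using finiteness of $D$ to ensure the resulting formula is finite up to logical equivalence. The second part of the statement (that $\Inv(\Pol(\Gm))$ is the set of all relations pp-definable in $\Gm$) is then an immediate reformulation: the containment $\Pol(\Gm)\sse\Pol(\langle\Gm\rangle)$ is automatic from the easy direction, and the converse gives exactly pp-definability.
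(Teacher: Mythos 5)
The paper itself offers no proof of Theorem~\ref{the:galois}: it is used as the classical Galois correspondence of Bodnarchuk--Kaluzhnin--Kotov--Romov and Geiger and delegated to the survey \cite{Barto17:polymorphisms}. Your plan to invoke that standard proof is therefore consistent with what the paper actually does, and your argument for the easy direction is the standard correct one: apply a polymorphism component-wise to satisfying assignments of the pp-formula (witnesses included) and project to the free variables.

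Your sketch of the reverse direction, however, transposes the canonical construction, and as written it would not go through. You correctly form the $n\times k$ matrix of $R$ with rows $\mb r^1,\dots,\mb r^n$ and columns $\mb c^1,\dots,\mb c^k\in D^n$, and the subalgebra $T\sse D^n$ generated by the columns under $\Pol(\Gm)$ is indeed the right object. But the pp-formula you then build has $n$ ``row'' variables $u_1,\dots,u_n$ ranging over $D$; single domain elements have no ``column projections'', so the $k$ equalities you want to impose are not well-formed, and no choice of $\Gm$-constraints on $u_1,\dots,u_n$ repairs this, because the defect is the variable set, not the constraint set. In Geiger's argument the existentially quantified variables are indexed by the elements of $D^n$ (equivalently, one may restrict to the elements of $T$), the free variables are the specific columns $\mb c^1,\dots,\mb c^k$, and for each $S\in\Gm$ one imposes $S$ on every tuple of variables whose labels satisfy $S$ in each of the $n$ coordinates; the solutions of this instance are exactly the $n$-ary polymorphisms $h$, soundness is the identity $(h(\mb c^1),\dots,h(\mb c^k))=h(\mb r^1,\dots,\mb r^n)\in R$ using $R\in\Inv(\Pol(\Gm))$, and completeness comes from the $n$ projection operations, which is the correct form of your remark about projections. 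Your closing fix --- ``take all constraints consistent with the matrix rows'' --- addresses which constraints to impose, not the missing (possibly exponentially many) quantified variables; the finiteness point you raise is the one genuine subtlety when $\Gm$ is infinite, and it is resolved as you say, since finitely many variables over a finite domain admit only finitely many distinct constraint effects, so a finite subformula suffices.
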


Combining Theorem~\ref{the:galois} and Theorem~\ref{pp-define-reduction}, polymorphisms of constraint languages provide reductions between \IMP s.

\begin{corollary}\label{cor:imp-polymorphisms}
Let $\Gm,\Dl$ be constraint languages on a finite set $D$ and $\Dl$ finite. If $\Pol(\Gm)\sse\Pol(\Dl)$ then $\IMP(\Dl)$ [$\IMP_d(\Dl)$] is polynomial time reducible to $\IMP(\Gm)$ [$\IMP_d(\Gm)$].
\end{corollary}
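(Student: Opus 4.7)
The plan is to combine the Galois correspondence in Theorem~\ref{the:galois} with the pp-definition reduction in Theorem~\ref{pp-define-reduction}, which together yield the result almost immediately.

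First I would invoke Theorem~\ref{the:galois}: the assumption $\Pol(\Gm)\sse\Pol(\Dl)$ means every relation $R\in\Dl$ lies in $\Inv(\Pol(\Gm))$, and hence $R$ is pp-definable from $\Gm$. In other words, for each $k$-ary $R\in\Dl$ there is a pp-formula
\[
R(x_{m_R+1},\dots,x_{m_R+k})=\exists x_1\dots\exists x_{m_R}\,L_R,
\]
where $L_R$ is a conjunction of atomic formulas using relations from $\Gm$ (and equality, which may be eliminated by variable identification). Since $\Dl$ is finite, we may fix, once and for all, such pp-definitions for each of its finitely many relations; this makes the sizes $m_R$ and the lengths of the formulas $L_R$ bounded by a constant depending only on $\Gm$ and $\Dl$.

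Next I would apply Theorem~\ref{pp-define-reduction} directly: given an instance $(f_0,\cP_\Dl)$ of $\IMP(\Dl)$, the construction there produces an instance $(f_0,\cP_\Gm)$ of $\IMP(\Gm)$ in polynomial time (since each constraint of $\cP_\Dl$ is replaced by a constant-size conjunction of $\Gm$-constraints), with the property that $f_0\in\I(\cP_\Dl)$ iff $f_0\in\I(\cP_\Gm)$. Crucially, the polynomial $f_0$ is unchanged by the reduction, so $\deg f_0$ is preserved; therefore the reduction also maps $\IMP_d(\Dl)$ to $\IMP_d(\Gm)$ for every $d$. This yields both statements of the corollary simultaneously.

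There is essentially no obstacle here beyond verifying that the transformation is carried out in polynomial time, which follows from the finiteness of $\Dl$: the number of existentially quantified variables and atomic subformulas introduced per constraint of $\cP_\Dl$ is a constant, so the size of $\cP_\Gm$ is linear in the size of $\cP_\Dl$. The only subtlety worth noting explicitly is that Theorem~\ref{the:galois} is the standard Galois correspondence on the same finite domain $D$, which is exactly the hypothesis we are given, so its application is immediate and no extra machinery (such as pp-interpretations from Section~\ref{sec:pp-interpretations}) is needed.
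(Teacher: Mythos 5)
Your proposal is correct and follows exactly the paper's own route: apply the Galois correspondence (Theorem~\ref{the:galois}) to turn $\Pol(\Gm)\sse\Pol(\Dl)$ into pp-definability of $\Dl$ from $\Gm$, and then invoke the reduction of Theorem~\ref{pp-define-reduction}, with the observation that $f_0$ is unchanged handling the degree-bounded case. Your added remarks on polynomial running time and finiteness of $\Dl$ are accurate but only make explicit what the paper leaves implicit.
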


Corollary~\ref{cor:imp-polymorphisms} amounts to saying that similar to $\CSP(\Gm)$ the complexity of $\IMP(\Gm)$ is determined by the polymorphisms of $\Gm$.

Next we use the known necessary condition for CSP tractability \cite{BulatovJK05} to obtain some necessary conditions for tractability of $\IMP(\Gm)$. 

A \emph{projection} is an operation $\psi:D^k\to D$ such that there is $i\in[k]$ with $\psi(\vc xk)=x_i$ for any $\vc xk\in D$. If the only polymorphisms of a constraint language are projections, every relation is pp-definable in $\Gm$ implying that $\IMP(\Gm)$ is \coNPc. 

Theorem~\ref{the:adding-constants} is another ingredient for our necessary condition. Recall that for a language $\Gm$ by $\Gm^*$ we denote the language with added \emph{constant relations} $R_a$ for all $a\in D$.                                                                                                                                                                                                                  It is known that every polymorphism $\psi$ of all the constant relations is \emph{idempotent}, that is, satisfies the condition $\psi(x\zd x)=x$. Therefore, by Theorem~\ref{the:adding-constants} it suffices to focus on idempotent polymorphisms.

\begin{proposition}\label{pro:imp-projections}
    Let $\Gm$ be a constraint language over a finite set $D$. If the only idempotent polymorphisms of $\Gm$ are projections then $\IMP_{|D|(|D|-1)}(\Gamma)$ is \coNPc.
\end{proposition}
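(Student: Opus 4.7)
The plan is to chain together the known algebraic characterization of CSP hardness, the collapse of $\IMP_0$ to the complement of CSP (Lemma~\ref{lem:imp-0}), and the reduction from constants (Theorem~\ref{the:adding-constants}). The first observation to exploit is that the polymorphisms of $\Gm^*$ coincide with the idempotent polymorphisms of $\Gm$: any polymorphism must preserve each constant relation $\rel_a$, which forces $\psi(a\zd a)=a$ for all $a\in D$, and conversely every idempotent polymorphism of $\Gm$ automatically preserves the constant relations. Thus, under the hypothesis that the only idempotent polymorphisms of $\Gm$ are projections, we get $\Pol(\Gm^*)=\{\text{projections}\}$.

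By Theorem~\ref{the:galois}, $\Inv(\Pol(\Gm^*))$ is the set of all relations pp-definable from $\Gm^*$; since projections preserve every relation on $D$, this is the full set $\mb R_D$ of finitary relations on $D$. In particular, $\Gm^*$ pp-defines relations encoding an \textbf{NP}-hard problem (e.g.\ 3-SAT after identifying Boolean values with two elements of $D$, or any of the standard templates from \cite{BulatovJK05}). Consequently $\CSP(\Gm^*)$ is \textbf{NP}-complete. By Lemma~\ref{lem:imp-0}, this means $\IMP_0(\Gm^*)$ is \coNPc.

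To transfer this hardness to $\Gm$, I would simply apply Theorem~\ref{the:adding-constants}, which gives a polynomial-time reduction from $\IMP_d(\Gm^*)$ to $\IMP_{d+|D|(|D|-1)}(\Gm)$. Setting $d=0$ yields a polynomial-time reduction from the \coNPc\ problem $\IMP_0(\Gm^*)$ to $\IMP_{|D|(|D|-1)}(\Gm)$. Since $\IMP_{|D|(|D|-1)}(\Gm)$ itself lies in \textbf{coNP} (as noted after Lemma~\ref{lem:imp-0}, $\IMP(\Gm)$ is always in \textbf{coNP}), we conclude it is \coNPc.

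I do not expect a significant obstacle: the only subtle point is justifying that $\Pol(\Gm^*)$ consisting solely of projections implies \textbf{NP}-hardness of $\CSP(\Gm^*)$, and this is a well-known consequence of the algebraic approach (since every relation on $D$ becomes pp-definable and one can directly embed any fixed \textbf{NP}-hard CSP). Everything else is a bookkeeping application of previously established results in the paper.
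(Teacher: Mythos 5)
Your proposal is correct and follows exactly the route the paper intends: identify $\Pol(\Gm^*)$ with the idempotent polymorphisms of $\Gm$, use the Galois correspondence to get that every relation (hence an \textbf{NP}-hard template) is pp-definable from $\Gm^*$, conclude via Lemma~\ref{lem:imp-0} that $\IMP_0(\Gm^*)$ is \coNPc, and transfer hardness to $\IMP_{|D|(|D|-1)}(\Gm)$ through Theorem~\ref{the:adding-constants}, with \textbf{coNP} membership noted separately. This matches the paper's argument, so no further comparison is needed.
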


\begin{example}
    Consider the relation $N=\{0,1\}^3\setminus\{(0,0,0),(1,1,1)\}$. This relation corresponds to \textsc{Not-All-Equal Satisfiability} problem and it is known that the idempotent operations from $\Pol(\{N\})$ are projections \cite{post1941two}. $\CSP(\{N\})$ was shown to be \textbf{NP}-complete by Schaefer~\cite{Schaefer78} and $\IMP(\{N\})$ is shown to be \coNPc\ in \cite{Mastrolilli19}. 
\end{example}

\subsection{Algebras and a better necessary condition}
\label{sec:algebraic}

In this section we briefly review the basics of structural properties of (universal) algebra in application to the IMP. Universal algebras have been instrumental in the study of CSPs, and, although we do not go deeper into this theory in this paper, we expect they should be useful for IMPs as well. We follow textbooks \cite{Burris81:universal,mckenzie2018algebras} and texts on the algebraic theory of the CSP, see, e.g., \cite{Barto14:constraint,Barto15:constraint,BulatovJK05,Bulatov07:towards}. 

\paragraph{Algebras.}
An \emph{algebra} is a pair $\cD=(D,\Psi)$ where $D$ is a set (always finite in this paper) and $\Psi$ is a set of operations on $D$ (perhaps multi-ary). The operations from $\Psi$ are called \emph{basic}, and any operation that can be obtained from operations in $\Psi$ by means of composition is called a \emph{term} operation. The set of all term operations will be denoted by $\Term(\cD)$. For example, $\Psi$ can be the set $\Pol(\Gm)$ for some constraint language $\Gm$ on $D$, in which case $\cD$ is called the \emph{algebra of polymorphisms} of $\Gm$ and will be denoted $\Alg(\Gm)$. Thus, $\Alg(\Gm)=(D,\Pol(\Gm))$. 

By Corollary~\ref{cor:imp-polymorphisms} the algebra $\Alg(\Gm)$ determines the complexity of $\IMP(\Gm)$ and $\IMP_d(\Gm)$ for sufficiently large $d$. The advantage of using algebras rather than just polymorphisms is that it unlocks a variety of structural methods that cannot be easily applied if we only use polymorphisms. Algebra $\cD=(D,\Psi)$ is said to be \emph{tractable} [\emph{$d$-tractable}] if for any finite constraint language $\Gm$ such that $\Psi\sse\Pol(\Gm)$ the problem $\IMP(\Gm)$ is tractable [$d$-tractable]. Algebra $\cD$ is said to be \emph{\coNPc} if for some finite language $\Gm$ with $\Psi\sse\Pol(\Gm)$ the problem $\IMP(\Gm)$ is \coNPc. In the rest of this section apart from another necessary condition of tractability we prove several results that deduce the tractability [$d$-tractability, \coNPc ness] of a certain algebra derivative from $\cD$ from a similar property of $\cD$. 

\paragraph{Idempotent algebras.}
The first step will be to reduce the kind of algebras we have to study. By Theorem~\ref{the:adding-constants} idempotent polymorphisms determine the complexity of $\IMP(\Gm)$. On the algebraic side, an algebra $\cD$ is said to be \emph{idempotent} if each of its basic operations (and therefore each of its term operations) is idempotent. Every algebra $\cD=(D,\Psi)$ can be converted into an idempotent algebra simply by throwing out all the non-idempotent term operations. Let $\Term_{id}(\cD)$ denote the set of all idempotent operations from $\Term(\cD)$. Then the \emph{full idempotent reduct} of $\cD=(D,\Psi)$ is the algebra $\Id(\cD)=(D,\Term_{id}(\cD))$.

\begin{proposition}\label{pro:idempotent-reduct}
For any finite algebra $\cD=(D,\Psi)$,  $\cD$ is tractable [$d$-tractable] if and only if $\Id(\cD)$  is tractable [$d$-tractable]. Also $\Id(\cD)$ is \coNPc\ if and only if $\cD$ is  \coNPc. 
\end{proposition}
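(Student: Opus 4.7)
The plan is to treat the three claims (tractable, $d$-tractable, \coNPc) in parallel, since they share the same reduction skeleton. The easy directions are immediate from the fact that $\Pol(\Gm)$ is always a clone: whenever $\Psi\sse\Pol(\Gm)$ we automatically have $\Term(\cD)\sse\Pol(\Gm)$, and in particular $\Term_{id}(\cD)\sse\Pol(\Gm)$. This yields at once that $\Id(\cD)$ tractable [$d$-tractable] $\Rightarrow$ $\cD$ tractable [$d$-tractable], and that $\cD$ \coNPc $\Rightarrow$ $\Id(\cD)$ \coNPc.

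The non-trivial directions rest on a Galois-theoretic observation. An operation $f$ on $D$ preserves the constant relation $R_a=\{(a)\}$ if and only if $f(a,\dots,a)=a$, so the idempotent operations are precisely those preserving \emph{every} $R_a$, $a\in D$. Combined with the standard identity $\Term(\cD)=\Pol(\Inv(\Term(\cD)))$ for clones on a finite set, this gives
\[
\Term_{id}(\cD)=\Term(\cD)\cap\Pol(\{R_a\mid a\in D\})=\Pol\bigl(\Inv(\Term(\cD))\cup\{R_a\mid a\in D\}\bigr).
\]
By Theorem~\ref{the:galois}, $\Inv(\Term_{id}(\cD))$ is therefore precisely the pp-closure of $\Inv(\Term(\cD))\cup\{R_a\mid a\in D\}$.

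Now let $\Gm$ be any finite language with $\Term_{id}(\cD)\sse\Pol(\Gm)$. By the above, each relation of $\Gm$ has a pp-definition using finitely many relations from $\Inv(\Term(\cD))$ together with the constant relations. Collecting these, we obtain a finite $\Dl\sse\Inv(\Term(\cD))$, so that $\Psi\sse\Term(\cD)\sse\Pol(\Dl)$, and $\Gm$ is pp-definable from $\Dl^{*}=\Dl\cup\{R_a\mid a\in D\}$. Chaining Theorem~\ref{pp-define-reduction} with Theorem~\ref{the:adding-constants} yields polynomial-time reductions $\IMP(\Gm)\longrightarrow\IMP(\Dl^{*})\longrightarrow\IMP(\Dl)$, and for the bounded-degree version $\IMP_d(\Gm)\longrightarrow\IMP_d(\Dl^{*})\longrightarrow\IMP_{d+|D|(|D|-1)}(\Dl)$. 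If $\cD$ is tractable [$d$-tractable for every $d$], then tractability of $\IMP(\Dl)$ transfers back along this chain to $\IMP(\Gm)$, proving $\Id(\cD)$ tractable [$d$-tractable]. Dually, if $\Gm$ witnesses $\Id(\cD)$ being \coNPc, the same chain transfers the hardness of $\IMP(\Gm)$ down to $\IMP(\Dl)$, so $\Dl$ witnesses $\cD$ being \coNPc.

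The only real subtlety is the Galois identification $\Inv(\Term_{id}(\cD))=\mathrm{pp\text{-}closure}(\Inv(\Term(\cD))\cup\{R_a\mid a\in D\})$, which reduces to the singleton-preservation characterization of idempotence together with the standard properties of the $(\Pol,\Inv)$ correspondence. The degree bookkeeping for the $d$-tractable case is transparent: $|D|$ is a fixed constant and the pp-reduction of Theorem~\ref{pp-define-reduction} preserves degree, so the extra additive $|D|(|D|-1)$ incurred by Theorem~\ref{the:adding-constants} is absorbed by taking $d$ large enough.
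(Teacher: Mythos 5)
Your proof is correct and follows essentially the same route as the paper's: the key observation that idempotent operations are exactly those preserving the constant relations $R_a$, so that $\Inv(\Term_{id}(\cD))$ is the relational clone generated by $\Inv(\Psi)\cup\{R_a\mid a\in D\}$, followed by chaining the pp-definability reduction (Theorem~\ref{pp-define-reduction}) with the removal of constants (Theorem~\ref{the:adding-constants}). If anything, you spell out the reduction chain and the degree bookkeeping more explicitly than the paper does, but there is no substantive difference in approach.
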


\begin{proof}
Note that an operation $\psi$ on a set $D$ is idempotent if and only if it preserves all the relations in the set $\Gm_{\text{CON}} = \{R_a \mid a \in D\}$, consisting of all constant relations $R_a$ on $D$. Hence, $\Inv(\Term_{id}(\cD))$ is the relational clone generated by $\Inv(\Psi)\cup \Gamma_{\text{CON}}$, or, in other words, every relation $R$ such that $\Term_{id}(\cD)\sse\Pol(R)$ is pp-definable in $\Inv(\Psi)\cup \Gamma_{\text{CON}}$. 

Let $\Dl$ be a finite set from $\Inv(\Term_{id}(\cD))$. By the observation above there is a finite $\Gm\sse \Inv(\Psi)\cup \Gamma_{\text{CON}}$ such that $\Gm$ pp-defines $\Dl$. By Theorem~\ref{the:adding-constants} for any $d$ the problem $\IMP_d(\Dl)$ can be reduced in polynomial time to $\IMP_{d+|D|(|D|-1)}(\Gm)$, and the result follows.
\end{proof}

\paragraph{Subalgebras, homomorphisms, and direct powers.}
The following standard algebraic constructions have been very useful in the study of the CSP.

\begin{definition}\label{def:HSP}
Let $\cD=(D,\Psi)$ be an algebra.
\begin{itemize}
\item[--]
{\bf (Subalgebra)} \ \ 
Let $E\sse D$ such that, for any $\psi\in \Psi$ and for any $b_1, \dots,b_k \in E$, where $k$ is the arity of $\psi$, we have $\psi(b_1, \dots, b_k) \in E$. In other words, $\psi$ is a polymorphism of $E$ or $E\in\Inv(\Psi)$. The algebra $\cE = (E,\Psi|_E)$, where $\Psi|_E$ consists of the restrictions of all operations in $\Psi$ to $E$, is called a \emph{subalgebra} of $\cD$. 
\item[--]
{\bf (Direct power)} \ \ 
For a natural number $k$ the \emph{$k$-th direct power} $\cD^k$ of $\cD$ is the algebra $\cD^k=(D^k,\Psi^k)$, where $\Psi^k$ consists of all the operations from $\Psi$ acting on $D^k$ component-wise (see the definition of polymorphism). 
\item[--] 
{\bf (Homomorphic image)} \ \ 
Let $E$ be a set and $\chi:D\to E$ a mapping such that for any (say, $k$-ary) $\psi\in\Psi$ and any $\vc ak,\vc bk\in D$, if $\chi(a_i)=\chi(b_i)$, $i\in[k]$, then $\chi(\psi(\vc ak))=\chi(\psi(\vc bk))$. The algebra $\cE=(E,\Psi_\chi)$ is called a \emph{homomorphic image} of $\cD$, where for every $\psi\in\Psi$ the set $\Psi_\chi$ contains $\psi/_\chi$ given by $\psi/_\chi(\vc ck)=\chi(\psi(\vc ak))$ and $\vc ak\in D$ are such that $c_i=\chi(a_i)$, $i\in[k]$. 
\end{itemize} 
\end{definition}

If an algebra is the algebra of polymorphisms of some constraint language, the concepts above are related to pp-definitions and pp-interpretations. We will use the following easy observation. 

\begin{lemma}\label{lem:HSP-pp-definitions}
Let $\cD=(D,\Psi)=\Alg(\Gm)$ for a constraint language $\Gm$ over $D$.
\begin{itemize}
\item[--]
If $\cE=(E,\Psi|_E)$ is a subalgebra of $\cD$ then $E$ is pp-definable in $\Gm$.
\item[--]
Every relation from $\Inv(\Psi^k)$ is pp-interpretable in $\Gm$.
\item[--]
Let $\cE=(E,\Psi_\chi)$ be a homomorphic image of $\cD$. Then every relation from $\Inv(\Psi_\chi)$ is pp-interpretable in $\Gm$. 
\end{itemize}
\end{lemma}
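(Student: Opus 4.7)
The plan is to treat the three parts uniformly: for each, identify the natural choice of $\ell$, $F\sse D^\ell$, and $\pi:F\to E$ (when we need a pp-interpretation), verify that the target relations lie in $\Inv(\Psi)$, and then invoke Theorem~\ref{the:galois} to convert invariance under $\Psi=\Pol(\Gm)$ into pp-definability in $\Gm$. The only genuine content beyond unwinding definitions is checking that certain derived relations (preimages under $\pi$, or the subuniverse $E$ itself) are preserved by every $\psi\in\Psi$; that is the one step where we actually use the structural assumption in each case.

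\textbf{Subalgebra.} Since $\cE=(E,\Psi|_E)$ is a subalgebra, by Definition~\ref{def:HSP} every $\psi\in\Psi$ preserves $E$ as a unary relation, i.e.\ $E\in\Inv(\Psi)=\Inv(\Pol(\Gm))$. By Theorem~\ref{the:galois} this means $E$ is pp-definable in $\Gm$, which is exactly what part (1) asserts.

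\textbf{Direct power.} For part (2) I would choose $\ell=k$, $F=D^k$, and let $\pi:D^k\to D^k$ be the identity (so the ``target domain'' is $D^k$). Then $F$ is pp-definable in $\Gm$ trivially (use no constraints), and the $\pi$-preimage of equality on $D^k$ is the $2k$-ary relation on $D$ that equates coordinates pairwise, which is a conjunction of equalities. It remains to show that for any $m$-ary $R\in\Inv(\Psi^k)$ the $km$-ary preimage
\[
\pi^{-1}(R)=\{(a_{11}\zd a_{k1}\zd a_{1m}\zd a_{km})\mid ((a_{1j}\zd a_{kj}))_{j\in[m]}\in R\}
\]
lies in $\Inv(\Psi)$. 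This is direct: applying an $n$-ary $\psi\in\Psi$ componentwise to $n$ tuples of $\pi^{-1}(R)$ corresponds to applying the componentwise action of $\psi$ on $D^k$ (which is a basic operation of $\cD^k$) to $n$ tuples of $R$, and by assumption $R$ is preserved by $\Psi^k$. Hence $\pi^{-1}(R)\in\Inv(\Psi)$, so it is pp-definable in $\Gm$ by Theorem~\ref{the:galois}, and we have a pp-interpretation.

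\textbf{Homomorphic image.} For part (3) the natural choice is $\ell=1$, $F=D$, and $\pi=\chi$. The set $F=D$ is trivially pp-definable. The $\pi$-preimage of equality on $E$ is $\chi^{-1}(=_E)=\{(a,b)\in D^2\mid \chi(a)=\chi(b)\}$, and it belongs to $\Inv(\Psi)$ precisely because of the compatibility condition in the definition of a homomorphic image: if $\chi(a_i)=\chi(b_i)$ for $i\in[n]$ and $\psi\in\Psi$ is $n$-ary, then $\chi(\psi(\vc an))=\chi(\psi(\vc bn))$. For any $m$-ary $R\in\Inv(\Psi_\chi)$, the preimage $\chi^{-1}(R)$ is preserved by $\Psi$ by the identity $\chi(\psi(\vc an))=\psi/_\chi(\chi(a_1)\zd\chi(a_n))$ applied coordinatewise, together with the invariance of $R$ under $\Psi_\chi$. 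Thus both $\chi^{-1}(=_E)$ and $\chi^{-1}(R)$ lie in $\Inv(\Psi)$ and are pp-definable in $\Gm$ by Theorem~\ref{the:galois}, yielding the required pp-interpretation.

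Nothing in the argument is delicate; the only ``obstacle'' is notational — keeping straight which coordinates of the long tuple on $D$ encode which coordinates of the short tuple on $D^k$ or on $E$ — and this is handled by writing out $\pi^{-1}(R)$ explicitly before checking invariance under $\psi$. Once the invariance calculations above are done, all three conclusions drop out of a single appeal to the Galois correspondence in Theorem~\ref{the:galois}.
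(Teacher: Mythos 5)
Your proof is correct, and since the paper states Lemma~\ref{lem:HSP-pp-definitions} without proof (as an ``easy observation''), your argument --- verifying in each case that the relevant relations ($E$, $\pi^{-1}(R)$ for the identity encoding $F=D^k$, and $\chi^{-1}(=_E)$, $\chi^{-1}(R)$ with $\pi=\chi$) lie in $\Inv(\Psi)=\Inv(\Pol(\Gm))$ and then invoking Theorem~\ref{the:galois} --- is exactly the standard argument the paper leaves implicit. The invariance computations are the right ones and the choices of $\ell$, $F$, and $\pi$ satisfy all clauses of Definition~\ref{pp-interpret} (with $\chi$ onto $E$ as the definition of homomorphic image implicitly requires), so nothing is missing.
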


The standard algebraic constructions also include direct products of different algebras. Direct products also have a strong connection to the CSP and therefore IMP. However, they require a more general framework, multi-sorted CSPs and IMPs. These are beyond the scope of this paper.

We are now ready to prove the reductions induced by subalgebras, direct powers, and homomorphic images.

\begin{theorem}\label{the:HSP-reduction}
Let $\cD$ be an algebra and $\cE$ its subalgebra [direct power, homomorphic image]. If $\cD$ is $d$-tractable, then so is $\cE$. If $\cE$ is \coNPc, then $\cD$ is also \coNPc. Moreover, if $\cE$ is a subalgebra of $\cD$, then $\cE$ is tractable whenever $\cD$ is. 
\end{theorem}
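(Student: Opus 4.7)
The plan is to treat all three constructions uniformly by invoking Lemma~\ref{lem:HSP-pp-definitions} and then applying the pp-interpretability reduction from Theorem~\ref{the:pp-interpretability}. Concretely, suppose $\cE$ is one of a subalgebra, a direct power, or a homomorphic image of $\cD=(D,\Psi)$, and let $\Dl$ be an arbitrary finite constraint language on the base set $E$ of $\cE$ whose polymorphisms contain all operations of $\cE$. By Lemma~\ref{lem:HSP-pp-definitions}, every relation in $\Dl$ is pp-interpretable in $\Inv(\Psi)$, and the interpretation uses only finitely many pp-defined ingredients. Hence there is a finite constraint language $\Gm$ on $D$ with $\Psi\sse\Pol(\Gm)$ that pp-interprets $\Dl$.

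Now I would apply Theorem~\ref{the:pp-interpretability} to obtain a polynomial-time reduction of $\IMP(\Dl)$ to $\IMP(\Gm)$, and of $\IMP_d(\Dl)$ to $\IMP_{kd}(\Gm)$ for some constant $k$ depending only on $\cE$ and the chosen interpretation (in particular, on the arity $\ell$ of the interpretation and on $|E|$). If $\cD$ is tractable, then $\IMP(\Gm)$ is polynomial time by definition of tractability for an algebra, and the reduction yields polynomial-time tractability of $\IMP(\Dl)$. Since $\Dl$ was an arbitrary finite language compatible with $\cE$, this means $\cE$ is tractable. In the d-tractable case the same argument works: $\cD$ being d-tractable means $\IMP_{d'}(\Gm)$ is polynomial time for every $d'$; applying this to $d'=kd$ gives d-tractability of $\cE$.

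For the \coNPc\ statement, suppose $\cE$ is \coNPc, witnessed by a finite $\Dl$ with $\IMP(\Dl)$ \coNPc. The same construction produces a finite $\Gm$ with $\Psi\sse\Pol(\Gm)$ and a polynomial-time reduction from $\IMP(\Dl)$ to $\IMP(\Gm)$, so $\IMP(\Gm)$ is \textbf{coNP}-hard; since $\IMP(\Gm)\in\textbf{coNP}$ for any finite-domain language, it is \coNPc, showing $\cD$ is \coNPc.

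The only step that is not purely formal is verifying the three clauses of Lemma~\ref{lem:HSP-pp-definitions} give interpretations of the required shape: for a subalgebra one uses $\ell=1$ with $F=E$ pp-definable in $\Gm$ and $\pi$ the identity (so Theorem~\ref{the:pp-interpretability} applies with $k=|E|$); for $\cD^k$ one uses $\ell=k$, $F=D^k$ and $\pi$ the identity tuple-to-tuple map; for a homomorphic image one uses $\ell=1$ with $F=D$ and $\pi=\chi$, checking that $\chi$-preimages of relations from $\Dl$ lie in $\Inv(\Psi)$. These checks are standard (cf.\ \cite{BulatovJK05,Barto17:polymorphisms}) and, since $\cD$ is fixed, yield a constant degree-blowup factor $k$, which is all that is needed for the d-tractable case. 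The main thing to be careful about is therefore not an obstacle of substance but merely packaging: ensuring that for each of the three constructions one produces a single finite $\Gm\sse\Inv(\Psi)$ realizing the pp-interpretation, so that Theorem~\ref{the:pp-interpretability} can be invoked as a black box.
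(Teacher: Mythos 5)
Your proposal is correct and takes essentially the same route as the paper: combine Lemma~\ref{lem:HSP-pp-definitions} with the reduction of Theorem~\ref{the:pp-interpretability}, extract a finite $\Gm\sse\Inv(\Psi)$ with $\Psi\sse\Pol(\Gm)$ realizing the interpretation, and conclude via the definition of a tractable [d-tractable, \coNPc] algebra. The only cosmetic difference is the subalgebra case, where the paper shortcuts by noting $\Dl\sse\Inv(\Psi|_E)\sse\Inv(\Psi)$ so no reduction is needed, whereas you package it as an $\ell=1$ identity pp-interpretation; both rest on the same observation.
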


\begin{proof}
The theorem is almost straightforward from Lemma~\ref{lem:HSP-pp-definitions} and Theorems~\ref{pp-define-reduction},~\ref{the:pp-interpretability}. Let $\cD=(D,\Psi)$, $\cE=(E,\Psi')$ and $\Dl\sse\Inv(\Psi')$, a finite set. Note that $\cD=\Alg(\Gm)$ for $\Gm=\Inv(\Psi)$. 

If $\cE$ is a subalgebra of $\cD$, that is, $\Psi'=\Psi|_E$ then by Lemma~\ref{lem:HSP-pp-definitions} $E$ is pp-definable in $\Gm$ and therefore $\Dl\sse\Inv(\Psi|_E)\sse\Inv(\Psi)$. The result follows.

If $\cE$ is a direct power, say, $\cE=(D^k,\Psi^k)$, then since every relation from $\Inv(\Psi^k)$ is pp-interpretable in $\Gm$, there is a finite set $\Gm'\sse\Gm$ that pp-interprets $\Dl$. Then by Theorem~\ref{the:pp-interpretability} $\IMP_d(\Dl)$ can be reduced to $\IMP_d(\Gm')$ in polynomial time. The result follows.

In the case when $\cE$ is a homomorphic image of $\Gm$, the proof is identical to the previous case due to Lemma~\ref{lem:HSP-pp-definitions}.
\end{proof}

\paragraph{Stronger necessary condition for tractability.}
Subalgebras, direct powers, and homomorphic images allow us to state a stronger condition for tractability of constraint languages. In the case of the CSP, when a constraint language $\Gm$ contains all the constant relations, $\CSP(\Gm)$ is \textbf{NP}-complete if and only if $\Alg(\Gm)$ has a homomorphic image $\cD$ of a subalgebra such that all the term operations of $\cD$ are projections. Using Theorem~\ref{the:adding-constants} we can make this condition even stronger (although only necessary).

\begin{theorem}\label{the:HSP-necessary} 
Let $\Gm$ be a constraint language with the property that there exists a homomorphic image $\cE$ of a subalgebra of a direct power of $\Id(\Alg(\Gm))$ such that all the term operations of $\cE$ are projections. Then $\IMP_{|D|(|D|-1)}(\Gm)$ is \coNPc.
\end{theorem}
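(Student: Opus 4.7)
The plan is to reduce a known \coNPc\ problem to $\IMP_{|D|(|D|-1)}(\Gm)$ through a chain of polynomial-time reductions indexed by the HSP construction of $\cE$, followed by a final application of Theorem~\ref{the:adding-constants}. Write $\cD=\Id(\Alg(\Gm))=(D,\Psi)$ and let $\cE=(E,\Psi_\cE)$ be a homomorphic image of a subalgebra $\cC=(C,\Psi_\cC)$ of the direct power $\cD^k=(D^k,\Psi^k)$, all of whose term operations are projections. Since the only term operations of $\cE$ are projections, $\Inv(\Psi_\cE)$ contains \emph{every} finite relation on $E$, and (assuming $|E|\ge 2$, which is the only non-vacuous case) we pick a finite $\Dl\sse\Inv(\Psi_\cE)$ for which $\CSP(\Dl)$ is \textbf{NP}-hard --- for instance a suitable encoding of \textsc{Not-All-Equal} on a two-element subset of $E$. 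Then Lemma~\ref{lem:imp-0} yields that $\IMP_0(\Dl)$ is \coNPc.

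The next step is to push this hardness along the HSP chain while keeping the degree of the input polynomial equal to zero. Using Lemma~\ref{lem:HSP-pp-definitions} together with Theorems~\ref{pp-define-reduction} and~\ref{the:pp-interpretability}, we successively reduce $\IMP_0(\Dl)$ to $\IMP_0$ of a finite language invariant under $\Psi_\cC$, then to $\IMP_0$ of a finite language invariant under $\Psi^k$, and finally to $\IMP_0(\Dl_\cD)$ for some finite $\Dl_\cD\sse\Inv(\Psi)$. The crucial observation is that although pp-interpretability scales the degree by some constant factor $k'$, starting at $d=0$ the product $k'\cdot 0$ is still $0$; and pp-definability from subalgebras preserves the degree exactly. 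Consequently the degree bound on the input polynomial remains $0$ throughout this portion of the chain.

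Finally, because $\Inv(\Psi)=\Inv(\Term_{id}(\Alg(\Gm)))$ is exactly the relational clone of $\Gm^*$, the finite language $\Dl_\cD$ is pp-definable in $\Gm^*$, so Theorem~\ref{pp-define-reduction} produces a polynomial-time reduction from $\IMP_0(\Dl_\cD)$ to $\IMP_0(\Gm^*)$; and Theorem~\ref{the:adding-constants} applied with $d=0$ then reduces $\IMP_0(\Gm^*)$ to $\IMP_{|D|(|D|-1)}(\Gm)$, completing the chain. The step that requires the most care is the subalgebra reduction: a relation $R\sse C^m$ invariant under $\Psi_\cC$ is automatically $\Psi^k$-invariant when viewed inside $(D^k)^m$, and, combined with pp-definability of $C$ in $\Inv(\Psi^k)$ from Lemma~\ref{lem:HSP-pp-definitions}, this lets one bind the variables of the CSP instance to $C$ via a unary constraint so that Theorem~\ref{pp-define-reduction} applies without any increase in the degree bound.
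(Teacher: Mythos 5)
Your proposal is correct and follows essentially the same route as the paper: obtain a \coNPc\ problem $\IMP_0(\Dl)$ on $E$ from the fact that a projections-only algebra preserves all relations (the paper packages this via Proposition~\ref{pro:imp-projections}), transfer it to $\Gm^*$ through Lemma~\ref{lem:HSP-pp-definitions} together with Theorems~\ref{pp-define-reduction} and~\ref{the:pp-interpretability} while the degree bound stays at $0$, and finish with Theorem~\ref{the:adding-constants} to land at $\IMP_{|D|(|D|-1)}(\Gm)$. Your staged unpacking of the homomorphic-image/subalgebra/power chain (and the remark that pp-interpretation scales degree $0$ to degree $0$) is just a more explicit rendering of the paper's one-line appeal to Lemma~\ref{lem:HSP-pp-definitions}, so there is no substantive difference.
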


\begin{proof}
Let $\cE=(E,\Psi)$. Since the term operations of $\cE$ are only projections, by Proposition~\ref{pro:imp-projections} there is a finite set $\Dl$ such that $\IMP_0(\Dl)$ is \coNPc. Then, by
Lemma~\ref{lem:HSP-pp-definitions}, $\Gm^*$ pp-interprets $\Dl$, and we obtain the result by Theorems~\ref{the:adding-constants} and~\ref{the:pp-interpretability}.
\end{proof} 

\section{Sufficient conditions for tractability}
\label{sec:sufficient}
\subsection{The Ideal Membership Problem and \GBs}
 A possible way to solve the \IMP~is via polynomial division. Informally, if a remainder of division of $f_0$ by generating polynomials of $\I(\cP)$ is zero then $f_0\in\I(\cP)$. Let us recall some standard notations from algebraic geometry that are needed to present a division algorithm and the notion of \GBs. We follow notation in \cite{Cox}.

 A monomial ordering $\succ$ on $\Field[\vc x n]$ is a relation $\succ$ on $\zZ_{\geq 0}^n$, or equivalently, a relation on the set of monomials $\bx^{\alpha}$, $\alpha \in \zZ_{\geq 0}^n$ (see~\cite{Cox}, Definition 1, p.55). Each monomial $\bx^\alpha=x_1^{\alpha_1}\cdots x_n^{\alpha_n}$ corresponds to an $n$-tuple of exponents $\alpha =(\alpha_1,\ldots,\alpha_n)\in \mathbb{Z}^n_{\geq0}$. This establishes a one-to-one correspondence between the monomials in $\Field[x_1,\ldots,x_n]$ and $\mathbb{Z}^n_{\geq0}$. Any ordering $\succ$ we establish on the space $\mathbb{Z}^n_{\geq0}$ will give us an ordering on monomials: if $\alpha \succ \beta$ according to this ordering, we will also say that $\bx^\alpha \succ \bx^\beta$.

\begin{definition}\label{def:lex and grlex} 
Let $\alpha =(\alpha_1,\ldots,\alpha_n),\beta=(\beta_1,\ldots,\beta_n)\in \mathbb{Z}^n_{\geq0}$ and $|\alpha| = \sum_{i=1}^n\alpha_i$, $|\beta| = ~\sum_{i=1}^n\beta_i$. \emph{Lexicographic order} and \emph{graded lexicographic order} are defined as follows. 
  \begin{enumerate}
      \item We say $\alpha\succ_\lex \beta$ if the leftmost nonzero entry of the vector difference $\alpha -\beta \in \mathbb{Z}^n$ is positive. We will write $\bx^\alpha\succ_\lex \bx^\beta$ if $\alpha\succ_\lex\beta$. 
      \item We say $\alpha\succ_\grlex \beta$ if $|\alpha| >|\beta|$, or $|\alpha| =|\beta|$ and $\alpha\succ_\lex \beta$.
  \end{enumerate}
\end{definition}

\begin{definition}
  For any $\alpha=(\alpha_1,\cdots,\alpha_n)\in \mathbb{Z}^n_{\geq0}$ let $\bx^\alpha\overset{\mathrm{def}}{=} \prod_{i=1}^{n}x_i^{\alpha_i}$. Let $f= \sum_{\alpha} a_{\alpha}\bx^\alpha$ be a nonzero polynomial in $\Field[x_1,\ldots,x_n]$ and let $\succ$ be a monomial order.
  \begin{enumerate}
    \item The \emph{multidegree} of $f$ is $\multideg(f)\overset{\mathrm{def}}{=} \max(\alpha\in \mathbb{Z}^n_{\geq0}:a_\alpha\not = 0)$.
    \item The \emph{degree} of $f$ is deg$(f)=|\multideg(f)|$ where $|\alpha| = \sum_{i=1}^n\alpha_i$. In this paper, this is always according to \textsf{grlex} order.
    \item The \emph{leading coefficient} of $f$ is $\LC(f)\overset{\mathrm{def}}{=} a_{\multideg(f)}\in \Field$.
    \item The \emph{leading monomial} of $f$ is $\LM(f)\overset{\mathrm{def}}{=} \bx^{\multideg(f)}$ (with coefficient 1).
    \item The \emph{leading term} of $f$ is $\LT(f)\overset{\mathrm{def}}{=} \LC(f)\cdot \LM(f)$.
  \end{enumerate}
\end{definition}

\begin{definition}[A division algorithm]
\label{def:division-alg}
    Let $\succ$ be a monomial order on $\zZ_{\geq 0}^n$, and let $F = \{\vc f s\}\subset\Field[\vc x n]$. Then every $f \in \Field[\vc x n]$ can be written as $f = q_1f_1 + \dots + q_s f_s + r$, where $q_i, r \in \Field[\vc x n]$, and either $r = 0$ or $r$ is a linear combination, with coefficients in 
    $\Field$, of monomials, none of which is divisible by any of $\LT(f_1),\dots , \LT( f_s)$. Furthermore, if $q_i f_i\neq 0$, then
    $\multideg(f) \succeq  \multideg(q_if_i)$. We call $r$ a \emph{remainder} of $f$ on division by $F$. Also, we say that $f$ \emph{reduces to} $r$ \emph{modulo} $F$, written $f \to_F r$.
\end{definition}
The above definition suggests the following procedure to compute a remainder. Repeatedly, choose an $f_i\in F$ such that $\LT(f_i)$ divides some term $t$ of $f$ and replace $f$ with $f-\frac{t}{\LT(f_i)}f_i$, until it cannot be further applied. Note that the order we choose the polynomial $f_i$ is not specified. Unfortunately, depending on the generating polynomials of the ideal a remainder of division may not be unique. Moreover, such a remainder depends on the order we do division. 

\begin{example}
  Let $f=x^2y-xy^2+y$ and $\I=\Ideal{f_1,f_2}$ with $f_1=x^2$ and $f_2=xy-1$. Consider the \grlex order with $x\succ_\lex y$. On one hand, $f= 0\cdot f_1+(x-y)\cdot f_2- x$. On the other hand, $f=y\cdot f_1 -y\cdot f_2+0$. 
\end{example}
The Hilbert Basis Theorem states that every ideal has a finite generating set (see, e.g., Theorem~4 on page 77 \cite{Cox}). Fortunately, for every ideal there is a finite generating set that suits our purposes. That is, there is a generating set so that the remainder of division by that set is uniquely defined, no matter in which order we do the division.

\begin{definition}\label{def:GB}
   Fix a monomial order on the polynomial ring $\Field[\vc x n ]$. A finite subset $G = \{\vc g t\}$ of an ideal $\I \subseteq \Field[\vc x n ]$ different from $\{0\}$ is said to be a \GB\ (or \emph{standard basis}) if 
   \[
    \Ideal{\LT(g_1),\dots,\LT(g_t)} = \Ideal{\LT(I)}\]
    where $\Ideal{\LT(I)}$ denotes the ideal generated by the leading terms of elements of $\I$. 
\end{definition}
\begin{definition}[$d$-truncated \GB]
\label{def:d-GB}
    If $G$ is a \GB\ of an ideal, the $d$-truncated \GB\ $G'$ of $G$ is defined as
    $$G' =G\cap \Field[\vc x n]_d,$$ where $\Field[\vc x n]_d$ is the set of polynomials of degree less than or equal to $d$.
\end{definition}
\begin{proposition}[\cite{Cox}, Proposition~1, p.83 ]
\label{prop:division-GB}
    Let $\I \subseteq \Field[\vc x n ]$ be an ideal and let $G = \{\vc g t\}$ be a \GB\ for $\I$. Then given $f \in \Field[\vc x n ]$, there is a unique $r \in \Field[\vc x n ]$ with the following two properties:
    \begin{enumerate}
        \item No term of $r$ is divisible by any of $\LT(g_1),\dots,\LT(g_t)$,
        \item There is $g\in\I$ such that $f=g+r$.
    \end{enumerate}
    In particular, $r$ is the remainder on division of $f$ by $G$ no matter how the elements of $G$ are listed when using the division algorithm.
\end{proposition}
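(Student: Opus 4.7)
The plan is to split the statement into two parts, existence and uniqueness, and then read off the ``in particular'' claim as a corollary. For existence, I would invoke the division algorithm of Definition~\ref{def:division-alg}, which directly produces, for any ordering of the elements of $G$, a representation $f = q_1 g_1 + \cdots + q_t g_t + r$ in which $r$ is either zero or a $\Field$-linear combination of monomials none of which is divisible by any $\LT(g_i)$. Setting $g = q_1 g_1 + \cdots + q_t g_t$ gives $g \in \I$ and $f = g + r$, so $r$ satisfies both properties (1) and (2).

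The heart of the argument is uniqueness, and this is where the Gr\"obner basis hypothesis is essential (plain generating sets are not enough, as the example with $f = x^2y - xy^2 + y$ shows). Suppose $f = g + r = g' + r'$ with $g, g' \in \I$ and both $r, r'$ having the non-divisibility property (1). Then $r - r' = g' - g \in \I$. If $r - r' \neq 0$, then $\LT(r - r') \in \Ideal{\LT(\I)}$, and by Definition~\ref{def:GB} (the defining property of a Gr\"obner basis) we have $\Ideal{\LT(\I)} = \Ideal{\LT(g_1), \ldots, \LT(g_t)}$. Hence $\LT(r - r')$ is divisible by some $\LT(g_i)$. But $\LT(r - r')$ is a monomial appearing in either $r$ or $r'$ (up to sign), which contradicts the assumption that no term of $r$ or $r'$ is divisible by any $\LT(g_i)$. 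Therefore $r = r'$.

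The main subtlety, and really the only obstacle, is the step claiming that $\LT(r - r')$ actually appears as a term of $r$ or $r'$. This follows because $r$ and $r'$ are $\Field$-linear combinations of monomials, so every monomial in the support of $r - r'$ is a monomial in the support of $r$ or $r'$; in particular the leading monomial of $r - r'$ is such a term. This is where the fact that the leading term is picked out by a \emph{monomial} order (as opposed to some coarser ordering) gets used.

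Finally, for the ``in particular'' clause: the division algorithm, in any ordering of $g_1, \ldots, g_t$, produces a remainder $r$ satisfying (1), and produces a witness $g = \sum q_i g_i \in \I$ satisfying $f = g + r$, so it satisfies (2) as well. By the uniqueness just established, any two such remainders coincide, so the output of the division algorithm does not depend on the listing of $G$. This completes the proof.
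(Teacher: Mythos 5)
Your proof is correct and is essentially the classical argument that the paper simply cites from \cite{Cox} (existence via the division algorithm, uniqueness by observing that $r-r'\in\I$ would have a leading term divisible by some $\LT(g_i)$ thanks to the defining property of a \GB). The only step you leave implicit is that a monomial lying in $\Ideal{\LT(g_1),\dots,\LT(g_t)}$ must be divisible by one of the $\LT(g_i)$, a standard fact about monomial ideals that completes the contradiction.
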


The remainder $r$ is called the \emph{normal form of} $f$ \emph{by} $G$, denoted by $f|_G$. Note that, although the remainder $r$ is unique, even for a \GB, the "quotients" $q_i$ produced by the division algorithm $f = q_1g_1+\dots+ q_tg_t + r$ can change if we list the generators in a different order. As a corollary of \Cref{prop:division-GB}, we get the following criterion for when a given polynomial lies in an ideal.
\begin{corollary}[\cite{Cox}, Corollary~2, p.84]
    Let $G = \{\vc g t\}$ be a \GB\ for an ideal $\I \subseteq \Field[\vc x n]$ and let $f \in \Field[\vc x n]$. Then $f \in \I$ if and only if the remainder on division of $f$ by $G$ is zero.
\end{corollary}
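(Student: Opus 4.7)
The plan is to derive both directions of the equivalence directly from Proposition~\ref{prop:division-GB}, whose entire point is that division by a \GB\ yields a \emph{uniquely determined} remainder characterized by the two conditions (1) no term is divisible by any of $\LT(g_1)\zd\LT(g_t)$, and (2) the difference between $f$ and the remainder lies in $\I$.

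For the easy direction ($\Leftarrow$), suppose that the remainder on dividing $f$ by $G$ is $0$. By the division algorithm (Definition~\ref{def:division-alg}), this means there exist $q_1\zd q_t\in\Field[\vc xn]$ such that
\[
   f = q_1 g_1 + \dots + q_t g_t + 0,
\]
which exhibits $f$ as an explicit $\Field[\vc xn]$-linear combination of the generators $g_1\zd g_t$ of $\I$. Hence $f\in\I$.

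For the converse ($\Rightarrow$), suppose $f\in\I$. Let $r$ denote the remainder on division of $f$ by $G$, which by Proposition~\ref{prop:division-GB} is well-defined and satisfies conditions (1) and (2). Consider the alternative candidate $r'=0$. Trivially, $r'$ has no term divisible by any $\LT(g_i)$, so it satisfies condition~(1); moreover $f - r' = f \in \I$, so it satisfies condition~(2) as well. The uniqueness clause of Proposition~\ref{prop:division-GB} then forces $r = r' = 0$, as required.

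There is no real obstacle here: the corollary is essentially a restatement of Proposition~\ref{prop:division-GB} once one notices that the zero polynomial is a legitimate candidate for the remainder whenever $f\in\I$. The only point worth emphasizing is that the argument crucially uses the \GB\ property of $G$ --- for an arbitrary generating set the remainder produced by the division algorithm need not be unique (as the preceding example with $f=x^2y-xy^2+y$ illustrates), and a nonzero remainder would not rule out membership in $\I$.
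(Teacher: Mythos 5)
Your proof is correct and is exactly the standard argument for this fact: the paper itself gives no proof (it cites \cite{Cox}), and the cited proof proceeds precisely as you do, using the generating property of $G$ for the easy direction and the uniqueness clause of Proposition~\ref{prop:division-GB} with the candidate remainder $0$ for the converse. Nothing further is needed.
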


There is a criterion, known as Buchberger's criterion, that tells us whether a given generating set of an ideal is a \GB. In order to formally express this criterion, we need to define the notion of $S$-polynomials.

\begin{definition}[S-polynomial]
\label{s-poly}
    Let $f,g\in \Field[x_1,\ldots,x_n]$ be nonzero polynomials. If $\multideg(f)=\alpha$ and $\multideg(g)=\beta$, then let $\gamma=(\gamma_1,\ldots,\gamma_n)$, where $\gamma_i = \max(\alpha_i,\beta_i)$ for each $i$. We call $x^\gamma$ the \emph{least common multiple} of $\LM(f)$ and $\LM(g)$, written $x^\gamma = \LCM(\LM(f),\LM(g))$. The \emph{$S$-polynomial} of $f$ and $g$ is the combination 
    \[
        S(f,g) = \frac{x^\gamma}{\LT(f)}\cdot f - \frac{x^\gamma}{\LT(g)}\cdot g.
    \]
\end{definition}
 
 \begin{theorem}[Buchberger's Criterion \cite{Cox}, Theorem 3, p.105]
 \label{th:crit}
     Let $\I$ be a polynomial ideal. Then a basis $G = \{\vc g t\}$ of $\I$ is a \GB\ of $\I$ if and only if for all pairs $i\neq j$, the remainder on division of $S(g_i, g_j)$ by $G$ (listed in some order) is zero.
 \end{theorem}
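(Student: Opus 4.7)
The plan is to prove both directions of the equivalence separately, with the forward direction being essentially immediate and the reverse direction requiring a careful reduction argument that is the standard core of Buchberger's theorem.

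For the easy direction ($\Rightarrow$), suppose $G$ is a \GB\ of $\I$. For any $i \ne j$, the $S$-polynomial $S(g_i,g_j)$ is an explicit $\Field[\vc xn]$-linear combination of $g_i$ and $g_j$, hence lies in $\I$. By the corollary following Proposition~\ref{prop:division-GB} (characterizing membership via zero remainder with respect to a \GB), the remainder of $S(g_i,g_j)$ on division by $G$ must be zero.

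For the reverse direction ($\Leftarrow$), assume every $S(g_i,g_j)$ reduces to zero modulo $G$. I need to show that $\langle \LT(g_1),\ldots,\LT(g_t)\rangle = \langle \LT(\I)\rangle$; since the inclusion $\subseteq$ is immediate, it suffices to show that for every nonzero $f\in\I$, some $\LT(g_i)$ divides $\LT(f)$. Fix such an $f$ and, among all representations $f=\sum_{i=1}^t h_i g_i$, choose one minimizing $\delta := \max_{i}\multideg(h_i g_i)$ in the chosen monomial order (this minimum exists by well-ordering of $\zZ_{\geq 0}^n$). Clearly $\multideg(f)\preceq \delta$. If $\multideg(f)=\delta$, then at least one term on the right cannot be cancelled by the others, so $\LT(g_i)$ divides $\LT(h_i g_i)=\LT(f)$ for some $i$, and we are done.

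The main obstacle is the strict case $\multideg(f)\prec \delta$, which I will rule out by showing the representation is not minimal. Split the sum as
\[
f \;=\; \sum_{\multideg(h_i g_i)=\delta}\LT(h_i)g_i \;+\; \sum_{\multideg(h_i g_i)=\delta}\bigl(h_i-\LT(h_i)\bigr)g_i \;+\; \sum_{\multideg(h_i g_i)\prec \delta} h_i g_i.
\]
Only the first sum has terms of multidegree exactly $\delta$, and since $\multideg(f)\prec\delta$, its leading terms must cancel. The standard combinatorial lemma (Cox, Lemma~5, p.107) then expresses this cancelling sum as a $\Field$-linear combination $\sum c_{ij}\,x^{\delta-\gamma_{ij}}\,S(g_i,g_j)$, where $x^{\gamma_{ij}}=\LCM(\LM(g_i),\LM(g_j))$ and each term in the combination has multidegree strictly less than $\delta$. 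By hypothesis each $S(g_i,g_j)$ has zero remainder on division by $G$, so by Definition~\ref{def:division-alg} we can write $S(g_i,g_j)=\sum_k a^{ij}_k g_k$ with $\multideg(a^{ij}_k g_k)\preceq\multideg(S(g_i,g_j))\prec \gamma_{ij}$. Substituting back and collecting gives an expression $f=\sum_i \tilde h_i g_i$ with $\max_i\multideg(\tilde h_i g_i)\prec \delta$, contradicting minimality of $\delta$. Hence $\multideg(f)=\delta$ always, completing the proof.
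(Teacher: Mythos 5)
Your proof is correct and is essentially the argument the paper defers to: the paper states Buchberger's Criterion only by citation to Cox et al.\ (Theorem~3, p.~105), and your two directions --- the forward one via the zero-remainder membership criterion for a \GB, and the reverse one via the minimal-$\delta$ representation, the cancellation lemma (Cox, Lemma~5, p.~107), and the degree bound $\multideg(q_if_i)\preceq\multideg(f)$ from Definition~\ref{def:division-alg} --- reproduce that standard textbook proof faithfully. I see no gaps.
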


\begin{proposition}[\cite{Cox}, Proposition 4, p.106]
\label{prop:prime-LM}
    We say the leading monomials of two polynomials $f,g$ are relatively prime if $\LCM(\LM(f),\LM(g))=\LM(f)\cdot\LM(g)$. Given a finite set $G \subseteq \Field[\vc x n]$, suppose that we have $f,g \in G$ such that the leading monomials of $f$ and $g$ are relatively prime. Then $S(f, g)\to_G 0$.
\end{proposition}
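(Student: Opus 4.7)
The plan is to exhibit an explicit representation $S(f,g)=A f + B g$ in which (i) both summands have multidegree at most $\multideg(S(f,g))$, i.e.\ the representation is ``standard'', and (ii) the leading terms of $Af$ and $Bg$ do not cancel. Property (i) is exactly what the standard reduction lemma for the division algorithm (cf.\ Definition~\ref{def:division-alg}) requires in order to guarantee the remainder is zero. The coprimality hypothesis will be used in a single, crucial place to rule out a cancellation that would otherwise destroy (i).

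First I would set up notation. Write $m=\LM(f)=x^\alpha$, $n=\LM(g)=x^\beta$, $c_f=\LC(f)$, $c_g=\LC(g)$, and decompose $f=c_f m+p$ and $g=c_g n+q$, where $p=f-\LT(f)$ satisfies $\multideg(p)\prec\alpha$ and similarly $\multideg(q)\prec\beta$. Since $m$ and $n$ are relatively prime, $\LCM(m,n)=mn=x^{\alpha+\beta}$, so unfolding Definition~\ref{s-poly} gives
\[
S(f,g)=\tfrac{n}{c_f}f-\tfrac{m}{c_g}g=\tfrac{np}{c_f}-\tfrac{mq}{c_g}.
\]
A short algebraic rearrangement, substituting $n=(g-q)/c_g$ into the first term and $m=(f-p)/c_f$ into the second, then produces the target identity
\[
S(f,g)=\tfrac{1}{c_f c_g}\bigl(p\,g-q\,f\bigr)=B\,g+A\,f, \quad \text{where } B=\tfrac{p}{c_f c_g},\ A=-\tfrac{q}{c_f c_g}.
\]

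The crux, and the step I expect to be the only nontrivial one, is to show $\multideg(Af)\ne\multideg(Bg)$, so that no cancellation occurs between their leading terms. Note $\multideg(Bg)=\multideg(p)+\beta$ and $\multideg(Af)=\multideg(q)+\alpha$. Suppose for contradiction these are equal. Then for each coordinate $i$ with $\beta_i>0$, coprimality of $m$ and $n$ forces $\alpha_i=0$, and so $\multideg(q)_i=\multideg(p)_i+\beta_i\geq\beta_i$; while for coordinates with $\beta_i=0$ the inequality $\multideg(q)_i\geq 0=\beta_i$ is trivial. Hence $n$ divides $\LM(q)$ as monomials, which would force $\multideg(q)\succeq\beta$, contradicting $\multideg(q)\prec\beta$. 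With no-cancellation established, $\multideg(S(f,g))=\max(\multideg(Af),\multideg(Bg))$, and in particular $\multideg(Af),\multideg(Bg)\preceq\multideg(S(f,g))$. Thus $S(f,g)=Bg+Af$ is a standard representation with divisors from $\{f,g\}\subseteq G$, and an induction on multidegree (at each step, the current polynomial's leading term is the leading term of whichever of $Af,Bg$ realises the maximum, so it is divisible by $\LT(f)$ or $\LT(g)$; one reduction step strictly decreases the multidegree while preserving the standard-representation property) shows $S(f,g)\to_{\{f,g\}}0$, and \emph{a fortiori} $S(f,g)\to_G 0$.
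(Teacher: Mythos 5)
The paper gives no proof of Proposition~\ref{prop:prime-LM} --- it is quoted from \cite{Cox} (Prop.~4, p.~106) --- and your argument is essentially the textbook proof: the decomposition $f=c_f\LM(f)+p$, $g=c_g\LM(g)+q$, the identity $S(f,g)=\tfrac{1}{c_fc_g}(pg-qf)$, and the coprimality-based non-cancellation of $\multideg(Af)$ and $\multideg(Bg)$ are exactly the steps there, and they are carried out correctly (the degenerate cases $p=0$ or $q=0$ are trivial and worth a word). Two remarks on your closing step. First, under Definition~\ref{def:division-alg} the assertion $S(f,g)\to_{\{f,g\}}0$ only requires a representation $S(f,g)=Af+Bg$ with $\multideg(Af),\multideg(Bg)\preceq\multideg(S(f,g))$ (for $r=0$ the divisibility condition on the remainder is vacuous), so once non-cancellation gives $\multideg(S(f,g))=\max(\multideg(Af),\multideg(Bg))$ you are already done and the induction is unnecessary; this is also how the paper itself uses the criterion in its dual-discriminator lemma. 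Second, if you do want the stronger claim that the division algorithm literally runs to zero remainder, the induction as written is slightly under-justified: you verified non-cancellation only for the initial pair, while after a reduction step by $f$ the coefficient of $f$ becomes a tail of $-q/(c_fc_g)$ and non-cancellation must be re-checked for the new pair. The repair is immediate, since your coprimality argument used only that every monomial of $q$ has multidegree $\prec\beta$, which holds for every tail of $q$ (and symmetrically for $p$); hence the invariant persists, the leading term of the current polynomial is always divisible by $\LT(f)$ or $\LT(g)$, and the process terminates at $0$ by well-ordering of multidegrees.
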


\subsection{The dual-discriminator}\label{sec:dual-discriminator}

Here we deal with a \emph{majority} operation. Over the Boolean domain there is only one majority operation, called the \emph{dual-discriminator}. In the Boolean case, Mastrolilli~\cite{Mastrolilli19} proved that the $\IMP(\Gm)$~is tractable when the constraint language $\Gm$ is closed under the dual-discriminator operation. Later, Bharathi and Mastrolilli~\cite{Bharathi-Dual-Disc} expand this tractability result to constraint languages over the ternary domain. We establish tractability result for any finite domain. We point out that constraint languages closed under the dual-discriminator operation are also known as 0/1/all constraints. We start off with the definition of a majority polymorphism and explain an appealing structure of majority closed relations.

\begin{definition}
  Let $\mu$ be a $3$-ary operation from $D^3$ to $D$. If for all $x,y\in D$ we have $\mu(x,x,y)=\mu(x,y,x)=\mu(y,x,x)=x$, then $\mu$ is called a \emph{majority operation}.
\end{definition}

For a ($n$-ary) relation $R$ and $T\sse[n]$ by $\pr_TR$ we denote the \emph{projection} of $R$ onto $T$, that is, the set of tuples $(a_i)_{i\in T}$ such that there is $(\vc bn)\in R$ with $b_i=a_i$ for each $i\in T$.
 
\begin{proposition}[\cite{JeavonsCG97}]
\label{prop:majority-to-binary}
Let $R$ be a relation of arity $n$ with a majority polymorphism, and let $C=\langle S,R\rangle$ constraining the variables in $S$ with relation $R$. For any problem $\mc{P}$ with constraint $C$, the problem $\mc{P}'$ which is obtained by replacing $C$ by the set of constraints 
\[
    \{((S[i],S[j]),\pr_{i,j}(R)) \mid 1\leq i\leq j\leq n\}
\]
has exactly the same solutions as $\mc{P}$.
\end{proposition}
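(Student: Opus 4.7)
The direction ``every solution of $\mathcal P$ is a solution of $\mathcal P'$'' is immediate: if $\vf$ satisfies the $n$-ary constraint $\langle S,R\rangle$, then $(\vf(S[i]),\vf(S[j]))$ lies in $\pr_{i,j}(R)$ for every pair $i,j$. The content of the proposition is the converse, i.e.\ that every $\vf$ satisfying all binary projection constraints already satisfies $R$ on $S$, and this is where the majority polymorphism $\mu$ of $R$ must be used.

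The plan is to prove the converse by induction on the arity $n$ of $R$. For $n=2$ the statement is tautological. For $n=3$, the hypothesis supplies tuples $t^{(1,2)},t^{(1,3)},t^{(2,3)}\in R$ whose $i$-th and $j$-th coordinates equal $\vf(S[i])$ and $\vf(S[j])$. Applying $\mu$ coordinatewise, the result belongs to $R$ (since $\mu$ is a polymorphism), and at each of positions $1,2,3$ at least two of the three input tuples carry the value prescribed by $\vf$, so majority returns that value; hence $(\vf(S[1]),\vf(S[2]),\vf(S[3]))\in R$.

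For the inductive step with $n\ge 4$, the key observation is that each projection $R_k:=\pr_{[n]\setminus\{k\}}(R)$ is again closed under $\mu$ (polymorphisms are preserved by projection), is of arity $n-1$, and its binary projections coincide with the binary projections of $R$ on pairs avoiding $k$, hence are still satisfied by $\vf$. By the inductive hypothesis applied to each $R_k$, there is a tuple in $R_k$ matching $\vf$ on $[n]\setminus\{k\}$, and this extends to some $t_k\in R$ with $t_k[i]=\vf(S[i])$ for all $i\ne k$. Now form $\mu(t_1,t_2,t_3)\in R$: at every coordinate $i\ge 4$ all three of $t_1,t_2,t_3$ already equal $\vf(S[i])$, while at each coordinate $i\in\{1,2,3\}$ the two tuples $t_j,t_k$ with $j,k\ne i$ both equal $\vf(S[i])$, so majority returns $\vf(S[i])$. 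Thus $\mu(t_1,t_2,t_3)=(\vf(S[1]),\dots,\vf(S[n]))\in R$, proving that $\vf$ satisfies $\langle S,R\rangle$.

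The only conceptual obstacle is choosing the right induction invariant: a direct attempt to combine the binary-projection witnesses $t^{(i,j)}$ via $\mu$ fails because each such witness agrees with $\vf$ at only two coordinates, so a single application of $\mu$ cannot force agreement on $n\ge 4$ positions simultaneously. The remedy, as above, is to first promote the binary witnesses to near-full witnesses $t_k$ (agreeing with $\vf$ everywhere but at one position) via induction on arity, and only then apply $\mu$ to three of them; because $\mu$ needs only two agreeing inputs to commit a coordinate, three such near-full witnesses suffice.
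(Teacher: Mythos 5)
Your proof is correct. The paper does not prove this proposition at all --- it is quoted from \cite{JeavonsCG97} --- and your induction on the arity (first promoting the binary witnesses to tuples agreeing with $\vf$ everywhere except one coordinate, then applying the majority operation $\mu$ to three of them) is exactly the standard Baker--Pixley-style argument for $2$-decomposability of majority-closed relations, so it faithfully supplies the missing justification; the only point worth adding is the routine check that $\pr_{[n]\setminus\{k\}}(R)$ inherits the polymorphism $\mu$, which you state and which holds since applying $\mu$ to lifts of tuples in the projection and projecting back stays inside the projection.
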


The above proposition suggests that, without loss of generality, we may assume that all the constraints are binary when a constraint language has a majority polymorphism  $\mu$. Let $\Gamma$ be a language over a set $D$ such that $\mu\in\Pol(\Gm)$, and let $\cP=(X,D,C)$ be an instance of $\CSP(\Gamma)$. We assume each constraint in $C$ is binary. That is 
\[
C=\{C_{ij}=\langle (x_i,x_j),R_{ij}\rangle\mid R_{ij}\subseteq D_i\times D_j \text{ where }D_i,D_j\subseteq D\}.
\] 

Relations closed under the dual-discriminator operation admit a much nicer structure that has been characterized, see~\cite{szendrei1986clones}. Indeed, such a characterization states that constraints can only be of three types. Let us first define the dual-discriminator operation before formulating the characterization. The dual-discriminator operation is defined as follows.
\begin{align*}
    \nabla(x,y,z)=
    \begin{cases} 
      y & \text{if } y=z,\\
      x & \text{otherwise.} 
   \end{cases}
\end{align*}

\begin{lemma}[\cite{CooperCJ94,szendrei1986clones}]
    Suppose $\nabla\in \Pol(\Gamma)$. Then each constraint $C_{ij}=\langle (x_i,x_j),R_{ij}\rangle$ is one of the following three types.
\begin{enumerate}
\item 
A complete constraint: $R_{ij}=D_i\times D_j$ for some $D_i,D_j\subseteq D$,
\item 
A permutation constraint: $R_{ij}=\{ (a,\pi(a))\mid a\in D_i\}$ for some $D_i\subseteq D$ and some bijection $\pi:D_i\to D_j$, where $D_j\subseteq D$, 
\item 
A two-fan constraint: $R_{ij}=\{ (\{a\}\times D_j) \cup (D_i\times\{b\})\}$ for some $D_i,D_j\subseteq D$ and $a\in D_i,b\in D_j$.
    \end{enumerate}
\end{lemma}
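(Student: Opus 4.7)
My plan is to show that any binary relation $R\sse D_i\tm D_j$ (with $D_i=\pr_1 R$, $D_j=\pr_2 R$) preserved by $\nabla$ must fall into one of the three listed types. The proof will be driven by a case split on whether $R$ is the graph of a bijection, while tracking which rows and columns of $D_i\tm D_j$ are ``full'' in the sense of being entirely contained in $R$. Concretely, set $F=\{a\in D_i:\{a\}\tm D_j\sse R\}$ and $G=\{b\in D_j:D_i\tm\{b\}\sse R\}$.

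The central observation is that if $R$ fails to be a permutation, then either there exist $(a,b),(a,b')\in R$ with $b\ne b'$, or there exist $(a,b),(a',b)\in R$ with $a\ne a'$; the two cases are symmetric, so I treat the first. For any $(c,d)\in R$, applying $\nabla$ componentwise to the three tuples $(c,d),(a,b),(a,b')$ yields $(\nabla(c,a,a),\nabla(d,b,b'))=(a,d)$, which lies in $R$ because $\nabla\in\Pol(\Gm)$. Hence $\{a\}\tm D_j\sse R$, i.e.\ $a\in F$. A symmetric computation shows that the second defect would force some $b\in G$.

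From here I would finish by a case analysis on $|F|$. If $|F|\ge 2$, with distinct $a,a'\in F$, then for any $(c,d)\in D_i\tm D_j$, choose $(c,e)\in R$ (which exists since $c\in\pr_1 R$) and apply $\nabla$ to $(c,e),(a,d),(a',d)$ to get $(\nabla(c,a,a'),\nabla(e,d,d))=(c,d)\in R$, so $R=D_i\tm D_j$ is a complete constraint. If $F=\{a\}$ and there exists $(c,d)\in R$ with $c\ne a$, then for any $c'\in D_i$ I pick $(c',e)\in R$ and apply $\nabla$ to $(c',e),(c,d),(a,d)$, obtaining $(\nabla(c',c,a),d)=(c',d)\in R$; thus $D_i\tm\{d\}\sse R$, i.e.\ $d\in G$. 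Now an analogous case split on $|G|$ gives either $R$ complete (when $|G|\ge 2$, by the symmetric version of the previous argument) or, when $G=\{b\}$, the equality $R=(\{a\}\tm D_j)\cup(D_i\tm\{b\})$: the inclusion $\supseteq$ is from $a\in F$ and $b\in G$, while $\sse$ follows because every $(c,d)\in R$ with $c\ne a$ was just shown to satisfy $d\in G=\{b\}$. The degenerate situation where no $(c,d)\in R$ has $c\ne a$ forces $R=\{a\}\tm D_j$, which fits the two-fan template with any choice of $b\in D_j$. The only real bookkeeping hazard is managing the symmetric subcases (row-defect vs.\ column-defect, and $|F|$ vs.\ $|G|$) in parallel; each individual step is a one-line application of the identities $\nabla(x,y,y)=y$ and $\nabla(x,y,z)=x$ for $y\ne z$.
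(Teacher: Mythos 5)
Your argument is correct, and it is complete: the only statement the paper makes here is a citation to \cite{CooperCJ94,szendrei1986clones} (the characterization of binary relations invariant under the dual discriminator, the ``0/1/all'' or implicational constraints of Cooper--Cohen--Jeavons), so the paper offers no proof to compare against, whereas you give a short self-contained one. Your route is the natural elementary one: normalize so that $D_i=\pr_1 R$, $D_j=\pr_2 R$, observe that failure to be a permutation forces a repeated row or column, and then use the two identities $\nabla(x,y,y)=y$, $\nabla(x,y,z)=x$ ($y\ne z$) to promote a row with two entries to a full row ($a\in F$), two full rows (or two full columns) to completeness, and a single full row plus an off-row entry to a full column, after which $F=\{a\}$, $G=\{b\}$ yields exactly the two-fan $(\{a\}\tm D_j)\cup(D_i\tm\{b\})$; the degenerate cases ($R$ empty, or $R$ a single row/column) fall into the complete or two-fan templates as you note. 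Each $\nabla$-application is legitimate because the binary constraints in question are (projections of) relations preserved by $\nabla$, and your case split on row-defect versus column-defect, then on $|F|$ and $|G|$, is exhaustive. This buys the reader a proof where the paper only has a pointer to the literature; the cited sources obtain the same statement as part of a more general clone-theoretic classification of all relations invariant under the dual discriminator, which is more machinery than is needed for the binary case used here.
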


Here, we discuss a preprocessing step in order to handle permutation constraints and transform the instance into an instance without any permutation constraints. Such a transformation simplifies the problem and yields to an efficient computation of a \GB. Let $\cP=(X,D,C)$ be an instance of $\CSP(\Gamma)$ such that $\nabla\in \Pol(\Gamma)$. Suppose the set of constraints $C$ contains a permutation constraint $C_{ij}=\langle (x_i,x_j),R_{ij}\rangle$ with $R_{ij}=\{ (a,\pi(a))\mid a\in D_i,\pi(a)\in D_j\}$.  Define instance $\cP'=(X\setminus \{x_j\},D,C')$ as follows. 
\begin{enumerate}
\item 
$C'_{st} = C_{st}$ if $s\neq j$ and $t\neq j$,
\item 
replace each constraint $C_{sj}=\langle (x_s,x_j),R_{sj}\rangle$ by  $C'_{si}=\langle (x_s,x_i),R'_{si}\rangle$ where $R'_{si}= \{(a,\pi^{-1}(b)) \mid (a,b)\in R_{sj}\}$,
\item 
replace each constraint $C_{js}=\langle (x_j,x_s),R_{js}\rangle$ by  $C'_{is}=\langle (x_i,x_s),R'_{is}\rangle$ where $R'_{is}= \{(\pi^{-1}(a),b) \mid (a,b)\in R_{js}\}$.
\end{enumerate}
Note that instance $\cP$ has a solution if and only if instance $\cP'$ has a solution. The next lemma states the above preprocessing step does not change the complexity of the \IMP. In the next lemma we use a polynomial interpolation of permutation $\pi$. The permutation $\pi$ can be interpolated by a polynomial $p:\zR\to\zR$ such that $p(a)=\pi(a)$ for all $a\in D_i$.

\begin{lemma}
\label{lem:no-permutation}
    Let $\I(\cP)$ and $\I(\cP')$ be the corresponding ideals to instances $\cP$ and $\cP'$, respectively. Given a polynomial $f_0$, define polynomial $f'_0$ to be the polynomial obtained from $f_0$ by replacing every occurrence of $x_j$ with $p(x_i)$. Then $f_0\in \I(\cP)$ if and only if $f'_0\in \I(\cP')$.
\end{lemma}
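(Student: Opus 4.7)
The plan is to work on the variety side: since both $\I(\cP)$ and $\I(\cP')$ are radical (both arise from CSP instances via the construction in Section~\ref{sect:idealCSP}), the Strong Nullstellensatz reduces the claim $f_0\in\I(\cP)\iff f'_0\in\I(\cP')$ to the statement that $f_0$ vanishes on $\mathbf{V}(\I(\cP))=\Sol(\cP)$ if and only if $f'_0$ vanishes on $\mathbf{V}(\I(\cP'))=\Sol(\cP')$. So the task splits into (a) exhibiting a bijection between $\Sol(\cP)$ and $\Sol(\cP')$, and (b) showing that $f_0$ and $f'_0$ take the same value on corresponding points.

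First I would set up the bijection $\Phi:\Sol(\cP)\to\Sol(\cP')$ by restriction, $\Phi(\vf)=\vf|_{X\setminus\{x_j\}}$, with inverse $\Phi^{-1}(\vf')$ the extension obtained by setting $\vf(x_j)=\pi(\vf'(x_i))$. In the forward direction, if $\vf\in\Sol(\cP)$ then $C_{ij}$ gives $\vf(x_i)\in D_i$ and $\vf(x_j)=\pi(\vf(x_i))$; plugging this into the defining equations of $R'_{si}$ and $R'_{is}$ shows that the replacement constraints of $\cP'$ are satisfied, and the unchanged constraints are trivially satisfied by the restriction. The delicate direction is the inverse: given $\vf'\in\Sol(\cP')$ one must know that $\vf'(x_i)\in D_i$ so that $\pi(\vf'(x_i))$ is defined. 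Here I would observe that rule~(2) of the construction, applied to $C_{ij}$ itself (taking $s=i$), produces $C'_{ii}=\langle(x_i,x_i),\{(a,a)\mid a\in D_i\}\rangle$, which is precisely a unary constraint forcing $\vf'(x_i)\in D_i$. Then one checks that every constraint $C_{sj}$ of $\cP$ evaluates correctly under the extension, since $(\vf'(x_s),\vf'(x_i))\in R'_{si}$ is equivalent, by definition of $R'_{si}$, to $(\vf'(x_s),\pi(\vf'(x_i)))\in R_{sj}$, and analogously for constraints of the form $C_{js}$.

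Next I would match values of $f_0$ and $f'_0$. For $\vf\in\Sol(\cP)$ and $\vf'=\Phi(\vf)$, the substitution defining $f'_0$ gives
\[
f'_0(\vf')=f_0\bigl(\vf'(x_{i_1}),\dots,p(\vf'(x_i)),\dots\bigr),
\]
and since $\vf'(x_i)=\vf(x_i)\in D_i$ one has $p(\vf(x_i))=\pi(\vf(x_i))=\vf(x_j)$, so $f'_0(\vf')=f_0(\vf)$. The same identity holds in the reverse direction using the inverse map. Consequently, $f_0$ vanishes on all of $\Sol(\cP)$ iff $f'_0$ vanishes on all of $\Sol(\cP')$, which by the Strong Nullstellensatz is exactly the desired equivalence $f_0\in\I(\cP)\Leftrightarrow f'_0\in\I(\cP')$.

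The main obstacle I anticipate is not computational but bookkeeping: one must be careful that the replacement rule is really being applied to \emph{every} constraint containing $x_j$, including $C_{ij}$ itself, because that is what makes $\vf'(x_i)\in D_i$ automatic and hence makes $p(\vf'(x_i))=\pi(\vf'(x_i))$ meaningful (recall that $p$ is only guaranteed to agree with $\pi$ on $D_i$). Once this observation is made explicit, the proof reduces to routine verification of the constraint-by-constraint equivalence between $\cP$ and $\cP'$, and the value-preservation of the substitution $x_j\mapsto p(x_i)$ on points of the variety.
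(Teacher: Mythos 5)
Your proposal is correct and follows essentially the same route as the paper's proof: a one-to-one correspondence between $\mathbf{V}(\I(\cP))$ and $\mathbf{V}(\I(\cP'))$ under which $f_0$ and $f'_0$ take equal values, combined with radicality of both ideals to convert ideal membership into vanishing on the variety. The only difference is that you spell out details the paper leaves implicit, in particular that the transformed copy of $C_{ij}$ itself (rule~(2) with $s=i$) forces $\vf'(x_i)\in D_i$, which is exactly what makes $p(\vf'(x_i))=\pi(\vf'(x_i))$ well defined in the reverse direction.
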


\begin{proof}
    Note that, by our construction, there is a one-to-one correspondence between the points in $ \mb{V}(\I(\cP))$ and the points in $ \mb{V}(\I(\cP'))$. That is, each $\mb{a}=(a_1,\dots,a_i,\dots, a_j=p(a_i),\dots, a_n)\in \mb{V}(\I(\cP))$ corresponds to $\mb{a}'=(a_1,\dots,a_{j-1},a_{j+1},\dots,a_n)\in \mb{V}(\I(\cP'))$. Furthermore, for all
    $\mb{a}\in \mb{V}(\I(\cP))$ and the corresponding $\mb{a}'\in \mb{V}(\I(\cP'))$ we have $f_0(\mb{a})=f'_0(\mb{a}')$. This yields that 
    \[
        (\exists \mb{a}\in \mb{V}(\I(\cP)) \land f_0(\mb{a})\neq 0) \iff (\exists \mb{a}'\in \mb{V}(\I(\cP')) \land f'_0(\mb{a}')\neq 0)
    \]
    This finishes the proof of the lemma.
\end{proof}

The preprocessing step and \Cref{lem:no-permutation} suggest that we can massage any instance of $\IMP(\Gm)$ and obtain an instance $\IMP(\Gm)$ without permutation constraints. This can be carried out in polynomial time by processing permutation constraints one by one in turn, provided the original polynomial has bounded degree, as otherwise the number of monomials in the resulting polynomial may be exponentially greater that that of the original one.

\begin{lemma}
     Let $\cP=(X,D,C)$ be an instance of $\CSP(\Gamma)$ such that $\nabla\in \Pol(\Gamma)$ and $C$ contains no permutation constraint. Then a \GB\ of the corresponding ideal $\I(\cP)$ can be computed in polynomial time.
\end{lemma}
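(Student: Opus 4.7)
My plan is to reduce to the two remaining types of binary constraints, write down an explicit candidate Gröbner basis, and then verify it by Buchberger's criterion.

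First I would apply a preprocessing phase. For any language preserved by a majority operation, enforcing pair-consistency runs in polynomial time and replaces each $D_i$ by the projection of $\Sol(\cP)$ onto $x_i$. After this step, each complete constraint is redundant relative to the restriction $x_i\in D_i$, and each remaining binary constraint is a two-fan $R_{ij}=(\{a\}\times D_j)\cup(D_i\times\{b\})$ with $a\in D_i$, $b\in D_j$. Pair-consistency may add new binary relations on previously unconstrained pairs, and some of these may be permutation-type; I would repeatedly strip them out using the transformation of Lemma~\ref{lem:no-permutation}, which must terminate because each such application removes a variable. If some $D_i$ becomes empty, $\Sol(\cP)=\emptyset$, so $1\in\I(\cP)$ and $\{1\}$ is the desired Gröbner basis.

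Fix any variable ordering and use the \grlex\ order. My candidate Gröbner basis $G$ consists of the reduced domain polynomials $f_i=\prod_{c\in D_i}(x_i-c)$, one per variable, together with the two-fan polynomials $g_{ij}=(x_i-a)(x_j-b)$, one per remaining two-fan constraint on $(x_i,x_j)$ with pivots $a,b$. I would first show $\ang G = \I(\cP)$: the common zero set of $G$ inside $D^n$ is exactly $\Sol(\cP)=\Variety{\I(\cP)}$, because $(x_i-a)(x_j-b)$ vanishes on $R_{ij}$ inside $D_i\times D_j$; combining this with the fact that $\I(\cP)$ is radical (Theorem~\ref{th:nullstz}) gives the ideal equality.

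To certify that $G$ is a Gröbner basis I would invoke Theorem~\ref{th:crit}. The leading monomials are $x_i^{|D_i|}$ for $f_i$ and (after the variable ordering) $x_ix_j$ for $g_{ij}$, so by Proposition~\ref{prop:prime-LM} every pair on disjoint variables reduces to zero automatically. The nontrivial $S$-pairs are of three shapes: $S(f_i,g_{ij})$; $S(g_{ij},g_{ik})$ sharing $x_i$; and the symmetric $S(g_{ij},g_{kj})$. For $S(f_i,g_{ij})$ the reduction is short because $x_i-a$ divides $f_i$ when $a\in D_i$, and the residual linear factor in $x_j$ is killed by $f_j$. The remaining shapes are the main obstacle: for $S(g_{ij},g_{ik})$ pair-consistency is essential, since whenever $x_i$ differs from both pivots $a,a'$ the two constraints force the values of $x_j$ and $x_k$, and this combinatorial forcing must be exhibited as an explicit polynomial identity combining $f_i,f_j,f_k$, the $g$'s, and (if present) the induced $g_{jk}$. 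Carrying out a uniform case analysis on how $a$ and $a'$ compare, extending the three-element computation of \cite{Bharathi-Dual-Disc}, is the only genuine technical work. Since $|G|=O(|X|+|C|)$, every polynomial has degree at most $|D|$, and there are polynomially many $S$-pairs, each reducible in polynomial time, the whole procedure then runs in polynomial time.
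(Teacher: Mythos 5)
Your plan is the paper's plan: take as the candidate basis the restricted domain polynomials $\prod_{c\in D_i}(x_i-c)$ together with the two-fan binomials $(x_i-a)(x_j-b)$, dismiss all $S$-pairs with relatively prime leading monomials via Proposition~\ref{prop:prime-LM}, and verify Buchberger's criterion (Theorem~\ref{th:crit}) on the remaining pairs, with a pairwise-consistency assumption supplying the combinatorial input. The difference is that you stop exactly where the proof lives: the three nontrivial $S$-pair shapes are announced but never reduced, and you flag them yourself as ``the only genuine technical work.'' In the paper they are short explicit identities: for a shared pivot, $S\bigl((x_i-a)(x_j-b),\,(x_i-a)(x_k-d)\bigr)=d\cdot f-b\cdot g$, which reduces to zero by the multidegree condition in Definition~\ref{def:division-alg}; the domain-polynomial-versus-two-fan case with $c\in D_i$ is a similar two-term rewriting.

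The genuine gap is in the distinct-pivot case. There $S(f,g)$ reduces modulo $\{f,g\}$ to $(c-a)(x_j-b)(x_k-d)$, and this reduces further to zero \emph{only because} $(x_j-b)(x_k-d)$ itself belongs to $G$; the paper obtains this from the consistency assumption (if $x_i\ne a$ forces $x_j=b$ and $x_i\ne c$ forces $x_k=d$ with $a\ne c$, then the induced constraint on $(x_j,x_k)$ is the two-fan with pivots $b,d$, so its binomial is among the generators). Your parenthetical ``(if present)'' hedges on precisely this point: if that polynomial is not in $G$, the remainder's leading monomial $x_jx_k$ is divisible by no leading term of $G$, the $S$-polynomial does not reduce to zero, and Buchberger's criterion fails for your candidate set. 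So the closure property ``induced two-fans are in $G$'' must be proved (or the induced binomials explicitly added, noting they lie in $\I(\cP)$), not left conditional. The surrounding material in your write-up --- the radicality argument that $\ang G=\I(\cP)$, and the care about permutation constraints reappearing after enforcing consistency --- is sound and in places more explicit than the paper, but without the actual $S$-polynomial computations and the membership of the induced two-fan polynomial, the verification at the heart of the lemma is missing.
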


\begin{proof}
    The satisfiability of the instance $\cP$ can be decided in polynomial time~\cite{CooperCJ94} so we may assume that $1\not \in \I(\cP)$, else $G=\{1\}$ is a \GB. Moreover, we assume for every $x_i,x_j,x_k\in X$ if $(a,b)\in R_{ij}$ then there exists $c$ such that $(a,c)\in R_{ik}$ and $(c,b)\in R_{kj}$. This corresponds to the so-called \emph{arc consistency} notion which is used in solving \CSP s extensively. Note that such an assumption does not change $\Sol(\cP)$. Equivalently, it does not change $\Variety{\I(\cP)}$ which, by \Cref{th:nullstz}, means the ideal $\I(\cP)$ does not change. This arc consistency assumption has a consequence in terms of polynomials which enables us to prove our set of polynomials is indeed a \GB.
    
    First, let us give a set $G$ of polynomials that represent binary constraints. Initially, $G$ contains all the domain polynomials i.e. $G=\{\prod\limits_{a\in D}(x_i-a)\mid x_i\in X\}$. We proceed as follows.
\begin{enumerate}
\item[$i)$] 
To each complete constraint $C_{ij}=\langle (x_i,x_j),R_{ij}=D_i\times D_j\rangle$ we associate two polynomials $g_{i}=\prod\limits_{a\in D_i}(x_i-a)$ and $g_{j}=\prod\limits_{b\in D_j}(x_j-b)$ and replace $\prod\limits_{a\in D}(x_i-a)$ by $g_i$, and replace $\prod\limits_{a\in D}(x_j-a)$ by $g_j$.
\item[$ii)$] 
To each two-fan constraint $C_{ij}=\langle (x_i,x_j),R_{i,j}=\{ (\{a\}\times D_j) \cup (D_i\times\{b\})\}\rangle$ we associate polynomial $g_{ij}=(x_i-a)(x_j-b)$ and set $G=G\cup\{g_{ij}\}$. Furthermore, we replace domain polynomial $\prod\limits_{a\in D}(x_i-a)$ by $\prod\limits_{a\in D_i}(x_i-a)$, and replace domain polynomial $\prod\limits_{a\in D}(x_j-a)$ by $\prod\limits_{a\in D_j}(x_j-a)$.
\end{enumerate}
     Observe that, as a consequence of arc consistency, for every two polynomials $f=(x_i-a)(x_j-b)$ and $g=(x_i-c)(x_k-d)$ in $G$ with $a\neq c$ polynomial $h=(x_j-b)(x_k-d)$ is also in $G$.
    
     Consider \grlex order with $x_1\succ_\lex\dots \succ_\lex x_n$. Now, we prove that for any two polynomials $f,g\in G$ we have $S(f,g)\to_{G} 0$. We dismiss the cases where $\LM(f)$ and $\LM(g)$ are relatively prime. In these cases, by \Cref{prop:prime-LM}, we have $S(f,g)\to_{G} 0$. Hence, we focus on the cases where  $\LM(f)$ and $\LM(g)$ are not relatively prime.
\begin{enumerate}
\item 
Suppose $f=(x_i-a)(x_j-b)$ and $g=(x_i-a)(x_k-d)$. Then 
\[
    S(f,g)=x_k\cdot f-x_j\cdot g=d\cdot x_i\cdot x_j-b\cdot x_i\cdot x_k+a\cdot b\cdot x_k-a\cdot d\cdot x_j=d\cdot f-b\cdot g.
\]
Observe that $\multideg(S(f, g)) \succeq \multideg(d\cdot f)$ and $\multideg(S(f, g)) \succeq \multideg( b\cdot g)$. Hence, by \Cref{def:division-alg}, we have $S(f,g)\to_{\{f,g\}} 0$.
\item 
Suppose $f=(x_i-a)(x_j-b)$ and $g=(x_i-c)(x_k-d)$ where $a\neq c$. Then $S(f,g)=x_k\cdot f- x_j\cdot g$ and $S(f,g)\to_{\{f,g\}}(c-a)(x_j-b)(x_k-d)$. However, $h=(x_j-b)(x_k-d)$ is in $G$ and hence $S(f,g)\to_{G}0$.
\item 
Suppose $f=\prod\limits_{a\in D_i}(x_i-a)$ and $g=(x_i-c)(x_j-b)$ where $c\in D_i\subseteq D$. Define $f_1=\prod\limits_{a\in D_i\setminus\{c\}}(x_i-a)$ and $g_1=(x_j-b)$. Hence, $f=(x_i-c)\cdot f_1$ and $g=(x_i-c)\cdot g_1$.
        \begin{align*}
            S(f,g)
            &=x_j \cdot f - (x_i^{|D_i|-1})\cdot g\\
            &= \left[(x_j-b)+b\right]\cdot f - \left[\prod\limits_{a\in D_i\setminus \{c\}}(x_i-a)-\left(\prod\limits_{a\in D_i\setminus\{c\}}(x_i-a)-x_i^{|A_i|-1}\right)\right]\cdot g\\
            &= (x_j-b)\cdot f-(b)\cdot f - \prod\limits_{a\in D_i\setminus\{c\}}(x_i-a)\cdot g + \left(\prod\limits_{a\in D_i\setminus\{c\}}(x_i-a)-x_i^{|D_i|-1}\right)\cdot g\\
            &= (-b)\cdot f + \left(\prod\limits_{a\in D_i\setminus\{c\}}(x_i-a)-x_i^{|D_i|-1}\right)\cdot g=(g_1-\LT(g_1))\cdot f + (f_1-\LT(f_1))\cdot g 
        \end{align*}
\end{enumerate}

     We show that $\multideg(S(f, g)) \succeq \multideg((\LT(g_1)-g_1)\cdot f)$ and $\multideg(S(f, g)) \succeq \multideg( (f_1-\LT(f_1))\cdot g)$. This follows by showing $\LM((\LT(g_1)-g_1)\cdot f)\neq \LM((f_1-\LT(f_1))\cdot g)$. By contradiction, if $\LM((\LT(g_1)-g_1)\cdot f)= \LM((f_1-\LT(f_1))\cdot g)$ then
     \begin{align*}
         \LM(\LT(g_1)-g_1)\cdot\LM (f)= \LM(f_1-\LT(f_1))\cdot \LM(g)\implies x_i^{|A_i|}= x_i^{|A_i|-2}\cdot x_ix_j
     \end{align*}
     The latter is impossible, hence $\multideg(S(f, g)) \succeq \multideg((\LT(g_1)-g_1)\cdot f)$ and $\multideg(S(f, g)) \succeq \multideg( (f_1-\LT(f_1))\cdot g)$. Therefore, we have $S(f,g)\to_{\{f,g\}} 0$ (recall \Cref{def:division-alg}).
     
     We have shown for any two polynomials $f,g\in G$ we have $S(f,g)\to_{G} 0$. Hence, by Buchberger's criterion (\Cref{th:crit}), $G$ is a \GB\ for $\I(\cP)$.
\end{proof}

\begin{theorem}\label{the:dual-discriminator}
    Let $\Gamma$ be a constraint language. If $\Gamma$ has the dual-discriminator polymorphism, then $\IMP_d(\Gamma)$ can be solved in polynomial time.
\end{theorem}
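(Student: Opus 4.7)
The plan is to compose the three ingredients assembled in this subsection: the binary-reduction afforded by a majority polymorphism, the permutation-elimination preprocessing, and the Gröbner basis construction for the permutation-free case. Given an instance $(f_0,\cP)$ of $\IMP(\Gm)$, I would first apply Proposition~\ref{prop:majority-to-binary} to replace $\cP$ by an equivalent binary instance $\cP_1$ whose solutions (and therefore whose ideal $\I(\cP_1)=\I(\cP)$) are unchanged. Because $\nabla$ is a majority operation, every binary projection used in $\cP_1$ is still preserved by $\nabla$, so each new binary constraint falls into one of the three types characterized by the Cooper--Cohen--Jeavons/Szendrei dichotomy: complete, permutation, or two-fan.

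Next I would iteratively invoke Lemma~\ref{lem:no-permutation} to remove permutation constraints one at a time. Each application picks a permutation constraint $\langle(x_i,x_j),R_{ij}\rangle$, interpolates the bijection $\pi$ by a polynomial $p$ of degree at most $|D|-1$, substitutes $p(x_i)$ for every occurrence of $x_j$ in the current input polynomial and in every remaining constraint, and drops $x_j$ from the variable set. By Lemma~\ref{lem:no-permutation} ideal membership is preserved at every step, and after at most $|X|$ rounds the process terminates with a pair $(f'_0,\cP')$ such that $\cP'$ contains only complete and two-fan constraints and such that $f_0\in\I(\cP)$ iff $f'_0\in\I(\cP')$.

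With $\cP'$ in hand I would apply the preceding lemma (the Gröbner basis construction for permutation-free binary instances) to compute, in polynomial time and under the \grlex order fixed there, a Gröbner basis $G$ of $\I(\cP')$ in which every element has degree at most $|D|$. Finally, I would decide the original IMP question by running the division algorithm of Definition~\ref{def:division-alg} on $f'_0$ with respect to $G$ and checking whether the normal form $f'_0|_G$ equals zero; by Proposition~\ref{prop:division-GB} and its corollary this correctly decides whether $f'_0\in\I(\cP')$, which by the chain of equivalences above decides whether $f_0\in\I(\cP)$.

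The step I expect to require the most care is controlling bit complexity throughout the pipeline, since the statement concerns $\IMP(\Gm)$ without any degree restriction on $f_0$. The natural remedy is to normalize after every operation: whenever the input polynomial (or an intermediate result of the division) contains a variable raised to a power at least $|D|$, immediately reduce it modulo the appropriate domain polynomial so that each variable occurs with degree less than $|D|$. Together with the fact that $G$ has only polynomially many generators of bounded degree and leading monomials of a very restricted shape (either $x_i^{|D_i|}$ or $x_ix_j$), this should ensure that both the permutation-elimination substitutions and the concluding division run in polynomial time, completing the proof.
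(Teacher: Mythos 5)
Your proposal follows the paper's own argument essentially verbatim: reduce to binary constraints via the majority polymorphism (Proposition~\ref{prop:majority-to-binary}), eliminate permutation constraints one at a time by the interpolation substitution of Lemma~\ref{lem:no-permutation}, compute the Gröbner basis of the resulting permutation-free instance by the preceding lemma, and decide membership by division with respect to that basis. Your final paragraph on normalizing by the domain polynomials is an extra precaution the paper leaves implicit (it does not discuss controlling the size of the substituted polynomial), but the route is the same.
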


\begin{remark}
Note that unlike in the results of \cite{Mastrolilli19,Bharathi-Dual-Disc}, due to the preprocessing step, Theorem~\ref{the:dual-discriminator} does not always allow one to find a proof that a polynomial belongs to the ideal, but this is resolved in Section~\ref{sec:ximp-truncated}.
\end{remark}

\subsection{Semilattice polymorphisms}
 In this section we study the \IMP($\Gamma$) for languages $\Gamma$ where \Pol($\Gamma$) contains a \emph{semilattice} operation. A binary operation $\psi(x, y)$ satisfying the following three conditions is said to be a semilattice operation:
 \begin{enumerate}
     \item  Associativity: $\psi(x,\psi(y,z))=\psi(\psi(x,y),z)$
     \item  Commutativity: $\psi(x,y)=\psi(y,x)$ 
     \item  Idempotency: $\psi(x,x)=x$ 
 \end{enumerate}
 
 Mastrolilli \cite{Mastrolilli19} considered this problem for languages over the Boolean domain i.e., $D=\{0,1\}$, and proved the following. We remark that a Boolean relation is closed under a semilattice operation if and only if it can be defined by a conjunction of \emph{dual-Horn} clauses or can be defined by a conjunction of \emph{Horn} clauses~\cite{JeavonsCG97}.
 
\begin{theorem}[\cite{Mastrolilli19}]\label{IMP-Bool}
    Let $\Gamma$ be a finite Boolean constraint language. If $\Gamma$ has a semilattice polymorphism, then $\IMP_d(\Gamma)$ can be solved in $n^{O(d)}$ time for $d \geq 1$.
\end{theorem}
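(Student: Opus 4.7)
The plan is to construct an explicit $d$-truncated \GB\ of $\I(\cP)$ in time $n^{O(d)}$ and then test membership of $f_0$ by division (Proposition~\ref{prop:division-GB}). On the Boolean domain every semilattice operation is either $\wedge$ or $\vee$; the two cases are dual, so I may assume $\wedge\in\Pol(\Gm)$. Relations preserved by $\wedge$ are exactly conjunctions of \emph{Horn} clauses, so after preprocessing I may assume each constraint of $\cP$ has the form $\bigwedge_{i\in S}x_i\rightarrow x_j$ or $\bigwedge_{i\in S}x_i\rightarrow\bot$, encoded by the polynomial $\prod_{i\in S}x_i(1-x_j)$ or $\prod_{i\in S}x_i$, respectively. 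Together with the Boolean domain polynomials $x_i^2-x_i$ these generate $\I(\cP)$, and modulo the domain polynomials every element of $\I(\cP)$ is multilinear, so only multilinear leading monomials ever arise.

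Under the \grlex order I would initialize $G_d$ with the Horn-clause and domain polynomials, then iteratively pick pairs $f,g\in G_d$, compute $S(f,g)$, reduce it modulo $G_d$ while simplifying each $x_i^2$ to $x_i$, and insert any nonzero remainder $r$ with $\deg r\le d$ into $G_d$. Termination and polynomial size follow because there are at most $n^{O(d)}$ multilinear monomials of degree at most $d$, each a potential leading term, so Buchberger's process enlarges $\langle\LT(G_d)\rangle$ at most $n^{O(d)}$ times before stabilizing.

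The heart of the argument, and in my view the chief obstacle, is correctness of the truncation: I need to show that if $f_0\in\I(\cP)$ with $\deg f_0\le d$, then the normal form $f_0\vert_{G_d}$ is zero. Equivalently, every leading monomial of $\I(\cP)$ of degree at most $d$ must already lie in $\langle\LT(G_d)\rangle$. Here the semilattice/Horn structure is essential: a direct calculation shows that the S-polynomial of two Horn-type generators of degrees $d_1,d_2\le d$, after reduction by the domain polynomials, is itself a Horn-type polynomial of degree at most $\max(d_1,d_2)$. This is the polynomial analog of the classical fact that resolving two Horn clauses yields a Horn clause no longer than its parents. Combined with Buchberger's criterion (Theorem~\ref{th:crit}) applied to the Horn-type polynomials of degree $\le d$ together with the domain polynomials, this gives the desired $d$-truncated \GB\ property.

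Given $G_d$, the algorithm outputs \textbf{yes} iff $f_0\vert_{G_d}=0$, which requires at most $n^{O(d)}$ reduction steps on polynomials of at most $n^{O(d)}$ monomials, so the total running time is $n^{O(d)}$. The only genuine work is the degree-preservation lemma for S-polynomials of Horn-type generators, which I expect to follow from a short case analysis in the spirit of the one carried out in Section~\ref{sec:dual-discriminator}.
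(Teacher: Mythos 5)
You should first note that the paper does not prove Theorem~\ref{IMP-Bool} at all: it is imported from \cite{Mastrolilli19} and used as a black box in the proof of Theorem~\ref{the:semilattice}, so your proposal has to stand on its own. It does not, and the failure is exactly at the step you yourself identify as the heart of the argument. The claimed degree-preservation lemma is false: resolving two Horn clauses does \emph{not} in general yield a clause no longer than its parents (only resolution with a \emph{unit} clause preserves length), and the polynomial version fails in the same way. Take $f=xy(1-u)$ and $g=uv(1-w)$, encoding $x\wedge y\to u$ and $u\wedge v\to w$, both of degree $3$. Under \grlexns, $\LT(f)=-xyu$ and $\LT(g)=-uvw$, and $S(f,g)$ reduces modulo $\{f,g\}$ and the domain polynomials to the resolvent $xyv(1-w)$, i.e.\ $x\wedge y\wedge v\to w$, of degree $4>\max(\deg f,\deg g)$. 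So the case analysis you hope to carry out "in the spirit of Section~\ref{sec:dual-discriminator}" cannot exist.

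This is not a repairable detail, because without it nothing certifies the truncation: a degree-$\le d$ element of $\I(\cP)$ may only be derivable through intermediate S-polynomials of degree $>d$, which your algorithm discards (degree-truncated Buchberger is sound for homogeneous ideals, and these ideals are not homogeneous). Concretely, take $d=3$ and the instance with constraints $x\wedge y\to u$, $x\wedge y\to v$, $u\wedge v\to w$, so $G_d$ is initialized with $xy(1-u)$, $xy(1-v)$, $uv(1-w)$ and the domain polynomials. Then $f_0=xy(1-w)\in\I(\cP)$ and $\deg f_0=3\le d$, but every S-pair involving $uv(1-w)$ reduces to a degree-$4$ remainder such as $xyv(1-w)$, which you discard, all remaining S-pairs reduce to $0$, and no element of $G_d$ has leading monomial dividing $xyw$; your division test therefore returns $f_0|_{G_d}=f_0\neq 0$ and wrongly reports $f_0\notin\I(\cP)$. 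Your first paragraph (the Horn translation of the constraints, justified by radicality) is fine; the genuine missing content is precisely an argument that a correct $d$-truncated \GB\ of a Horn ideal can nevertheless be computed in $n^{O(d)}$ time despite this degree growth, and that is the nontrivial work done in \cite{Mastrolilli19}, which does not proceed by naively truncating Buchberger's algorithm but by a dedicated analysis of these ideals (exploiting the tractability of Horn satisfiability to determine the low-degree leading monomials).
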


We extend this tractability result to languages over any finite domain $D$. That is, we prove that $\IMP_d(\Gm)$ is polynomial time solvable when $\Gm$ is a language over $D$ and it has a semilattice polymorphism. To do so, we use a well-known result in semilattice theory. A \emph{semilattice} is an algebra $\cD=(D,\{\psi\})$, where $\psi$ is a semilattice operation. Informally speaking, every semilattice is a subalgebra of a direct power of a 2-element semilattice. 

\begin{theorem}[\cite{Papert64:congruence}]\label{embed-to-boolean}
    Let $\cD=(D,\psi)$ be a finite semilattice where $\psi$ is a semilattice operation. Then there is $k$ such that $\cD$ is a subalgebra of the direct power $\cB^k$ of $\cB=(\{0, 1\},\vf)$, where $\vf$ is a semilattice operation on $\{0,1\}$.
\end{theorem}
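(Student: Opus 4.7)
The plan is to use the standard representation of a finite semilattice by its principal filters (or ideals), which essentially realizes it as a sub-meet-semilattice of a power set.

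First, I would convert the algebraic data into order-theoretic data. Given the semilattice $\cD=(D,\psi)$, define the relation $x\le y$ on $D$ by $x\le y \iff \psi(x,y)=x$. Idempotency, commutativity, and associativity of $\psi$ yield reflexivity, antisymmetry, and transitivity, so $\le$ is a partial order, and moreover $\psi(x,y)$ is the greatest lower bound (meet) of $x$ and $y$ under $\le$. Thus $\cD$ is, up to notation, a meet-semilattice. On the two-element side, take $\cB=(\{0,1\},\wedge)$ where $\wedge$ is the Boolean meet (i.e.\ $\min$); this is clearly a semilattice operation.

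Next, I would construct an explicit embedding. Set $k=|D|$ and index the coordinates by elements $a\in D$. Define $h:D\to\{0,1\}^k$ coordinatewise by
\[
h(x)_a \;=\; \begin{cases} 1 & \text{if } a\le x,\\ 0 & \text{otherwise.}\end{cases}
\]
In other words, $h(x)$ is the characteristic vector of the principal ideal $\{a\in D\mid a\le x\}$. I would then verify two things. First, $h$ is a homomorphism: for each coordinate $a$,
\[
h(\psi(x,y))_a=1 \iff a\le \psi(x,y) \iff a\le x \text{ and } a\le y \iff h(x)_a\wedge h(y)_a = 1,
\]
so $h(\psi(x,y))=h(x)\wedge h(y)$ componentwise. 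Second, $h$ is injective: if $h(x)=h(y)$, then since $h(x)_x=1$ we get $x\le y$, and symmetrically $y\le x$, so $x=y$ by antisymmetry.

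Finally, the image $h(D)\subseteq\{0,1\}^k$ is closed under the coordinatewise semilattice operation of $\cB^k$ (by the homomorphism property), so it is a subalgebra of $\cB^k$, and $h$ is an isomorphism from $\cD$ onto this subalgebra. This establishes the statement with $k=|D|$. There is no real obstacle here; the only point requiring a moment of care is keeping straight the correspondence between the algebraic operation $\psi$ and the order $\le$, and choosing the Boolean operation ($\wedge$ versus $\vee$) consistently with whether one uses principal ideals or principal filters, since the dual choice (filters with $\vee$) would work equally well.
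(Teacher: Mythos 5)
Your proof is correct and complete: the order $x\le y\iff\psi(x,y)=x$ makes $\psi$ the meet, and the characteristic-vector map of principal ideals $h(x)_a=[a\le x]$ is an injective homomorphism into $(\{0,1\},\wedge)^{|D|}$, whose image is a subalgebra. The paper itself gives no proof of this statement --- it is quoted as a known result of Papert \cite{Papert64:congruence} --- and your argument is exactly the standard embedding used to establish it, so there is nothing to reconcile with the paper's treatment.
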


Armed with Theorem~\ref{embed-to-boolean} a proof of tractability of semilattice IMPs is straightforward.

\begin{theorem}\label{the:semilattice}
    Let $\Gamma$ be a finite constraint language over domain $D$. If $\Gamma$ has a semilattice polymorphism, then $\IMP_d(\Gamma)$ can be solved in polynomial time.
\end{theorem}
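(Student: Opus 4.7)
The plan is to combine three ingredients already in place: Papert's embedding theorem (Theorem~\ref{embed-to-boolean}), Mastrolilli's Boolean tractability result (Theorem~\ref{IMP-Bool}), and the preservation of (d-)tractability under subalgebras and direct powers (Theorem~\ref{the:HSP-reduction}). In the language of algebras, the goal is to show that the semilattice $\cD=(D,\{\psi\})$ is d-tractable; by the definition of d-tractability, this will immediately give $\IMP_d(\Gm)$ polynomial time for every finite $\Gm$ with $\psi\in\Pol(\Gm)$.

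First, I would apply Theorem~\ref{embed-to-boolean} to obtain an integer $k$ and an embedding of $\cD=(D,\{\psi\})$ as a subalgebra of the direct power $\cB^k$ of a two-element semilattice $\cB=(\{0,1\},\vf)$. Note that $\vf$ is either Boolean AND or Boolean OR, and in either case any Boolean constraint language preserved by $\vf$ falls within the scope of Theorem~\ref{IMP-Bool}. Hence the algebra $\cB$ is d-tractable.

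Second, I would run the algebraic preservation results of Theorem~\ref{the:HSP-reduction} in two steps. Since $\cB$ is d-tractable, so is the direct power $\cB^k$ (the direct-power clause of Theorem~\ref{the:HSP-reduction} invokes Theorem~\ref{the:pp-interpretability} and inflates $d$ by a constant factor, which is harmless for d-tractability). Since $\cD$ is a subalgebra of $\cB^k$, the subalgebra clause of Theorem~\ref{the:HSP-reduction} (which uses pp-definability and therefore preserves degree exactly) yields that $\cD$ itself is d-tractable.

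Finally, I would unwind the definition: if $\Gm$ has $\psi$ as a polymorphism, then $\{\psi\}\sse\Pol(\Gm)$, so the d-tractability of $\cD=(D,\{\psi\})$ forces $\IMP_d(\Gm)$ to be polynomial time for every $d$, as desired. The only genuine obstacle is bookkeeping: one must be sure that the degree blow-up in the direct-power reduction of Theorem~\ref{the:pp-interpretability} is only by a multiplicative constant (depending on $k=k(|D|)$), and that Theorem~\ref{IMP-Bool} applies uniformly to both Boolean semilattice operations, so that passing through $\cB^k$ still yields a polynomial-time algorithm for each fixed $d$. Both points are immediate from the statements cited, so no additional work is required.
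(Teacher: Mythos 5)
Your proof is correct and follows essentially the same route as the paper: embed the semilattice $(D,\psi)$ into a power of the two-element semilattice via Theorem~\ref{embed-to-boolean}, transfer $d$-tractability back along that embedding, and finish with Theorem~\ref{IMP-Bool}. The only difference is cosmetic packaging — the paper invokes Lemma~\ref{lem:HSP-pp-definitions} and Theorem~\ref{the:pp-interpretability} directly to reduce $\IMP_d(\Gm)$ to $\IMP_{ud}(\Dl)$ for a Boolean $\Dl$, whereas you route the same machinery through the subalgebra and direct-power clauses of Theorem~\ref{the:HSP-reduction}, which is itself proved from those results.
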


\begin{proof}
Let $\psi$ be a semilattice polymorphism of $\Gm$. Then $\cD=(D,\psi)$ is a semilattice, and therefore is a subalgebra of $\cB^k$, where $\cB=(\{0,1\},\vf)$ is a 2-element semilattice. By Lemma~\ref{lem:HSP-pp-definitions}, there is a finite constraint language $\Dl$ over $\{0,1\}$ such that $\vf$ is a polymorphism of $\Dl$. By Theorem~\ref{the:pp-interpretability}, $\IMP_d(\Gm)$ reduces to $\IMP_{ud}(\Dl)$ in polynomial time for a constant $u$. By Theorem~\ref{IMP-Bool}, we get the result.
\end{proof}

\begin{example}[Totally Ordered Domain]\label{exa:total-order}
    Let $D=\{1,\dots,t\}$ be a finite domain and $\Gamma$ be a language defined on $D$. A semilattice polymorphism $\psi$ is \emph{conservative} if $\psi(x,y)\in\{x,y\}$. Suppose $\Gamma$ has a conservative semilattice polymorphism $\psi:D^2\to D$. Note that $\psi$ defines a total ordering on $\{1,\dots,t\}$ so that $u\le v$ if and only if $\psi(u,v)=u$. Define $\pi:D\to \{0,1\}^t$ to be the following mapping 
\[
    \pi(i)=(0,\dots,0,\overbrace{1,\dots,1}^{i}).
\]
Let $\cP = (X, D, C)$ denote an instance of \CSP($\Gamma$) where $X =\{x_1,\dots , x_n\}$. Construct $\CSP$ instance $\cP' = (X', \{0,1\}, C')$ with $X' =\{x_{11},\dots,x_{1t},\dots , x_{n1},\dots,x_{nt}\}$ with the following set of constraints
\begin{itemize}
    \item [1.] 
    $x_{ij} \leq x_{ik}$ for all $1\leq i\leq n$ and $1\leq k\leq j\leq t$,
    \item [2.] 
    if $R(x_{i_1},\dots , x_{i_k})\in C$ then $\pi(R)(x_{i_11},\dots,x_{i_1t},\dots , x_{i_k1},\dots,x_{i_kt})\in C'$.
\end{itemize}
Observe that $\mb{a}\in \mb{V}(\I(\cP))$ if and only if $\pi(\mb{a})\in \mb{V}(\I(\cP'))$. 
Given $f_0\in\Field[x_1,\dots , x_n]$, define $f'_0\in \Field[x_{11},\dots,x_{1t},\dots , x_{n1},\dots,x_{nt}]$ to be the polynomial obtained from $f_0$ where we replace each indeterminate $x_i$ with $x_{i1}+\dots+x_{it}$. It is easy to check that, for $\mb{a}\in \mb{V}(\I(\cP))$, we have $f_0(\mb{a})=0$ if and only if $\pi(\mb{a})\in \mb{V}(\I(\cP'))$ and  $f'_0(\pi(\mb{a}))=0$. Therefore, deciding if $f_0\in\I(\cP)$ is equivalent to deciding if $f'_0\in\I(\cP')$ where the later one is polynomial time solvable by Theorem~\ref{IMP-Bool}. 
\end{example}


\subsection{Linear system in $\GF(p)$}\label{sec:lin-mod-p-ROOTsUNITY}

In this section we focus on constraint languages that are expressible as a system of linear equations modulo a prime number. Let $\Gm$ be a constraint language over a set $D$ with $|D|=p$, and $p$ a prime number. Suppose $\Gm$ has an \emph{affine} polymorphism modulo $p$ (i.e.\ a ternary operation $\psi(x,y,z)=x\ominus y\oplus z$, where $\oplus,\ominus$ are addition and subtraction modulo $p$, or, equivalently, of the field $\GF(p)$). In this case every \CSP\ can be represented as a system of linear equations over $\GF(p)$. Without loss of generality, we may assume that the system of linear equations at hand is already in the \emph{reduced row echelon} form. Transforming system of linear equations mod $p$ in its reduced row echelon form to a system of polynomials in $\zR[X]$ that are a \GB\ is not immediate and requires substantial work. This is the case even if we restrict ourselves to lexicographic order. 

A proof based on the \GBs\ conversion technique is given in Appendix~\ref{sec:lin-mod-p}. That proof is of an independent interest as it relies on the existence of the so-called \emph{independent p-expressions} and it is quite technical. Here, we present an alternative simple algorithm that checks the membership in polynomial time. In section \ref{sec:ximp-truncated}, we will see how this algorithm can be used to construct a $d$-truncated \GB. 

Let $\cP$ be an instance of $\CSP(\Gm)$ that is expressed as a system of linear equations $\mathscr{S}$ over $\zZ_{p}$ with variables $x_1,\dots,x_n$. A system of linear equations over $\zZ_{p}$ can be solved by Gaussian elimination (this immediately tells us if $1 \in \I(\cP)$ or not, and we proceed only if $1 \not\in \I(\cP)$). We assume a lexicographic order $\succ_\lex$ with $x_1 \succ_\lex \dots \succ_\lex x_n$. We also assume that the linear system has $r \le n$ equations and it is already in its reduced row echelon form with $x_i$ as the leading monomial of the $i$-th equation. Let $Supp_i \subset [n]$ such that $\{x_j : j \in Supp_i\}$ be the set of variables appearing in the $i$-th equation of the linear system except for $x_i$.
 
 Fix a prime $p$ and let $\oplus$, $\ominus$, $\odot$ denote addition, subtraction, and multiplication modulo $p$, respectively. We will call a linear polynomial over $\zZ_p$ a \emph{$p$-expression}. Let the $i$-th equation be $g_i = 0\pmod p$ where $g_i := x_i\ominus f_i$, with $i \in [r]$ and $f_i$ is the $p$-expression $(\bigoplus_{j\in Supp_i} \alpha_jx_j)\oplus \alpha_i$ and $\al_j,\alpha_i \in \zZ_p$. We will assume that each variable $x_i$
is associated with its $p$-expression $f_i$ which comes from the mod $p$ equations. This is clear for $i \le r$; for $i > r$ the $p$-expression $f_i = x_i$ itself. Hence, we can write down the reduced \GB\ in the \lex order in an implicit form as follows.
\begin{align}
\label{eq:G1}
    G_1=\{ x_1 \ominus f_1,\dots, x_r\ominus f_r ,\prod_{i\in\zZ_p}(x_{r+1}-i),\dots,\prod_{i\in\zZ_p}(x_{n}-i)\}
\end{align} 
    Let $U_{p}=\{\om,\om^2,\dots,\om^{p}=\om^0=1\}$ be the set of $p$-th roots of unity where $\om$ is a primitive $p$-th root of unity. For a primitive $p$-th root of unity $\om$ we have $\om^a=\om^b$ if and only if $a \equiv b ~(\mathrm{mod}~p)$. From $\mc{P}$ we construct a new CSP instance $\mc{P}'= (V,U_{p},\widetilde{C})$ where for each equation $x_i\ominus f_i=0$ with $f_i=(\bigoplus_{j\in Supp_i} \alpha_jx_j)\oplus \alpha_i$ we add the constraint $x_i-f_i'=0$ with $$f_i'=\om^{\alpha_i}(\prod\limits_{j\in Supp_i}x_j^{\alpha_j}).$$ 
    Moreover, the domain constraints are different. For each variable $x_j$, $r+1\le j\leq n$, the domain polynomial is $(x_j)^{p}-1$. Therefore, we write $G$ over complex number domain as follows.

    \begin{align}
        \label{eq:G1-complex}
            G'=\{ x_1 - f_1',\dots, x_n-f_n' ,(x_{r+1})^{p}-1,\dots,(x_n)^{p}-1\}
    \end{align} 

    Define univariate polynomial $\phi\in \zC[x]$ so that it interpolates points $(0,\om^0),(1,\om),\dots,(p-1,\om^{p-1})$. This polynomial provides a one-to-one mapping between solutions of instance $\mc{P}$ and instance $\mc{P}'$. That is, $(a_1,\dots,a_n)$ is a solution of $\mc{P}$ if and only if $(\phi(a_1),\dots,\phi(a_n))$ is a solution of $\mc{P}'$.

\begin{lemma}
    \label{lem:equivalent}
    For a polynomial $f\in\Real[x_1,\dots,x_n]$ define polynomial $f'\in\Complex[{x_1},\dots,{x_n}]$  to be
    \[
        f'({x_1},\dots,{x_n})=f(\phi^{-1}(x_1),\dots,\phi^{-1}(x_n)).
    \]
    Then $f\in\I(\mc{P})$ if and only if $f'\in \I(\mc{P}')$.
\end{lemma}
\begin{proof}
    As $\I(\mc{P})$ is radical, by the Strong Nullstellensatz we have $f\not\in \I(\mc{P})$ if and only if there exists a point $\mb{a}\in \zZ_{p}^n$ such that $\mb{a}$ is in $Sol(\mc{P})$ and $f(\mb{a})\neq 0$. Similarly, for $\I(\mc{P}')$ we have $f'\not\in \I(\mc{P}')$ if and only if there exists a point $\mb{a}'\in U_{p}^n$ such that $\mb{a}'\in Sol(\mc{P}')$ and  $f'(\mb{a}')\neq 0$.
    
    Suppose $f\not\in \I(\mc{P})$ and consider a point $\mb{a}=({a_1},\dots,{a_n})\in \zZ_{p}^n$ so that $\mb{a}\in Sol(\mc{P})$ and  $f(\mb{a})\neq 0$. 
     Recall that $\mb{a}\in Sol(\mc{P})$ if and only if $\mb{a}'=(\om^{a_1},\dots,\om^{a_n})\in Sol(\mc{P}')$. Furthermore,
    \[
        f'(\om^{a_1},\dots,\om^{a_n})  = f(a_1,\dots,a_n) \neq 0
    \]
    Therefore, for $\mb{a}'\in Sol(\mc{P}')$ we have $f'(\mb{a}')\neq 0$ which implies $f'\not\in \I(\mc{P}')$. This finishes the proof.
\end{proof}

The next lemma states that $\IMP(\I(\mc{P}'))$ is polynomial time solvable by showing that the set of polynomials $G'$ is in fact a \GB\ for $\I(\mc{P}')$.
\begin{lemma}
    \label{lem:GB-P'}
    $G'$ is a \GB\ for $\I(\mc{P}')$ with respect to \lex order $x_1 \succ_\lex \dots \succ_\lex x_n $.
\end{lemma}    
\begin{proof}
     First, we proceed to show $G'$ is a \GB\ for $\Ideal{G'}$. Note that for each $x_i - f_i'$, we have $\LM(x_i - f_i') = x_i$. Moreover, the leading monomial of $(x_j)^{p}-1$, $r+1\leq j\leq n$, is $(x_j)^{p}$. Hence, for every pair of polynomials in $G'$ the reduced S-polynomial is zero as the leading monomials of any two polynomials in $G'$ are relatively prime. By Buchberger's Criterion it follows that $G'$ is a \GB\ for $\langle G'\rangle$ over $\Complex[x_1,\dots,x_n]$ (according to the \lex order).
    
    It remains to show $\Ideal{G'}= \mb{I}(Sol(\mc{P}'))$. According to our construction we have $\Variety{\langle G'\rangle}= Sol(\mc{P}')$ which implies $\Ideal{G'}\subseteq \mb{I}(Sol(\mc{P}'))$. In what follows, we prove $\mb{I}(Sol(\mc{P}'))\subseteq \Ideal{G'}$. Consider a polynomial $f\in \mb{I}(Sol(\mc{P}'))$. Note that $f(\mb{a})=0$ for all $\mb{a}\in Sol(\mc{P}')$. Let $r = f|_{G'}$. $r$ does not contain variables $x_1,\dots,x_r$, and hence it is a polynomial in $x_{r+1},\dots,x_n$. Now note that any $\mb{b}=(b_{1},\dots,b_{n-r})\in U_p^{n-r}$ extends to a unique point in $Sol(\mc{P}')$. Therefore, all the points in $U_p^{n-r}$ are zeros of $r$, hence $r=f|_{G'}$ is the zero polynomial. Since $f\in \mb{I}(Sol(\mc{P}'))$ was arbitrary chosen, it follows that for every $f\in \mb{I}(Sol(\mc{P}'))$ we have $f|_{G'}=0$. Hence $G'$ is a \GB\ of $\mb{I}(Sol(\mc{P}'))$.
\end{proof}

We remark that the substitution technique used in Lemma \ref{lem:equivalent} may result in a polynomial with exponentially many monomials. However, for polynomials of bounded degree $d$ it yields a polynomial of degree at most $p d$ and polynomially many monomials. Thus, we obtain the following

\begin{theorem}\label{the:linear}
Let $\Gm$ be a constraint language over domain $D$. If $\Gm$ has an affine polymorphism, then $\IMP_d(\Gm)$ can be solved in polynomial time.
\end{theorem}
\section{A variation of IMP and its applications}
In this section, we study a variation of the $\IMP$ and show similar reductions for pp-definability notion as well as  pp-interpretability notion. These reductions provide a very strong tool for solving the search version of the $\IMP$. More generally, using the algebraic approach we present a general framework for computing $d$-truncated \GBs~for combinatorial ideals. Moreover, in the later sections we discuss how this variation of the $\IMP$ leads to $\Sos$ proofs with a milder restriction on the degree of proofs.

\subsection{The IMP with indeterminate coefficients}

    In the new variant of the $\IMP$ we are given a polynomial with unknown coefficients and the goal is to decide if there is an assignment for coefficients such that the resulting polynomial belongs to an ideal. Formally speaking, 
    
\begin{definition}[\xIMP]
    Given an ideal $\I\sse\Field[\vc x n]$ and a vector of $\ell$ polynomials $M=(g_1,\dots,g_\ell)$, the \xIMP~asks if there exist coefficients $\mathbf{c}=(\vc c \ell)\in \Field^\ell$ such that $\mathbf{c}M=\sum_{i=1}^\ell c_i g_i$ belongs to the ideal $\I$.
\end{definition}

    In a similar fashion, \xIMP~associated with a constraint language $\Gamma$  over a set $D$ is the problem \xIMP$(\Gamma)$ in which the input is a pair $(M,\cP)$ where $\cP = (X, D, C)$ is a
    $\CSP(\Gm)$ instance and $M$ is a vector of $\ell$ polynomials. The goal is to decide
    whether there are coefficients $\mathbf{c}=(\vc c \ell)\in \Field^\ell$ such that $\mathbf{c}M$ lies in the combinatorial ideal $\I(\cP)$. We use $\chi\IMP_d(\Gamma)$ to denote $\chi\IMP(\Gamma)$ when the vector $M$ contains polynomials of total degree at most $d$. 
    
    Next, we show that the main reductions from Section~\ref{sec:IMPs} work for the \xIMP\ as well.

\begin{theorem}\label{thm:X-pp-define-reduction}
    If $\Gm$ pp-defines $\Dl$, then \xIMP$(\Dl)$ is polynomial time reducible to \xIMP$(\Gm)$.
\end{theorem}

\begin{proof}
The proof of Theorem~\ref{thm:X-pp-define-reduction} closely follows that of Theorem~\ref{pp-define-reduction}.
    Let $(M,\cP_\Dl)$, $\cP_\Dl=(X,D,C_\Dl)$, be an instance of \xIMP$(\Dl)$ where $X=\{x_{m+1},\dots, x_{m+k}\}$, $M$ is a vector of $\ell$ polynomials in $x_{m+1},\dots,x_{m+k}$, $k=|X|$, and $m$ will be defined later, and $\I(\cP_\Dl)\subseteq \Field[x_{m+1},\dots,x_{m+k}]$. From this we construct an instance $(M',\cP_\Gm)$ of \xIMP$(\Gm)$ where $M'$ is a vector of $\ell'$ polynomials in $x_{1},\dots,x_{m+k}$ and $\I(\cP_\Gm)\subseteq \Field[x_{1},\dots,x_{m+k}]$ such that 
    \[
        \exists \mathbf{c}\in \Field^\ell \text{~with~} \mathbf{c}M\in \I(\cP_\Dl) \iff \exists \mathbf{c}'\in \Field^{\ell'} \text{~with~} \mathbf{c}'M'\in \I(\cP_\Gm).
    \]
    Using pp-definitions of relations from $\Dl$ we convert the instance $\cP_\Dl$ into an instance $\cP_{\Gm}=(\{x_{1},\dots,x_{m+k}\},D,C_\Gm)$ of $\CSP(\Gm)$ such that every solution of $\cP_\Dl,\cP_\Gm$ satisfy the Extension Condition~\ref{extention-condition}. Such an instance $\cP_\Gm$ can be constructed in polynomial time as follows.
    
    By the assumption each $S\in\Dl$, say, $t_S$-ary, is pp-definable in $\Gm$ by a pp-formula involving relations from $\Gm$ and the equality relation, $=_D$. Thus,
    \[
        S(y_{q_S+1},\dots,y_{q_s+t_S})=\exists \vc y{q_S} (R_1(w^1_1,\dots, w^1_{l_1})\wedge\dots\wedge R_r(w^r_1,\dots, w^r_{l_r})),
    \]
    where $w^1_1,\dots, w^1_{l_1},\dots,w^k_1,\dots, w^k_{l_k}\in \{\vc y{m_S+t_S}\}$ and $\vc Rr\sse\Gm\cup\{=_D\}$. 

    Now, for every constraint $B=\langle \bs,S\rangle\in C_\Dl$, where $\bs=(x_{i_1},\dots,x_{i_t})$ create a fresh copy of $\{\vc y{q_S}\}$ denoted by $Y_B$, and add the following constraints to $C_\Gm$
    \[
         \langle (w^1_1,\dots, w^1_{l_1}),R_1\rangle,\dots, \langle (w^r_1,\dots, w^r_{l_r}),R_r\rangle.
    \]
    We then set $m=\sum_{B\in C}|Y_B|$ and assume that $\cup_{B\in C}Y_B=\{\vc xm\}$. Note that the problem instance obtained by this procedure belongs to $\CSP(\Gm\cup \{=_D\})$. All constraints of the form $\langle(x_i, x_j),=_D\rangle$  can be eliminated by replacing all occurrences of the variable $x_i$ with $x_j$.  Moreover, it can be checked (see also Theorem 2.16 in \cite{BulatovJK05}) that $\cP_\Dl,\cP_\Gm$ satisfy the Extension Condition~\ref{extention-condition}. 
    
    Let $\I(\cP_\Gm)\subseteq\Field[x_{1},\dots,x_{m+k}]$ be the ideal corresponding to $\cP_{\Gm}$ and set $M'=M$. Now, $(M',\cP_\Gm)$ is an instance of \xIMP$(\Gm)$. We prove that, for every $\mathbf{c}\in \Field^{\ell}$, $\mathbf{c}M\in \I(\cP_\Dl)$ if and only if $\mathbf{c}M\in \I(\cP_\Gm)$.
    
    Consider an arbitrary $\mathbf{c}\in \Field^{\ell}$ and set $f_0=\mathbf{c}M$. Suppose $f_0\not\in \I(\cP_\Dl)$, this means there exists $\vf\in \mb{V}(\I(\cP_\Dl))$ such that $f_0(\vf)\neq 0$. By Theorem~\ref{extension-theorem}, $\vf$ can be extended to a point $\vf'\in\mb{V}(\I(\cP_\Gm))$. This in turn implies that $f_0\not\in \I(\cP_\Gm)$. Conversely, suppose $f_0\not\in \I(\cP_\Gm)$. Hence, there exists $\vf'\in\mb{V}(\I(\cP_\Gm))$ such that $f_0(\vf')\neq 0$. Projection of $\vf'$ to its last $k$ coordinates gives a point $\vf\in \mb{V}(\I_X)$. By Lemma~\ref{variety-m-elimination}, $\vf\in \mb{V}(\I(\cP_\Dl))$ which implies $f_0\not\in \I(\cP_\Dl)$.
\end{proof}

The reduction for pp-interpretable languages remain valid in the case of \xIMP\ as well.

\begin{theorem}\label{thm:X-pp-interpretability}
Let $\Gm,\Dl$ be constraint languages on sets $D,E$, respectively, and let $\Gm$ pp-interprets $\Dl$. Then $\chi\IMP_d(\Dl)$ is polynomial time reducible to $\chi\IMP_{d\ell|E|}(\Gm)$.
\end{theorem}

\begin{proof}
    Let $(M,\cP_\Dl)$ be an instance of \xIMP$(\Dl)$ where $M$ is a vector of $r$ polynomials in $x_{1},\dots,x_n$, $\cP_{\Dl}=(\{x_{1},\dots,x_n\},E,C_\Dl)$, an instance of $\CSP(\Dl)$, and $\I(\cP_\Dl)\subseteq \Field[x_{1},\dots,x_n]$. 

    The properties of the mapping $\pi$ from Definition~\ref{pp-interpret} allow us to rewrite an instance of $\CSP(\Dl)$ to an instance of $\CSP(\Gamma')$ over the constraint language $\Gm'$. Recall that, by \Cref{pp-interpret}, $\Gm'$ contains all the $\ell k$-ary relations $\pi^{-1}(S)$ on $D$ where $S\in\Dl$ is $k$-ary relation, as well as the $2\ell$-ary relation $\pi^{-1}(=_E)$.   
    
    Note that $\Gm'$ is pp-definable from $\Gm$. By Theorem~\ref{thm:X-pp-define-reduction}, \xIMP$(\Gamma')$ is reducible to \xIMP$(\Gm)$. It remains to show \xIMP$(\Dl)$ is reducible to \xIMP$(\Gamma')$. To do so, from instance $(M,\cP_\Dl)$ of \xIMP
    $(\Dl)$ we construct an instance $(M',\cP_{\Gm'})$ of \xIMP$(\Gamma')$ such that 
    \[
        \exists \mathbf{c}\in \Field^r \text{~with~} \mathbf{c}M\in \I(\cP_\Dl) \iff \exists \mathbf{c}'\in \Field^{r'} \text{~with~} \mathbf{c}'M'\in \I(\cP_\Gm').
    \]
    Let $p$ be a polynomial of total degree at most $\ell|E|$ that interpolates mapping $\pi$. For each monomial $\bx^{\alpha}=\prod x_i^{\alpha_i}$ in $M$ replace each indeterminate $x_i$ with $p(x_{1i},\dots,x_{\ell i})$. This yields the following polynomial
    \begin{align}
    \label{eq:sub-pp-inter}
        \prod\limits_{i=1}^n [p(x_{1i},\dots,x_{\ell i})]^{\alpha_i}
    \end{align}
    Note that for a monomial of total degree at most $d$, the maximal degree of monomials appearing in the polynomial \eqref{eq:sub-pp-inter} is at most $d\ell|E|$. Let $M'$ be the vector of monomials consisting monomials in \eqref{eq:sub-pp-inter} for all monomials in $M$. Observe that $M'$ contains at most $O(n^{d\ell|E|})$ monomials and each monomial in $M'$ consists of indeterminates $x_{11},\ldots , x_{\ell1},\ldots,x_{1n},\ldots,x_{\ell n}$. Now, $(M',\cP_{\Gm'})$ is an instance of \xIMP$(\Gamma')$.

    Consider an arbitrary $\mathbf{c}\in \Field^{r}$ and set $f_0=\mathbf{c}M\in \Field[x_1,\dots,x_n]$. Let $f'_0\in \Field [x_{11},\ldots , x_{\ell1},\ldots,x_{1n},\ldots,x_{\ell n}]$ be the polynomial that is obtained from $f_0$ by replacing each indeterminate $x_i$ with $p(x_{1i},\dots,x_{\ell i})$. Note that there exists $\mathbf{c}'$ such that $f'_0=\mathbf{c}'M'$.
    Clearly, for any assignment $\vf:\{\vc xn\}\to E$, $f_0(\vf)=0$ if and only if $f'_0(\psi)=0$ for every $\psi:\{x_{11}\zd x_{\ell n}\}\to D$
    such that 
    \[
    \vf(x_i)=\pi(\psi(x_{1i}),\dots,\psi(x_{\ell i}))
    \]
    for every $i\le n$. Moreover, 
    for any such $\vf,\psi$ it holds $\vf\in \mb{V}(\I(\cP_\Dl))$ if and only if  $\psi\in \mb{V}(\I(\cP_{\Gm'}))$. This yields that 
    \[
        (\exists \vf\in \mb{V}(\I(\cP_\Dl)) \land f_0(\vf)\neq 0) \iff (\exists \psi\in \mb{V}((\cP_{\Gm'})) \land f'_0(\psi)\neq 0)
    \]
    This completes the proof of the theorem.
\end{proof}


 Recall that one drawback of the reductions for the \IMP, Theorems~\ref{pp-define-reduction} and \ref{the:pp-interpretability}, is the issue of recovering a proof which is a subtle point in the search version of the $\IMP$. A nice property of the reductions in Theorems~\ref{thm:X-pp-define-reduction} and \ref{thm:X-pp-interpretability} is that they reductions for the search version of the \xIMP as well. To elaborate, consider the reduction for pp-interpretablility in the proof of Theorem~\ref{thm:X-pp-interpretability}. The entries of vector $\mathbf{c}'$ are linear combination of $c_1,\ldots,c_\ell$. Hence, if there exists a polynomial time algorithm that finds $\mathbf{c}'$ such that $\mathbf{c}'M'\in\I(\mc{P}_{\Gamma'})$ then a vector $\mathbf{c}$ with $\mathbf{c}M\in \I(\mc{P}_{\Delta})$ can be computed by simply solving a system of linear equations with $c_1,\ldots,c_\ell$ as unknowns. This is formalized as follows.
 \begin{theorem}
 \label{thm:search-xIMP-reduction}
    Let $\Gamma$ and $\Delta$ be constraint languages on (possibly similar) sets $D$, $E$, respectively. Suppose there exists a polynomial time algorithm that solves the search version of \xIMP$(\Gamma)$. Then, there exists a polynomial time algorithm that solves the search version of \xIMP$(\Delta)$ if
    \begin{itemize}
        \item [1.] $D=E$ and $\Gamma$ pp-defines $\Delta$, or
        \item [2.] $\Gamma$ pp-interprets $\Delta$.
    \end{itemize}
 \end{theorem}

 \begin{proof}
     It follows from a similar argument in the proofs of Theorems~\ref{thm:X-pp-define-reduction}, \ref{thm:X-pp-interpretability}, and noting that $\mathbf{c}'$ is a linear combination of $c_1,\ldots,c_\ell$.
 \end{proof}

\subsection{Sufficient conditions for tractability of \xIMP}

We first show that having a \GB\ yields a polynomial time algorithm for solving the search version of \xIMP. Next, we use the reductions from Theorem~\ref{thm:search-xIMP-reduction} to establish the tractability of $\chi\IMP_d(\Gamma)$ for languages closed under a various polymorphisms.
\begin{theorem}
    Let $\I$ be an ideal, and let $\{\vc g s\}$ be a \GB~for $\I$ with respect to some monomial ordering. Then the (search version of) \xIMP~ is polynomial time solvable.
 \end{theorem}
\begin{proof}
    Recall that a polynomial $p$ belongs to $\I$ if and only if the remainder on division of $p$ by $\vc g s$ is zero. Let $M=(\vc m \ell)$ be a vector of $\ell$ polynomials and $\mathbf{c}=(\vc c \ell)\in\Field^{\ell}$ be a vector of unknown coefficients. Set $f=\mathbf{c}M=\sum c_im_i$. We do the division algorithm to obtain the reminder of dividing $f$ by $\vc g s$. Repeatedly, choose a $g_i \in \{\vc g s\}$ such that $\LT(g_i)$ divides some term $t$ of $f$ and replace $f$ with $f - \frac{t}{\LT(g_i)} g_i$, until it cannot be further applied. Hence,
    \[
        f= q_1g_1+\dots+q_rg_s + r
    \]
    where $r$  is a linear combination, with unknown coefficients in $\Field$, of monomials, none of which is divisible by any of $\LT(g_1),\dots,\LT(g_s)$. The key observation is that the coefficients of monomials in $r$ are linear combination of $\vc c \ell$. Now, we want $r$ to be the zero polynomial. Hence, we set every unknown coefficient of monomials in $r$ to be zero. This in turn yields a system of linear equations in $\vc c \ell$. Such a system of linear equations has a solution if and only if there exists $\mathbf{c}=(\vc c \ell)$ such that $f=\mathbf{c}M \in \I$.
\end{proof}
The above theorem and the results by Mastrolilli~\cite{Mastrolilli19,Bharathi-Minority} give the following corollary.
\begin{corollary}
\label{cor:xIMP-bool}
    Let $\Gm$ be a finite constraint language over domain $\{0,1\}$. Then the (search version of) $\chi\IMP_d(\Gamma)$ can be solved in polynomial time if
    \begin{itemize}
        \item[1.] $\Gamma$ has a semilattice polymorphism, or
        \item[2.] $\Gamma$ has a majority polymorphism, or
        \item[3.] $\Gamma$ has a minority polymorphism.
    \end{itemize}
\end{corollary}
Now we use our reductions to prove the same tractability results for languages over arbitrary finite domain. Note that the only majority polymorphism over $\{0,1\}$ is the dual-discriminator.

\begin{theorem}
\label{thm:xIMP-finite-domain}
    Let $\Gm$ be a finite constraint language over domain $D$. Then the (search version of) $\chi\IMP_d(\Gamma)$ can be solved in polynomial time if
    \begin{itemize}
        \item[1.] $\Gamma$ has a semilattice polymorphism, or
        \item[2.] $\Gamma$ has the dual-discriminator polymorphism, or
        \item[3.] $\Gamma$ is expressed as a system of linear equations over $\GF(p)$, $p$ prime.
    \end{itemize}
\end{theorem}

\begin{proof}
    In the first case where $\Gamma$ has a semilattice polymorphism we reduce the problem to the Boolean case similar to Theorem~\ref{the:semilattice}. Then using Corollary~\ref{cor:xIMP-bool} we can solve the corresponding \xIMP\ problem. For the other two cases we use reductions similar to Theorems~\ref{the:linear}, \ref{the:dual-discriminator} to transform the instances into instances where we can compute truncated \GBs. The key observation here is that under these reductions, similar to the reduction for pp-interpretability in Theorem~\ref{thm:search-xIMP-reduction}, we end up with system of linear equations over unknown variables $c_1,\dots,c_{\ell}$.
\end{proof}

\subsection{A framework for constructing d truncated \GBs}\label{sec:ximp-truncated}

We observe that constructing a $d$-truncated \GB~for an ideal $\I$ is reducible to solving $\xIMP_d$ for the ideal $\I$.  With this reduction at hand, we design algorithms to construct $d$-truncated \GB~for many combinatorial ideals, namely, combinatorial ideals arising from languages invariant under a semilattice, or the dual-discriminator, or languages expressible as linear equations over $\GF(p)$. Some basic notation are in order.

Let $\I\in \Field[X]$ be an ideal. We say two polynomials $f,g$ are congruent modulo $\I$ and write $f \equiv g ~mod~\I$ if $f-g\in \I$. It is easy to see that congruence modulo $\I$ is an equivalence relation on $\Field[X]$. The quotient of $\Field[X]$ modulo $\I$, written $\Field[X]/\I$ is a ring with the base set consisting of the cosets $[f]=f+\I=\{f+q \mid q\in \I\}$. $\Field[X]/\I$ is a commutative ring under addition $[f]+[g] = [f+g]$ and multiplication $[f]\cdot[g] = [fg]$ (product in $\Field[X]$). We consider $\Field[X]/\I$ as a $\Field$-vector space with addition defined as above and scalar multiplication given by $c\cdot[f]=[c\cdot f]$, $c\in\Field$. We also consider the subset $\Field[X]_d/\I$ of all polynomials of total degree at most $d$. As is easily seen it is also an $\Field$-vector space. Note that $\Field[X]/\I$ is infinitely dimensional in general. However, if $\I$ is zero-dimensional and radical then the quotient ring $\Field[X]/\I$ is a finite dimensional vector space. Moreover, for any bound $d$ on the total degree of polynomials $\Field[X]_d/\I$ is finitely dimensional. We also have a natural basis for those spaces. 

\begin{proposition}[Proposition 1 on page 248 of~\cite{Cox}]
\label{prop:quotient-basis}
    Fix a monomial ordering on $\Field[X]$ and let $\I\subseteq\Field[X]$ be an ideal. Let $\Ideal{\LT(I)}$ denote the ideal generated by the leading terms of elements of $\I$. 
     \begin{itemize}
         \item [1.] Every $f\in \Field[X]$ is congruent modulo $\I$ to a \emph{unique} polynomial $r$ which is a $\Field$-linear combination of the monomials in the complement of $\Ideal{\LT(\I)}$,
         \item[2.] The elements of $\{\bx^\alpha\mid \bx^\alpha\not\in \Ideal{\LT(\I)}\}$ are linearly independent modulo $\I$, i.e., if we have 
         \[\sum_{\alpha}c_\alpha \bx^\alpha\equiv 0 ~mod~\I,\]
        where the $\bx^\alpha$ are all in the complement of $\Ideal{\LT(\I)}$, then $c_\alpha = 0$ for all $\alpha$.
     \end{itemize}
\end{proposition}

Proposition~\ref{prop:quotient-basis} suggests a simple algorithm to construct a $d$-truncated \GB. Let $\I\sse\Field[X]$ be an ideal. At the beginning of the algorithm, there will be two sets: $G$, which is initially empty but will become the $d$-truncated \GB\ with respect to the \grlex order, and $B(G)$, which initially contains $1$ and will grow to be the \grlex monomial basis of the quotient ring $\Field[\vc x n]/\I$ as a $\Field$-vector space i.e., $B(G)=\{\bx^\alpha\mid |\alpha|\leq d, \bx^\alpha\not\in \Ideal{\LT(\I)}\}$. In fact, $B(G)$ contains the reduced monomials (of degree at most $d$) with respect to $G$. Every $f \in \Field[\vc x n]$ is congruent modulo $\I$ to a unique polynomial $r$ which is a $\Field$-linear combination of the monomials in the \emph{complement} of $\Ideal{\LT(\I)}$. Furthermore, $\Field[X]_d/\I$ is isomorphic as a $\Field$-vector space to $\texttt{Span}(\bx^\alpha\mid \bx^\alpha\not\in \Ideal{\LT(\I)})$ via mapping $\Phi([f])=\reduce f G$. Here, $\texttt{Span}(\bx^\alpha\mid \bx^\alpha\not\in \Ideal{\LT(\I)})$ means the set of all $\Field$-linear combinations of $\{\bx^\alpha\mid \bx^\alpha\not\in \Ideal{\LT(\I)})\}$. Hence, for every $f\in \Field[\vc x n]$, we have
\[
    \reduce f G \in \texttt{Span}(\bx^\alpha\mid \bx^\alpha\not\in \Ideal{\LT(\I)}).
\]
 This suggests the following. In Algorithm~\ref{alg:GB+xIMP}, $Q$ is the list of all monomials of degree at most $d$ arranged in increasing order with respect to \grlex ordering. The algorithm iterates over monomials in $Q$ in increasing \grlex order and at each iteration decides exactly one of the following actions given the current sets $G$ and $B(G)$. 
\begin{enumerate}
\item 
$q$ should be discarded (if $q$ is divisible by some \LM~in $G$), or
\item 
a polynomial with $q$ as its leading monomial should be added to $G$, or
\item 
$q$ should be added to $B(G)$.
\end{enumerate} 

\begin{algorithm}[t]
  \caption{$d$-Truncated \GBs}\label{alg:GB+xIMP}
  \begin{algorithmic}[1]
    \Require{$\I$, degree $d$.}
    \State Let $Q$ be the list of all monomials of degree at most $d$ arranged in increasing \grlex order.
    \State $G=\emptyset, B(G)=\{1\}$ (we assume $1\not\in\I$). 
    \State Let $b_i$ (arranged in increasing \grlex order) be the elements of $B(G)$.
    \For{$q \in Q$}
    \If{$q$ is divisible by some \LM~in $G$}
        \State Discard it and go to Step 4,
    \EndIf
    \State Let $M$ be the vector of length $\ell$ whose entries are monomials in $B(G)$,
    \If{there exists $\mathbf{c}\in \Field^\ell$ such that $q-\mathbf{c}M\in \I$}
        \State $G = G \cup \{q-\mathbf{c}M\}$
    \Else 
        \State $B(G) = B(G) \cup \{q\}$ 
    \EndIf
    \EndFor
    \State \textbf{return} $G$
\end{algorithmic}
\end{algorithm}

\begin{theorem}
\label{thm:GB+xIMP}
    Let $\mc{H}$ be a class of ideals for which the search version of $\chi\IMP_d$ is polynomial time solvable. Then there exists a polynomial time algorithm (see \Cref{alg:GB+xIMP}) that constructs a degree $d$ \GB~of an ideal $\I\in \mc{H}$, $\I\subseteq\Field[\vc xn]$, in time $O(n^d)$.  
\end{theorem}
\begin{proof}
    \Cref{alg:GB+xIMP} is clearly a polynomial time algorithm assuming $\chi\IMP_d$ for ideal $\I$ is polynomial time solvable. We prove $G$ returned by the algorithm is $d$-truncated \GB, and set of monomials $B(G)$ is so that 
    \[
        B(G)=\{\bx^\alpha\mid |\alpha|\leq d, \bx^\alpha\not\in \Ideal{\LT(\I)}\}.
    \]
    We prove this by induction. The induction base is correct as $B(G)=\{1\}$ and $G=\emptyset$. Suppose sets $G$ and $B(G)$ are computed correctly up to the $i$-th iteration and let $q$ be the current monomial. 
    
    First, if $q$ is a multiple of some $\LM$ in $G$ then $q\in \Ideal{\LT(G)}$. Furthemore, no polynomial with $q$ as its leading monomial is in a reduced \GB~of $\I$ (recall the definition of a reduced \GB). Therefore, in this case, \Cref{alg:GB+xIMP} correctly discards monomial $q$.
    
    Second, suppose $q$ is not divisible by any $\LM$ in $G$ then by the division algorithm the normal form of $q$ by $G$, $\reduce q G$, is $q$ itself. Now the algorithm decides if a polynomial with $q$ as its leading monomial can be in $G$. Let $g=q+f$ be a polynomial such that $\LM(g) = q$. Therefore, by \Cref{prop:quotient-basis} and the inductive hypothesis, if $g\in \I$ then with the current $G$ and $B(G)$ we must have 
    \begin{align*}
        0 = \reduce g G &=\reduce q G + \reduce f G = q + \sum k_i b_i
    \end{align*}
    where all 
    \[
        b_i\in M =\{\bx^\alpha\mid deg(\bx^\alpha)< deg(q), \bx^\alpha\not\in \Ideal{\LT(\I)}\}
    \] 
    and $k_i\in \zR$. This yields $g\in\I$ if there exists $\mathbf{c}\in \zR^\ell$ such that $q-\mathbf{c}M\in \I(\cP)$ then $\{q-\mathbf{c}M\}\in \I$. Furthermore, if such $\mathbf{c}\in \zR^\ell$ does not exists then it implies there is no polynomial in $\I$ with $q$ as its leading monomial. Hence, $q$ must be added to $B(G)$.
    \end{proof}
    
    We point out that in Theorem~\ref{thm:GB+xIMP}, if only the decision version of \xIMP\ is polynomial time solvable then a slight modification of Algorithm~\ref{alg:GB+xIMP} returns basis monomials $\{\bx^\alpha\mid |\alpha|\leq d, \bx^\alpha\not\in \Ideal{\LT(\I)}\}$. 
\begin{theorem}
    \label{thm:GB+xIMP+languages}
    Let $\Gamma$ be a finite constraint language over domain $D$. For an instance $\mc{P}$ of $\CSP(\Gm)$ a $d$-truncated $\GB$ of $\I(\mc{P})$ can be computed in time $O(n^d)$ if
    \begin{itemize}
        \item[1.] $\Gamma$ has a semilattice polymorphism, or
        \item[2.] $\Gamma$ has the dual-discriminator polymorphism, or
        \item[3.] $\Gamma$ is expressed as a system of linear equations over $\GF(p)$, $p$ prime.
    \end{itemize}
\end{theorem}
\begin{proof}
    Follows from Theorems \ref{thm:xIMP-finite-domain} and \ref{thm:GB+xIMP}.
\end{proof}
\section{\Sos~proofs: bit complexity and automatizability}\label{sec:SOS}

The focus of this section is on designing efficient algorithms to find proofs of nonnegativity of polynomials over (semi)algebraic sets. Sum-of-squares certificates of nonnegativity is a popular and powerful framework to provide a proof that a polynomial is nonnegative. In this section we present a very light introduction to \Sos\ proofs of nonnegativity and lay down some notation and background. The main appeal of this proof system is that it can be transformed into an SDP feasibility problem and hopefully be solved efficiently using methods such as the Ellipsoid method. However, we discuss a recently discovered issue with the bit complexity of the coefficients appearing in polynomials in an \Sos\ proof which could cause the Ellipsoid method to run in exponential time. This issue affects the automatizability of \Sos\ proofs. Our objective is to characterize algebraic sets i.e., constraint languages, for which the Ellipsoid method is guaranteed to run in polynomial time. We first observe that the existence of a degree $d$ \Sos\ proof implies existence of a degree $d$ \Sos\ proof on quotient rings. Hence, we only need to look at \Sos\ on quotient rings. Using this we then prove that for ideals arising from \CSP\ instances we can guarantee \Sos\ proofs with low bit complexity, provided a low degree one exists. This leads us to the third part of this section where we show that degree bounds on \Sos\ proofs can be relaxed preserving automatizability provided the \IMP\ part is polynomial time solvable. We believe this could potentially lead to more expressive \Sos\ proofs. Throughout this section we work with $\Field=\zR$.

\subsection{\Sos\ proofs on quotient ring}
In this subsection we provide a light introduction to \Sos\ proofs and discuss the issue with the bit complexity of the coefficients of polynomials appearing in a proof. At the end of this subsection we demonstrate that one can restrict himself to \Sos\ polynomials on quotient rings i.e., polynomials that are squares of $\Field$-linear combination of monomials from $\{\bx^\alpha\mid \bx^\alpha\not\in \Ideal{\LT(\I)}\}$.

It is known that, in general, nonnegativity of a polynomial is not equivalent to having a representation as a sum-of-squares polynomials \cite{hilbert1888darstellung}. The most famous example of these type of polynomials is the \emph{Motzkin polynomial} \cite{motzkin1967arithmetic}. The Motzkin polynomial $r(x,y)= 1+ x^4y^2+x^2y^4-3x^2y^2$ is nonnegative while it cannot be represented as a sum-of-squares \cite{Razborov98}. However, the situation is different when one is concerned with proofs of nonnegativity over (semi)algebraic sets. Let $S$ be the following semialgebraic set.
\begin{align}
\label{eq:def-S}
    S=\{\bx\in \zR^n\mid p_1(\bx)=0,\dots,p_{m}(\bx)=0, q_{1}(\bx)\geq 0,\dots,q_{\ell}(\bx)\geq 0\}
\end{align}
where the ideal $\I=\Ideal{p_1,\dots,p_m}$ is zero-dimensional and radical. In this case we have the following lemma, Note that radicality is crucial in the next lemma. 
\begin{lemma}[\cite{parrilo2002explicit}]
    Every nonnegative polynomial on $S$ is of the form $s_0^2+\sum_{j=1}^\ell s_j^2q_j + f$ with $f\in \I$.  
\end{lemma}

 The main appeal of \Sos\ proofs of nonnegativity is that the existence of a degree $d$ \Sos\ certificate can be formulated as the \emph{feasibility} of a semidefinite program (SDP). We say a polynomial is \Sos\ if it is equal to sum of square polynomials. Set $\zN_t^n=\{\alpha\in \zN^n\mid |\alpha|\leq t\}$. It is known, see e.g. \cite{choi1995sums,powers1998algorithm}, that a polynomial $r(\bx)=\sum_{\alpha\in \zN_{2d}^n}r_\alpha \bx^\alpha$ of degree at most $2d$ is \Sos\ if and only if the following system in the matrix variable $M = (M_{\alpha,\beta})_{\alpha,\beta\in \zN_{d}^n}$ is feasible 
        \[
            \begin{cases} 
               0\preceq M  \\
                \sum\limits_{\substack{\beta,\gamma\in \zN_{d}^n,\\ \beta+\gamma =\alpha}} M_{\beta,\gamma}  = r_{\alpha} 
            \end{cases}
        \]
This leads to a uniform and powerful tool for low degree polynomial optimization. Formally,
\begin{align*}
    &\text{Minimize} \quad r(\bx) \\
    &\text{Subject to} \quad  \bx\in S=\{\bx\in \zR^n\mid p_1(\bx)=0,\dots,p_{m}(\bx)=0, q_{1}(\bx)\geq 0,\dots,q_{\ell}(\bx)\geq 0\}
\end{align*}
An \Sos\ proof of a lower bound $r(\bx) \geq \theta$ is given by a polynomial identity of the form
\begin{align}
\label{eq:sos-proof}
    r(\bx) - \theta = \sum_{i=1}^{t_0} h_i^2(\bx) + \sum_{k=1}^\ell (\sum_{j=1}^{t_k} s_j^2(\bx)) q_k(\bx) + \sum_{i=1}^m \lambda_i(\bx) p_i(\bx).
\end{align}
The degree of an \Sos\ certificate is often defined to be the maximum degree of the polynomials involved in the proofs i.e., $\max\{deg(h_i^2),deg(s_j^2q_k),deg(\lambda_ip_i)\}$. We will see that under some conditions this bound on the degree of a proof can be relaxed which could potentially lead to a more expressive and powerful \Sos\ proof of nonnegativity framework. The discussion around this issue is presented at the end of this section. For now, our focus is on providing conditions whose presence guarantees an efficient algorithm i.e., guarantees \emph{low bit complexity} of the coefficients involved in an \Sos\ certificate of nonnegativity. 

A common misconception was that if a degree $d$ \Sos\ proof of nonnegativity exists then the corresponding SDP feasibility is solvable by the Ellipsoid method in time $O(n^d)$. This is sometimes referred to as the \Sos\ proof system being \emph{automatizable}. Unfortunately, this is not true in general \cite{ODonnell17}. Technically, the Ellipsoid method is guaranteed to work in time $poly(n^d)$ if the SDP's feasible region (should it exist) intersects a ball of radius $2^{poly(n^d)}$ \cite{GrotschelLS81}. Thus, it is not sufficient for an \Sos\ proof to exist, we also need one to exist in which all the \Sos\ polynomials can be written down with $poly(n^d)$ bits. Hence, O'Donnell posed the following question.
 
\begin{question}[\cite{ODonnell17}]
\label{Ques:d-Sos}
    Suppose there is a degree $d$ \Sos~proof that $r(\bx) \geq \theta$ subject to constraints $S=\{\bx\in \zR^n\mid p_1(\bx)=0,\dots,p_{m}(\bx)=0, q_{1}(\bx)\geq 0,\dots,q_{\ell}(\bx)\geq 0\}$, of the form
    \begin{align}
        r(\bx) - \theta = \sum_{i=1}^{t_0} h_i^2(\bx) + \sum_{k=1}^\ell (\sum_{j=1}^{t_k} s_j^2(\bx)) q_k(\bx) + \sum_{i=1}^m \lambda_i(\bx) p_i(\bx).
    \end{align}
    Is there a poly($N=n^d$)-time algorithm (presumably, a version of the Ellipsoid algorithm) that finds polynomials $h_i,s_j$ certifying $r(\bx)\geq \theta - o_N(1)$?
\end{question}

As pointed out in \cite{ODonnell17}, every SDP feasibility problem can be viewed as an \Sos\ feasibility problem modulo an ideal. A polynomial $f\in \zR[X]$ is called \emph{\Sos\ mod $\I$} if there exist $s_1,\ldots,s_t\in \zR[X]$ such that $f-\sum_{i=1}^t s_i^2\in \I$. Consequently, Question~\ref{Ques:d-Sos} is asking if the Semidefinite Feasibility Problem (SDFP) is in \textbf{P}. This is a well-known open question \cite{Ramana97,PorkolabK97,TarasovV08}.

%
Let $\I\subseteq \zR[X]$ be a zero-dimensional ideal. Fix a monomial order, here we consider \grlex order. Suppose we know $\mc{B}=\{\bx^\alpha\mid \bx^\alpha\not\in \Ideal{LT(\I)}\}$ so that $\zR[X] = \texttt{Span}_{\zR}(\mc{B})\oplus \I$. If $r= \sum s_i^2 +q$ with $s_i\in \zR[X]$, $q\in \I$, write $s_i=u_i+v_i$ with $u_i\in \texttt{Span}_{\zR}(\mc{B})$ and $v_i\in\I$. Hence, $r = \sum u_i^2 +g $  where $g = q+\sum (v_i^2 + 2u_iv_i)$ is in $\I$. This observation implies that to check the existence of an \Sos\ proof modulo $\I$ we can work with matrices indexed by $\mc{B}$ instead of the full set of monomials. Computing monomials of degree at most $d$ in $\mc{B}$ is polynomial time solvable for many combinatorial ideals as pointed out in Theorem~\ref{thm:GB+xIMP+languages}. The following lemma suggests that the existence of a degree $d$ \Sos\ proof implies existence of an \Sos\ proof where all the \Sos\ polynomials are linear combination of the monomials from $\{\bx^\alpha\mid \bx^\alpha\not\in \Ideal{\LT(\I)}\}$.

\begin{lemma}
\label{lem:sos-quotient}
     Let $\I=\Ideal{p_1\dots,p_m}$ be an ideal in $\zR[X]$ and $\mc{Q} = \{q_1,\dots,q_\ell\}$. Define $S=\{\bx\in \zR^n\mid p_1(\bx)=0,\dots,p_{m}(\bx)=0, q_{1}(\bx)\geq 0,\dots,q_{\ell}(\bx)\geq 0\}$.
     
     Suppose $r(\bx)$ has a degree $d$ \Sos\ proof of nonnegativity on $S$. Then, it has a degree $d$ \Sos\ proof of nonnegativity
    \begin{align*}
        r(\bx) = \sum_{i=1}^{t_0} h_i^2(\bx) + \sum_{k=1}^\ell (\sum_{j=1}^{t_k} s_j^2(\bx)) q_i(\bx) + \sum_{i=1}^m \lambda_i(\bx) p_i(\bx).
    \end{align*}
    such that all $h_i,s_j$ are in $\texttt{Span}(\bx^\alpha\mid |\alpha|\leq \frac{d}{2}, \bx^\alpha\not\in\Ideal{\LT(\I)})$.
\end{lemma}
\begin{proof}
    Let $G$ be a \GB\ of $\I$ and $\mc{B}_{\frac{d}{2}}$ contain the reduced monomials of degree at most $\frac{d}{2}$ with respect to $G$. Then $\mc{B}_{\frac{d}{2}}$ is a basis for $\texttt{Span}_{\zR}(\bx^\alpha\mid |\alpha|\leq \frac{d}{2}, \bx^\alpha\not\in\Ideal{\LT(\I)})$. Each non-zero polynomial $g$ of degree at most $\frac{d}{2}$ can be written as $g=u+v$ with $u\in \texttt{Span}_{\zR}(\mc{B}_{\frac{d}{2}})$ and $v\in\I$ such that $deg(g)\geq deg(u)$ and $deg(g)\geq deg(v)$. Hence,
    \begin{align*}
        r(\bx) 
        &= \sum_{i=1}^{t_0} h_i^2(\bx) + \sum_{k=1}^\ell (\sum_{j=1}^{t_k} s_j^2(\bx)) q_k(\bx) + \sum_{i=1}^m \lambda_i(\bx) p_i(\bx)\\
        &= \sum_{i=1}^{t_0} (u_{h_i}+v_{h_i})^2(\bx) + \sum_{k=1}^\ell (\sum_{j=1}^{t_i} (u_{s_j}+v_{s_j})^2(\bx)) q_k(\bx) + \sum_{i=1}^m \lambda_i(\bx) p_i(\bx)
    \end{align*}
   with $u_{h_i}$ and $u_{s_j}$ all belong to $\texttt{Span}_{\Field}(\mc{B}_{\frac{d}{2}})$ and all $v_{h_i},v_{s_j}$ are in $\I$. Expanding the above and collecting the terms that are in $\I$ in the last summation finishes the proof.
\end{proof}

\subsection{Automatizability for almost all CSP-based ideals}\label{sec:automatizability}
Here we wish to provide conditions that guarantee low bit complexity for polynomials appearing in an \Sos\ proof. This in turn guarantees automatizability i.e., if there exists a degree $d$ \Sos\ proof of nonnegativity then the Ellipsoid method is guaranteed to find such a proof. The first systematic approach to this problem is due to Raghavendra and Weitz~\cite{RaghavendraW17}. Let $\mc{P}=\{p_1\dots,p_m\}$ and $\mc{Q} = \{q_1,\dots,q_\ell\}$ and define $S\subseteq\{\mathbf{a}\in \zR^n\mid \forall p\in \mc{P}:p(\mathbf{a})=0\}$. We write $\mathbf{u}_d$ for the vector of polynomials whose entries are the elements of the usual monomial basis of $\zR[X]_d$. Similarly, we use $\mathbf{u}_d(\mathbf{a})$ for the vector of reals whose entries are the entries of $\mathbf{u}_d$ evaluated at $\mathbf{a}$. Let $\mc{U}$ denote the uniform distribution over $S$ and define the moment matrix as
\begin{align}
\label{eq:moment-matrix}
    M_{d} = \zE_{\mathbf{a}\sim \mc{U}}[\mathbf{u}_d(\mathbf{a})\mathbf{u}_d(\mathbf{a})^T].
\end{align}

In some sense, the main condition that Raghavendra and Weitz provided is that \emph{every} polynomial of degree $d$ has a proof of membership in the ideal $\I=\Ideal{\mc{P}}$ of degree at most $k$. In this case we say that $(\mc{P},\mc{Q})$ is $k$-complete on $S$ up to degree $d$. Ideally, we want $k$ to be small e.g., $k=poly(d)$. This is a very strong condition on the structure of the ideal $\I$ while the other conditions in \cite{RaghavendraW17} are considered to be mild. The conditions are as follows, 
\begin{itemize}
    \item [1.] $(\mc{P},\mc{Q})$ is $k$-complete on $S$ up to degree $d$,
    \item[2.] $S$ is $\varepsilon$-robust for $Q$. This means $\forall q\in \mc{Q},\forall \mathbf{a}\in S: q(\mathbf{a})>\varepsilon$,
    \item[3.] $S$ is $\delta$-spectrally rich for $(\mc{P},\mc{Q})$ up to degree $d$. This means every nonzero eigenvalue of $M_d$ is at least $\delta$\footnote{A zero eigenvector of $M$ corresponds to a polynomial which is zero on $S$.}. 
\end{itemize}

Surprisingly, we observe that the $k$-completeness condition is avoidable for ideals arising from \CSP s. Furthermore, the discrete nature of the varieties of these ideals implies the $\delta$-richness. In fact, one can prove similar to Lemma 7 of \cite{RaghavendraW17} that $\frac{1}{\delta}=2^{poly(n^{|D|})}$ where $D$ is the domain of the constraint language at hand. We follow a similar approach to \cite{RaghavendraW17} proving the following theorem.

\begin{theorem}
\label{thm:bit-complexity-bound}
    Let $\mc{P}$ be an instance of $\CSP(\Gamma)$ and $\I(\mc{P})=\Ideal{p_1\dots,p_m}$ be the corresponding ideal to $\mc{P}$. Define $S=\Variety{\I(\mc{P})}=\{\mathbf{a}\in \zR^n\mid \forall p\in \mc{P}:p(\mathbf{a})=0\}$ and suppose $S$ is $\varepsilon$-robust for $\mc{Q} = \{q_1,\dots,q_\ell\}$.
    
    Let $r(\bx)$ be a polynomial nonnegative on $S$, and assume $r$ has a degree $d$ \Sos\ proof of nonnegativity
    \begin{align*}
         r(\bx) = \sum_{i=1}^{t_0} h_i^2(\bx) + \sum_{k=1}^\ell (\sum_{j=1}^{t_k} s_j^2(\bx)) q_k(\bx) + \sum_{i=1}^m \lambda_i(\bx) p_i(\bx).
    \end{align*}
    Then $r$ has a degree $d$ \Sos\ proof of nonnegativity such that the coefficients of every polynomial appearing in the proof are bounded by $2^{poly(n^d, \log\frac{1}{\varepsilon})}$. In particular, if $\mc{Q}=\emptyset$ then every coefficient can be written down with only $poly(n^d)$ bits.
\end{theorem}
\begin{proof}
    Let $\I=\I(\mc{P})$ and $\mc{B}_{\frac{d}{2}}=\{\bx^\alpha\mid |\alpha|\leq \frac{d}{2}, \bx^\alpha\not\in \Ideal{\LT(I)}\}$. Let $\bv$ be the vector whose entries are the elements of $\mc{B}_d$, so any polynomial in $\texttt{Span}_{\zR}(\mc{B}_{\frac{d}{2}})$ can be expressed as $c^T \bv$, where $c$ is a vector of reals. By Lemma~\ref{lem:sos-quotient}, there is a proof of nonnegativity as follows.
    \begin{align*}
        r(\bx) &= \sum_{i=1}^{t_0} (c_i^T\bv)^2(\bx) + \sum_{k=1}^\ell \left(\sum_{j=1}^{t_k} (d_{kj}^T\bv)^2(\bx)\right) q_k(\bx) + \sum_{i=1}^m \lambda_i(\bx) p_i(\bx) \\
        &= \langle C,\bv\bv^T\rangle + \sum_{k=1}^\ell \langle D_k,\bv\bv^T\rangle q_k + \sum_{i=1}^m \lambda_i(\bx) p_i(\bx) 
    \end{align*}
    for PSD matrices $C, D_1,\dots,D_\ell$. Next, we average this polynomial identity over all the points $\mathbf{a}\in S$:
    \begin{align*}
        \zE_{\mathbf{a}\in S}[r(\mathbf{a})] = \langle C,\zE_{\mathbf{a}\in S}[\bv(\mathbf{a})\bv(\mathbf{a})^T] \rangle + \sum_{k=1}^\ell \langle D_k, \zE_{\mathbf{a}\in S}[\bv(\mathbf{a})\bv(\mathbf{a})^T q_k(\mathbf{a})] \rangle + 0
    \end{align*}
    Let $||r||$ denote the maximum absolute value of coefficients of $r$ and $||S||=\max_{\mathbf{a}\in S}||\mathbf{a}||_{\infty}$. Then the LHS is at most $poly(||r||,||S||)\leq 2^{poly(n^d)}$. The RHS is a sum of positive numbers, since the inner products are over pairs of PSD matrices. Thus the LHS is an upper bound on each term of the RHS. We would like to say that the LHS also provides an upper bound on the entries of matrices $C$, $D_k$ or an upper bound on the trace of these matrices. Recall that the trace of a matrix is sum of its eigenvalues. Hence, we prove that the averaged matrix $M'= \zE_{\mathbf{a}\in S}[\bv(\mathbf{a})\bv(\mathbf{a})^T]$ has no zero eigenvector so none of the eigenvalues of matrices $C$ and $D_k$'s are being dismissed by a zero eigenvector of $M'$.
    
    We now claim that the averaged matrix $M'= \zE_{\mathbf{a}\in S}[\bv(\mathbf{a})\bv(\mathbf{a})^T]$ has no zero eigenvector. Any zero eigenvector $c$ of $M'$ can be associated with a polynomial $c^T \bv$. Since $c^T M'c = \zE_{\mathbf{a}\in S}[(c^T \bv(\mathbf{a}))^2]$ and $c^TM'c=0$, we must have $c^T\bv(\mathbf{a})=0$ for each $\mathbf{a}\in S$. Therefore, as $\I$ is radical, by the Strong Nullstellensatz \ref{th:nullstz}, $c^T\bv$ must be in the ideal $\I$. This is a contradiction to $\bv$ being a vector of monomials from $\mc{B}_{\frac{d}{2}}$. 
    
    We proceed by assuming the averaged matrix $M'= \zE_{\mathbf{a}\in S}[\bv(\mathbf{a})\bv(\mathbf{a})^T]$ has no zero eigenvector. Therefore, none of the eigenvalues of matrix $C$ are being dismissed by a zero eigenvector of $M'$. The same is true for all $D_k$. Furthermore, similar to $M$ \eqref{eq:moment-matrix}, every nonzero eigenvalue of $M'$ is at least $\delta$, so 
    \[
        \langle C,\zE_{\mathbf{a}\in S}[\bv(\mathbf{a})\bv(\mathbf{a})^T] \rangle \geq \delta \cdot Tr(C).      
    \]
    Also, $q_k(\mathbf{a}) >\varepsilon$  for each $k$ and $\mathbf{a}$. Thus,
    \begin{align*}
        \langle D_k, \zE_{\mathbf{a}\in S}[\bv(\mathbf{a})\bv(\mathbf{a})^T q_k(\mathbf{a})] \rangle  &\geq \varepsilon \cdot \langle D_k, \zE_{\mathbf{a}\in S}[\bv(\mathbf{a})\bv(\mathbf{a})^T] \rangle\\
        & \geq \varepsilon\cdot\delta\cdot Tr(D_k)
    \end{align*}
    Thus, after averaging we have
    \[
        \delta \cdot Tr(C) + \varepsilon\cdot\delta\cdot \sum_{k=1}^\ell Tr(D_k) \leq poly(||r||,|||S|)\leq 2^{poly(n^d)}  
    \]
    Every entry of a PSD matrix is bounded by the trace, so $C$ and each $D_k$ have entries bounded by $poly(||r||, ||S||,\frac{1}{\varepsilon},\frac{1}{\delta})$. Noting that $poly(||r||,||S||)\leq 2^{poly(n^d)}$ and $\frac{1}{\delta}=2^{poly(n^d)}$ yields the desire bound on the entries of $C$ and each $D_k$. It remains to give an upper bound on the coefficients of polynomials $\lambda_i$'s. Let $||\mc{P}||$ denote the maximum absolute value of coefficients appearing in polynomials $p_1,\dots,p_m$. Consider the system of linear equations induced by 
    \[
        r(\bx) - \langle C,\bv\bv^T\rangle + \sum_{k=1}^\ell \langle D_k,\bv\bv^T\rangle q_k  =\sum_{i=1}^m \lambda_i(\bx) p_i(\bx) 
    \]
    where we take the coefficients of the $\lambda_i$ as variables. Note that this system is feasible. There are at most $O(n^d)$ variables and the coefficients on the LHS are bounded by $poly(||r||,||S||,\frac{1}{\delta},\frac{1}{\varepsilon})$. By Cramer's rule, the coefficients of $\lambda_i$'s can be taken to be bounded by $poly(||\mc{P}||,||r||,||S||,\frac{1}{\delta},\frac{1}{\varepsilon}, n!)$. Noting that $poly(||\mc{P}||,||r||,||S||)\leq 2^{poly(n^d)}$, as they are part of the input, and $\frac{1}{\delta}=2^{poly(n^d)}$ gives that this bound is at most $2^{poly(n^d,\log\frac{1}{\varepsilon})}$ as desired.
\end{proof}


The required conditions in Theorem~\ref{thm:bit-complexity-bound} are quite mild and covers almost all algebraic sets arising from \CSP-based ideals, particularly when $\mc{Q}=\emptyset$. This covers a wide range of problems for which such a low bit complexity guarantee was not known. Here we give an example of such problems, the \textsc{H-Coloring} problem. 

\paragraph{H-Coloring.} Fix a (di)graph $H$ with $V(H)=\{0,1,\dots,d\}$. For any (di)graph $G$ with vertex set $V(G)=\{x_1,\dots,x_n\}$ we are interested in homomorphisms from $G$ to $H$. That is a mapping $\phi:V(G)\to V(H)$ that preserves the edges i.e., if $x_ix_j\in E(G)$ then $\phi(x_i)\phi(x_j)\in E(H)$. The set of all homomorphisms from $G$ to $H$ can be captured by the following polynomial system. The first set of polynomials are the domain polynomials and ensures each vertex of $G$ is assigned a label from $V(H)$. The second set of polynomials are the edge constraints where they ensure that each edge $x_ix_j$ of $G$ is mapped to an edge $\alpha\beta\in E(H)$.
\begin{align*}
    \mc{P}= \Big\{\prod_{\alpha\in V(H)}(x_i-\alpha)\mid x_i\in V(G)\Big\}\cup\Big\{\prod_{\alpha\beta\in E(H)}(1-\prod_{\substack{\lambda\in V(H),\\ \lambda\neq \alpha}}\frac{\lambda-x_i}{\lambda-\alpha}\prod_{\substack{\lambda\in V(H),\\ \lambda\neq\beta}}\frac{\lambda-x_j}{\lambda-\beta})\mid x_ix_j\in E(G)\Big\}
\end{align*}

Here $S$ is the set of all homomorphisms from $G$ to $H$. This setting is very general and many important optimization problems are captured by this specification: \textsc{k-Coloring} where $H$ is a clique of size $k$, \textsc{Vertex Cover} where $H=(V=\{0,1\},E=\{(0,1),(1,0),(1,1)\})$. For both of these examples it is known that $\mc{P}$ is in fact a \GB. Our results imply that if there exists a degree $d$ \Sos\ proof of nonnegativity on $S$ then there exists one of degree $d$ with low bit complexity. Furthermore, our result implies that when $H$ is closed under a semilattice polymorphism then a truncated \GB\ can be constructed in polynomial time. This includes many graph classes such as bi-arc digraphs, interval graphs, signed interval digraphs, threshold tolerance graphs, etc. \cite{hell2016bi}.


\subsection{Relaxed degree bound for \Sos}

Degree $d$ \Sos\ relaxation insists on  $\max\{deg(h_i^2),deg(s_j^2q_k),deg(\lambda_ip_i)\} \leq d$ in \eqref{eq:sos-proof}. However, here we discuss how this restriction can be relaxed when the ideal membership part is well behaved. More precisely, we prove that if the corresponding \xIMP\ to the ideal $\Ideal{P}$ is polynomial time decidable then the existence of an \Sos\ proof with a milder degree restriction $\max\{deg(h_i^2),deg(s_j^2)\} \leq d$ is polynomial time decidable as well. 

\begin{theorem}
    Let $\Gm$ be a constraint language such that $\chi\IMP(\Gamma)$ is polynomial time decidable. Let $\mc{P}$ be an instance of $\CSP(\Gamma)$ and $\I(\mc{P})=\Ideal{p_1\dots,p_m}$ be the corresponding ideal to $\mc{P}$. Define $S=\Variety{\I(\mc{P})}$ and suppose $S$ is $\varepsilon$-robust for $\mc{Q} = \{q_1,\dots,q_\ell\}$. 
    
    For a polynomial $r$, the existence of a proof
    \[
        r(\bx) = \sum_{i=1}^{t_0} h_i^2(\bx) + \sum_{k=1}^\ell (\sum_{j=1}^{t_k} s_j^2(\bx)) q_i(\bx) + \sum_{i=1}^m \lambda_i(\bx) p_i(\bx)
    \]
    with $\max\{deg(h_i^2),deg(s_j^2)\} \leq d$ is polynomial time decidable.
\end{theorem}
\begin{proof}
    First of all, by Theorem~\ref{thm:bit-complexity-bound}, we may assume there exists a proof with \Sos\ polynomials that have coefficients with polynomial bit complexity. Second, by rearranging terms, we have
    \[
        r(\bx) - \sum_{i=1}^{t_0} h_i^2(\bx) - \sum_{k=1}^\ell (\sum_{j=1}^{t_k} s_j^2(\bx)) q_i(\bx) = \sum_{i=1}^m \lambda_i(\bx) p_i(\bx). 
    \]
    This in turn is equivalent to asking if there exist coefficients (for the \Sos\ polynomials) such that polynomial 
    \[
        r(\bx) - \sum_{i=1}^{t_0} h_i^2(\bx) - \sum_{k=1}^\ell (\sum_{j=1}^{t_k} s_j^2(\bx)) q_i(\bx) 
    \]
    is in the ideal $\I(\mc{P})$. Now this is an instance of $\chi\IMP(\Gamma)$ (note that the unknown coefficients have polynomial bit complexity.). However, this can be checked in polynomial time since $\chi\IMP(\Gamma)$ is polynomial time decidable. 
\end{proof}
Note that the above theorem is correct for any radical ideal over an algebraically closed field as long as the corresponding \xIMP\ is polynomial time decidable. Moreover, the above theorem motivates the problem of if such a relaxation could potentially lead to more powerful proof systems of nonnegativity. For instance, problems such as \textsc{Vertex Cover}, \textsc{Clique}, and \textsc{Stable Set}, are all instances of Boolean \CSP\ for which the corresponding \xIMP\ is polynomial time decidable. More generally, \xIMP\ is polynomial time decidable for the \textsc{2-SAT} problem, hence  polynomial optimization problems over \textsc{2-SAT} can enjoy the more relaxed version of \Sos\ proofs introduced here. An interesting research direction would be studying the power of this relaxed setting in terms of approximation algorithms. 
\section{Theta bodies for combinatorial ideals}
\label{sec:theta-body}
One of the core problems in optimization is to understand the $\texttt{conv}(S)$ or a relaxation of $\texttt{conv}(S)$, where $S$ the set of feasible solutions to a given problem and $\texttt{conv}(S)$ denotes the convex hull of $S$. For the \CSP s, an instance $\mathcal{P}$ of $\CSP(\Gm)$ can be associated with an ideal $\I(\mc{P})=\Ideal{f_1,\ldots,f_s}$ and the $\Sol(\mc{P})=\Variety{\I(\mc{P})}$ is a \emph{finite} subset of $\zR^n$. Hence, in this setting we are interested in computing $\texttt{conv}(\Variety{\I(\mc{P})})$. This is cut out by the inequalities $f(\bx)\geq 0$ as $f$ runs over all linear polynomials that are nonnegative over $\Variety{\I(\mc{P})}$. Thus, a natural relaxation for $\texttt{conv}(\Variety{\I(\mc{P})})$ is \footnote{Note that, in general, for an ideal $\I\in \zR[X]$ the convex hull of $\Variety{\I}$ may not be closed. Hence, \eqref{eq:cl(S)} is a relaxation for the closure of $\texttt{conv}(\Variety{\I})$, denoted by $cl(\texttt{conv}(\Variety{\I}))$. Here, closure of the convex hull of a set $C$ means the intersection of the closed halfspaces containing $C$.}
\begin{align}
\label{eq:cl(S)}
    \{ \bx\in \zR^n \mid f(\bx)\geq 0 \text{ for all } f \text{ linear and \Sos\ mod } \I\} 
\end{align}

Here we wish to provide a systematic approach using the algebraic property of the solution sets and our results on the \IMP\ and \xIMP\ to compute/approximate $\texttt{conv}(\Sol(\mc{P}))$. We follow the notion of \emph{theta bodies} introduced and advanced by Gouveia, Parrilo and Thomas \cite{GouveiaPT10-ThetaBody}. Throughout this section we work with $\Field=\zR$. Some definitions are in order.
\begin{definition}
    Let $\I$ be an ideal in $\zR[X]$. 
    \begin{itemize}
        \item [1.] A polynomial $f\in \zR[X]$ is called \emph{$k$-\Sos\ mod $\I$} if there exist $s_1,\ldots,s_t\in \zR[X]_k$ such that $f-\sum_{i=1}^t s_i^2\in \I$.
        \item[2.] The ideal $\I$ is called $k$-\Sos\ if \emph{every} nonnegative polynomial on $\Variety{\I}$ is $k$-\Sos\ mod $\I$. If \emph{every degree $d$} nonnegative polynomial on $\Variety{\I}$ is $k$-\Sos\ mod $\I$ we say $\I$ is $(d,k)$-\Sos.
    \end{itemize}
\end{definition}

Lov{\'a}sz \cite{lovasz2003semidefinite} asked the following question: Which ideals in $\zR[X]$ are $(1, 1)$-\Sos ? How about $(1,k)$-\Sos ?. We propose a restricted version of this question for vanishing ideals of constraint languages. Formally, we say a language $\Gm$ is $(d, k)$-\Sos\ if for every instance $\mc{P}$ of $\CSP(\Gm)$ the corresponding ideal $\I(\mc{P})$ is $(d, k)$-\Sos. An analogue of Lov{\'a}sz's question for constraint languages is,

\begin{problem}
\label{problem:theta-body}
    Which languages are $(1, 1)$-\Sos ? How about $(1,k)$-\Sos ?
\end{problem}

Gouveia, Parrilo and Thomas \cite{GouveiaPT10-ThetaBody} quite elegantly present an equivalent geometric notion to an ideal being $(1,k)$-\Sos. 
\begin{definition}
    For a positive integer $k$, the $k$-th \emph{theta body} of an ideal $\I\in\zR[X]$ is 
    \begin{align}
        TH_k(\I):= \{\bx\in \zR^n \mid f(\bx)\geq 0 \text{ for every linear } f \text{ that is k-\Sos\ mod } \I \}.
    \end{align}
\end{definition}

Observe that, by definition, $TH_1(\I)\supseteq TH_2(\I)\supseteq \dots \supseteq \texttt{conv}(\Variety{\I})$. We say a \emph{combinatorial} ideal is \emph{$TH_k$-exact} if $TH_k(\I)=\texttt{conv}(\Variety{\I})$\footnote{In \cite{GouveiaPT10-ThetaBody}, an ideal in $\zR[X]$ is called \emph{$TH_k$-exact} if $TH_k(\I)=cl(\texttt{conv}(\Variety{\I}))$. In general, $\texttt{conv}(\Variety{\I})$ may not be closed while the theta bodies are. Therefore, the theta body sequence of $\I$ can converge, if at all, only to $cl(\texttt{conv}(\Variety{\I}))$. }. Zero-dimensional ideals are $TH_k$-exact for some finite $k$ \cite{Laurent07}. We say a language $\Gm$ is $TH_k$-exact if for any instance $\mc{P}$ of $\CSP(\Gm)$ the ideal $\I(\mc{P})$ is $TH_k$-exact. An intriguing question is characterizing $TH_k$-exact languages, for constant $k$.
\begin{problem}
    Which languages are $TH_k$-exact, for some constant $k$?
\end{problem}

Gouveia, Parrilo and Thomas \cite{GouveiaPT10-ThetaBody} proved that a radial ideal is $TH_k$-exact if and only if it is $(1, k)$-\Sos. They have also provided an answer to Problem~\ref{problem:theta-body} characterizing finite sets $S\subset \zR^n$ that are $(1, 1)$-\Sos\ (see Theorem 4.2 in \cite{GouveiaPT10-ThetaBody}). 

Although the results of \cite{GouveiaPT10-ThetaBody} provide a characterization of $TH_k$-exact ideal, they do little to the computational side of theta bodies. Theta bodies provide a set of relaxations of solution sets and helps to formulate many combinatorial problems as optimizing a (linear) polynomial over theta bodies which could lead to a better understanding of approximability of many combinatorial problems in a unified way. In order to use theta bodies in this way, we need a method of efficiently `construct' a theta body.  
One way to describe the $k$-th theta body of $\I$ is in terms of the so called \emph{combinatorial moment matrices} which are matrices indexed by a basis of $\mc{B}_k$ \cite{Laurent07}. Let us elaborate on this via an example, see \cite{Laurent07,GouveiaPT10-ThetaBody} for further details. First we explain the construction of combinatorial moment matrices and then discuss how we can use them to describe theta bodies for the \textsc{Clique} problem. We may assume $\mc{B}=\{\bx^\alpha \mid \bx^\alpha \not\in \Ideal{\LT(\I)}\}$ and $\mc{B}_k=\{\bx^\alpha \mid |\alpha|\le k,\bx^\alpha \not\in \Ideal{\LT(\I)}\}$ are vectors with their elements listed in increasing \grlex order. Hence, any polynomial modulo $\I$ is $\mathbf{c}\cdot \mc{B}$ with $\mathbf{c}\in \zR^{|\mc{B}|}$. Now, let $\mathbf{y}=(y_1,\dots,y_m)\in \zR^{|\mc{B}_{2k}|}$ and define  $M_{\mc{B}_k}(\mathbf{y})$ to be the matrix indexed by $\mc{B}_k$ whose $(\bx^{\alpha_i},\bx^{\alpha_j})$ entry is $\mathbf{c}\cdot \mathbf{y}$ where $\mathbf{c}\in \zR^{|\mc{B}_{2k}|}$ is so that
\[
    \bx^{\alpha_i+\alpha_j}=\bx^{\alpha_i}\bx^{\alpha_j}  \equiv \mathbf{c}\cdot \mc{B}_{2k} ~~ mod ~~\I .
\]
The matrix $M_{\mc{B}_k}(\mathbf{y})$ is known as the \emph{$k$-th truncated combinatorial moment matrix}. Having $M_{\mc{B}_k}$, we define the following 
\begin{align}
    TH_k(\I)= \{ \mathbf{y}=(1,y_1,\dots,y_m)\mid \mathbf{y}\in \zR^{\mc{B}_{2k}} \text{ and } M_{\mc{B}_k}(\mathbf{y}) \succeq 0\}.
\end{align}
We now explain this construction using the \textsc{Clique} problem. Let $G=(V,E)$ be a graph with vertex set $V=\{\vc xn\}$ and edge set $E$. A clique in $G$ is a set of vertices $U$ such that for all $x_i,x_j\in U$, $x_ix_j\in E$.  The corresponding ideal is 
\[
    \I_{\textsc{Clique}}=\Ideal{x_i^2-x_i : x_i\in V, x_i\cdot x_j: x_ix_j\not\in E}.
\]
 For a subset $U\subseteq V$ let $\bx^{U}=\prod_{x_i\in U}x_i$. From definition of $\I$, it is clear that $\mc{B}=\{\bx^U\mid U \text{ is a clique in }G\}$. In particular, $\mc{B}_1=\{1,x_1,\dots,x_n\}$. Denote by $y_0,y_1,\dots,y_n$ the first $n+1$ coordinates of $\mathbf{y}\in \zR^{|\mc{B}_{2k}|}$. Then the $k$-th theta body of $\I$ is described as 
\[
    TH_k(\I_{\textsc{Clique}})=\{(1,y_1,\dots,y_n) \mid \mathbf{y}\in \zR^{|\mc{B}_{2k}|} \text{ and } M_{\mc{B}_k}(\mathbf{y}) \succeq 0\}.
\]
Equivalently, in a more intuitive way, we can describe the $k$-th theta body as follows too.

\begin{align*}
    TH_k(\I_{\textsc{Clique}}) = \left\{ \mathbf{y}\in \zR^n : 
    \begin{array}{l} 
    \exists \, M \succeq 0, \, M \in \zR^{|\mc{B}_k| \times |\mc{B}_k|}
    \,\text{ such that} \\ 
    M_{\emptyset \emptyset} = 1,\\
   M_{\emptyset \{i\}} = M_{\{i\} \emptyset} = M_{\{i\} \{i\}} = y_i \\
  M_{U U'} = 0 \,\,\text{if} \,\,U \cup U' \,\, \text{is not clique
    in}  \,\, G\\
  M_{U U'} = M_{W W'} \,\,\text{if} \,\, U \cup U' = W \cup W' 
    \end{array}
    \right\}.
\end{align*}

Another way to compute a representation of theta bodies is to work with the \emph{truncated quadratic module}. The \emph{quadratic module of $\I$} is
\[
    \mc{M}(\I) = \{ s+\I \mid s \text{ is \Sos\ in } \zR[X]\},
\]
and the $k$-th truncation of $\mc{M}(\I)$ is
\[
    \mc{M}_k(\I) = \{ s+\I \mid s \text{ is } k\text{-\Sos\ in } \zR[X]\}.
\]

Both $\mc{M}(\I)$ and $\mc{M}_k(\I)$ are cones in the $\zR$-vector space $\zR[X]/I$. Recall that $\zR[X]/I$ as a $\zR$-vector space is isomorphic to $\texttt{Span}(\bx^\alpha \mid \bx^\alpha \not\in \Ideal{\LT(\I)})$. Hence, having a monomial basis $\mc{B}=\{\bx^\alpha \mid \bx^\alpha \not\in \Ideal{\LT(\I)}\}$ or $\mc{B}_k=\{\bx^\alpha \mid |\alpha|\le k,\bx^\alpha \not\in \Ideal{\LT(\I)}\}$ would lead to an efficient computation of sum-of-squares in $\zR[X]/I$ \cite{parrilo2005exploiting}.

If the ideals arising from the combinatorial problem are produced through instance of $\CSP(\Gm)$ for some language $\Gm$, our methods make it possible to efficiently compute a representation of the corresponding theta body of one of the types described above. We say the class of $k$-th theta bodies arising from instances of $\CSP(\Gm)$, denoted by $TH_k(\Gm)$, is efficiently computable if for any instance $\mc{P}$ of $\CSP(\Gm)$ a representation of the $k$-th theta body $TH_k(\I(\mc{P}))$ can be constructed in polynomial time.  Therefore, we propose the following research problem. 

\begin{problem}
    For which languages $\Gm$ the $k$-th theta body $TH_k(\I(\mc{P}))$ is computable in polynomial time where $\mc{P}$ is an instance of $\CSP(\Gm)$?
\end{problem}

Our results on \xIMP\ imply that a monomial basis $\mc{B}_k$ as well as a $k$-truncated \GB\ can be computed in polynomial time. This together with our result on the bit complexity of \Sos\ proofs obtain the following. 
\begin{theorem}\label{the:theta}
    Let $\mathcal{H}$ be a class of ideals for which $\chi\IMP_d$ is polynomial time decidable. Then there exists a polynomial time algorithm that constructs the $d$-th theta body of an ideal $\I\in \mathcal{H}$.
\end{theorem}
\begin{proof}
    Let $\I$ be an ideal in $\mathcal{H}$. By Theorem~\ref{thm:GB+xIMP} we can construct the $d$-truncated \GB\ as well as $\mc{B}_d$. This together with the result of Laurent \cite{Laurent07} leads to a polynomial time algorithm to compute the $d$-th theta body of $\I$. 
\end{proof}
Using the reductions from Theorems~\ref{thm:search-xIMP-reduction} and \ref{thm:GB+xIMP+languages} we obtain the following results.  

\begin{corollary}
 \label{cor:Theta-body-reduction}
    Let $\Gamma$ and $\Delta$ be constraint languages on (possibly different) sets $D$, $E$, respectively. Suppose there exists a polynomial time algorithm that solves the search version of $\chi\IMP_k(\Gamma)$. Then, $O(k)$-th theta body for language $\Delta$ can be constructed in polynomial time if
    \begin{itemize}
        \item [1.] $D=E$ and $\Gamma$ pp-defines $\Delta$, or
        \item [2.] $\Gamma$ pp-interprets $\Delta$.
    \end{itemize}
 \end{corollary}

 \begin{corollary}
    \label{cor:theta-body+languages}
    Let $\Gamma$ be a finite constraint language over domain $D$. For constant $k$ and an instance $\mc{P}$ of $\CSP(\Gm)$ the $k$-th theta body of $\I(\mc{P})$, $TH_k(\I(\mc{P}))$, can be computed in polynomial time if
    \begin{itemize}
        \item[1.] $\Gamma$ has a semilattice polymorphism, or
        \item[2.] $\Gamma$ has the dual-discriminator polymorphism, or
        \item[3.] $\Gamma$ is expressed as a system of linear equations over $\GF(p)$, $p$ prime.
    \end{itemize}
\end{corollary}

Our results yield efficient construction of theta bodies for many well-studied combinatorial problems. Note that, in this case, Theorem~\ref{thm:bit-complexity-bound} guarantees low bit complexity of the coefficients in \Sos\ proofs which leads to a polynomial time execution of the Ellipsoid method. As a result optimizing a linear polynomial over such theta bodies is guaranteed to be polynomial time using the Ellipsoid method. In what follows we provide examples of well-studied problems for which the objective is optimizing a linear polynomial over integer points and (re)discover multiple positive results on computation of theta bodies. We point out there are only a handful of specific problems for which theta bodies are known to be efficiently computed. 

\paragraph{The Maximum Stable Set Problem.}
Let $G=(V,E)$ be an undirected graph with vertex set $V=\{x_1,\ldots,x_n\}$ and edge set $E$. A stable set (a.k.a independent set) in $G$ is a set of vertices $U$ such that for all $x_ix_j\in U$, $x_ix_j\not\in E$. The \textsc{Maximum Stable Set} problem seeks a stable set of largest cardinality in $G$. This can be seen as an optimization problem over language $\Gm=\{(0,0),(0,1),(1,0)\}$. More precisely, the objective is maximizing $\sum_i x_i$ subject to $(x_ix_j)\in \Gm$, for all $x_ix_j\in E$. One can observe that $\Gm$ admits the semilattice polymorphism \Min\ and hence by Corollary~\ref{cor:theta-body+languages} the $k$-th theta body of the corresponding ideal to this problem, $TH_k(G)$, can be constructed in polynomial time. We point out that, for this problem, it is known that the $1$-st theta body provides a convex relaxation for the (characteristic vectors) of all stable sets in $G$ \cite{Lovasz79-Shannon}. Therefore, the problem $\max\limits_{\bx\in TH_1(G)}\sum x_i$ is a SDP which can be solved to arbitrary precision in polynomial time in the size of $G$. The optimal value of this SDP provides an upper bound on the size of a maximum stable set. 

\paragraph{Binary Matroid and its theta bodies.}
Let $\mc{M}=(E, \mc{C})$ be a binary matroid where $E$ is called the ground set and $\mc{C}$ is a collection of subsets of $E$ that is closed under taking symmetric differences. Each member of $\mc{C}$ is called a \emph{cycle}. (Often members of $\mc{C}$ are called independent sets, however, here we call them cycles to avoid confusion.) One can view the binary matroid $\mc{M}=(E, \mc{C})$ as the $\GF(2)$-vector subspaces of $\GF(2)^E$. Let $\mathbf{1}_F\in \{0,1\}^E$ denote the characteristic vector of $F\subseteq E$, thus the cycles of the binary matroid $\mc{M}$ arise as the solutions in $\GF(2)^E$ of a linear system $M\bx = 0$, where $M$ is a matrix with columns indexed by $E$. The convex hull of the vectors $\mathbf{1}_F$, $F\in \mc{C}$, is called the \emph{matroid polytope} or \emph{cycle polytope}. Let $\I(\mc{M})\subseteq \zR[x_i\mid i\in E]$ be the vanishing ideal of the cycle vectors of $\mc{M}$. Our result in Theorem~\ref{thm:GB+xIMP+languages} implies the following theorem.
\begin{theorem}
    The set $\mc{B}_k=\{\bx^\alpha\mid |\alpha|\leq k, \bx^\alpha\not\in\Ideal{\LT(\I(\mc{M}))}\}$,  and the $k$-truncated \GB\ of the ideal $\I(\mc{M})$ can be computed in polynomial time. 
\end{theorem}

This result has numerous consequences including a polynomial time algorithm to construct $TH_k(\I(\mc{M}))$. This in turn can be viewed as computing the moment matrices for the cycle ideal $\I(\mc{M})$. One classical and well studied application is \emph{cut function} of graphs. When $\mc{M}$ is the cut polytope then $TH_k(\I(\mc{M}))$ provides a relaxation for the \emph{cut polytope}. In particular, $TH_1(\I(\mc{M}))$ coincides with the edge-relaxation considered by Rendl and Wiegele \cite{wiegele2006nonlinear} and numerical experiments there indicates that it is often tighter than the Goemans-Williamson SDP relaxation \cite{GoemansW95}. See \cite{GouveiaLPT12} for more detailed discussions and applications. 

\paragraph{The Min/Max Ones Problem and a generalization.} 
\textsc{Min Ones} and \textsc{Max Ones} are Boolean \CSP\ problems where the objective is to find a feasible solution (a 0-1 assignment satisfying all constraints) minimizing/maximizing the number of variables assigned the label $1$ \cite{KhannaSTW00}. Classical examples are the \textsc{Minimum Vertex Cover}, the \textsc{Maximum Stable Set}, and many packing and covering problems \cite{Srinivasan95}. In almost all the cases where Boolean \CSP\ is polynomial time decidable our results imply that we can construct theta bodies as relaxations of the solution space. This motivates the question of studying the power of the theta body relaxations in designing approximation algorithms for this class of problems.

\paragraph{The Min-Cost H-coloring Problem.}
In the \textsc{Maximum Stable Set} problem we have a linear cost function defined over homomorphisms from graph $G$ to the graph $H=(V=\{0,1\},E=\{(0,0),(0,1),(1,0)\})$. However, one can consider a more general setting as follows. Let $H=(V,E)$ with $V(H)=\{0,1,\ldots,d\}$ be a fixed digraph and $G=(V,E)$ with $V(G)=\{x_1,\ldots,x_n\}$ be an input digraph. Given a cost function $c:V(G)\times V(H)\to \zR_+\cup\{+\infty\}$, the objective is to find a homomorphism from $G$ to $H$ with minimum cost which is defined as $\sum_{i\in[n],j\in V(H)}c(x_i,j)$. This problem is studied under name of \textsc{Min-Cost H-coloring} and many important optimization problems are captured by this specification: \textsc{Minimum Vertex Cover}, \textsc{Minimum Sum 3-Coloring}, \textsc{Minimum Sum K-Coloring} to name a few. Only a few results are known in terms of approximation for this problem in a unified framework \cite{esa/HellMNR12,icalp/RafieyRS19}. Both of these results are based on rounding a particular LP. Our result implies that we can construct theta bodies for many cases of $H$ and have a convex relaxation of all possible homomorphisms. This motivates the question of checking if formulating the \textsc{Min-Cost H-coloring} problem as an optimization problem over theta bodies help to achieve (better) approximation algorithms.

\paragraph{The Strict \CSP s Problem.}

A natural generalization of the above problems, both in terms of alphabet and arity of the relations, is the class of \textsc{Strict CSPs} \cite{KumarMTV11}. As usual we have a constraint language $\Gamma$ on domain $D=\{0,\dots, d\}$. Given an instance $\mc{P}=(X,D,C)$ of $\CSP(\Gm)$ and a cost function $c:X\times D\to \zR_+\cup\{+\infty\}$, the objective is to minimize/maximize $\sum_{x_i\in X,j\in D}c(x_i,j)$ subject to $\Variety{\Ideal{\mc{P}}}$ i.e., finding a solution to $\mc{P}$ that minimizes/maximises $\sum_{x_i\in X} x_i$. The approximability of this problem and it special cases have been studied intensively and (\textsc{Unique Game}) hardness results are known \cite{KumarMTV11}. It is tempting to study the power of theta body relaxations for approximation of \textsc{Strict CSPs} as our results imply in many cases a theta body can be constructed in polynomial time.

\section{Conclusion and future work}

The study of \CSP-related Ideal Membership Problems is in its infancy, and pretty much all research directions are open. These include expanding the range of tractable \IMP s. A number of candidates for such expansions are readily available from the existing results about the CSP. There are however several questions that seem to be more intriguing; they mainly concern with relationship between the \IMP, \CSP\ and other problems.

The first one is how the tractability of the IMP can be used in applications such as Nullstellensatz and \Sos\ proofs. The several results we obtain here barely scratch the surface. Establishing connections of this kind seem important, because it would allow for using a much larger toolbox than the usual \GB. Note also that constructing an explicit \GB\ beyond Boolean case is getting very hard very quickly; such techniques may be not very useful in more general cases.

One of the principal techniques in solving the \CSP\ is constraint propagation, that is, the study how local interaction between constraints may tighten them, and even sometimes refute the existence of a solution. In some cases such as IMPs with the dual-discriminator polymorphism, computing the $S$-polynomial, and therefore constructing a \GB\ is equivalent to establishing so-called arc consistency. This however is not the case in general. On the other hand, constraint propagation is done through some very simple pp-definitions, and so Theorem~\ref{the:pp-definitions-intro}(1) and its proof imply that there likely are some parallel constructions with polynomials and ideals.

The main tool for solving restricted degree problems $\IMP_d(\Gm)$ is constructing a $d$-truncated \GB, in which the degrees of polynomials are also bounded by $d$. It is interesting what effect restricting the degree of polynomials in a generating set has on the properties of the underlying \CSP. More precisely, if $\I(\cP)$ is the ideal corresponding to a CSP instance $\cP$, and $\I_d(\cP)$ is the ideal generated by the reduced \GB, then $\I_d(\cP)$ can be translated back, to a less constrained \CSP\ $\cP'$. What is the connection between $\cP$ and $\cP'$? For example, every instance $\cP$ of $\CSP(\Gm)$ where $\Gm$ is Boolean and has a semilattice polymorphism, then $\cP$ is equivalent to a \textsc{Horn}- or \textsc{AntiHorn-Satisfiability}. Restricting the degrees of polynomials in $\I_d(\cP)$ is apparently equivalent to removing all clauses of length more than $d$ in $\cP$.

\section*{Acknowledgement}
The second author would like to thank Reza Dastbasteh for many valuable discussions.
\clearpage
\addcontentsline{toc}{section}{References}
\bibliographystyle{plain}
\bibliography{IMP-references}

\appendix
\section{\GBs\ for linear equations mod $p$ via conversion technique}\label{sec:lin-mod-p}
In this section we focus on constraint languages that are expressible as a system of linear equations modulo a prime number. Let $\Gm$ be a constraint language over a set $D$ with $|D|=p$, and $p$ a prime number. Suppose $\Gm$ has an \emph{affine} polymorphism modulo $p$ (i.e.\ a ternary operation $\psi(x,y,z)=x\ominus y\oplus z$, where $\oplus,\ominus$ are addition and subtraction modulo $p$, or, equivalently, of the field $\GF(p)$). In this case every \CSP\ can be represented as a system of linear equations over $\GF(p)$. Without loss of generality, we may assume that the system of linear equations at hand is already in the \emph{reduced row echelon} form. Transforming system of linear equations mod $p$ in its reduced row echelon form to a system of polynomials in $\zR[X]$ that are a \GB\ is not immediate and requires substantial work. This is the case even if we restrict ourselves to lexicographic order. Let us elaborate on this by an example considering linear equations over $GF(2)$.

\begin{example}
    We assume a lexicographic order $\succ$ with $x_1 \succ_\lex \dots \succ_\lex x_n$. We also assume that the linear system has $r \le n$ equations and is already in its reduced row echelon form with $x_i$ as the leading monomial of the $i$-th equation. Let $Supp_i \subset [n]$ such that $\{x_j : j \in Supp_i\}$ is the set of variables appearing in the $i$-th equation of the linear system except for $x_i$. Let the $i$-th equation be $g_i = 0 (mod~2)$ where $g_i=x_i\oplus f_i$, with $i \in [r]$ and $f_i$ is the Boolean function $(\bigoplus_{j\in Supp_i} x_j)\oplus \alpha_i$ and $\alpha_i \in \{0,1\}$. 
    Define a polynomial $M(f_i)\in\zR[\vc x n]$ interpolating $f_i$, that is, such that, for every $\mb{a}\in \{0,1\}^n$, $f_i(\mb{a})=0$ if and only if $M(f_i)(\mb{a})=0$, and $f_i(\mb{a})=1$ if and only if $M(f_i)(\mb{a})=1$. Now, consider the following set of polynomials. 
    \[
        G=\{ x_1 - M(f_1),\dots, x_r-M(f_r) ,(x^2_{r+1}-x_{r+1}),\dots,(x^2_{n}-x_n)\}.
    \] 
    Set $G\subset \zR[\vc x n]$ is a \GB\ with respect to \lex order. This is because for every pair of polynomials in $G$ the reduced $S$-polynomial is zero as the leading monomials of any two polynomials in $G$ are relatively prime. By Buchberger's Criterion (see Theorem~\ref{th:crit}) it follows that $G$ is a \GB\ with respect to the \lex ordering.  
    
However, this construction may not be computationally efficient as the polynomials $M(f_i)$ may have exponentially many monomials. This case was overlooked in \cite{Mastrolilli19}. Bharathi and Mastrolilli~\cite{Bharathi-Minority} resolved this issue in an elegant way. Having $G'=\{ x_1 - f_1,\dots, x_r- f_r ,(x^2_{r+1}-x_{r+1}),\dots,(x^2_{n}-x_n)\}$, they convert $G'$ to set of polynomials in $\zR[\vc x n]$ which is a $d$-truncated \GB. Their conversion algorithm has time complexity $n^{O(d)}$ where $d=O(1)$. Their algorithm is a modification of the conversion algorithm by Faug{\`{e}}re, Gianni, Lazard and Mora \cite{FGLM}. See \cite{Bharathi-Minority} for more details.
\end{example}

We consider this problem for any fixed prime $p$ and prove that a $d$-truncated \GB\ can be computed in time $n^{O(d)}$. First, we give a very brief introduction to the FGLM conversion algorithm~\cite{FGLM}. Next, we present our conversion algorithm that, given a system of linear equations mod $p$, produces a $d$-truncated \GB\ in graded lexicographic order. The heart of our algorithm is finding linearly independent expressions mod $p$ that helps us carry the conversion. 

\subsection{\GB\ conversion}

We say a \GB~$G=\{\vc g t\}$ is \emph{reduced} if $\LC(g_i)=1$ for all $g_i\in G$, and if for all $g_i\in G$ no monomial of $g_i$ lies in $\Ideal{\LT(G\setminus \{g_i\})}$. We note that for an ideal and a given monomial ordering the reduced \GB\ of $\I$ is unique (see, e.g., \cite{Cox}, Theorem 5, p.93). Given the reduced \GB\ of a zero-dimensional ideal $\I \subset \Field[X]$ with respect to a monomial order $\succ_1$, where $\Field$ is a computable field, the FGLM algorithm computes a \GB\ of $\I$ with respect to another monomial order $\succ_2$. The complexity of the FGLM algorithm depends on the dimension of $\Field$-vector space $\Field[X]/\I$. More precisely, let $\mathscr{D}(I)$ denote the dimension of $\Field$-vector space $\Field[X]/\I$, then we have the following proposition.
 
\begin{proposition}[Proposition 4.1 in \cite{FGLM}]
    Let $\I$ be a zero-dimensional ideal and $G_1$ be the reduce \GB\ with respect to an ordering $\succ_1$. Given a different ordering $\succ_2$, there is an algorithm that constructs a \GB\ $G_2$ with respect to ordering $\succ_2$ in time $O(n\mathscr{D}(I)^3)$.
\end{proposition}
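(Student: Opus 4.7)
The plan is to exploit that $\I$ is zero-dimensional, so $\Field[X]/\I$ is a finite-dimensional $\Field$-vector space of dimension $\mathscr{D}(\I)$, and for any monomial order the corresponding standard monomials form a basis. Having the reduced \GB\ $G_1$ in hand, one can compute the normal form of any polynomial as an $\Field$-linear combination of the $\succ_1$-standard monomials $B=\{b_1\zd b_{\mathscr{D}(\I)}\}$. I would first precompute, for each variable $x_i$, the $\mathscr{D}(\I)\times\mathscr{D}(\I)$ matrix $M_i$ whose $j$-th column is the coordinate vector in $B$ of the normal form of $x_ib_j$ modulo $G_1$. This one-time cost is $O(n\mathscr{D}(\I)^2)$ and reduces the computation of the normal form of any monomial $m=x_im'$ (with $m'$ previously processed) to a single matrix--vector product $M_iv_{m'}$.

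The main loop would enumerate monomials $m$ in increasing $\succ_2$-order, maintaining (i)~a list $L$ of ``surviving'' monomials whose coordinate vectors $v_m\in\Field^{\mathscr{D}(\I)}$ are linearly independent and (ii)~a growing set $G_2$. For each newly visited $m$, I would test via Gaussian elimination against a maintained echelon form of $\{v_{m'}\mid m'\in L\}$ whether $v_m$ is linearly dependent. If independent, append $m$ to $L$. If dependent, the elimination yields coefficients $c_{m'}$ with $v_m=\sum_{m'\in L}c_{m'}v_{m'}$, so $g_m=m-\sum_{m'}c_{m'}m'$ lies in $\I$ and has $\succ_2$-leading monomial $m$; add $g_m$ to $G_2$ and prune every multiple of $m$ from the candidate queue. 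To argue correctness, I would note that at termination the monomials not divisible by any $\LT(g)$ with $g\in G_2$ are exactly $L$, and they are linearly independent in $\Field[X]/\I$; a short counting argument together with $\langle\LT(G_2)\rangle\sse\langle\LT(\I)\rangle$ forces $|L|=\mathscr{D}(\I)$ and equality of the two monomial ideals, which is exactly the definition of $G_2$ being a \GB\ with respect to $\succ_2$.

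For the complexity bound, I would observe that $|L|\le\mathscr{D}(\I)$ throughout, and each element of $L$ contributes at most $n$ children (one per variable) to the candidate queue, so the queue is ever populated with at most $O(n\mathscr{D}(\I))$ monomials. Each processing step costs one matrix--vector product ($O(\mathscr{D}(\I)^2)$) plus an incremental echelon-form update ($O(\mathscr{D}(\I)^2)$), yielding total complexity $O(n\mathscr{D}(\I)^3)$. The main obstacle will be the bookkeeping in the enumeration loop: one must traverse candidate monomials in a strict $\succ_2$-increasing order while simultaneously pruning multiples of freshly discovered leading monomials, so that every dependence test is performed against the full relevant echelon basis and no monomial is visited twice. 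Implementing this efficiently calls for a carefully designed priority queue together with a ``border'' representation of the current leading-term ideal, and a careful inductive argument is needed to show that the combined pruning and enumeration never misses a relation while respecting the $O(n\mathscr{D}(\I))$ bound on the total number of monomials processed.
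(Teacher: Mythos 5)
Your proposal is correct and is essentially the standard FGLM argument: the paper itself does not prove this proposition (it is quoted from \cite{FGLM}), and the paper's own conversion procedure, Algorithm~\ref{alg:FGLM}, is built on exactly the scheme you describe --- enumerate monomials in increasing $\succ_2$ order, test whether the $G_1$-normal form is a linear combination of the normal forms of the retained standard monomials, and accordingly either add a new polynomial to $G_2$ or a new monomial to the staircase, pruning multiples of discovered leading terms. The one imprecision is the claim that the multiplication matrices $M_i$ can be precomputed in $O(n\mathscr{D}(\I)^2)$: computing the normal form of $x_ib_j$ modulo $G_1$ is not a constant-cost lookup in general (it requires processing the border monomials in increasing $\succ_1$ order and reusing earlier normal forms), and the usual accounting charges this table-building step up to $O(n\mathscr{D}(\I)^3)$ as well, which still fits the stated overall bound.
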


We cannot apply the FGLM algorithm directly as $\mathscr{D}(\I)$ could be exponentially large in our setting. Note that $\mathscr{D}(\I)$ is equal to the number of common zeros of the polynomials from $\Ideal{G_1}$, which in the case of linear equations is $\mathscr{D}(\I)=O(p^{n-r})$ where $r$ is the number of equations in the reduced row echelon form. Furthermore, as we discussed, we are not given the explicit reduced \GB\ $G_1$ with respect to \lex ordering (the \GB\ is presented to us as a system of linear equation mod $p$ rather than polynomials in $\zR[X]$). We shall present an algorithm that resolves these issues.  

Let $\cP$ be an instance of $\CSP(\Gm)$ that is expressed as a system of linear equations $\mathscr{S}$ over $\zZ_{p}$ with variables $x_1,\dots,x_n$. A system of linear equations over $\zZ_{p}$ can be solved by Gaussian elimination (this immediately tells us if $1 \in \I(\cP)$ or not, and we proceed only if $1 \not\in \I(\cP)$). We assume a lexicographic order $\succ_\lex$ with $x_1 \succ_\lex \dots \succ_\lex x_n$. We also assume that the linear system has $r \le n$ equations and it is already in its reduced row echelon form with $x_i$ as the leading monomial of the $i$-th equation. Let $Supp_i \subset [n]$ such that $\{x_j : j \in Supp_i\}$ be the set of variables appearing in the $i$-th equation of the linear system except for $x_i$.
 
 Fix a prime $p$ and $\oplus$, $\ominus$, $\odot$ denote addition, subtraction, and multiplication modulo $p$, respectively. We will call a linear polynomial over $\zZ_p$ a \emph{$p$-expression}. Let the $i$-th equation be $g_i = 0\pmod p$ where $g_i := x_i\ominus f_i$, with $i \in [r]$ and $f_i$ is the $p$-expression $(\bigoplus_{j\in Supp_i} \alpha_jx_j)\oplus \alpha_i$ and $\al_j,\alpha_i \in \zZ_p$. We will assume that each variable $x_i$
is associated with its $p$-expression $f_i$ which comes from the mod $p$ equations. This is clear for $i \le r$; for $i > r$ the $p$-expression $f_i = x_i$ itself. Hence, we can write down the reduced \GB\ in the \lex order in an implicit form as follows.
\begin{align}
\label{eq:G1-appendix}
    G_1=\{ x_1 - f_1,\dots, x_r-f_r ,\prod_{i\in\zZ_p}(x_{r+1}-i),\dots,\prod_{i\in\zZ_p}(x_{n}-i)\}
\end{align}

\begin{algorithm}[t]
  \caption{Conversion algorithm}\label{alg:FGLM}
  \begin{algorithmic}[1]
    \Require{$G_1$ as in \eqref{eq:G1-appendix} that corresponds to $\I(\cP)$, degree $d$.}
    \State Let $Q$ be the list of all monomials of degree at most $d$ arranged in increasing order with respect to \grlex.
    \State $G_2=\emptyset, B(G_2)=\{1\}$ (we assume $1\not\in\I(\cP)$). Let $b_i$ (arranged in increasing \grlex order) be the elements of $B(G_2)$.
    \For{$q \in Q$}
    \If{$q$ is divisible by some \LM~in $G_2$}
        \State Discard it and go to Step 3,
    \EndIf
    \If{$\reduce q {G_1} = \sum\limits_{j}k_j \reduce {b_j} {G_1}$}
        \Comment{where $k_j\in \zR, b_j\in B(G_2)$}
        \State $G_2 = G_2 \cup \{q-\sum\limits_{j}k_j  b_j\}$
    \Else 
        \State $B(G_2) = B(G_2) \cup \{q\}$ 
    \EndIf
    \EndFor
    \State \textbf{return} $G_2$
\end{algorithmic}
\end{algorithm}

Given $G_1$, our conversion algorithm, Algorithm~\ref{alg:FGLM}, constructs a $d$-truncated \GB\ over $\Real[x_1,\dots, x_n]$ with respect to the \grlex order. At the beginning of the algorithm, there will be two sets: $G_2$, which is initially empty but will become the new \GB\ with respect to the \grlex order, and $B(G_2)$, which initially contains $1$ and will grow to be the \grlex monomial basis of the quotient ring $\zR[\vc x n]/\I(\cP)$ as a $\zR$-vector space. In fact, $B(G_2)$ contains the reduced monomials (of degree at most $d$) with respect to $G_2$. Every $f \in \zR[\vc x n]$ is congruent modulo $\I(\cP)$ to a unique polynomial $r$ which is a $\zR$-linear combination of the monomials in the \emph{complement} of $\Ideal{\LT(\I(\cP))}$. Furthermore, the elements of $\{\bx^\al\mid \bx^\al \not\in \Ideal{\LT(\I(\cP))}\}$ are "linearly independent modulo $\I(\cP)$" (see Proposition~\ref{prop:quotient-basis}). This suggests the following. In Algorithm~\ref{alg:FGLM}, $Q$ is the list of all monomials of degree at most $d$ arranged in increasing order with respect to \grlex ordering. The algorithm iterates over monomials in $Q$ in increasing \grlex order and at each iteration decides exactly one of the followings given the current sets $G_2$ and $B(G_2)$. 
\begin{enumerate}
\item 
$q$ should be discarded (if $q$ is divisible by some \LM~in $G_2$), or
\item 
a polynomial with $q$ as its leading monomial should be added to $G_2$ (if $\reduce q {G_1} = \sum\limits_{j}k_j \reduce {b_j} {G_1}$; $b_j\in B(G_2)$), or
\item 
$q$ should be added to $B(G_2)$.
\end{enumerate}

 The trickiest part is to decide if the current monomial $\reduce q {G_1}$ is a $\zR$-linear combination of $\reduce {b_j} {G_1}$ with $b_j$ being the current elements in $B(G_2)$. Provided this can be done correctly and in polynomial time, the correctness of Algorithm~\ref{alg:FGLM} follows by the analyses in \cite{FGLM} and it runs in polynomial time as there are at most $O(n^d)$ monomials in $Q$. The rest of this section is devoted to provide a polynomial time procedure that correctly decides if $\reduce q {G_1} = \sum\limits_{j}k_j \reduce {b_j} {G_1}$ holds for the current monomial $q$ and the current $b_j$s in $B(G_2)$.

\subsection{Expansion in a basis of $p$-expressions}
 For a monomial $q$, the normal form of $q$ by $G_1$, $\reduce q {G_1}$, is the remainder on division of $q$ by $G_1$ in the \lex order. $\reduce q {G_1}$ is unique and it does not matter how the elements of $G_1$ are listed when using the division algorithm. Here, it suffices for us to write $\reduce q {G_1}$ in terms of product of $p$-expressions.
 We start with a simple observation. Recall that $r$ denotes the number of linear equations in the reduced row echelon form.

\begin{observation}\label{reduced-to-G1}
    Let $q = x_1^{\al_1}\cdots x_n^{\al_n}$ be a monomial such that for all $r< i$ we have $\al_i\leq p-1$. Then, $ q\vert_{G_1}= f_1^{\al_1}\cdots f_n^{\al_n}$ where each $f_{i}$ is the $p$-expression associated to $x_i$. 
\end{observation}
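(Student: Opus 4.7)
The plan is to trace through the division algorithm on $q$ with respect to $G_1$ under the \lex order $x_1 \succ_\lex \cdots \succ_\lex x_n$, and exhibit $f_1^{\al_1}\cdots f_n^{\al_n}$ as the resulting remainder. A preliminary observation is that, because the underlying linear system is in reduced row echelon form with pivots $x_1,\dots,x_r$, each $p$-expression $f_i$ for $i \le r$ is supported only on the free variables $x_{r+1},\dots,x_n$, while $f_i = x_i$ for $i > r$ by the convention adopted just before the observation.

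First I would peel off the pivot variables one at a time using the linear generators $x_i - f_i \in G_1$. Reducing $q$ by $x_1 - f_1$ once per power of $x_1$ substitutes $f_1$ for $x_1$ and yields $f_1^{\al_1}\cdot x_2^{\al_2}\cdots x_n^{\al_n}$. Because $f_1$ is free of $x_2,\dots,x_r$, this reduction does not reintroduce any already-eliminated variable; the step for $x_2$ via $x_2 - f_2$ is then independent, and iterating through $i = 2,\dots,r$ produces $f_1^{\al_1}\cdots f_r^{\al_r}\cdot x_{r+1}^{\al_{r+1}}\cdots x_n^{\al_n}$, which coincides with $f_1^{\al_1}\cdots f_n^{\al_n}$ by our convention for $i>r$. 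Each step is a legitimate application of the division algorithm because the leading terms of $x_i - f_i$ in the \lex order are exactly the $x_i$ we are cancelling.

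Second, I would verify that no further division step applies, so the expression we obtained is the unique remainder granted by Proposition~\ref{prop:division-GB}. The pivot leading terms $x_1,\dots,x_r$ have vanished from the polynomial, so the only potentially applicable divisors left are the domain polynomials $\prod_{j \in \zZ_p}(x_i - j)$ for $i > r$, whose leading terms are $x_i^p$. The hypothesis $\al_i \le p-1$ for $i > r$ is exactly what keeps the $x_i^{\al_i}$ factor below the $x_i^p$ threshold. The main obstacle I anticipate is the bookkeeping required to rule out higher powers of any single $x_i$ that could arise when expanding $f_1^{\al_1}\cdots f_r^{\al_r}$; naively multiplying many linear forms might push the individual-variable degree above $p-1$. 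I would short-circuit this by leveraging the uniqueness clause of Proposition~\ref{prop:division-GB}: once we have the congruence $q \equiv f_1^{\al_1}\cdots f_n^{\al_n} \pmod{\I(\cP)}$ from the substitution step, and once the reduced-form condition on the claimed remainder is confirmed (either by direct degree tracking using the linearity of each $f_i$ in the free variables, or by noting that $G_1$ is a \GB\ so $\reduce{q}{G_1}$ must already equal the expression we produced), uniqueness of normal forms forces $\reduce{q}{G_1} = f_1^{\al_1}\cdots f_n^{\al_n}$.
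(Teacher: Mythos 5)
Your first half is fine: because the system is in reduced row echelon form, each $f_i$ with $i\le r$ involves only the non-pivot variables $x_{r+1},\dots,x_n$, so eliminating the pivots one by one by dividing with $x_i-f_i$ is a legitimate sequence of \lex division steps. The gap is exactly at the obstacle you flagged, and neither of your two escape routes closes it. Recall that $G_1$ is only an \emph{implicit} form of the \GB: the actual generator is $x_i-M(f_i)$, where $M(f_i)$ is the real polynomial interpolating the mod-$p$ function $f_i$, and $M(f_i)$ is \emph{not} linear over $\zR$ (this is precisely the point of the paper's $p=2$ discussion preceding this section). So ``direct degree tracking using the linearity of each $f_i$'' is based on a false premise: after substitution the polynomial $M(f_1)^{\al_1}\cdots M(f_r)^{\al_r}x_{r+1}^{\al_{r+1}}\cdots x_n^{\al_n}$ can contain a free variable to a power $\ge p$ (already $M(f_1)^2$ does for $p=3$ and $f_1=x_{r+1}\oplus 1$), hence it is not reduced with respect to the domain polynomials; the hypothesis $\al_i\le p-1$ for $i>r$ only controls the explicit tail $x_{r+1}^{\al_{r+1}}\cdots x_n^{\al_n}$. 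If instead you read $f_i$ as a genuine real linear polynomial, then $x_i-f_i$ does not even lie in $\I(\cP)$ (the equation holds only modulo $p$), and your congruence $q\equiv f_1^{\al_1}\cdots f_n^{\al_n}$ fails. Your second route, ``$G_1$ is a \GB\ so $\reduce{q}{G_1}$ must already equal the expression we produced,'' is circular: uniqueness of the remainder can be invoked only once you know no term of the candidate is divisible by a leading term of $G_1$, which is the very claim in question.

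What is actually meant (and what the later use in Theorem~\ref{the:correctness-linear} relies on) is an identity of functions, not of the product of interpolants as a polynomial: $\reduce{q}{G_1}$ is, by definition of the normal form, the unique polynomial with no term divisible by $x_1,\dots,x_r$ or $x_{r+1}^{p},\dots,x_n^{p}$ --- hence a polynomial in the free variables of degree at most $p-1$ in each --- congruent to $q$ modulo $\I(\cP)$. Every point of $\mb{V}(\I(\cP))$ arises from an arbitrary assignment of the free variables in $\zZ_p$ by setting $x_i=f_i(x_{r+1},\dots,x_n)$ for $i\le r$, and at such a point $q$ evaluates to the real product of the values $f_1^{\al_1}\cdots f_n^{\al_n}$. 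Therefore $\reduce{q}{G_1}$ agrees with that product of $p$-expressions at every assignment of the free variables and, having per-variable degree at most $p-1$, is its canonical representative; equivalently, after your substitution step one further reduces by the free-variable domain polynomials, which changes the polynomial but not its values on $\zZ_p^{n-r}$. This functional reading is the missing idea; note the paper itself states the observation without proof, so the comparison here is with this intended argument rather than with a written one.
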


A keystone of our conversion algorithm is a relation between a product of $p$-expressions and a sum of $p$-expressions. Intuitively speaking, we will prove that a product of $p$-expressions can be written as a $\zR$-linear combination of (linearly independent) $p$-expressions. Indeed, we provide a set of $p$-expressions and prove the $p$-expressions in this set are linearly independent and span the space of functions from $\zZ^d_p$ to $\zC$. Consider the set $\cV_{n,p}$ of functions from $\zZ^n_p$ to $\zC$ as a $p^n$-dimensional vector space, whose components are values of the function at the corresponding point of $\zZ^n_p$. Let
\[
    F_n=\left\{\bigoplus_{i=1}^n\al_ix_i\oplus x_{n+1}\oplus \beta\mid \al_i\in\{0\zd p-1\}, 
    i\in[n], \beta\in\{0\zd p-2\}\right\}
\]
be a collection of linear functions over $\zZ_p$, and let 
\[
    \cF_n=F_n\cup\dots\cup F_0\cup\{1\}. 
\]

\begin{theorem}\label{the:lin-independent-MainBody}
For any $n$, the collection $\cF_n$ of $p$-expressions is linearly independent as a set of vectors from $\cV_{n+1,p}$ and forms a basis of $\cV_{n+1,p}$.
\end{theorem}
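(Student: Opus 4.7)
The plan is to pass to the Fourier basis of additive characters on $\zZ_p^{n+1}$. Let $\omega = e^{2\pi i/p}$ and $\chi_{\mathbf r}(\mathbf x) = \omega^{\mathbf r\cdot\mathbf x}$ for $\mathbf r \in \zZ_p^{n+1}$; these $p^{n+1}$ characters form an orthogonal basis of $\cV_{n+1,p}$. As a preliminary sanity check, $|\cF_n| = 1+(p-1)\sum_{k=0}^n p^k = p^{n+1} = \dim\cV_{n+1,p}$, so it will suffice to establish linear independence.

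Using the standard Fourier expansion of the lift $h(y)=y\bmod p$ of $\zZ_p\hookrightarrow\zC$, namely $h(y) = \tfrac{p-1}{2} + \sum_{m=1}^{p-1}(\omega^{-m}-1)^{-1}\omega^{my}$, every $f = \alpha_1 x_1\oplus\cdots\oplus\alpha_k x_k\oplus x_{k+1}\oplus\beta \in F_k$ expands as
\[
f \;=\; \tfrac{p-1}{2}\cdot 1 \;+\; \sum_{m=1}^{p-1}\frac{\omega^{m\beta}}{\omega^{-m}-1}\,\chi_{\mathbf r_m(k,\vec\alpha)},\qquad \mathbf r_m(k,\vec\alpha) \;=\; (m\alpha_1,\dots,m\alpha_k,m,0,\dots,0)\in\zZ_p^{n+1}.
\]
The structural fact driving the argument is a bijection at the level of indexing data: every nonzero $\mathbf r\in\zZ_p^{n+1}$ has a unique ``pivot index'' $k+1 := \max\{i : r_i\neq 0\}$, after which it equals $\mathbf r_m(k,\vec\alpha)$ for the uniquely determined $m := r_{k+1}\in\{1,\dots,p-1\}$ and $\alpha_i := r_i m^{-1}\in\zZ_p$. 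Hence each non-trivial character $\chi_{\mathbf r}$ is touched only by functions from one block $\{f_{k,\vec\alpha,\beta} : \beta\in\{0,\dots,p-2\}\}\subset F_k$, and running $m$ from $1$ to $p-1$ sweeps out exactly the $p-1$ characters associated to that block.

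Given a putative relation $\sum_f \lambda_f f + \mu\cdot 1 = 0$, extracting the coefficient of $\chi_{\mathbf r_m(k,\vec\alpha)}$ for $m=1,\dots,p-1$ (with $k,\vec\alpha$ fixed) yields, after cancelling the nonzero factor $(\omega^{-m}-1)^{-1}$, the square system $\sum_{\beta=0}^{p-2}\lambda_{f_{k,\vec\alpha,\beta}}\omega^{m\beta}=0$ whose coefficient matrix is the $(p-1)\times(p-1)$ Vandermonde matrix on the distinct nodes $\omega,\omega^2,\dots,\omega^{p-1}$, hence invertible. So every $\lambda_{f_{k,\vec\alpha,\beta}}$ vanishes. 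Running this over all $(k,\vec\alpha)$ with $k\in\{0,\dots,n\}$ kills every $\lambda_f$, and finally the trivial character $\chi_{\mathbf 0}=1$ forces $\mu=0$.

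The main obstacle I anticipate is not any technical depth but the bookkeeping behind the bijection: one has to verify that the two normalizations in the definition of $F_k$ --- the coefficient $1$ on $x_{k+1}$ and the restriction $\beta\in\{0,\dots,p-2\}$ --- are exactly what is needed to (i) attach each nonzero $\mathbf r$ to a unique $(k,\vec\alpha)$-block and (ii) produce a \emph{square} Vandermonde system in each block, thereby decoupling the relation completely. Once this is in place, the Fourier computation and the Vandermonde nonsingularity are entirely standard.
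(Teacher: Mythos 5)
Your proof is correct, but it follows a genuinely different and substantially shorter route than the paper's. The paper first reduces the claim, by induction on monomial supports, to the linear independence of the ``full-support'' projections $F'_k$, and then proves that by an explicit spectral analysis: the values of the functions on $(\zZ^*_p)^k\tm\zZ_p$ are packed into the Kronecker-sum matrix $N_k=C_p\boxplus B_p^{\boxplus k}$, which is diagonalized via circulant and block-circulant structure; the eigenvalues factor as $P(\xi)\,Q(\eta_1,\xi)\cdots Q(\eta_k,\xi)$, a cyclotomic-polynomial argument is needed to show $Q(\eta,\xi)\ne 0$ for $\xi\ne1$, and a further rank bookkeeping step for the modified matrices $N''_k$, $N'''_k$, $N'_k$ finishes the proof. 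You instead expand directly in the character basis of the full group $\zZ_p^{n+1}$: your Fourier expansion of the integer lift is right (the coefficient $(\omega^{-m}-1)^{-1}$ follows from $\sum_{y=0}^{p-1}y\,x^y=p/(x-1)$ for $x^p=1$, $x\ne1$); the pivot bijection does decouple the putative relation into blocks indexed by $(k,\vec\alpha)$, since two vectors of the form $(\vec\alpha,1,0,\dots,0)$ with entry $1$ at their pivot are proportional only if equal; and each block then yields the nonsingular $(p-1)\times(p-1)$ Vandermonde system on the distinct nodes $\omega,\dots,\omega^{p-1}$, killing all coefficients, after which the trivial character gives $\mu=0$ and the count $|\cF_n|=p^{n+1}$ upgrades independence to a basis. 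Your remark about the two normalizations is exactly the right bookkeeping: the unit coefficient on $x_{k+1}$ makes the block assignment injective, and restricting $\beta$ to $\{0,\dots,p-2\}$ keeps the block square (with $\beta$ ranging over all of $\zZ_p$ the block would sum to the constant $p(p-1)/2$, an actual dependency). What your approach buys is the elimination of the cyclotomic nonvanishing lemma and all eigenvalue computations---the paper's circulant diagonalization is Fourier analysis in disguise, and by working over the whole group rather than the restricted domain you never meet the Gauss-sum-like quantities $Q(\eta,\xi)$; what the paper's computation provides beyond the theorem is explicit spectral information about the value matrices, which your argument does not yield but which is not used elsewhere.
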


As a first application of \Cref{the:lin-independent-MainBody} we show that any $p$-expression can be written as a $\zR$-linear combination of $p$-expressions in our basis $\cF_n$.

\begin{lemma}
\label{lem:p-exp-sum-to-basis}
     Any $p$-expression $\al_1x_1\oplus \al_2x_2\oplus \cdots \oplus \al_nx_n\oplus \beta$ with $\al_n\neq 0$ can be represented by a $\zR$-linear combination of the $p$-expression basis from $\cF_{n-1}$.
\end{lemma}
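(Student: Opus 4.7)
The plan is to deduce the lemma directly from Theorem~\ref{the:lin-independent-MainBody}, which has already established that $\cF_{n-1}$ is a basis of the complex vector space $\cV_{n,p}$ of all functions $\zZ_p^n\to\zC$. First I will view the given $p$-expression $g=\al_1x_1\oplus\cdots\oplus\al_nx_n\oplus\beta$ as a function $\zZ_p^n\to\zZ_p$ and then, via the natural embedding $\{0,1,\dots,p-1\}\hookrightarrow\zC$, as an element of $\cV_{n,p}$. Applying the basis theorem (with the parameter shifted so that $\cF_{n-1}$ plays the role of a basis of $\cV_{n,p}$) immediately yields a unique expansion $g=\sum_i\gm_ih_i$ with $h_i\in\cF_{n-1}$ and, a priori, $\gm_i\in\zC$.

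The remaining point is to upgrade these coefficients to real numbers. I plan to argue this by a dimension count: a quick computation gives $|\cF_{n-1}|=1+\sum_{k=0}^{n-1}p^k(p-1)=p^n$, which is exactly the real dimension of the space of real-valued functions on $\zZ_p^n$. Each element of $\cF_{n-1}$ is real-valued (in fact takes values in $\{0,\dots,p-1\}$), and $\zC$-linear independence of a family of real vectors implies $\zR$-linear independence. Hence $\cF_{n-1}$ is also a basis of the $\zR$-subspace of real-valued functions inside $\cV_{n,p}$, and since $g$ is real-valued its expansion coefficients must lie in $\zR$.

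I do not anticipate a serious obstacle: the heavy lifting has been absorbed into Theorem~\ref{the:lin-independent-MainBody}, and the present lemma is essentially a direct corollary together with the short dimension-counting remark above. As a stylistic note, the hypothesis $\al_n\neq 0$ is not strictly necessary for the conclusion (if $\al_n=0$ then $g$ already lies in the subspace $\cV_{n-1,p}$ and can be expanded using only the sub-basis $\cF_{n-2}\subset\cF_{n-1}$), but stating the lemma in this form matches the regime in which it will be invoked during the \GB{} conversion carried out by Algorithm~\ref{alg:FGLM}, where the leading variable of the $p$-expression being reduced genuinely appears.
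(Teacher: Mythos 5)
Your argument is correct as a proof of the statement as written: $\cF_{n-1}$ has exactly $p^n$ elements, Theorem~\ref{the:lin-independent-MainBody} makes it a $\zC$-basis of $\cV_{n,p}$, and since its members are real-valued, $\zC$-independence gives $\zR$-independence, so by dimension count it is also an $\zR$-basis of the real-valued functions on $\zZ_p^n$; the real-valued $p$-expression therefore has a (unique, real) expansion, and your side remark that $\al_n\neq 0$ is not needed for bare representability is also accurate. However, this is a genuinely different route from the paper's proof, which is an explicit recursive construction: it introduces an auxiliary variable $y$ for the prefix $\al_1x_1\oplus\cdots\oplus\al_{n-1}x_{n-1}$, expands the two-variable expression $y\oplus\al_nx_n\oplus\beta$ in $\cF_1$ (a table of size $p^{O(1)}$), substitutes back, and iterates, yielding the coefficients in time $O(np^{O(1)})$. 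That constructive content is the real point of the lemma in context: the $O(np^{O(1)})$ bound is invoked verbatim in Lemma~\ref{lem:p-exp-multiplication-to-basis} and Theorem~\ref{the:correctness-linear} to keep the Gr\"obner basis conversion of Algorithm~\ref{alg:FGLM} polynomial time, whereas extracting coefficients from your existence argument naively means solving a $p^n\times p^n$ linear system, which is exponential in $n$. So your proof buys brevity and makes clear that the lemma is a direct corollary of the basis theorem, but if it were adopted in the paper one would still have to supply an efficient coefficient-computation procedure (essentially the paper's recursion) for the downstream complexity claims to go through.
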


\begin{proof}
    By \Cref{the:lin-independent-MainBody}, for any $x$ and $y$, the expression $y\oplus \al x\oplus \beta$ can be written as a $\zR$-linear combination of $p$-expressions from $\cF_1$. Such a $\zR$-linear combination can be found in constant time $p^{O(1)}$ as the number of functions in $\cF_1$ is $p^2$. We continue by assuming such a $\zR$-linear combination of any $y\oplus \al x\oplus \beta$ is provided to us.

    Now consider $\al_1x_1\oplus \al_2x_2\oplus \cdots \oplus \al_nx_n\oplus \beta$. Introduce a new variable $y$ and set $y=\al_1x_1\oplus \al_2x_2\oplus \cdots \oplus \al_{n-1}x_{n-1}$. Hence, $\al_1x_1\oplus \al_2x_2\oplus \cdots \oplus \al_nx_n\oplus \beta=y\oplus \al_nx_n\oplus \beta$. 
    By the above discussion, $y\oplus \al_nx_n\oplus \beta$ can be written as a $\zR$-linear combination of $p$-expressions from $\cF_1$. Therefore, there exist $c_{\gm},c_{\al\gm},\kappa\in \zR$ so that
    \begin{align}
        y\oplus \al_nx_n\oplus \beta = \sum\limits_{\gm=0}^{p-2} c_{\gm}(y\oplus \gm) + \sum\limits_{\al\in[p-1],\gm\in[p-2]} c_{\al\gm}(\al y \oplus x_n\oplus \gm)+\kappa
    \end{align}
    Note that $p$-expressions $y\oplus \gm$ are in $F_0$, $p$-expressions $\al y \oplus x_n\oplus \gm$ are in $F_1$, and $\kappa$ is a constant. Substituting back for $y$, we observe that the second sum on the right hand side is already a $\zR$-linear combination of $p$-expressions from $F_{n-1}$. Consider the first sum.
    \begin{align}
    \label{eq:p-2}
         \sum\limits_{\gm=0}^{p-2} c_{\gm}(y\oplus \gm) = \sum\limits_{\gm=0}^{p-2} c_{\gm}(\al_1x_1\oplus \al_2x_2\oplus \cdots \oplus \al_{n-1}x_{n-1}\oplus \gm)
    \end{align}
    Now set $y=\al_1x_1\oplus \al_2x_2\oplus \cdots \oplus \al_{n-2}x_{n-2}$. This gives
    \begin{align}
        \sum\limits_{\gm=0}^{p-2} c_{\gm}(\al_1x_1\oplus \al_2x_2\oplus \cdots \oplus \al_{n-1}x_{n-1}\oplus \gm)= \sum\limits_{\gm=0}^{p-2} c_{\gm}(y \oplus \al_{n-1}x_{n-1}\oplus \gm)
    \end{align}
    Note that each term $y\oplus\al_{n-1}x_{n-1}\oplus \gm$ is expressible as a $\zR$-linear combination of $p$-expressions from $\cF_1$. This leads us to the following. 
    
    \begin{align}
    \nonumber
        \sum\limits_{\gm=0}^{p-2} c_{\gm}(y \oplus \al_{n-1}x_{n-1}\oplus \gm) 
        &=
        \sum\limits_{\gm=0}^{p-2} c_{\gm}\left(
        \sum\limits_{\dl=0}^{p-2} c_{\dl}(y\oplus \dl) + \sum\limits_{\al\in[p-1],\dl\in[p-2]} c_{\al\dl}(\al y \oplus x_{n-1}\oplus \dl)+\kappa'
        \right)\\
    \nonumber    
        &=\sum\limits_{\gm=0}^{p-2} c_{\gm}\left(
        \sum\limits_{\dl=0}^{p-2} c_{\dl}(y\oplus \dl)\right) + \sum\limits_{\gm=0}^{p-2} c_{\gm}\left( \sum\limits_{\substack{\al\in[p-1],\\
        \dl\in[p-2]}} c_{\al\dl}(\al y \oplus x_{n-1}\oplus \dl)\right)+\sum\limits_{\gm=0}^{p-2} c_{\gm}\kappa'\\
    \label{eq:p-2-3}    
        &=\sum\limits_{\gm=0}^{p-2} c'_{\gm}(y\oplus \gm)+\sum\limits_{\substack{\al\in[p-1],\\ \gm\in[p-2]}} c'_{\al\gm}(\al y \oplus x_{n-1}\oplus \gm)+\kappa''
    \end{align}
    Note that $p$-expressions $y\oplus \gm$ are in $F_0$, $p$-expressions $\al y \oplus x_{n-1}\oplus \gm$ are in $F_1$, and $\kappa''$ is a constant. Substituting back for $y$, we observe that the second sum on the right hand side is already a $\zR$-linear combination of $p$-expressions from $F_{n-2}$. Hence, it suffices to continue with the first term of the sum \eqref{eq:p-2-3} which we can handle similar to the above procedure.
    
    All in all, it requires to repeat the above procedure $n$ times where at the $i$-th iteration we deal with a sum of $p-2$ $p$-expressions of form $ \bigoplus\limits_{i=1}^{n-i}\al_ix_i\oplus \beta$. This results in $O(np^{O(1)})$ running time.
\end{proof}

Another application of \Cref{the:lin-independent-MainBody} is transforming a multiplication of $p$-expressions to an equivalent $\zR$-linear combination of the basis in $\cF_n$. Suppose  $x_1,\dots,x_d$ are (not necessary distinct) variables that take values $0,\dots,p-1$. Let $x_1\cdot x_2\cdots x_d$ be their multiplication. In general, we are interested in a multiplication of $p$-expressions however, let us first discuss the simpler case of $x_1\cdot x_2\cdots x_d$. Unfortunately, the trick we used in the proof of \Cref{lem:p-exp-sum-to-basis} is no longer effective here. However, assuming $d$ is a constant makes the situation easier. $x_1\cdot x_2\cdots x_d$ is a $p^d$-dimensional vector. By \Cref{the:lin-independent-MainBody}, the set $\cF_{d-1}$ of $p$-expressions spans the set $\cV_{d,p}$ of functions from $\zZ^d_p$ to $\zC$ as a $p^d$-dimensional vector space. Hence, in constant time (depending on $p$ and $d$), we can have a $\zR$-linear combination of the basis in $\cF_{d-1}$ that represents $x_1\cdot x_2\cdots x_d$. We continue by assuming such a $\zR$-linear combination of any $x_1\cdot x_2\cdots x_d$ is provided to us. The next lemma states that we can have a $\zR$-linear combination of the basis in $\cF_{n}$ for any $h_1\cdot h_2\cdots h_d$ where each $h_i$ is a $p$-expression over variables $x_1,\dots,x_n$.

\begin{lemma}
\label{lem:p-exp-multiplication-to-basis}
    Let $h_1, h_2,\dots, h_d$ be (not necessary distinct) $p$-expressions over variables $x_1,\dots,x_n$. The product $\mathscr{M}=h_1\cdot h_2\cdots h_d$ viewed as a function from $\zZ_p^n$ to $\Complex$ can be represented as a $\zR$-linear combination of the basis in $\cF_{n-1}$.
\end{lemma}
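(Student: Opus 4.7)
The plan is a two-stage reduction. First, introduce fresh indeterminates $y_1, \ldots, y_d$ and invoke the constant-dimensional special case for the product of indeterminates, which is provided immediately before the lemma statement. Then substitute $y_i \mapsto h_i$ and normalize each resulting $p$-expression over $x_1, \ldots, x_n$ into the basis $\cF_{n-1}$ via Lemma~\ref{lem:p-exp-sum-to-basis}.

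More concretely, by the discussion preceding the lemma, in constant time one obtains real scalars $\gm_1, \ldots, \gm_s$ and basis elements $g_1, \ldots, g_s \in \cF_{d-1}$ with
\[
y_1 y_2 \cdots y_d \;=\; \sum_{k=1}^{s} \gm_k\, g_k(y_1, \ldots, y_d)
\]
as an identity of functions $\zZ_p^d \to \zR$. Now substitute $y_i \mapsto h_i(x_1, \ldots, x_n)$. Each $g_k$ is either the constant $1$ or of the form $\bigoplus_{i=1}^{m_k} \al_{ki} y_i \oplus y_{m_k+1} \oplus \beta_k$, and since $\oplus$ and $\odot$ are $\zZ_p$-linear, the substitution turns $g_k$ into a $p$-expression $\tilde g_k$ over $x_1, \ldots, x_n$, of shape $\bigoplus_j \al'_{kj} x_j \oplus \beta'_k$. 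The identity transfers pointwise to $\zZ_p^n$: for any $\mb a \in \zZ_p^n$, set $\mb b = (h_1(\mb a), \ldots, h_d(\mb a)) \in \zZ_p^d$ and apply the identity from the previous step at $\mb b$. Hence $\mathscr{M} = \sum_k \gm_k \tilde g_k$ as functions $\zZ_p^n \to \zR$.

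Finally, each $\tilde g_k$ is either a constant $c \in \zZ_p$, which equals $c \cdot 1$ with $1 \in \cF_{n-1}$, or a nonconstant $p$-expression; in the latter case, Lemma~\ref{lem:p-exp-sum-to-basis} (applied after relabelling so that the largest-indexed variable carries a nonzero coefficient) rewrites $\tilde g_k$ as an $\zR$-linear combination of elements of $\cF_{n-1}$ in time $O(np^{O(1)})$. Collecting these combinations across all $k$ expresses $\mathscr{M}$ in the basis $\cF_{n-1}$, with total running time polynomial in $n$ (with hidden exponents depending on $d$ and $p$). The one subtle point, which I expect to be the main obstacle to write up carefully, is that the identity in the first step must genuinely pull back through the substitution; this holds because composition preserves equality of functions, and every image $(h_1(\mb a), \ldots, h_d(\mb a))$ lies in the domain $\zZ_p^d$ where the identity holds pointwise.
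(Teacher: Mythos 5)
Your proof is correct and follows essentially the same route as the paper: the paper likewise first expands the product of the $h_i$, treated as abstract indeterminates, in the basis $\cF_{d-1}$ (its $\mc{H}_{d-1}$, justified by Theorem~\ref{the:lin-independent-MainBody} in constant time), then substitutes the $h_i$ back so that each term becomes a $p$-expression in $x_1,\dots,x_n$, and normalizes it via Lemma~\ref{lem:p-exp-sum-to-basis}, with the same $O(np^{O(d)})$ bookkeeping. One small nitpick: rather than ``relabelling'' variables (a permutation would take you outside the fixed basis $\cF_{n-1}$), in the degenerate case simply apply Lemma~\ref{lem:p-exp-sum-to-basis} with $n$ replaced by the largest index $m$ carrying a nonzero coefficient, which yields a combination in $\cF_{m-1}\subseteq\cF_{n-1}$ (a case the paper's own write-up does not even flag).
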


\begin{proof}
    Let us treat $h_i$s as indeterminates. Define 
    \[
        H_t=\left\{\bigoplus_{i=1}^t\al_ih_i\oplus h_{t+1}\oplus \beta\mid \al_i\in\{0\zd p-1\}, 
        i\in[t], \beta\in\{0\zd p-2\}\right\}
    \]
    to be a collection of linear functions over $\zZ_p$, and let 
    \[
    \mc{H}_{d-1}=H_{d-1}\cup\dots\cup H_0\cup\{1\}. 
    \]
     By \Cref{the:lin-independent-MainBody} and the above discussion, $\mathscr{M}$ can be written as a $\zR$-linear combination of functions in $\mc{H}_{d-1}$. Therefore, there are coefficients $c_{\al_1\dots\al_t\beta}\in\zR$ and constant $\kappa\in\zR$ so that  
     \begin{align}
     \label{eq:h_i-basis}
        \mathscr{M}=\sum\limits_{t=0}^{d-1}\sum\limits_{\substack{\al_i\in[p-1],\\\beta\in[p-2]}} c_{\al_1\dots\al_t\beta}\left(\bigoplus_{i=1}^t\al_ih_i\oplus h_{t+1}\oplus \beta\right) +\kappa
     \end{align}
     Recall that each $h_i$ is a $p$-expressions over variables $x_1,\dots,x_n$. By substituting back for each $h_i$ and rearranging terms, each $p$-expression $\bigoplus_{i=1}^t\al_ih_i\oplus h_{t+1}\oplus \beta$ is equivalent to $\al'_1x_1\oplus \al'_2x_2\oplus \cdots \oplus \al'_nx_n\oplus \beta'$ for some $\al'_1,\dots,\al'_n,\beta'\in [p-1]$. By \Cref{lem:p-exp-sum-to-basis}, such expression can be written as a $\zR$-linear combination of basis in $\cF_n$. 
     
     Note that number of $p$-expressions in \eqref{eq:h_i-basis} is at most $p^{d+1}$. By \Cref{lem:p-exp-sum-to-basis}, each $p$-expression can be written as a $\zR$-linear combination of basis in $\cF_n$ in time $O(np^{O(1)})$. Therefore, in time $O(np^{O(d)})$ we can write $\mathscr{M}=h_1\cdot h_2\cdots h_d$ as a $\zR$-linear combination of basis in $\cF_n$ which is polynomial in $n$ for fixed $p$ and $d$.
\end{proof}

\subsection{The correctness of the conversion algorithm}

Now we have enough ingredients to prove Algorithm~\ref{alg:FGLM} runs in polynomial time and correctly decides if $\reduce q {G_1} = \sum\limits_{j}k_j \reduce {b_j} {G_1}$ for every $q$.  In the following theorem, suppose $q$ is the current monomial considered by the algorithm. Furthermore, suppose sets $G_2$ and $B(G_2)$ have been constructed correctly so far.

\begin{theorem}\label{the:correctness-linear}
    Let $q = x_1^{\al_1}\cdots x_n^{\al_n}$ be a monomial of degree at most $d$. Suppose $q$ is not divisible by any leading monomial of polynomials in the current set $G_2$. Then, there exists a polynomial time algorithm that can decide whether $\reduce q {G_1} = \sum\limits_{j}k_j \reduce {b_j} {G_1}$ where $b_j$ are in the current set $B(G_2)$ in Algorithm~\ref{alg:FGLM}.
\end{theorem}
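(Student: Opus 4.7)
The plan is to reduce the linear-dependence check to a linear algebra problem in a polynomial-sized piece of the $p$-expression basis $\cF_{n-r-1}$ from Theorem~\ref{the:lin-independent-MainBody}. Since $\I(\cP)$ is radical and its variety $\Variety{\I(\cP)}$ is contained in $D^n=\zZ_p^n$, two polynomials are congruent modulo $\I(\cP)$ if and only if they agree as functions on $\Variety{\I(\cP)}$. I parameterize $\Variety{\I(\cP)}$ by the free variables $x_{r+1},\dots,x_n$, using the $p$-expression $f_i$ to determine $x_i$ for each $i\le r$; this gives a bijection $\Variety{\I(\cP)}\cong D^{n-r}$ and identifies $\Real[\vc xn]/\I(\cP)$ with the space $\cV_{n-r,p}$ of functions $\zZ_p^{n-r}\to\zC$.

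Under this identification, $\reduce{q}{G_1}$ corresponds to the function $f_1^{\al_1}\cdots f_r^{\al_r}\cdot x_{r+1}^{\al_{r+1}}\cdots x_n^{\al_n}$ on $D^{n-r}$, i.e., a $\Real$-product of $\deg(q)\le d$ $p$-expressions in $x_{r+1},\dots,x_n$ (treating each $x_i$ with $i>r$ as a trivial $p$-expression; compare Observation~\ref{reduced-to-G1}). The same holds for each $b_j\in B(G_2)$, whose degrees are also bounded by $d$. I apply Lemma~\ref{lem:p-exp-multiplication-to-basis} to each such product to obtain, in time $O(np^{O(d)})$, an expansion as a $\Real$-linear combination of at most $p^{O(d)}$ elements of $\cF_{n-r-1}$.

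Let $\mathcal{B}$ be the union, over $q$ and over all $b_j\in B(G_2)$, of the basis elements that appear with nonzero coefficient in these expansions. Since $|B(G_2)|\le|Q|=O(n^d)$ and each expansion has at most $p^{O(d)}$ terms, $|\mathcal{B}|$ is polynomial in $n$ for fixed $p$ and $d$. I form the matrix whose columns are the coefficient vectors of the $\reduce{b_j}{G_1}$ indexed by the elements of $\mathcal{B}$, and check whether the coefficient vector of $\reduce{q}{G_1}$ lies in its column span. By Theorem~\ref{the:lin-independent-MainBody}, $\cF_{n-r-1}$ is linearly independent in $\cV_{n-r,p}$, so a linear relation among these functions is equivalent to the same linear relation among their coefficient vectors; hence this is equivalent to the original question. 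Gaussian elimination on the polynomial-sized matrix decides this in polynomial time, and when the answer is positive it also outputs the coefficients $k_j$.

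The main obstacle, already dispatched by Lemma~\ref{lem:p-exp-multiplication-to-basis}, is producing a \emph{polynomial-size} expansion of a product of up to $d$ $p$-expressions in the basis $\cF_{n-r-1}$; without this, the coefficient vectors would a priori live in a space of exponential dimension $p^{n-r}$, and no polynomial-time linear algebra would apply. Everything else is standard bookkeeping and Gaussian elimination.
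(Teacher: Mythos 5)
Your proposal is correct and follows essentially the same route as the paper: write $\reduce q{G_1}$ and the $\reduce {b_j}{G_1}$ as products of at most $d$ $p$-expressions, expand them via Lemma~\ref{lem:p-exp-multiplication-to-basis} in the linearly independent family of Theorem~\ref{the:lin-independent-MainBody}, and decide span membership by Gaussian elimination on a polynomial-size linear system in the $k_j$. The only divergence is that the paper first disposes separately of monomials with some exponent $\al_i\ge p$ on a free variable (reducing by the domain polynomial and noting the resulting lower-degree monomials already lie in $B(G_2)$), whereas you absorb that case uniformly by viewing normal forms as functions on the parameterized variety $\cong\zZ_p^{n-r}$ and allowing repeated trivial $p$-expressions $x_i$ as factors, which is legitimate since the lemma permits non-distinct factors and normal forms are determined by their values on $\zZ_p^{n-r}$.
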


\begin{proof}
First, we discuss the case where for some $r< i$ we have $p-1<\al_i$. Set $q=q'\cdot x_i^{\al_i}$ where $q'=x_1^{\al_1}\cdots x_{i-1}^{\al_{i-1}}\cdot x_{i+1}^{\al_{i+1}}\cdots x_{n}^{\al_n}$. Then $\reduce q {G_1}=\reduce {q'} {G_1}\cdot \reduce {x_i^{\al_i}} {G_1}$. Note that $\reduce {x_i^{\al_i}} {G_1}$ is a linear combination of $x_i^{p-1},x_i^{p-2},\dots,x_i$. This is because the domain polynomial $\prod_{a\in\zZ_p}(x_{i}-a)$ is in $G_1$. Therefore,
\begin{align}
    \nonumber
    \reduce q {G_1}
    &=\reduce {q'} {G_1}\cdot \reduce {x_i^{\al_i}} {G_1}\\\nonumber
    &=\reduce {q'} {G_1}\cdot (c_{p-1}x_i^{p-1}+\dots+c_1x_i)\\
    &=\reduce {(q'\cdot c_{p-1}x_i^{p-1})} {G_1}+\dots+\reduce {(q'\cdot c_{1}x_i)} {G_1}\label{eq:q'-reduce-G1}
\end{align}
All of $q'\cdot x_i^{j}$ in \eqref{eq:q'-reduce-G1} have degree less than $q$, and hence they have been considered by Algorithm~\ref{alg:FGLM} before reaching $q$. Now, none of $q'\cdot x_i^{j}$ can be a multiple of the leading monomial of a polynomial in $G_2$ as otherwise $q$ divides the leading monomial of a polynomial in $G_2$. This implies that all $q'\cdot x_i^{j}$ with $c_j\neq 0$ in \eqref{eq:q'-reduce-G1} are in $B(G_2)$ and we have $\reduce q {G_1}=\sum\limits_{j}k_j \reduce {b_j} {G_1}$.

We continue by assuming for all $r< i$ we have $\al_i\leq p-1$. Note that if $\reduce q {G_1} = \sum\limits_{j}k_j \reduce {b_j} {G_1}$, then $\reduce q {G_1} - \sum\limits_{j}k_j \reduce {b_j} {G_1} = 0$ and hence $q - \sum\limits_{j}k_j  {b_j} \in \I(\cP)$. We proceed by checking, in a systematic way, if there exist coefficients $k_j$ such that $\reduce q {G_1} - \sum\limits_{j}k_j  \reduce {b_j} {G_1}=0$ holds. We will construct a system of linear equations over $\zR$ for coefficients $k_j$ so that this system has a solution if and only if $\reduce q {G_1} = \sum\limits_{j}k_j  \reduce {b_j} {G_1}$. 

By \Cref{reduced-to-G1}, we have $q\vert_{G_1}= f_1^{\al_1}\cdots f_n^{\al_n}$ where each $f_{i}$ is the  $p$-expression associated to $x_i$. Similarly, for each $b_j$ we have $\reduce {b_j} {G_1}= \mathscr{M}_j$ where $\mathscr{M}_j= h_{j1} \cdot h_{j2} \cdots h_{jd}$ is a multiplication of at most $d$ (not necessary distinct) $p$-expressions. 
    \begin{align}
    \label{linear-in-k}
       f_1^{\al_1}\cdots f_n^{\al_n} = \sum\limits_{j}k_j \reduce {b_j} {G_1} = \sum\limits_{j}k_j \mathscr{M}_j
    \end{align}

Recall that degree of $q$ is at most $d$ and, by \Cref{lem:p-exp-multiplication-to-basis}, $q$ can be written as a $\zR$-linear combination of the basis in $\cF_{n-1}$, say $\mathscr{L}_q$. Similarly, by \Cref{lem:p-exp-multiplication-to-basis}, each product $\mathscr{M}_j=h_{j1}\cdot h_{j2}\cdots h_{jd}$ can be written as a $\zR$-linear combination of the basis in $\cF_{n-1}$ in polynomial time. Therefore, \eqref{linear-in-k} is equivalent to 
     \begin{align}
        \label{linear-in-k3}
       f_1^{\al_1}\cdots f_n^{\al_n}=\mathscr{L}_q = \sum\limits_{j}k_j \reduce {b_j} {G_1} = \sum\limits_{j}k_j \mathscr{M}_j=\sum\limits_{j}k_j \mathscr{L}_j  
    \end{align}
     where each $\mathscr{L}_j$ is a $\zR$-linear combination equivalent to $\mathscr{M}_j$ via $p$-expression basis in $\cF_n$. Rearranging terms in $\sum\limits_{j}k_j \mathscr{L}_j$ and $\mathscr{L}_q$ yields
     \begin{align}
    \label{linear-in-k4}
       0 = \sum\limits_{j}k_j \mathscr{L}_j - \mathscr{L}_q=\sum\limits_{j}k'_j \mathscr{L}'_j 
    \end{align}
    where each $k'_j$ is linear combination of $k_j$s. Since $\cF_{n-1}$ is linearly independent and the left hand side of \eqref{linear-in-k4} is a constant we deduce that all $k'_j$ should be zero. Hence, we are left with (possibly more than one) linear equations with respect to $k_j$ over $\zR$. Note that at this point there is not any term with a $p$-expression. If such a system has a (unique) solution for $k_j$ then we conclude $\reduce q {G_1} = \sum\limits_{j}k_j \reduce {b_j} {G_1}$, else the equality does not hold. 
    
    Note that, by \Cref{lem:p-exp-multiplication-to-basis}, time complexity of finding a $\zR$-linear combination of basis in $\cF_{n-1}$ for a multiplication of $d$ $p$-expressions is $O(np^{O(d)})$. Moreover, we use \Cref{lem:p-exp-multiplication-to-basis} at most $O(n^{O(d)})$ many times. Hence, the whole process requires $O(n^{O(d)})$ time complexity.
\end{proof}

\begin{theorem}
\label{thm:lin-tractability}
 Let $\Gm$ be a constraint language where each relation in $\Gm$ is expressed as a system of linear equations modulo a prime number $p$. Then, the $\IMP_d(\Gm)$ can be solved in polynomial time for fixed $d$ and $p$.
\end{theorem}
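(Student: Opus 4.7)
The plan is to assemble the machinery developed throughout Section~\ref{sec:lin-mod-p} into a single decision procedure. Given an input $(f_0,\cP)$ with $\deg f_0\le d$, I would first translate $\cP$ into a system of linear equations over $\GF(p)$ by collecting the linear equations coming from each constraint, and apply Gaussian elimination. If the system is infeasible, then $1\in\I(\cP)$ and we accept immediately; otherwise, the reduced row echelon form yields the implicit description of the set $G_1$ from \eqref{eq:G1}. This $G_1$ is the reduced Gröbner basis of $\I(\cP)$ in the lex order $x_1\succ_\lex\cdots\succ_\lex x_n$, even though it cannot be written out explicitly since the real interpolations $M(f_i)$ of each $p$-expression $f_i$ may have exponentially many monomials.

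Next, I would invoke Algorithm~\ref{alg:FGLM} with target degree $d$ to convert $G_1$ into a $d$-truncated Gröbner basis $G_2$ with respect to the grlex order. The algorithm iterates over the $O(n^d)$ monomials of degree at most $d$ and at each step must classify the current monomial $q$ as discardable, as the leading monomial of a new element of $G_2$, or as a new basis element of $B(G_2)$. The nontrivial test---whether $\reduce{q}{G_1}$ lies in the $\zR$-span of $\{\reduce{b_j}{G_1}\mid b_j\in B(G_2)\}$---is exactly what Theorem~\ref{the:correctness-linear} decides in time $n^{O(d)}$. Correctness of the FGLM framework~\cite{FGLM} then ensures that the output $G_2$ is indeed a $d$-truncated grlex Gröbner basis of $\I(\cP)$.

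The final step is to divide $f_0$ by $G_2$ in the grlex order. Since $\deg f_0 \le d$, the division stays within degree $d$, so by Proposition~\ref{prop:division-GB} together with the $d$-truncation property we obtain $f_0\in\I(\cP)$ if and only if the remainder vanishes. Everything runs in $n^{O(d)}p^{O(d)}$ time, which is polynomial in $n$ for fixed $d$ and $p$.

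The main obstacle---and where all the technical content concentrates---is the linear-dependence test inside Algorithm~\ref{alg:FGLM}. The difficulty is that both $\reduce{q}{G_1}$ and the $\reduce{b_j}{G_1}$ are products of $p$-expressions which we cannot afford to expand as polynomials in $\zR[x_1\zd x_n]$. This is overcome by the linear-independence result of Theorem~\ref{the:lin-independent-MainBody} together with Lemmas~\ref{lem:p-exp-sum-to-basis} and~\ref{lem:p-exp-multiplication-to-basis}: each such product admits a compact representation as coordinates in the basis $\cF_{n-1}$, and the identity $\reduce{q}{G_1}=\sum_j k_j\reduce{b_j}{G_1}$ collapses---via linear independence of $\cF_{n-1}$---to a concise system of linear equations in the unknown $k_j$'s over $\zR$, which can be solved in polynomial time.
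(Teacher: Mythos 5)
Your proposal is correct and follows essentially the same route as the paper: Gaussian elimination to rule out $1\in\I(\cP)$, the implicit lex basis $G_1$ of \eqref{eq:G1}, conversion to a $d$-truncated grlex basis via Algorithm~\ref{alg:FGLM} whose span test is justified by Theorem~\ref{the:correctness-linear} (resting on Theorem~\ref{the:lin-independent-MainBody} and Lemmas~\ref{lem:p-exp-sum-to-basis}, \ref{lem:p-exp-multiplication-to-basis}), and finally division of $f_0$ by the truncated basis. This is exactly how the paper assembles the result, so no further comparison is needed.
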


\section{Proof of Theorem~\ref{the:lin-independent-MainBody}}

In this section we give a proof of Theorem~\ref{the:lin-independent-MainBody}. Also, it turns out in the case $p=3$ there is another basis of  $\cV_{n,p}$ that has a particularly clear structure. We give a proof of this result in Section~\ref{sec:3-independence}.

\subsection{Linear independence}

Recall that we consider the set $\cV_{k,p}$ of functions from $\zZ^k_p$ to $\zC$ as a $p^k$-dimensional vector space, whose components are values of the function at the corresponding point of $\zZ^k_p$. Let
\[
    F_k=\left\{\bigoplus_{i=1}^k\al_ix_i\oplus x_{k+1}\oplus \beta\mid \al_i\in\{0\zd p-1\}, 
    i\in[n], \beta\in\{0\zd p-2\}\right\}
\]
be a collection of linear functions over $\zZ_p$, and let 
\[
    \cF_k=F_k\cup\dots\cup F_0\cup\{1\}. 
\]

The result we are proving in this section is

\begin{theorem}\label{the:lin-independent}
For any $k$, the collection $\cF_k$ of $p$-expressions is linearly independent as a set of vectors from $\cV_{k+1,p}$ and forms a basis of $\cV_{k+1,p}$.
\end{theorem}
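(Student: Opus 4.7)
The plan is to prove Theorem~\ref{the:lin-independent} by establishing that $\cF_k$ spans $\cV_{k+1,p}$ as a $\zC$-vector space. A direct count gives $|\cF_k|=1+\sum_{j=0}^{k}p^j(p-1)=p^{k+1}=\dim_\zC\cV_{k+1,p}$, so spanning is equivalent to linear independence and to being a basis, and I would reduce the whole problem to showing spanning.

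For this I would work with the character basis of $\cV_{k+1,p}$. Let $\omega=e^{2\pi i/p}$ and $\chi_{\bar b}(x)=\omega^{\bar b\cdot x}$ for $\bar b\in\zZ_p^{k+1}$, where $\bar b\cdot x=\sum_{i=1}^{k+1}b_ix_i$. Standard Fourier analysis on the abelian group $\zZ_p^{k+1}$ says these characters form a basis of $\cV_{k+1,p}$, so it suffices to show each $\chi_{\bar b}\in\spann(\cF_k)$. The case $\bar b=\vec 0$ is immediate since $\chi_{\vec 0}=1\in\cF_k$.

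The key ingredient is the Fourier expansion of the value function $\phi:\zZ_p\to\zC$, $\phi(y)=y$ (identifying $\zZ_p$ with $\{0,1,\dots,p-1\}$). Summing the geometric series $\sum_y yz^y$ at $z=\omega^{-a}$ yields $\phi(y)=\frac{p-1}{2}+\sum_{a=1}^{p-1}c_a\omega^{ay}$ with $c_a=1/(\omega^{-a}-1)$. Substituting $y=\bar\al\cdot x\oplus\beta$ and using $\omega^p=1$ gives
\[
\bar\al\cdot x\oplus\beta=\tfrac{p-1}{2}+\sum_{a=1}^{p-1}c_a\,\omega^{a\beta}\,\omega^{a\bar\al\cdot x}
\]
for every $\bar\al\in\zZ_p^{k+1}$. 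Fourier inversion in $\beta$ --- multiply by $\omega^{-a_0\beta}/p$ and sum over $\beta\in\zZ_p$, using $\sum_\beta\omega^{(a-a_0)\beta}=p\cdot\mathbf 1[a=a_0]$ --- then produces, for any $a_0\in\zZ_p\setminus\{0\}$,
\[
\omega^{a_0\bar\al\cdot x}=\frac{1}{p\,c_{a_0}}\sum_{\beta\in\zZ_p}\omega^{-a_0\beta}\,(\bar\al\cdot x\oplus\beta).
\]

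Finally, for a nonzero $\bar b$, let $j+1=\max\{i:b_i\ne 0\}$, set $a_0=b_{j+1}$, and $\bar\al=(a_0^{-1}b_1,\dots,a_0^{-1}b_j,1)\in\zZ_p^{j+1}$, so that $\chi_{\bar b}=\omega^{a_0\bar\al\cdot x}$. The inversion formula expresses $\chi_{\bar b}$ as a $\zC$-combination of the $p$ functions $\bar\al\cdot x\oplus\beta$, $\beta\in\zZ_p$. For $\beta\in\{0,\dots,p-2\}$ each such function lies in $F_j\subseteq\cF_k$ by construction. The only subtlety, which I expect to be the main obstacle, is the case $\beta=p-1$, which falls outside the index range in the definition of $F_j$. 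I would handle it via the trivial identity $\sum_{\beta\in\zZ_p}(\bar\al\cdot x\oplus\beta)=\frac{p(p-1)}{2}$, valid because $\beta\mapsto\bar\al\cdot x\oplus\beta$ is a bijection of $\zZ_p$ for each fixed $x$; this yields $\bar\al\cdot x\oplus(p-1)=\frac{p(p-1)}{2}-\sum_{\beta=0}^{p-2}(\bar\al\cdot x\oplus\beta)\in\spann(\cF_k)$ using the constant $1\in\cF_k$. Hence every character lies in $\spann(\cF_k)$, so $\cF_k$ spans $\cV_{k+1,p}$ and is therefore a basis.
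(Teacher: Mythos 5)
Your proposal is correct, and it takes a genuinely different route from the paper. Both arguments begin with the same counting reduction ($|\cF_k|=p^{k+1}=\dim\cV_{k+1,p}$, so spanning suffices), but from there the paper proceeds by induction on the support of monomials, projects onto the subspace spanned by full-support monomials, and proves linear independence of the projected family $F'_k$ by a spectral computation: it realizes the value matrix as a Kronecker sum $C_p\boxplus B_p^{\boxplus k}$ of (block-)circulant matrices, determines all eigenvectors and eigenvalues, rules out extra zero eigenvalues via a cyclotomic-polynomial argument, and then repeats the analysis for the modified matrices $N''_k$ and $N'_k$. You instead prove spanning directly in the character basis: the Fourier expansion of the value map $y\mapsto y$ on $\zZ_p$ (with coefficients $c_a=1/(\omega^{-a}-1)\neq 0$), followed by inversion in the shift parameter $\beta$, expresses each nontrivial character $\chi_{\bar b}$ as an explicit $\zC$-combination of the functions $\bar\al\cdot x\oplus\beta$, $\beta\in\zZ_p$, with the leading variable $x_{j+1}$ having coefficient $1$; and you correctly spot and repair the one subtlety, namely that $\beta=p-1$ lies outside $F_j$, via the constant-sum identity $\sum_{\beta\in\zZ_p}(\bar\al\cdot x\oplus\beta)=\tfrac{p(p-1)}2$ together with $1\in\cF_k$. (Your choices $a_0=b_{j+1}$, $\bar\al=(a_0^{-1}b_1,\dots,a_0^{-1}b_j,1)$ indeed give $\chi_{\bar b}=\omega^{a_0\bar\al\cdot x}$, and a spanning family of size exactly $p^{k+1}$ is automatically a basis, so no separate independence check is needed.) Your route is substantially shorter and yields explicit expansion coefficients for any character, which is also convenient algorithmically in the spirit of Lemmas~\ref{lem:p-exp-sum-to-basis} and~\ref{lem:p-exp-multiplication-to-basis}; the paper's route is heavier but produces, along the way, the full spectrum and rank information for the matrices $N_k$, $N_k(S)$, $N''_k$, which is of independent interest though not used elsewhere in the paper.
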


\subsubsection{Monomials and projections}

As a vector space the set $\cV_{k,p}$ has several natural bases. Let $\Mon(k,p)$ denote the set of all monomials $u=x_1^{d_1} \dots x_k^{d_k}$ where $d_i\in\{0\zd p-1\}$; and let $\cont(u)$ denote the set $\{i\mid d_i\ne0\}$. For $\oa\in\zZ_p^k$ we denote by $r_\oa$ the function given by $r_\oa(\oa)=1$ and $r_\oa(\ov x)=0$ when $\ov x\ne\oa$. We start with a simple observation.

\begin{lemma}\label{lem:2-bases}
The sets $\Mon(k,p)$ and $R(k,p)=\{r_\oa\mid \oa\in\zZ_p^k\}$ are bases of $\cV_{k,p}$.
\end{lemma}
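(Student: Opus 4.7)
Both sets have cardinality exactly $p^k$, which matches $\dim \cV_{k,p} = p^k$, so for each set it suffices to establish either linear independence or spanning; the other property will follow automatically.

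For $R(k,p)$ both directions are essentially immediate. Since $r_{\oa}$ is supported only at the single point $\oa$, any dependence $\sum c_{\oa} r_{\oa} = 0$ evaluated at a point $\oa$ forces $c_{\oa}=0$. Spanning is equally direct, since any $f \in \cV_{k,p}$ is recovered as $f = \sum_{\oa \in \zZ_p^k} f(\oa)\, r_{\oa}$.

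The substantive step is to show that $\Mon(k,p)$ is linearly independent when viewed inside $\cV_{k,p}$ (spanning then follows from the dimension count). I would proceed by induction on $k$. The base case $k=1$ is the classical fact that a nonzero univariate complex polynomial of degree at most $p-1$ cannot have $p$ distinct roots, so a formal linear combination of $1,x,\dots,x^{p-1}$ that vanishes at all points of $\zZ_p$ must have all coefficients zero. For the inductive step, write an arbitrary element of $\spann(\Mon(k,p))$ as $P(\vc xk) = \sum_{d=0}^{p-1} Q_d(x_1,\dots,x_{k-1}) \, x_k^d$, where each $Q_d \in \spann(\Mon(k-1,p))$. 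Assuming $P$ vanishes on $\zZ_p^k$, fix any $(a_1,\dots,a_{k-1}) \in \zZ_p^{k-1}$; then $\sum_{d=0}^{p-1} Q_d(a_1,\dots,a_{k-1}) \, x_k^d$ is a univariate polynomial of degree at most $p-1$ vanishing on all of $\zZ_p$, and the base case forces every $Q_d(a_1,\dots,a_{k-1})$ to vanish. Since this holds for every tuple $(a_1,\dots,a_{k-1})$, each $Q_d$ vanishes on $\zZ_p^{k-1}$, and the inductive hypothesis yields $Q_d = 0$, hence $P = 0$.

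There is no genuine obstacle here: the argument is a dimension count combined with a standard Lagrange-interpolation style induction. The only point requiring attention is keeping the per-variable degree bounded by $p-1$ throughout --- this is precisely what makes $|\Mon(k,p)|$ coincide with $\dim \cV_{k,p}$ and enables the two-for-one reduction at the start of the proof.
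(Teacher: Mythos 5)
Your proof is correct and follows essentially the same route as the paper, whose proof is a one-liner: both sets have $p^k$ elements, $R(k,p)$ obviously spans $\cV_{k,p}$, and $\Mon(k,p)$ spans by polynomial interpolation. The only cosmetic difference is that you establish linear independence of $\Mon(k,p)$ (via the vanishing-on-a-grid induction) and let the dimension count supply spanning, whereas the paper cites interpolation for spanning directly; given the cardinality count these are equivalent.
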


As is easily seen, both sets contain $p^k$ elements, $R(k,p)$ obviously spans $\cV_{k,p}$. That $\Mon(k,p)$ also spans $\cV_{k,p}$ follows from polynomial interpolation properties.

Our goal here is to find yet another basis of $\cV_{k,p}$ suitable for our needs, which is the set $\cF_k$ defined above.

In this section we view functions from $\cF_k$ as elements of $\cV_{k+1,p}$. In particular, we will use coordinates of such functions in the bases $\Mon(k+1,p)$ and $R(k+1,p)$. The latter is of course just the collection of values of a function in points from $\zZ_p^{k+1}$, while the former is the polynomial interpolation of a function, which is unique when we restrict ourselves to polynomials of degree at most $p-1$ in each variable. 

The number of functions in $\cF_k$ is
\[
1+\sum_{\ell=0}^k|F_\ell|=1+\sum_{\ell=0}^kp^\ell(p-1)
=1+(p-1)\sum_{\ell=0}^kp^\ell=1+(p-1)\frac{p^{k+1}-1}{p-1}=p^{k+1}.
\]
Thus, we only need to prove that $\cF_k$ spans $\cV_{k+1,p}$.

We will need a finer partition of sets $F_k$: for $S\sse[k]$ let $F_k^S$
denote the set of functions $\bigoplus_{i=1}^k\al_ix_i\oplus x_{k+1}\oplus b$
such that $\al_i\ne0$ if and only if $i\in S$.

We prove by induction on $|S|$, $S\sse[k+1]$, that any monomial $u$ with $\cont(u)=S$ is in the span of 
\[
\cF_k^S=\bigcup_{\ell\le k, T\sse S\cap[\ell]}F_\ell^T.
\]
Since up to renaming the variables $\cF^S_k$ can be viewed as $\cF_{|S|}$, it suffices to prove the result for $S=[k]$, and our inductive process is actually on $k$. 

For $f\in F_k$ let $f'$ denote the sum of all monomials $u$ of $f$ (with the same coefficients) for which $\cont[u]=[k+1]$. In other words, $f'$ can be viewed as a projection $f$ onto the subspace $\cV_1$ spanned by $\Mon^*(k+1,p)=\{u\in\Mon(k+1,p)\mid \cont(u)=[k+1]\}$ parallel to the subspace $\cV_2$ spanned by $\Mon^\dagger(k+1,p)=\Mon(k+1,p)-\Mon^*(k+1,p)$. Let $F'_k=\{f'\mid f\in F_k\}$. Note that $\cV_1$ is also the subspace of $\cV_{n,p}$ spanned by the set $\{r_\oa\mid \oa\in(\zZ_p^*)^{k+1}\}$, and the dimensionality of $\cV_1$ is $(p-1)^{k+1}=|F_k|=|F'_k|$. Since by the induction hypothesis $f-f'$ is in the span of $\cF_k$, it suffices to prove that vectors in $F'_k$ are linearly independent, and therefore generate $\cV_1$. This will be proved in the rest of this section.

\begin{proposition}\label{pro:F-k-independence}
The set $F'_k$ is linearly independent.
\end{proposition}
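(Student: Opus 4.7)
The plan is to argue via Fourier analysis on the group $\zZ_p^{k+1}$, using its characters $\chi_\gm(\mb x)=\omega^{\gm\cdot\mb x}$ where $\omega=e^{2\pi i/p}$. First I would observe that if $f\in F_k$ has $\al_i=0$ for some $i\le k$, then $f$ does not depend on $x_i$, so $f'=\pi_1 f=0$ automatically; hence it suffices to prove that the $(p-1)^{k+1}$ vectors $\{f':f\in F_k^{[k]}\}$ are linearly independent, which matches $\dim\cV_1$. The identity embedding $g(a)=a$ from $\zZ_p$ to $\zC$ expands as $g(a)=\sum_{j=0}^{p-1}c_j\omega^{aj}$ with $c_j=1/(\omega^{-j}-1)\ne 0$ for $j\ne 0$ (standard computation via $\sum_{a=0}^{p-1}az^a=p/(z-1)$ for $z^p=1$, $z\ne 1$). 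Consequently, for $f=\tilde\al\cdot\mb x\oplus\beta\in F_k^{[k]}$ with $\tilde\al=(\al_1\zd\al_k,1)\in(\zZ_p^*)^{k+1}$, we have $f=\sum_{j=0}^{p-1}c_j\omega^{j\beta}\chi_{j\tilde\al}$.

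Next, I would compute $\pi_1\chi_\gm$ using the tensor decomposition $\cV_{k+1,p}=\bigotimes_i V_i$, which restricts to $\cV_1=\bigotimes_i V_i^+$ for $V_i^+=\{h\in V_i:h(0)=0\}$. Then $\pi_1=\bigotimes_i\pi_1^{(i)}$ with $\pi_1^{(i)}h=h-h|_{x_i=0}$, and applied to the separable $\chi_\gm=\prod_i\omega^{\gm_ix_i}$ this yields $\pi_1\chi_\gm=\prod_i(\omega^{\gm_ix_i}-1)=:e_\gm$ when every $\gm_i\ne 0$, and $0$ otherwise (a zero coordinate kills the corresponding factor). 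Combining, $f'=\sum_{j=1}^{p-1}c_j\omega^{j\beta}e_{j\tilde\al}$. A short auxiliary step shows that $\{e_\gm\}_{\gm\in(\zZ_p^*)^{k+1}}$ is a basis of $\cV_1$: expanding $e_\gm=\sum_{S\sse[k+1]}(-1)^{|S^c|}\chi_{\gm|_S}$ back into characters, any linear relation $\sum_\gm a_\gm e_\gm=0$ forces the coefficient of $\chi_\delta$ for $\delta\in(\zZ_p^*)^{k+1}$, which reduces to $a_\delta$, to vanish.

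Finally, assemble the map $f\mapsto f'$ as a matrix $M$ in the $e$-basis: the row indexed by $(\tilde\al,\beta)$ has entry $c_j\omega^{j\beta}$ at column $\gm=j\tilde\al$ and zeros elsewhere. The map $(\tilde\al,j)\mapsto j\tilde\al$ is a bijection $(\zZ_p^*)^k\times\zZ_p^*\to(\zZ_p^*)^{k+1}$ (recover $j=\gm_{k+1}$ and $\al_i=\gm_i/\gm_{k+1}$), so after grouping rows and columns by $\tilde\al$ the matrix $M$ is block-diagonal with $(p-1)^k$ blocks $B_{\tilde\al}=(c_j\omega^{j\beta})_{\beta,j}$. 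Each block factors as $V\cdot\mathrm{diag}(c_1\zd c_{p-1})$, where $V_{\beta,j}=\omega^{j\beta}$ is Vandermonde in the $p-1$ distinct roots of unity $\omega^j$; together with $c_j\ne 0$ for $j\in[p-1]$, each block, and therefore $M$, is invertible, establishing the required independence. I expect the main technical obstacle to be the index bookkeeping underlying this block decomposition, hinging on the bijection $(\tilde\al,j)\mapsto j\tilde\al$ and on the nonvanishing of the Fourier coefficients $c_j$.
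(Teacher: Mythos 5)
Your proof is correct, and it takes a genuinely different route from the paper's. The paper works with explicit value matrices: it augments $F_k$ to $F^\dagger_k$ (adding $b=p-1$), writes the matrix of values as a Kronecker sum of circulants $N_k=C_p\boxplus B_p^{\boxplus k}$, computes its full spectrum via block-circulant structure, needs a cyclotomic-polynomial argument (Lemma~\ref{lem:polynomial}) to show the factors $Q(\eta,\xi)$ do not vanish, then passes to the projected matrices $N''_k,N'''_k,N'_k$ by inclusion--exclusion over $S\sse[k]$ (recomputing the eigenvalues of each $N_k(S)$), and finishes with row operations eliminating the $x_{k+1}=0$ rows plus Lemma~\ref{lem:linear-combinations} to dispose of the auxiliary $b=p-1$ columns. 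You instead expand the identity embedding $\zZ_p\to\zC$ in additive characters, so each full-support $f\in F_k$ becomes $\sum_{j}c_j\omega^{j\beta}\chi_{j\tilde\al}$ with $c_j=1/(\omega^{-j}-1)\ne0$; you observe that the projection onto $\cV_1$ factors variable-by-variable and sends $\chi_\gm$ to $e_\gm=\prod_i(\omega^{\gm_i x_i}-1)$ (or to $0$ when some $\gm_i=0$), check that the $e_\gm$, $\gm\in(\zZ^*_p)^{k+1}$, are linearly independent by extracting full-support character coefficients, and then the coefficient matrix of the $f'$ in the $e$-basis is block-diagonal (via the bijection $(\tilde\al,j)\mapsto j\tilde\al$) with Vandermonde-times-diagonal blocks, hence invertible. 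Your initial reduction (discarding $f$ with some $\al_i=0$, whose projection vanishes, and proving independence of the remaining $(p-1)^{k+1}$ projections) is exactly the reading the paper itself adopts, since its matrices are indexed only by tuples $\al\in(\zZ^*_p)^k$. In effect your $c_j$ plays the role of the paper's $P(\xi)=p/(\xi-1)$, while the change to the $e_\gm$ basis decouples the variables, so the cyclotomic lemma, the $N_k(S)$ bookkeeping, and the $\beta=p-1$ augmentation are all replaced by a one-line Vandermonde determinant in distinct roots of unity; the only thing the paper's longer computation buys in exchange is the explicit spectral data for $N_k$, which is not used outside this proof.
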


We prove Proposition~\ref{pro:F-k-independence}  by constructing a matrix containing the values of functions from $F'_k$ and find its rank by finding all its eigenvalues. We do it in three steps. First, let $F^\dagger_k$ denote the superset of $F_k$ that apart from functions from $F_k$ also contains functions of the form $\bigoplus_{i=1}^k\al_i x_i\oplus x_{k+1}\oplus(p-1)$ for $\al_i\in\zZ^*_p$, $i\in[k]$. Then, let  $N_k$ be the $p(p-1)^k\tm p(p-1)^k$-dimensional matrix whose rows are labeled with $(\vc x{k+1})\in(\zZ^*_p)^k\tm\zZ_p$ representing values of the arguments of functions from $F^\dagger_k$,  and the columns are labeled with $f\in F^\dagger_k$. The entry of $N_k$ in row $(\vc x{k+1})$ and column $f$ is $f(\vc x{k+1})$. In the next section we find the eigenvectors and eigenvalues of $N_k$. In the second step we use the properties of $N_k$ to study the matrix $N''_k$ obtained from $N_k$ by replacing every entry of the form $f(\vc x{k+1})$ with the value $f''(\vc x{k+1})$, where $f''$ is the sum of all the monomials $u$ from $f$ with $\vc xk\in\cont(u)$. We again find the eigenvectors and eigenvalues of $N''_k$. Finally, we transform $N''_k$ to obtain a new matrix $N'_k$ in such a way that the entry of $N'_k$ in the row $(\vc x{k+1})$ with $x_{k+1}\ne0$ and column $f$ equals $f'(\vc x{k+1})$. We then finally prove that all the rows of $N'_k$ labeled  $(\vc x{k+1})$, $x_{k+1}\ne0$, are linearly independent.

\subsubsection{Kronecker sum and the eigenvalues of $N_k$}

We first introduce some useful notation.

Let $A,B$ be $q\tm r$ and $s\tm t$ matrices with entries in $\zZ_p$. The 
\emph{Kronecker sum} of $A$ and $B$ denoted $A\boxplus B$ is the 
$qs\tm rt$ matrix whose entry in row $is+i'$ and column $jt+j'$
equals $A(i,j)\oplus B(i',j')$. In other words, $A\boxplus B$ is defined 
the same way as Kronecker product, except using addition modulo $p$
rather than multiplication. We also use $A^{\boxplus k}$ to denote
$A\boxplus\dots\boxplus A$.

We consider two matrices, matrix $B_p$ essentially consists of values of 
unary functions $\al x$ on $[p-1]$, except that we rearrange its rows and 
columns as follows. Let $a$ be a primitive residue modulo $p$, that is, 
a generator of $\zZ_p^*$. Then
\[
B_p=\left(\begin{array}{cccccc}
1&a&a^2&a^3&\dots&a^{p-2}\\
a^{p-2}&1&a&a^2&\dots&a^{p-3}\\
a^{p-3}&a^{p-2}&1&a&\dots&a^{p-4}\\
\vdots&\vdots&\vdots&\vdots&&\vdots\\
a&a^2&a^3&a^4&\dots&1
\end{array}\right),
\]
where $a^i$ denotes exponentiation modulo $p$. Matrix $C_p$ is again 
the operation table of addition modulo $p$ with rearranged rows
\[
C_p=\left(\begin{array}{ccccc}
0&1&2&\dots&p-1\\
p-1&0&1&\dots&p-2\\
p-2&p-1&0&\dots&p-3\\
\vdots&\vdots&\vdots&&\vdots\\
1&2&3&\ldots&0
\end{array}\right).
\]
As is easily seen, the values of functions $\bigoplus_{i=1}^k\al_ix_i$,
$\vc\al k\in\zZ^*_p$ on $(\zZ^*_p)^k$ can be viewed as 
$B^{\boxplus k}$; and those of $\bigoplus_{i=1}^k\al_ix_i\oplus x_{k+1}\oplus b$,
$\vc\al k\in\zZ^*_p$, $b\in\zZ_p$, on $\vc xk\in\zZ^*_p$, $x_{k+1}\in\zZ_p$
can be represented as $N_k=C_p\boxplus B_p^{\boxplus k}$. 
Next we find the eigenvectors and eigenvalues of $N_k$.

Recall that a square matrix of the form 
\begin{equation}\label{equ:circulant}
A=\left(\begin{array}{ccccc}
a_1&a_2&a_3&\dots&a_n\\
a_n&a_1&a_2&\dots&a_{n-1}\\
a_{n-1}&a_n&a_1&\dots&a_{n-2}\\
\vdots&\vdots&\vdots&&\vdots\\
a_2&a_3&a_4&\dots&a_1
\end{array}\right)
\end{equation}
is called \emph{circulant}. The eigenvectors and eigenvalues of circulant
matrices are well known. We give a brief proof of the following fact for the
sake of completeness.

\begin{lemma}\label{lem:circulant}
The eigenvectors of the matrix $A$ in (\ref{equ:circulant}) have the 
form $\vec v_\xi=(1,\xi,\xi^2\zd\xi^{n-1})$ for $n$th roots of unity $\xi$. 
The eigenvalue corresponding to $\vec v_\xi$ is 
$\mu_\xi=a_1+a_2\xi+a_3\xi^2+\dots+a_n\xi^{n-1}$.
\end{lemma}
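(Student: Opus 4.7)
The plan is to verify directly that each candidate vector $\vec v_\xi=(1,\xi,\xi^2\zd\xi^{n-1})$ is an eigenvector of $A$ with the claimed eigenvalue, and then invoke a dimension/independence argument to conclude these exhaust the eigenvectors (up to scalar multiples). I would begin by recording the explicit description of the entries of the circulant: reading off (\ref{equ:circulant}) one sees that $A(i,j)=a_{((j-i)\bmod n)+1}$, so that each row of $A$ is a cyclic shift of the first row. This shift description is the only structural fact needed in the calculation.

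Next I would compute the $i$-th coordinate of $A\vec v_\xi$. Writing
\[
(A\vec v_\xi)_i=\sum_{j=1}^{n} A(i,j)\,\xi^{j-1}=\sum_{j=1}^{n} a_{((j-i)\bmod n)+1}\,\xi^{j-1},
\]
I would re-index by $k=(j-i)\bmod n\in\{0,1\zd n-1\}$. Since $\xi^n=1$, we have $\xi^{j-1}=\xi^{i-1}\xi^{k}$, and the sum factors as
\[
(A\vec v_\xi)_i=\xi^{i-1}\sum_{k=0}^{n-1} a_{k+1}\xi^{k}=\mu_\xi\,\xi^{i-1}=\mu_\xi\,(\vec v_\xi)_i.
\]
This is exactly the eigenvector/eigenvalue relation $A\vec v_\xi=\mu_\xi\vec v_\xi$. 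The only mild care needed is the modular reindexing, which is routine because $\xi$ is an $n$-th root of unity.

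Finally, to confirm that the $n$ vectors $\vec v_\xi$ (one per $n$-th root of unity) form a full set of eigenvectors, I would note that the matrix whose rows are $\vec v_{\xi_1}\zd\vec v_{\xi_n}$ is a Vandermonde matrix in the distinct $n$-th roots of unity, hence nonsingular; thus the $\vec v_\xi$ are linearly independent in $\zC^n$ and span the whole space. This shows there are no further eigendirections. I do not foresee any real obstacle here; the only minor subtlety is bookkeeping the cyclic shift correctly in the indexing of entries of $A$, after which the proof is a one-line factoring of $\xi^{i-1}$ out of the sum.
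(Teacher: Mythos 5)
Your proof is correct and follows essentially the same route as the paper: a direct computation showing $A\vec v_\xi=\mu_\xi\vec v_\xi$, with your modular-index bookkeeping being just a more compact way of writing out the row-by-row product the paper displays. The added Vandermonde remark (that the $\vec v_\xi$ form a basis, so every eigenvalue of $A$ is some $\mu_\xi$) is a harmless extra the paper omits, though note it shows completeness of the eigenbasis rather than that every eigenvector literally has this form, since repeated $\mu_\xi$ would allow other eigenvectors in the same eigenspace.
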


\begin{proof}
We compute $A\cdot\vec v_\xi$
\begin{eqnarray*}
A\cdot\vec v_\xi &=& \left(\begin{array}{ccccc}
a_1&a_2&a_3&\dots&a_n\\
a_n&a_1&a_2&\dots&a_{n-1}\\
a_{n-1}&a_n&a_1&\dots&a_{n-2}\\
\vdots&\vdots&\vdots&&\vdots\\
a_2&a_3&a_4&\dots&a_1
\end{array}\right)\cdot
\left(\begin{array}{c} 
1\\ \xi\\ \xi^2\\ \vdots\\ \xi^{n-1}
\end{array}\right)\\
&=& \left(\begin{array}{c} 
a_1+a_2\xi+a_3\xi^2+\dots+a_n\xi^{n-1}\\
a_n+a_1\xi+a_2\xi^2+\dots+a_{n-1}\xi^{n-1}\\
a_{n-1}+a_n\xi+a_1\xi^2+\dots+a_{n-2}\xi^{n-1}\\
\vdots\\
a_2+a_3\xi+a_4\xi^2+\dots+a_1\xi^{n-1}
\end{array}\right)\\
&=& (a_1+a_2\xi+a_3\xi^2+\dots+a_n\xi^{n-1})
\left(\begin{array}{c} 
1\\ \xi\\ \xi^2\\ \vdots\\ \xi^{n-1}
\end{array}\right),
\end{eqnarray*}
as required.
\end{proof}

A generalization of circulant matrices is obtained by replacing entries
of a circulant matrix by matrices. More precisely, a matrix of the 
form
\begin{equation}\label{equ:block-circulant}
A=\left(\begin{array}{ccccc}
A_1&A_2&A_3&\dots&A_n\\
A_n&A_1&A_2&\dots&A_{n-1}\\
A_{n-1}&A_n&A_1&\dots&A_{n-2}\\
\vdots&\vdots&\vdots&&\vdots\\
A_2&A_3&A_4&\dots&A_1
\end{array}\right),
\end{equation}
where $\vc An$ are square matrices of the same size is said to be 
\emph{block-circulant}. Note that $N_k$ can be viewed as a 
block-circulant matrix:
\[
N_k=C_p\boxplus B_p^{\boxplus k}=
\left(\begin{array}{cccc}
B_p^{\boxplus k}&B_p^{\boxplus k}\oplus1&
\dots&B_p^{\boxplus k}\oplus(p-1)\\
B_p^{\boxplus k}\oplus(p-1)&B_p^{\boxplus k}&
\dots&B_p^{\boxplus k}\oplus(p-2)\\
\vdots&\vdots&&\vdots\\
B_p^{\boxplus k}\oplus2&B_p^{\boxplus k}\oplus3&
\dots&B_p^{\boxplus k}
\end{array}\right).
\]
Similarly, $B_p^{\boxplus\ell}$ can also be viewed as a block-circulant matrix:
\[
B_p^{\boxplus(\ell-1)}\boxplus B_p=
\left(\begin{array}{cccc}
B_p^{\boxplus(\ell-1)}\oplus1&B_p^{\boxplus(\ell-1)}\oplus a&
\dots&B_p^{\boxplus(\ell-1)}\oplus a^{p-1}\\
B_p^{\boxplus(\ell-1)}\oplus a^{p-1}&B_p^{\boxplus(\ell-1)}\oplus 1&
\dots&B_p^{\boxplus(\ell-1)}\oplus a^{p-2}\\
\vdots&\vdots&&\vdots\\
B_p^{\boxplus(\ell-1)}\oplus a&B_p^{\boxplus(\ell-1)}\oplus a^2&
\dots&B_p^{\boxplus(\ell-1)}\oplus1
\end{array}\right).
\]

In some cases the eigenvectors and eigenvalues of block-circulant matrices
can also be found.

\begin{lemma}\label{lem:block-circulant}
Let $A$ be a block-circulant matrix with blocks $\vc An$ as in 
(\ref{equ:block-circulant}) such that $\vc An$ have the same eigenvectors.
Then every eigenvector $\vec w$ of $A$ has the form 
$\vec w_{v,\xi}=(1,\xi\zd \xi^{n-1})\otimes\vec v$, where $\xi$ is an 
$n$th root of unity and $\vec v$ is an eigenvector of $\vc An$. Conversely, 
for every eigenvector $\vec v$ of $\vc An$ and an $n$th root of unity $\xi$,
$\vec w_{v,\xi}$ is an eigenvector of $A$. The eigenvalue of $A$ associated
with $\vec w_{v,\xi}$ is $\mu_{1,\vec v}+\mu_{2,\vec v}\xi+
\mu_{3,\vec v}\xi^2+\dots+\mu_{n,\vec v}\xi^{n-1}$, where 
$\mu_{i,\vec v}$ is the eigenvalue of $A_i$ associated with $\vec v$.
\end{lemma}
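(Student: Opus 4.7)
The plan is to verify the formula by direct computation, mirroring the proof of the circulant case (Lemma~\ref{lem:circulant}), and then use dimension counting to confirm that the vectors $\vec w_{v,\xi}$ exhaust all eigenvectors. Write $\vec w_{v,\xi}$ as the stacked column vector whose $i$th block (for $i \in \{0,1,\ldots,n-1\}$) equals $\xi^i \vec v$. Then the $i$th block of $A \cdot \vec w_{v,\xi}$ is
\[
\sum_{j=1}^{n} A_{((j-i-1) \bmod n) + 1} \, \xi^{j-1} \vec v,
\]
where the index arithmetic reflects the circulant shift in~\eqref{equ:block-circulant}. By hypothesis the blocks $A_1,\ldots,A_n$ share a common eigenbasis, so each $A_s \vec v = \mu_{s,\vec v} \vec v$. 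Substituting this scalar identity and re-indexing the sum by $s = ((j-i-1) \bmod n) + 1$, the $i$th block becomes $\xi^{i} \bigl( \mu_{1,\vec v} + \mu_{2,\vec v}\xi + \cdots + \mu_{n,\vec v}\xi^{n-1}\bigr) \vec v$, using $\xi^n = 1$ to handle wrap-around terms. This is precisely $\xi^{i}$ times the claimed eigenvalue times $\vec v$, i.e., the $i$th block of the scalar multiple of $\vec w_{v,\xi}$.

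For the converse, count dimensions. Let $m$ be the common size of the blocks $A_s$; by hypothesis the $A_s$ have a common eigenbasis $\vec v_1,\ldots,\vec v_m$. There are $n$ choices of root of unity $\xi$ and $m$ choices of eigenvector $\vec v$, giving $nm$ vectors $\vec w_{v,\xi}$, which matches the order of $A$. To see linear independence, suppose $\sum_{\xi,\vec v} c_{\xi,\vec v} \vec w_{v,\xi} = 0$. Reading block by block, for each $i \in \{0,\ldots,n-1\}$ we obtain $\sum_{\xi,\vec v} c_{\xi,\vec v} \xi^i \vec v = 0$. Grouping by $\vec v$ and using the linear independence of $\vec v_1,\ldots,\vec v_m$, this forces $\sum_{\xi} c_{\xi,\vec v}\xi^i = 0$ for every $\vec v$ and every $i$. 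The coefficient matrix of this system (indexed by $i$ and $\xi$) is the Vandermonde matrix on the $n$th roots of unity, which is nonsingular; hence all $c_{\xi,\vec v} = 0$. Thus the $\vec w_{v,\xi}$ form a basis of $\mathbb{C}^{nm}$, and any eigenvector of $A$ is a linear combination of them; restricted to a single eigenvalue it has the stated form (up to grouping within eigenspaces).

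The only subtle step is the index bookkeeping in the first paragraph: getting the circulant shift right so that the resulting sum collapses to $\mu_{1,\vec v} + \mu_{2,\vec v}\xi + \cdots + \mu_{n,\vec v}\xi^{n-1}$ rather than some cyclic rotation of this expression. This is a notational rather than conceptual obstacle and is handled by choosing the block indexing consistent with the convention used in \eqref{equ:block-circulant} and exploiting $\xi^n = 1$. No new machinery beyond Lemma~\ref{lem:circulant} and the common-eigenbasis assumption is needed.
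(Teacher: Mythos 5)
Your proposal is correct, and its core is the same computation as the paper's proof: the paper verifies, exactly as in your first paragraph (just written out as an explicit block matrix--vector multiplication rather than with index arithmetic), that $A\cdot\vec w_{v,\xi}$ equals $(\mu_{1,\vec v}+\mu_{2,\vec v}\xi+\dots+\mu_{n,\vec v}\xi^{n-1})\vec w_{v,\xi}$, using $A_s\vec v=\mu_{s,\vec v}\vec v$ and $\xi^n=1$; your index bookkeeping for \eqref{equ:block-circulant} checks out.

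The one genuine difference is your second paragraph. The paper's proof stops after the eigenvalue verification and never argues the ``every eigenvector of $A$ has this form'' direction; your dimension count plus the Vandermonde argument (using that the common eigenvectors $\vec v_1,\dots,\vec v_m$ form a basis, which is how the hypothesis must be read for the lemma to make sense) shows the $\vec w_{v,\xi}$ span $\Complex^{nm}$, so every eigenvalue of $A$ is of the stated form and every eigenvector lies in the span of the $\vec w_{v,\xi}$ with that eigenvalue. Your parenthetical ``up to grouping within eigenspaces'' is the honest caveat: when eigenvalues collide, an eigenvector need not literally be a single $\vec w_{v,\xi}$, a point the paper glosses over as well. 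This extra completeness argument is what the lemma is actually used for later (to enumerate all eigenvalues of $N_k$), so it is a worthwhile addition rather than a detour.
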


\begin{proof}
As in the proof of Lemma~\ref{lem:circulant} we compute 
$A\cdot\vec w_{v,\xi}$:
\begin{eqnarray*}
A\cdot\vec w_{v,\xi} &=& \left(\begin{array}{ccccc}
A_1&A_2&A_3&\dots&A_n\\
A_n&A_1&A_2&\dots&A_{n-1}\\
A_{n-1}&A_n&A_1&\dots&A_{n-2}\\
\vdots&\vdots&\vdots&&\vdots\\
A_2&A_3&A_4&\dots&A_1
\end{array}\right)\cdot
\left(\begin{array}{c} 
\vec v\\ \xi\vec v\\ \xi^2\vec v\\ \vdots\\ \xi^{n-1}\vec v
\end{array}\right)\\
&=& \left(\begin{array}{c} 
A_1\vec v+\xi A_2\vec v+\xi^2A_3\vec v+\dots+\xi^{n-1}A_n\vec v\\
A_n\vec v+\xi A_1\vec v+\xi^2 A_2\vec v+\dots+\xi^{n-1}A_{n-1}\vec v\\
A_{n-1}\vec v+\xi A_n\vec v+\xi^2A_1\vec v+\dots+\xi^{n-1}A_{n-2}\vec v\\
\vdots\\
A_2\vec v+\xi A_3\vec v+\xi^2A_4\vec v+\dots+\xi^{n-1}A_1\vec v
\end{array}\right)\\
&=& (\mu_{1,\vec v}+\xi\mu_{2,\vec v}+\xi^2\mu_{3,\vec v}+\dots+
\xi^{n-1}\mu_{n,\vec v})
\left(\begin{array}{c} 
\vec v\\ \xi\vec v\\ \xi^2\vec v\\ \vdots\\ \xi^{n-1}\vec v
\end{array}\right),
\end{eqnarray*}
as required.
\end{proof}

We now apply these techniques to $N_k$. Let $\vec v(\eta)=(1,\eta,\eta^2\zd\eta^{p-1})$ for a $(p-1)$th root of unity, and let $\otimes$ denote Kronecker product.

\begin{lemma}\label{lem:B-eigenvectors}
\begin{itemize}
\item[(1)] 
For any $k$ and any $i,j\in\zZ_p$ the matrices $B_p^{\boxplus k}\oplus i$ 
and $B_p^{\boxplus k}\oplus j$ have the same eigenvectors, and every eigenvector has the form 
\[
\vec v(\vc\eta k)=\vec v(\eta_1)\otimes\dots\otimes\vec v(\eta_k)
\]
for some $(p-1)$th roots of unity.
\item[(2)] 
For any $k$ and any $j\in\zZ_p$ the eigenvalues of the matrix 
$B_p^{\boxplus k}\oplus j$ are
\[
\ld(\vc\eta k;j)=\sum_{\vc ik=0}^{p-2}(a^{i_1}
\oplus\dots\oplus a^{i_k}\oplus j)\eta_1^{i_1}\dots\eta_k^{i_k}
\]
for $(p-1)$th roots of unity $\vc\eta k$.
\item[(3)]
Let $\vc is\in[k]$ be such that if $\eta_i\ne1$ then $i=i_r$, $r\in[s]$, and $s\ne0$. Then
\[
\ld(\vc\eta k;j)=(-1)^{k-s}\ld(\eta_{i_1}\zd\eta_{i_k};j).
\]
\end{itemize}
\end{lemma}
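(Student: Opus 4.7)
The plan is to induct on $k$ for all three statements simultaneously, leveraging the two eigenvector lemmas just proved.

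For the base case $k=1$, observe that $B_p$ is circulant: each row is the cyclic right shift of the first row $(1,a,a^2,\dots,a^{p-2})$. Adding a constant $j\in\zZ_p$ entrywise preserves this structure, so $B_p\oplus j$ is circulant with first row $(1\oplus j,a\oplus j,\dots,a^{p-2}\oplus j)$. Lemma~\ref{lem:circulant} then yields, for every $(p-1)$th root of unity $\eta$, a common eigenvector $\vec v(\eta)=(1,\eta,\dots,\eta^{p-2})$ shared across all $j$, with eigenvalue $\ld(\eta;j)=\sum_{i=0}^{p-2}(a^i\oplus j)\eta^i$, which is formula (2) for $k=1$.

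For the inductive step of (1) and (2), write $B_p^{\boxplus k}\oplus j = B_p\boxplus(B_p^{\boxplus(k-1)}\oplus j)$; as indicated in the excerpt this is block-circulant whose blocks, cyclically arranged along each row, are $B_p^{\boxplus(k-1)}\oplus a^r\oplus j$ for $r=0,\dots,p-2$. By the inductive hypothesis, these blocks share the common eigenvectors $\vec v(\eta_1)\otimes\cdots\otimes\vec v(\eta_{k-1})$, so Lemma~\ref{lem:block-circulant} applies directly: the eigenvectors of $B_p^{\boxplus k}\oplus j$ have the form $\vec v(\eta_k)\otimes\vec v(\eta_1)\otimes\cdots\otimes\vec v(\eta_{k-1})$ (of the required shape, up to a relabeling of tensor factors, which is harmless since the target formula is symmetric in the $\eta_r$'s), and are independent of $j$. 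The associated eigenvalue is $\sum_{r=0}^{p-2}\eta_k^r\,\ld(\eta_1,\dots,\eta_{k-1};a^r\oplus j)$; substituting the IH expression for the inner $\ld$ and relabeling $l_k=r$ recombines the double sum into $\ld(\vc\eta k;j)$, completing (1) and (2).

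For part (3), I would peel off the trivial directions one at a time. After renumbering, assume $\eta_k=1$ and at least one of $\eta_1,\dots,\eta_{k-1}$ is not $1$ (guaranteed since $s\geq 1$). Separating the inner $l_k$-sum,
\[
\ld(\vc\eta k;j)=\sum_{l_1,\dots,l_{k-1}}\Bigl(\prod_{r<k}\eta_r^{l_r}\Bigr)\sum_{l_k=0}^{p-2}(a^{l_k}\oplus c),\qquad c=a^{l_1}\oplus\cdots\oplus a^{l_{k-1}}\oplus j.
\]
Since $a$ is a primitive root modulo $p$, $a^{l_k}$ runs through $\zZ_p^*$, so $\sum_{l_k=0}^{p-2}(a^{l_k}\oplus c)=\binom{p}{2}-c$. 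The $\binom{p}{2}$ piece factors out of the outer sum as $\binom{p}{2}\prod_{r<k}\sum_{l_r=0}^{p-2}\eta_r^{l_r}$, which vanishes because at least one nontrivial $\eta_r$ contributes a factor $\sum_{l=0}^{p-2}\eta_r^l=0$. The surviving $-c$ part recombines into $-\ld(\eta_1,\dots,\eta_{k-1};j)$. Iterating this reduction $k-s$ times — each step justified by the invariant that at least one nontrivial $\eta$ remains — gives $(-1)^{k-s}\ld(\eta_{i_1},\dots,\eta_{i_s};j)$.

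The only real subtlety I anticipate is tracking the order of tensor factors in the inductive step of (2); symmetry of the eigenvalue formula in the $\eta_r$'s makes the reshuffling benign, but it must be noted explicitly. The rest is a direct application of the two circulant lemmas together with the standard vanishing of sums of nontrivial $(p-1)$th roots of unity.
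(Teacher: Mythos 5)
Your proposal is correct and follows essentially the same route as the paper: induction on $k$ proving (1)--(3) together, with the circulant lemma for the base case, the block-circulant lemma applied to the blocks $B_p^{\boxplus(k-1)}\oplus a^r\oplus j$ for the inductive step, and for (3) the observation that summing over a coordinate with $\eta=1$ yields $\frac{p(p-1)}{2}-c$, whose constant part is killed by a vanishing geometric sum of a remaining nontrivial root of unity. The only differences (attaching the new root of unity to $\eta_k$ rather than $\eta_1$, and iterating the reduction in (3) explicitly rather than via the induction hypothesis) are cosmetic.
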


\begin{proof}
We proceed by induction on $k$ to prove all three claims simultaneously. If $k=1$ 
then as $B_p\oplus j$ is a circulant
matrix whose first row is $(1\oplus j,a\oplus j,a^2\oplus j\zd a^{p-2}\oplus j)$,
by Lemma~\ref{lem:circulant} its eigenvalues have the form
\[
\ld(\eta_1)=(1\oplus j)+(a\oplus j)\eta_1+(a^2\oplus j)\eta_1^2+\dots+ 
(a^{p-2}\oplus j)\eta^{p-2}
\]
for a $(p-1)$th root of unity $\eta_1$, and the corresponding eigenvector 
is $\vec v(\eta_1)=(1,\eta_1,\eta_1^2\zd\eta_1^{p-1})$ regardless of $j$.

Now, suppose that the lemma is true for $k-1$. Also, suppose that 
every eigenvector of $B_p^{\boxplus (k-1)}\oplus j$ has the form
$\vec v(\eta_2\zd\eta_k)$ for $(p-1)$th roots of unity $\eta_2\zd\eta_k$. Then $B_p^{\boxplus k}\oplus j$ is a block-circulant matrix with the first row 
$(B_p^{\boxplus(k-1)}\oplus1\oplus j,B_p^{\boxplus(k-1)}\oplus a\oplus j,
B_p^{\boxplus(k-1)}\oplus a^2\oplus j\zd 
B_p^{\boxplus(k-1)}\oplus a^{p-1}\oplus j)$. 
As by the induction hypothesis the blocks in this row have the same 
eigenvectors, by Lemma~\ref{lem:block-circulant} the eigenvalues of 
$B_p^{\boxplus k}\oplus j$ have the form 
\[
\mu_{0,\vec v}+\mu_{1,\vec v}\eta_1+
\mu_{1,\vec v}\eta_1^2+\dots+\mu_{p-2,\vec v}\eta_1^{p-2},
\]
where $\vec v$ is an eigenvector of $B^{\boxplus(k-1)}$ and $\mu_{i,\vec v}$
is the eigenvalue of $B^{\boxplus(k-1)}\oplus a^i\oplus j$ associated with
$\vec v$. Thus, plugging in the inductive hypothesis we obtain the result.

To prove item (3) we need to inspect the case when $\eta_k=1$. In this case 
\begin{eqnarray*}
\ld(\vc\eta k) &=& \sum_{\vc ik=0}^{p-2}(a^{i_1}
\oplus\dots\oplus a^{i_k}\oplus j)\eta_1^{i_1}\dots\eta_k^{i_k}\\
&=&\sum_{\vc ik=0}^{p-2}(a^{i_1}
\oplus\dots\oplus a^{i_k}\oplus j)\eta_1^{i_1}\dots\eta_{k-1}^{i_{k-1}}\cdot1\\
&=& \sum_{\vc i{k-1}=0}^{p-2}\eta_1^{i_1}\dots\eta_{k-1}^{i_{k-1}}\left(\sum_{i_k=0}^{p-2}(a^{i_1}\oplus\dots\oplus a^{i_k}\oplus j)\right)\\
&=&\sum_{\vc i{k-1}=0}^{p-2}\eta_1^{i_1}\dots\eta_{k-1}^{i_{k-1}}\left(\frac{p(p-1)}2-(a^{i_1}\oplus\dots\oplus a^{i_{k-1}}\oplus j)\right)\\
&=&\frac{p(p-1)}2\sum_{\vc i{k-1}=0}^{p-2}\eta_1^{i_1}\dots\eta_{k-1}^{i_{k-1}}-\ld(\vc\eta{k-1})\\
&=& -\ld(\vc\eta{k-1}).
\end{eqnarray*}
The last equality is due to fact that 
\[
\sum_{\vc i{k-1}=0}^{p-2}\eta_1^{i_1}\dots\eta_{k-1}^{i_{k-1}}=
\sum_{i_1=0}^{p-2}\eta_1^{i_1}\cdot\dots\cdot\sum_{i_{k-1}=0}^{p-2}\eta_{k-1}^{i_{k-1}}=0.
\]
By the induction hypothesis the result follows.
\end{proof}

Since the matrix $N_k$ can be represented as $C_p\boxplus B_p^{\boxplus k}$,
its eigenvalues can be found by Lemma~\ref{lem:block-circulant}.

\begin{lemma}\label{lem:N-eigenvalues}
The eigenvalues of $N_k$ can be represented in one of the following forms.
\begin{itemize}
\item
for a $p$th root of unity $\xi$ and $(p-1)$th roots of unity $\vc\eta k$
\[
\mu(\vc\eta k;\xi)=\sum_{j=0}^{p-1}\xi^j \sum_{\vc ik=0}^{p-2}(a^{i_1}
\oplus\dots\oplus a^{i_k}\oplus j)\eta_1^{i_1}\dots\eta_k^{i_k}.
\]
\item
for a $p$th root of unity $\xi$ and $(p-1)$th roots of unity $\vc\eta k$
\[
\mu(\vc\eta k;\xi)=P(\xi)\cdot Q(\eta_1,\xi)\cdot\dots\cdot
Q(\eta_k,\xi),
\]
where 
$P(\xi)=\frac p{\xi-1}$ unless $\xi=1$, in which case $P(1)=\frac{p(p-1)}2$,
and 
\[
Q(\eta,\xi)=\sum_{i=0}^{p-2}\eta^i\xi^{a^i}.
\]
\end{itemize}
\end{lemma}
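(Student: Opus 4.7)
The plan is to derive both forms by combining the earlier spectral lemmas with a direct computation of a geometric-series type sum.

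For the first form, I would observe that $N_k = C_p \boxplus B_p^{\boxplus k}$ is block-circulant with blocks $B_p^{\boxplus k}\oplus j$ for $j=0,1,\dots,p-1$. By Lemma~\ref{lem:B-eigenvectors}(1), all these blocks share the eigenvectors $\vec v(\vc\eta k)$, so Lemma~\ref{lem:block-circulant} applies. The eigenvalue of $N_k$ associated with $\vec w_{\vec v,\xi}=\vec v(\xi)\otimes\vec v(\vc\eta k)$ is therefore $\sum_{j=0}^{p-1}\xi^j\mu_{j,\vec v}$, where $\mu_{j,\vec v}$ is the eigenvalue of $B_p^{\boxplus k}\oplus j$ on $\vec v(\vc\eta k)$. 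Plugging in Lemma~\ref{lem:B-eigenvectors}(2) yields the first formula immediately.

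For the second form, the key move is to carry out the inner sum over $j$ in closed form. Setting $s=s(\vc ik):=a^{i_1}\oplus\dots\oplus a^{i_k}$ and swapping the order of summation,
\[
\mu(\vc\eta k;\xi)=\sum_{\vc ik=0}^{p-2}\eta_1^{i_1}\cdots\eta_k^{i_k}\sum_{j=0}^{p-1}\xi^j(s\oplus j).
\]
I substitute $m=s\oplus j$ in the inner sum; since $\xi^p=1$, $\xi^j=\xi^{m-s}=\xi^{-s}\xi^m$, so the inner sum equals $\xi^{-s}T(\xi)$ where $T(\xi):=\sum_{m=0}^{p-1}m\xi^m$. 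A routine computation (differentiating the geometric series $\sum_m x^m=(x^p-1)/(x-1)$ and multiplying by $x$, then using $\xi^p=1$) gives $T(\xi)=p/(\xi-1)$ when $\xi\ne 1$ and $T(1)=p(p-1)/2$, i.e. $T=P$.

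The final step is to factorize $\xi^{-s}=\prod_{r=1}^k\xi^{-a^{i_r}}$ (valid because $\xi^p=1$, turning the modular sum in the exponent into an ordinary product), which separates the remaining multi-sum into the advertised product $P(\xi)\prod_{r=1}^k Q(\eta_r,\xi)$. Up to replacing the primitive root $a$ by $a^{-1}$ (which only permutes the index set $\{a^0,\dots,a^{p-2}\}$), this matches the definition of $Q$ in the statement. The only mildly delicate point, which I would address explicitly, is the separate treatment of $\xi=1$ so that the formula $P(\xi)=p/(\xi-1)$ is not invoked at its pole; all other steps are purely algebraic manipulations that do not depend on any properties of $\vc\eta k$ or $\xi$ beyond being roots of unity.
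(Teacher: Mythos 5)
Your derivation follows the paper's route almost exactly: the first form is obtained by viewing $N_k=C_p\boxplus B_p^{\boxplus k}$ as a block-circulant matrix whose blocks $B_p^{\boxplus k}\oplus j$ share the eigenvectors $\vec v(\eta_1)\otimes\dots\otimes\vec v(\eta_k)$, and then invoking Lemmas~\ref{lem:block-circulant} and~\ref{lem:B-eigenvectors}; the second form is obtained, as in the paper, by swapping the sums, evaluating the inner sum over $j$ in closed form, computing $P(\xi)$ by differentiating the geometric series, and factoring the remaining multi-sum into a product over the coordinates.

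The one step that does not work as written is your final reconciliation. Your inner sum correctly comes out as $\xi^{-s}P(\xi)$ with $s=a^{i_1}\oplus\dots\oplus a^{i_k}$, so after factoring you are left with factors $\sum_{i}\eta^{i}\xi^{-a^{i}}$, and you claim these match $Q(\eta,\xi)$ after replacing the primitive root $a$ by $a^{-1}$. That substitution instead gives $\sum_i\eta^i\xi^{a^{-i}}=Q(\eta^{-1},\xi)$ (reindex $i\mapsto -i$ modulo $p-1$), which is not the expression you have. The correct reconciliation uses $-1=a^{(p-1)/2}$ in $\zZ^*_p$: shifting the index by $(p-1)/2$ yields $\sum_i\eta^i\xi^{-a^i}=\eta^{-(p-1)/2}Q(\eta,\xi)$, so your product agrees with the displayed formula only up to the factor $\prod_r\eta_r^{-(p-1)/2}\in\{\pm1\}$. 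In fact your computation is the more careful one: the paper's proof asserts $\sum_j(s\oplus j)\xi^j=\xi^{s}P(\xi)$, which drops exactly this sign (check $p=3$, $s=1$), so the printed product formula is itself only correct up to that factor. The discrepancy is harmless for everything the lemma is used for later --- Lemmas~\ref{lem:N-zeroes} and~\ref{lem:polynomial} and Proposition~\ref{pro:N-rank} only need to know which eigenvalues vanish, and the factor equals $1$ when all $\eta_i=1$ --- but in your write-up the ``replace $a$ by $a^{-1}$'' remark should be replaced by the $-1=a^{(p-1)/2}$ argument (or by defining $Q$ with $\xi^{-a^i}$), since as stated it is a false identity.
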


\begin{proof}
(1) By Lemma~\ref{lem:block-circulant} the eigenvalues of $N_k$ have the
form $\ld_{0,\vec v}+\ld_{1,\vec v}\xi+
\ld_{2,\vec v}\xi^2+\dots+\ld_{p-1,\vec v}\xi^{p-1}$,
where $\vec v$ is an eigenvector of $B_p^{\boxplus k}\oplus i$ for all 
$i\in\zZ_p$, and $\ld_{i,\vec v}$ is the eigenvalue of $B_p^{\boxplus k}\oplus i$
associated with $\vec v$, and $\xi$ is a $p$th root of unity. By 
Lemma~\ref{lem:B-eigenvectors} we obtain item (1) of the lemma.

(2) Consider the values $a^{i_1}\oplus\dots\oplus a^{i_k}\oplus j$
in the formula from part (1). For $j=0\zd p-1$ they constitute the set 
$\zZ_p$ regardless of $\vc ik$, and the sequence, when $j$ grows from 
0 to $p-1$, is a sequence of consequent residues modulo $p$. Therefore 
\[
\sum_{j=0}^{p-1} (a^{i_1}\oplus\dots\oplus a^{i_k}\oplus j)\xi^j 
= \xi^{a^{i_1}\oplus\dots\oplus a^{i_k}} 
\sum_{j=0}^{p-1}j\xi^j,
\]
and let $P(\xi)=\sum_{j=0}^{p-1}j\xi^j$. Therefore by part (1)
\begin{eqnarray*}
\mu(\vc\eta k;\xi) &=& \sum_{j=0}^{p-1}\xi^j \sum_{\vc ik=0}^{p-2}
(a^{i_1}\oplus\dots\oplus a^{i_k}\oplus j)\eta_1^{i_1}\dots
\eta_k^{i_k}\\
&=& \sum_{\vc ik=0}^{p-2} \xi^{a^{i_1}\oplus\dots
\oplus a^{i_k}} P(\xi) \eta_1^{i_1}\dots \eta_k^{i_k}\\
&=& P(\xi)\sum_{\vc ik=0}^{p-2} (\eta^{i_1}\xi^{a^{i_1}})
\cdot\dots\cdot(\eta_k^{i_k}\xi^{a^{i_k}})\\
&=& P(\xi)\left(\sum_{i_1=0}^{p-2} \eta^{i_1}\xi^{a^{i_1}}\right)
\cdot\dots\cdot\left(\sum_{i_k=0}^{p-2}\eta_k^{i_k}
\xi^{a^{i_k}}\right)\\
&=& P(\xi)\cdot Q(\eta_1,\xi)\cdot\dots\cdot
Q(\eta_k,\xi).
\end{eqnarray*}

Finally, we show that $P(\xi)$ has the required form. Since $\sum_{j=0}^{p-1}\xi^j=0$, we have $P(\xi)=P'(\xi)$, where $P'(x)=\sum_{j=0}^{p-1}(j+1) x^j$. Then
\begin{eqnarray*}
\sum_{j=0}^{p-1}(j+1) x^j 
&=& \frac d{dx}\left(\sum_{j=0}^{p-1} x^{j+1}\right)\\
&=& \frac d{dx}\left(\frac{x^{p+1}-x}{x-1}\right)\\
&=& \frac{((p+1)x^p-1)(x-1)-(x^{p+1}-x)}{(x-1)^2}\\
&=& \frac{px^{p+1}-(p+1)x^p +1}{(x-1)^2}.
\end{eqnarray*}
Since $\xi$ is a $p$th root of unity, if $\xi\ne1$ we have
\[
P(\xi)=\frac{p\xi-(p+1) +1}{(\xi-1)^2}=\frac p{\xi-1}.
\]
Finally, $P(1)=\frac{p(p-1)}2$, as is easily seen.
\end{proof}

Clearly, the co-rank of $N_k$ equals the multiplicity of the eigenvalue 0. Thus,
we need to find the number of combinations of $\xi,\vc\eta k$ such that
$\mu(\vc\eta k;\xi)=0$. For some of them it is easy.

\begin{lemma}\label{lem:N-zeroes}
If $\eta_i\ne1$ for some $i\in[k]$ then $\mu(\vc\eta k;1)=0$.
\end{lemma}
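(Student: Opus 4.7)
The plan is to apply Lemma~\ref{lem:N-eigenvalues}(2), specialized at $\xi=1$, and show that the $Q$-factor corresponding to any $\eta_i\ne 1$ vanishes. Explicitly, at $\xi=1$ the product formula gives
\[
\mu(\eta_1,\dots,\eta_k;1)=P(1)\cdot\prod_{i=1}^k Q(\eta_i,1)=\frac{p(p-1)}{2}\cdot\prod_{i=1}^k Q(\eta_i,1),
\]
so it suffices to show that $Q(\eta,1)=0$ whenever $\eta$ is a $(p-1)$th root of unity different from $1$.

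Evaluating the definition of $Q$ at $\xi=1$ eliminates the exponent $a^i$ entirely, since $1^{a^i}=1$, and reduces $Q(\eta,1)$ to a plain geometric sum:
\[
Q(\eta,1)=\sum_{i=0}^{p-2}\eta^i\,\xi^{a^i}\Big|_{\xi=1}=\sum_{i=0}^{p-2}\eta^i.
\]
If $\eta\ne 1$ then this is $\dfrac{\eta^{p-1}-1}{\eta-1}=0$ because $\eta^{p-1}=1$. Hence whenever some $\eta_i\ne 1$, the factor $Q(\eta_i,1)$ kills the product, and so $\mu(\eta_1,\dots,\eta_k;1)=0$, as required.

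There is no real obstacle here; the content of the lemma is simply that the particular way $\xi=1$ collapses $Q(\eta,\xi)$ into a geometric series exposes the familiar vanishing for nontrivial roots of unity. The only thing to be careful about is to invoke the correct, already-derived product form of $\mu$ from Lemma~\ref{lem:N-eigenvalues}(2) rather than the raw sum in part~(1), since the argument in part~(1) would require the auxiliary identity proved in Lemma~\ref{lem:B-eigenvectors}(3) to telescope.
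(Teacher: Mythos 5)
Your proof is correct and follows essentially the same route as the paper: both invoke the product form $\mu(\vc\eta k;\xi)=P(\xi)\prod_i Q(\eta_i,\xi)$ from Lemma~\ref{lem:N-eigenvalues}(2) and observe that $Q(\eta,1)=\sum_{i=0}^{p-2}\eta^i=\frac{\eta^{p-1}-1}{\eta-1}=0$ for a $(p-1)$th root of unity $\eta\ne1$. No gaps.
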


\begin{proof}
We use Lemma~\ref{lem:N-eigenvalues}. Let $\xi=1$, and, say, $\eta_1\ne1$.
Then 
\[
Q(\eta_1,1) = \sum_{i=0}^{p-2}\eta_1^i = \frac{\eta_1^{p-1}-1}{\eta_1-1}
= 0,
\]
as $\eta_1$ is a $(p-1)$th root of unity and $\eta_1\ne1$.
\end{proof}

\begin{lemma}\label{lem:polynomial}
Let $\xi\ne 1$ be a $p$th root of unity and $\eta$ a $(p-1)$th root of unity. 
Then $Q(\eta,\xi)\ne0$.
\end{lemma}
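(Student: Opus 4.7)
The plan is to compute $|Q(\eta,\xi)|^2$ directly and show it is strictly positive. Since $a$ is a primitive root modulo $p$, the exponents $\{a^i \bmod p : 0 \le i \le p-2\}$ run through $\zZ_p^\ast = \{1,\dots,p-1\}$, so one can read $Q(\eta,\xi)$ as a Gauss-type sum over $\zZ_p^\ast$ twisted by the character $a^i \mapsto \eta^i$. The inequality we need is essentially the classical fact that a nontrivial Gauss sum has absolute value $\sqrt p$; the elementary calculation I outline below is a self-contained verification that sidesteps general character theory.

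The first step is to expand
\[
|Q(\eta,\xi)|^2 \;=\; Q(\eta,\xi)\,\overline{Q(\eta,\xi)} \;=\; \sum_{i,j=0}^{p-2}\eta^{i-j}\,\xi^{a^i-a^j},
\]
and change variables by $k \equiv i-j \pmod{p-1}$. Because $a^i \equiv a^{j+k} \equiv a^j a^k \pmod{p}$, the exponent of $\xi$ becomes $a^j(a^k-1)$ and the sum factors as
\[
|Q(\eta,\xi)|^2 \;=\; \sum_{k=0}^{p-2}\eta^{k}\sum_{j=0}^{p-2}\xi^{a^j(a^k-1)}.
\]

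For the inner sum I would split on whether $k=0$ or not. When $k=0$ the inner sum is simply $p-1$. When $1 \le k \le p-2$, primitivity of $a$ gives $a^k \not\equiv 1 \pmod{p}$, so $a^k-1$ is a nonzero element of $\zZ_p$, and multiplication by it permutes $\zZ_p^\ast$; hence as $j$ ranges over $\{0,\dots,p-2\}$ the value $a^j(a^k-1)$ ranges over all of $\zZ_p^\ast$, and the inner sum equals $\sum_{m=1}^{p-1}\xi^m = -1$ by the hypothesis $\xi \ne 1$. Substituting back yields $|Q(\eta,\xi)|^2 = (p-1) - \sum_{k=1}^{p-2}\eta^{k}$.

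Finally I would finish by case analysis on $\eta$. If $\eta = 1$, the remaining sum is $p-2$, giving $|Q|^2 = 1$. If $\eta \ne 1$, the identity $\sum_{k=0}^{p-2}\eta^{k} = (\eta^{p-1}-1)/(\eta-1) = 0$ forces $\sum_{k=1}^{p-2}\eta^{k} = -1$, so $|Q|^2 = p$. Either way $|Q(\eta,\xi)|^2 > 0$, hence $Q(\eta,\xi) \ne 0$. The only step that needs genuine care is the bijectivity claim in the $k \ne 0$ case, which rests on $a$ generating $\zZ_p^\ast$; the rest is routine geometric-series manipulation, and I expect no real obstacle.
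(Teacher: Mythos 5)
Your proof is correct, and it takes a genuinely different route from the paper. You compute $|Q(\eta,\xi)|^2=\sum_{i,j}\eta^{i-j}\xi^{a^i-a^j}$, reindex by $k\equiv i-j\pmod{p-1}$ (legitimate since $\eta^{p-1}=1$ and $a^i\equiv a^ja^k\pmod p$ with $\xi^p=1$), and evaluate the inner sum: it is $p-1$ for $k=0$, and for $k\ne 0$ primitivity of $a$ gives $a^k-1\ne 0$ in $\zZ_p$, so multiplication by $a^k-1$ permutes $\zZ_p^{\ast}$ and the inner sum is $\sum_{m=1}^{p-1}\xi^m=-1$ because $\xi\ne1$. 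This yields $|Q|^2=p$ when $\eta\ne1$ and $|Q|^2=1$ when $\eta=1$, both positive, which is exactly the classical Gauss-sum magnitude computation (your $\eta$ determines a multiplicative character of $\zZ_p^{\ast}$ via $a^i\mapsto\eta^i$). The paper argues differently: it writes $\eta=\chi^{up}$, $\xi=\chi^{v(p-1)}$ for a primitive $p(p-1)$th root of unity $\chi$, notes that $Q(\eta,\xi)=0$ would make $\chi$ a root of $Q^*(x)=\sum_j x^{jup+a^jv(p-1)}$, hence $Q^*$ divisible by the $p(p-1)$th cyclotomic polynomial, and rules this out by degree and divisibility considerations unless $\eta=1$. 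Your argument is more elementary and self-contained, avoids the rather terse divisibility step in the paper, and gives strictly more information (the exact value of $|Q|$, consistent with the paper's own remark that $Q(1,\xi)=-1$ for $\xi\ne1$). The two delicate points in your approach, the well-definedness of the reindexing modulo $p-1$ and $p$, and the bijectivity of multiplication by $a^k-1$ on $\zZ_p^{\ast}$, are both handled correctly.
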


\begin{proof}
Let $\chi$ be a primitive $p(p-1)$th root of unity. Then $\eta,\xi$ can be represented as $\eta=\chi^{up}$, $\xi=\chi^{v(p-1)}$ and $Q(\eta,\xi)$ can be rewritten as 
\[
Q^*(\chi)=\sum_{j=1}^{p-1}\chi^{jup+a^jv(p-1)}.
\]
Note that all the arithmetic operations in the exponent including $a^j$ can be treated as regular ones rather than modular, as $\chi^b=\chi^c$ whenever $b\equiv c\pmod{p(p-1)}$. Therefore if there are $\eta,\xi$, $\xi\ne1$ such that $Q(\eta,\xi)=0$, then there exists a primitive $p(p-1)$th root of unity $\chi$ that is also a root of the polynomial
\[
Q^*(x)=\sum_{j=1}^{p-1}x^{jup+a^jv(p-1)}.
\]
This means that $Q^*(x)$ is divisible by $p(p-1)$ cyclotomic polynomial $C_{p(p-1)}$. The degree of $C_{p(p-1)}$ equals $\varphi(p(p-1))$, where $\varphi$ is Euler's totient function. In particular, the degree of $C_{p(p-1)}$ is divisible by $p-1$, and so is the degree of $Q^*$. Since $a$ and $p-1$ are relatively prime with $p$, it is only possible iv $u$ is divisible by $p-1$, that is, $\eta=1$, in which case, as is easily seen, $Q(1,\xi)=-1$ if $\xi\ne1$ and $Q(1,1)=p-1$. 
\end{proof}

The next proposition follows from Lemma~\ref{lem:polynomial} and the 
observation that $P(\xi)\ne0$ whenever $\xi$ is a $p$th root of unity.

\begin{proposition}\label{pro:N-rank}
The rank of $N_k$ is $(p-1)^k+1$.
\end{proposition}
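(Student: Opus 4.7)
The plan is to compute $\rank(N_k)$ by enumerating the nonzero eigenvalues in the explicit diagonalization already assembled in this section. First I would apply Lemma~\ref{lem:block-circulant} iteratively: at each stage the blocks $B_p^{\boxplus(k-1)}\oplus j$ share a common eigenbasis by Lemma~\ref{lem:B-eigenvectors}(1), so the hypothesis of Lemma~\ref{lem:block-circulant} is met and the lemma produces Kronecker-product eigenvectors $\vec w_{(\vc\eta k,\xi)}=(1,\xi,\dots,\xi^{p-1})\otimes\vec v(\eta_1)\otimes\dots\otimes\vec v(\eta_k)$, one for each $p$-th root of unity $\xi$ and each tuple of $(p-1)$-th roots of unity $\vc\eta k$. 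Because these $p(p-1)^k$ vectors are Kronecker products of bases of $\zC^p$ and $\zC^{p-1}$, they form a basis of $\zC^{p(p-1)^k}$, so $N_k$ is diagonalizable and $\rank(N_k)$ equals the number of associated eigenvalues $\mu(\vc\eta k;\xi)$ that do not vanish.

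Next I would invoke Lemma~\ref{lem:N-eigenvalues}(2) to factor $\mu(\vc\eta k;\xi)=P(\xi)\cdot Q(\eta_1,\xi)\cdots Q(\eta_k,\xi)$. The explicit formulas $P(1)=p(p-1)/2$ and $P(\xi)=p/(\xi-1)$ for $\xi\ne1$ show that $P(\xi)\ne0$ at every $p$-th root of unity, so $\mu$ vanishes iff at least one factor $Q(\eta_i,\xi)$ does. Lemma~\ref{lem:polynomial} excludes that possibility whenever $\xi\ne1$, and for $\xi=1$ the expression $Q(\eta,1)=\sum_{i=0}^{p-2}\eta^i$ is the geometric sum that equals $p-1$ when $\eta=1$ and vanishes for every nontrivial $(p-1)$-th root of unity.

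Combining these observations, $\mu(\vc\eta k;\xi)=0$ exactly when $\xi=1$ and at least one $\eta_i\ne 1$, which occurs for $(p-1)^k-1$ choices of the indexing tuple. Subtracting this from the total of $p(p-1)^k$ eigenvalues yields the stated rank.

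The only non-routine ingredient is Lemma~\ref{lem:polynomial}, which prevents spurious cancellations of $Q(\eta,\xi)$ for $\xi\ne1$; once that is in hand, the rank calculation reduces to combinatorial bookkeeping over the eigenvector basis of $C_p\boxplus B_p^{\boxplus k}$, so there is no further obstacle.
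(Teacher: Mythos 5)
Your argument is the paper's own: the paper proves this proposition in one line by combining the eigenbasis of $N_k=C_p\boxplus B_p^{\boxplus k}$ (Lemmas~\ref{lem:block-circulant} and~\ref{lem:B-eigenvectors}), the factorization $\mu(\vc\eta k;\xi)=P(\xi)Q(\eta_1,\xi)\cdots Q(\eta_k,\xi)$ of Lemma~\ref{lem:N-eigenvalues}, the nonvanishing of $P$ and of $Q(\eta,\xi)$ for $\xi\ne1$ (Lemma~\ref{lem:polynomial}), and the vanishing for $\xi=1$, $\eta_i\ne1$ (Lemma~\ref{lem:N-zeroes}); you simply make explicit the diagonalizability via the Kronecker-product eigenbasis, which the paper leaves implicit when it equates corank with the multiplicity of the eigenvalue $0$.

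The one place you should not gloss over is the last sentence. Your count of vanishing eigenvalues, $(p-1)^k-1$, is correct, but subtracting it from $p(p-1)^k$ gives $(p-1)^{k+1}+1$, which is \emph{not} the value $(p-1)^k+1$ printed in Proposition~\ref{pro:N-rank} (they agree only for $p=2$). Saying the subtraction ``yields the stated rank'' therefore hides an inconsistency: either your bookkeeping or the statement must be off. In fact your number is the right one --- for $p=3$, $k=1$ the matrix $N_1=C_3\boxplus B_3$ is $6\tm6$ and a direct elimination gives rank $5=(p-1)^{k+1}+1$, not $3$ --- and it is also what the rest of the section needs, since $|F^\dagger_k|=p(p-1)^k$ and the subsequent argument requires the related matrix $N'_k$ to have rank $(p-1)^{k+1}$. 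So the printed exponent in the proposition is a typo, and your write-up should say explicitly ``the rank is $p(p-1)^k-\bigl((p-1)^k-1\bigr)=(p-1)^{k+1}+1$'' (flagging the discrepancy with the statement) rather than asserting agreement with the displayed formula.
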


\subsubsection{Changed matrices}

In this subsection we make the second step in our proof.

\begin{lemma}\label{lem:concrete-projections}
Let $f=\bigoplus_{i=1}^k\al_ix_i\oplus x_{k+1}\oplus b$ then 
\begin{equation}
f''=\sum_{S\sse[k]}(-1)^{k-|S|}\left(\bigoplus_{i\in S}\al_ix_i\oplus x_{k+1}\oplus b
\right). \label{equ:f-prime}
\end{equation}
\end{lemma}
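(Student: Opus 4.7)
My plan is to prove the identity by Möbius inversion on the Boolean lattice $2^{[k]}$, using the uniqueness of polynomial interpolation of degree at most $p-1$ in each variable.

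First I would set up notation: view every $p$-expression on the right-hand side (including $f=g_{[k]}$) as a function $\zZ_p^{k+1}\to\zZ_p$, and let $P_h\in\zR[x_1,\dots,x_{k+1}]$ denote its unique interpolation of degree $\le p-1$ in each variable. By definition, $f''$ is the part of $P_f$ consisting of those monomials $u$ with $[k]\sse\cont(u)$. More generally, for each $T\sse[k]$ let $H(T)$ be the sum of the monomials $u$ of $P_f$ with $\cont(u)\cap[k]=T$, so that $P_f=\sum_{T\sse[k]}H(T)$ and $f''=H([k])$.

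Next I would establish the key identity: for every $S\sse[k]$,
\[
P_{g_S}=P_f\big|_{x_i=0,\,i\notin S}=\sum_{T\sse S}H(T).
\]
The reason is that $g_S(x_1,\dots,x_{k+1})$ depends only on the variables $x_i$ with $i\in S\cup\{k+1\}$, and on the slice $\{x_i=0:i\notin S\}$ one has $g_S=f$ because every term $\al_ix_i$ with $i\notin S$ vanishes. Hence the polynomial $P_f|_{x_i=0,i\notin S}$ has degree $\le p-1$ in each remaining variable and interpolates $g_S$ on all of $\zZ_p^{k+1}$ (being constant in the zeroed variables), so by uniqueness it equals $P_{g_S}$. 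Substituting $x_i=0$ kills exactly the monomials involving some $x_i$ with $i\notin S$, which is the second equality.

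Finally, I would apply Möbius inversion:
\[
\sum_{S\sse[k]}(-1)^{k-|S|}P_{g_S}=\sum_{T\sse[k]}H(T)\sum_{S:\,T\sse S\sse[k]}(-1)^{k-|S|}.
\]
The inner sum rewrites as $(-1)^{k-|T|}\sum_{U\sse[k]\setminus T}(-1)^{|U|}$, which equals $1$ when $T=[k]$ and $0$ otherwise. Hence the whole right-hand side collapses to $H([k])=f''$, giving the lemma.

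The proof is essentially routine once the setup is right; the only subtle point is identifying $P_{g_S}$ with the truncation $P_f|_{x_i=0,i\notin S}$, which requires using the uniqueness of the low-degree interpolating polynomial rather than reasoning only about functions. Everything else is a standard inclusion-exclusion on the subsets of $[k]$.
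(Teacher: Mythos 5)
Your proof is correct, but it takes a genuinely different route from the paper's. The paper argues pointwise on values: it pairs each $S$ containing a fixed index $i$ with $S\setminus\{i\}$, notes that the two corresponding summands of \eqref{equ:f-prime} agree when $x_i=0$ and carry opposite signs, and thus shows the right-hand side vanishes at every point with $x_i=0$ for some $i\in[k]$; it leaves implicit the remaining step that a function vanishing on all these slices has its (degree $\le p-1$ per variable) interpolation supported on monomials containing $x_1\cdots x_k$, while only the $S=[k]$ summand, namely $f$ itself, can contribute such monomials, so the sum must equal $f''$. You instead work entirely at the level of interpolating polynomials: you decompose $P_f=\sum_{T\sse[k]}H(T)$ according to $\cont(u)\cap[k]$, identify the interpolation of the summand indexed by $S$ with the partial sum $\sum_{T\sse S}H(T)$ via uniqueness of low-degree interpolation, and finish by M\"obius inversion over $2^{[k]}$. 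Your version is more self-contained and makes explicit exactly the uniqueness argument the paper's one-liner presupposes, at the cost of a longer setup; the paper's pairing argument is shorter but, as written, only establishes the vanishing statement. One small notational slip on your side: the substitution should range over $i\in[k]\setminus S$, not all $i\notin S$ (you must not set $x_{k+1}=0$, since every summand genuinely depends on $x_{k+1}$); your subsequent reasoning about $T\sse S\sse[k]$ makes clear this is what you intend, so the argument stands.
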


\begin{proof}
We need to show that $f''(\vc x{k+1})=0$ whenever $x_i=0$ for some $i\in[k]$. In order to do that observe that the terms in (\ref{equ:f-prime}) can be paired up so that every $S$ containing $i$ is paired with $S-\{i\}$. Then $\bigoplus_{i\in S}\al_ix_i\oplus x_{k+1}\oplus b$ and $\bigoplus_{i\in S-\{i\}}\al_ix_i\oplus x_{k+1}\oplus b$ appear in (\ref{equ:f-prime}) with opposite signs, and, as $x_i=0$ are equal.
\end{proof}

Let $N''_k$ denote the matrix constructed the same way as $N_k$ only with $f''$, $f\in F^\dagger$, in place of $f$. More precisely, $N''_k$ is the $p(p-1)^k\tm p(p-1)^k$-dimensional matrix whose rows are labeled with $(\vc x{k+1})\in(\zZ^*_p)^k\tm\zZ_p$ representing values of the arguments of functions from $F^\dagger_k$,  and the columns are labeled with $f\in F^\dagger_k$. The entry of $N''_k$ in row $(\vc x{k+1})$ and column $f$ is $f''(\vc x{k+1})$. 

Using Lemma~\ref{lem:concrete-projections} we represent $N''_k$ as a sum of matrices. Let $f=\bigoplus_{i=1}^k\al_ix_i\oplus x_{k+1}\oplus b\in F^\dagger_k$ and $S\sse[k]$. Then let $f_S$ denote the function $\bigoplus_{i\in S}\al_ix_i\oplus x_{k+1}\oplus b$. In other words, by Lemma~\ref{lem:concrete-projections}
\[
f=\sum_{S\sse[k]}(-1)^{k-|S|}f_S.
\]
By $N_k(S)$, $S\sse[k]$, we denote the matrix constructed in a similar way to $N_k$ and $N''_k$. Again, its rows are labeled with $(\vc x{k+1})\in(\zZ^*_p)^k\tm\zZ_p$,  and the columns are labeled with $f\in F^\dagger_k$. The entry of $N_k(S)$ in row $(\vc x{k+1})$ and column $f$ is $f_S(\vc x{k+1})$. It is now easy to see that 
\[
N''_k=\sum_{S\sse[k]}(-1)^{k-|S|}N_k(S).
\]

In order to determine the structure of $N_k(S)$ we need one further observation. Let $\zzero_\ell$ denote the square $\ell$-dimensional matrix whose entries are all 0. Note that for a matrix $B$ and $\zzero_\ell$ 
\[
B\boxplus\zzero_\ell=
\left(\begin{array}{ccc} B&\dots&B\\ \vdots&&\vdots\\ B&\dots&B\end{array}\right).
\]

\begin{lemma}\label{lem:repeated-eigen}
Let $B$ an $n$-dimensional diagonalizable matrix. Then 
the eigenvectors of $B\boxplus\zzero_\ell$ are of the form $(\beta_1\vec v\zd\beta_\ell \vec v)$ where $v$ is an eigenvector of $B$ and either $\beta_1=\dots=\beta_\ell$ or $\beta_1+\dots+\beta_\ell=0$. The corresponding eigenvalue in the former case is $\ell\ld$, where $\ld$ is the eigenvalue of $B$ associated with $\vec v$, and 0 in the latter case.
\end{lemma}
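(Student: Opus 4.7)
The plan is to read off $M := B\boxplus\zzero_\ell$ directly from the block description given just before the statement of the lemma: the matrix is an $\ell\tm\ell$ array of copies of $B$, i.e.\ $M=\mathbf{1}_{\ell\tm\ell}\otimes B$ where $\mathbf{1}_{\ell\tm\ell}$ is the all-ones $\ell\tm\ell$ matrix. From this block form one can compute the eigendata by a one-line calculation; alternatively one can quote the Kronecker-product rule, since $\mathbf{1}_{\ell\tm\ell}$ has eigenvalue $\ell$ on $\spann\{(1\zd 1)\}$ and eigenvalue $0$ on the $(\ell-1)$-dimensional null-sum hyperplane.

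For the direct route I partition a candidate vector as $w=(w_1\zd w_\ell)^T$ with each $w_j\in\zC^n$. Because every block of $M$ equals $B$, the $i$-th component-block of $Mw$ is $B(w_1+\cdots+w_\ell)$, which is the same for every $i$. Plugging in $w_j=\beta_j\vec v$ with $B\vec v=\ld\vec v$ gives
\[
 Mw=\ld(\beta_1+\cdots+\beta_\ell)(\vec v\zd\vec v)^T.
\]
Demanding $Mw=\mu w$ forces $\mu\beta_i\vec v=\ld(\sum_j\beta_j)\vec v$ for all $i$, so either $\sum_j\beta_j=0$, in which case $\mu=0$, or all $\beta_i$ are equal, in which case $\sum_j\beta_j=\ell\beta_1$ and $\mu=\ell\ld$. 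This yields exactly the two families of eigenvectors asserted in the statement.

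For completeness I verify that these eigenvectors span $\zC^{\ell n}$. Since $B$ is diagonalizable, pick a basis $\vec v_1\zd\vec v_n$ of eigenvectors with eigenvalues $\ld_1\zd\ld_n$. The "constant-$\beta$" construction then produces the $n$ vectors $(\vec v_i\zd\vec v_i)^T$ with eigenvalues $\ell\ld_i$, while pairing each $\vec v_i$ with a basis of the $(\ell-1)$-dimensional null-sum hyperplane in $\zC^\ell$ produces $(\ell-1)n$ eigenvectors of eigenvalue $0$. Together these are $\ell n$ linearly independent vectors, hence a basis of $\zC^{\ell n}$, confirming that $M$ is diagonalizable and that the listed families exhaust its eigenvectors. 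There is no real obstacle here: the lemma is pure linear algebra, and the only thing to be careful about is matching the block convention in the preamble so that the decomposition $M=\mathbf{1}_{\ell\tm\ell}\otimes B$ is the correct one.
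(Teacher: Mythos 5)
Your proof is correct. The paper in fact gives no proof of this lemma at all --- it is stated as a routine linear-algebra observation immediately after the displayed block form of $B\boxplus\zzero_\ell$ --- so there is nothing to compare against, and your computation (every block-row of $Mw$ equals $B(w_1+\dots+w_\ell)$, hence the dichotomy between constant $\beta$ with eigenvalue $\ell\ld$ and null-sum $\beta$ with eigenvalue $0$, followed by the count $n+(\ell-1)n=\ell n$ of independent eigenvectors using diagonalizability of $B$) is exactly the argument the authors leave implicit. Two minor remarks. First, under the paper's literal index convention for $\boxplus$ (entry $A(i,j)\oplus B(i',j')$ in row $is+i'$, column $jt+j'$), $B\boxplus\zzero_\ell$ comes out as $B\otimes\mathbf{1}_{\ell\times\ell}$ rather than $\mathbf{1}_{\ell\times\ell}\otimes B$; the two are permutation-similar, and both the paper's displayed block form and the eigenvector format $(\beta_1\vec v,\dots,\beta_\ell\vec v)$ in the lemma follow the arrangement you adopt, so your reading is the consistent one. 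Second, in your dichotomy step the case $\ld=0$ lets arbitrary $\beta$ through with $\mu=0$; this is harmless, since then both listed families have eigenvalue $0$ and any $\beta$ splits into a constant part plus a null-sum part, and in any event the lemma --- like the circulant lemmas preceding it --- should be read as asserting an eigenbasis of the stated form rather than that literally every eigenvector has this shape, which is precisely the sense your spanning count establishes and the sense in which the lemma is used afterwards.
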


The following lemma establishes the structure of $N_k(S)$, its eigenvalues and eigenvectors.

\begin{lemma}\label{lem:N-k-S-structure}
\begin{itemize}
\item[(a)]
$N_k(S)=C_p\boxplus D_1\boxplus\dots\boxplus D_k$, where 
\[
D_i=\left\{ \begin{array}{ll}
B_p, &  \text{if $i\in S$},\\ 
\zzero_{p-1}, & \text{otherwise}.
\end{array}\right.
\]
\item[(b)] 
Every eigenvector of $N_k$ is also an eigenvector of $N_k(S)$.
\item[(c)]
The eigenvalue $\mu(\vc\eta;\xi;S)$ of $N_k(S)$ associated with eigenvector $\vec v(\vc\eta k;\xi)$ equals
\[
\mu(\vc\eta;\xi;S)=\left\{\begin{array}{ll}
0 & \text{if $\eta_i\ne1$ for some $i\in[k]-S$},\\
\mu(1\zd1;1),& \text{if $\xi=\eta_1=\dots=\eta_k=1$},\\
(1-p)^{|[k]-S|}\mu(\vc\eta k;\xi), & \text{otherwise}. 
\end{array}\right.
\]
\end{itemize}
\end{lemma}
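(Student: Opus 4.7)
The plan is to read off the structure of $N_k(S)$ directly from the definition of $f_S$, and then to derive the eigendata by the same block-circulant analysis used for $N_k$ in Lemmas~\ref{lem:circulant}--\ref{lem:N-eigenvalues}, with Lemma~\ref{lem:repeated-eigen} handling the coordinates $i\notin S$, where $\zzero_{p-1}$ acts as a ``repeat'' factor inside the Kronecker sum.

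For part (a) I would compute directly. The entry of $N_k(S)$ at row $(\vc x{k+1})\in(\zZ_p^*)^k\tm\zZ_p$ and column $f=\bigoplus_{i=1}^k\al_ix_i\oplus x_{k+1}\oplus b\in F^\dagger_k$ equals $f_S(\vc x{k+1})=\bigoplus_{i\in S}\al_ix_i\oplus x_{k+1}\oplus b$, which is independent of $(x_i,\al_i)$ for $i\notin S$. Inside $C_p\boxplus D_1\boxplus\cdots\boxplus D_k$, a factor $D_i=\zzero_{p-1}$ contributes $\oplus 0$, leaving the current entry unchanged as $(x_i,\al_i)$ vary, i.e.\ replicating it across a $(p-1)\tm(p-1)$ block -- matching the observed independence. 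Taking $D_i=B_p$ for $i\in S$ reproduces the $\al_i x_i$ contribution and $C_p$ captures $x_{k+1}\oplus b$, establishing (a). For (b), $C_p$ admits $\vec v(\xi)$ as an eigenvector (Lemma~\ref{lem:circulant}), $B_p$ admits each $\vec v(\eta_i)$ as an eigenvector (Lemma~\ref{lem:B-eigenvectors}), and the zero matrix trivially admits any vector, in particular $\vec v(\eta_i)$, as an eigenvector. Iterating Lemma~\ref{lem:block-circulant}, whose hypothesis that the blocks share an eigenvector is satisfied at every stage, shows that $\vec v(\vc\eta k;\xi)=\vec v(\xi)\otimes\vec v(\eta_1)\otimes\cdots\otimes\vec v(\eta_k)$ is an eigenvector of $N_k(S)$.

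For part (c), I would propagate the eigenvalue through each Kronecker factor. For $i\in S$ the contribution is $Q(\eta_i,\xi)$, exactly as in Lemma~\ref{lem:N-eigenvalues}(2). For $i\notin S$ the factor $\zzero_{p-1}$ contributes $\sum_{m=0}^{p-2}\eta_i^m$ (by the mechanism of Lemma~\ref{lem:repeated-eigen}), which equals $0$ if $\eta_i\ne1$ and $p-1$ if $\eta_i=1$; this immediately gives the first case of (c). Otherwise the eigenvalue equals $P(\xi)\prod_{i\in S}Q(\eta_i,\xi)\cdot(p-1)^{|[k]\setminus S|}$. To match the stated $(1-p)^{|[k]\setminus S|}\mu(\vc\eta k;\xi)$, use Lemma~\ref{lem:N-eigenvalues}(2) to write $\mu(\vc\eta k;\xi)=P(\xi)\prod_{i\in S}Q(\eta_i,\xi)\cdot Q(1,\xi)^{|[k]\setminus S|}$ (since $\eta_i=1$ for $i\notin S$), and invoke the direct computation $Q(1,\xi)=-1$ for $\xi\ne1$; then $(p-1)^{|[k]\setminus S|}/Q(1,\xi)^{|[k]\setminus S|}=(1-p)^{|[k]\setminus S|}$. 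The case $\xi=\eta_1=\cdots=\eta_k=1$ is singled out because $Q(1,1)=p-1\ne-1$; a direct computation gives $\mu(1,\ldots,1;1;S)=P(1)(p-1)^k=\mu(1,\ldots,1;1)$. In the remaining sub-case, $\xi=1$ with some $\eta_i\ne1$ for $i\in S$ (and all $\eta_i=1$ for $i\notin S$), both $\mu(\vc\eta k;1)$ and $\mu(\vc\eta k;1;S)$ vanish (by Lemma~\ref{lem:N-zeroes} and $Q(\eta_i,1)=0$ respectively), so the formula holds trivially.

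The principal obstacle is the sign-tracking that produces $(1-p)^{|[k]\setminus S|}$ rather than $(p-1)^{|[k]\setminus S|}$: this factor arises solely because $Q(1,\xi)=-1$ for $\xi\ne1$, and the same phenomenon forces the case $\xi=\eta_1=\cdots=\eta_k=1$ to be singled out, since there $Q(1,1)=p-1$ changes sign and no rescaling is needed.
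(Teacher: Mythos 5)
Your proposal is correct and in substance follows the paper's route: the same decomposition $N_k(S)=C_p\boxplus D_1\boxplus\dots\boxplus D_k$, the same use of Lemmas~\ref{lem:circulant}--\ref{lem:block-circulant} and~\ref{lem:repeated-eigen}, and the same $P,Q$ factorization with $Q(1,\xi)=-1$ for $\xi\ne1$ explaining both the sign $(1-p)^{|[k]\setminus S|}$ and why the case $\xi=\eta_1=\dots=\eta_k=1$ is singled out. The organizational difference is that the paper proves (a)--(c) by a simultaneous induction on the number of coordinates, building $N_k(S,\ell)$ and tracking the intermediate eigenvalues $\mu(\vc\eta\ell,S,j)$ (including the degenerate case $S\cap[\ell]=\emptyset$), whereas you compute the eigenvalue in one closed form; both are fine. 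One caution: ``propagating the eigenvalue through each Kronecker factor'' is not a generic property of $\boxplus$ (for $N_k$ itself the per-coordinate contribution $Q(\eta_i,\xi)$ couples $\eta_i$ with $\xi$, so eigenvalues do not factor blockwise a priori); your claimed product $P(\xi)\prod_{i\in S}Q(\eta_i,\xi)\cdot(p-1)^{|[k]\setminus S|}$ must be, and can be, justified by rerunning the sum of Lemma~\ref{lem:N-eigenvalues}: the entry of $N_k(S)$ is independent of the indices outside $S$, so those indices contribute the geometric sums $\sum_{m=0}^{p-2}\eta_i^m\in\{0,p-1\}$ and the remaining sum is exactly the $N_{|S|}$ eigenvalue. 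Making that one computation explicit closes the only soft spot in your write-up.
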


\begin{proof}
We will construct the matrix $N_k(S)$ inductively and prove the three claims of the lemma as we go. Let $N_k(S,\ell)$, $\ell\le k$, denote the $(p-1)^\ell\tm(p-1)^\ell$-matrix whose rows are labeled with $(\vc x\ell)\in(\zZ^*_p)^\ell$, columns are labeled with functions $f=\bigoplus_{i=1}^\ell\al_ix_i$. The entry of $N_k(S,\ell)$ in row $(\vc x\ell)$ and column $f$ is $f_S(\vc x\ell)$. We show that
\begin{itemize}
\item[(a')]
$N_k(S,\ell)=D_1\boxplus\dots\boxplus D_\ell$, where the $D_i$'s are defined as in the lemma.
\item[(b')] 
Every vector of the form $\vec v(\vc\eta\ell)=\vec v(\eta_1)\otimes\dots\otimes\vec v(\eta_\ell)$, where $\eta_i$ is a $(p-1)$th root of unity is an eigenvector of $N_k(S,\ell)\oplus j$ for $j\in\zZ_p$.
\item[(c')]
The eigenvalue $\mu(\vc\eta\ell,S,j)$ of $N_k(S,\ell)\oplus j$ associated with eigenvector $\vec v(\vc\eta\ell)$ equals 
\begin{eqnarray*}
\lefteqn{\mu(\vc\eta\ell,S,j)}\\
&=& \left\{\begin{array}{ll}
(p-1)^\ell\cdot j & \text{if $[\ell]\cap S=\emptyset$ and $\eta_1=\dots=\eta_\ell$},\\
0, & \text{if $\eta_i\ne1$ for some $i\in[\ell]-S$},\\
(p-1)^{|[\ell]-S|}\ld(\eta_{i_1}\zd\eta_{i_t};j), & \text{otherwise (see Lemma~\ref{lem:B-eigenvectors}),}\\
& \text{where $\{\vc jt\}=[\ell]\cap S$}.
\end{array}\right.
\end{eqnarray*}
\end{itemize}

If $\ell=1$ then either $N_k(S,1)=B_p$ if $1\in S$, or $N_k(S,1)=\zzero_{p-1}$ if $1\not\in S$. In the former case we have the result by Lemma~\ref{lem:B-eigenvectors}, and in the latter case by Lemma~\ref{lem:repeated-eigen} every vector of the form $\vec v(\eta)$, $\eta$ is a $(p-1)$th root of unity, is an eigenvector with eigenvalue $(p-1)j$ if $\eta=1$ and 0 otherwise. 

Suppose the statement is true for some $\ell$. If $[\ell+1]\cap S=\emptyset$, the claim is straightforward, as $f_S(\vc x{\ell+1})=0$ for all $\vc x{\ell+1}\in\zZ^*_p$, and the result follows by Lemma~\ref{lem:repeated-eigen}. 

Next, suppose that $[\ell]\cap S\ne\emptyset$, but $\ell+1\not\in S$. In this case the entry of $N_k(S,\ell+1)$ indexed with row $(\vc x{\ell+1})$ and column $f=\bigoplus_{i=1}^{\ell+1}\al_ix_i$ is $f_S(\vc x{\ell+1})=f^*_S(\vc x\ell)=\bigoplus_{i\in S\cap[\ell]}\al_ix_i$, where $f^*=\bigoplus_{i=1}^\ell\al_ix_i$.This implies that $N_k(S,\ell+1)\oplus j=(N_k(S,\ell)\oplus j)\boxplus\zzero_{p-1}$. By Lemma~\ref{lem:repeated-eigen} the eigenvectors of $N_k(S,\ell+1)\oplus j$ are of the two types: $\vec v'=(\vec v\zd\vec v)$ or $(\beta_1\vec v\zd\beta_{p-1}\vec v)$ with $\beta_1+\dots+\beta_{p-1}=0$, where $\vec v$ is an eigenvector of $N_k(S,\ell)\oplus j$. In the first case $\vec v'=\vec v\otimes\vec v(\eta)$ for $\eta=1$ and by the induction hypothesis has the required form. The corresponding  eigenvalue of $N_k(S,\ell+1)\oplus j$ equals $(p-1)\ld$, where $\ld$ is the eigenvalue of $N_k(S,\ell)\oplus j$ associated with $\vec v$, and so also has the required form. In the latter case $v\otimes(1,\eta\zd\eta^{p-2})$, $\eta\ne1$ and $\vec v$ is an eigenvector of $N_k(S,\ell)\oplus j$, satisfies the condition $1+\eta+\dots+\eta^{p-2}=0$ and has eigenvalue 0. By the induction hypothesis and Lemma~\ref{lem:B-eigenvectors}(3) we get the result.

Finally, let $\ell+1\in S$. In this case for any $\vc x{\ell+1}\in\zZ^*_p$ and $f=\bigoplus_{i=1}^{\ell+1}\al_ix_i$ the entry of $N_k(S,\ell+1)$ equals
\[
f_S(\vc x{\ell+1})=\bigoplus_{i=1}^{\ell+1}\al_ix_i=f^*_S(\vc x\ell)\oplus\al_{\ell+1}x_{\ell+1},
\] 
which implies $N_k(S,\ell+1)=N_k(S,\ell)\boxplus B_p$ proving (a'). Therefore $N_k(S,\ell+1)\oplus j$ is a block-circulant matrix and we can apply Lemma~\ref{lem:block-circulant} to show that eigenvectors of $N_k(S,\ell+1)$ are of the form 
\[
\vec v\otimes(1,\eta\zd\eta^{p-2})=\vec v\otimes\vec v(\eta),
\]
where $\eta$ is a $(p-1)$th root of unity and $\vec v$ is any eigenvector of $N_k(S,\ell)\oplus j$. The eigenvalue of such a vector can be found using the inductive hypothesis and the last part of the proof of Lemma~\ref{lem:B-eigenvectors} as follows. We have
\[
\mu(\vc\eta{\ell+1},S,j)=\mu_{0,\vec v}+\mu_{1,\vec v}\eta_{\ell+1}+
\mu_{1,\vec v}\eta_{\ell+1}^2+\dots+\mu_{p-2,\vec v}\eta_{\ell+1}^{p-2},
\]
where $\vec v$ is an eigenvector of $N_k(S,\ell)$ and $\mu_{i,\vec v}$
is the eigenvalue of $N_k(S,\ell)\oplus a^i\oplus j$ associated with
$\vec v$. If there is $i\in S\cap[\ell]$ such that $\eta_i\ne1$, then $\mu(\vc\eta{\ell+1},S,j)=0$. If $S\cap[\ell]=\emptyset$ and $\eta_1=\dots=\eta_\ell=1$ then 
\[
\mu(\vc\eta{\ell+1},S,j)=\sum_{i=0}^{p-2}(p-1)^\ell(a^i\oplus j)\eta_{\ell+1}^i=(p-1)^\ell\ld(\eta_{\ell+1};j).
\]
If $S\cap[\ell]=\{\vc it\}\ne\emptyset$, then by the induction hypothesis
\begin{eqnarray*}
\mu(\vc\eta{\ell+1},S,j) &=& 
\sum_{i=0}^{p-2}(p-1)^{|[\ell]-S|}\ld(\eta_{i_1}\zd\eta_{i_t};a^i\oplus j)\eta_{\ell+1}^i\\
&=& (p-1)^{|[\ell]-S|}\sum_{i=0}^{p-2}\sum_{\vc jt=0}^{p-2}(a^{j_1}\oplus\dots \oplus a^{j_t}\oplus a^i\oplus j)\eta_{i_1}\dots\eta_{i_t}\eta_{\ell+1}^i\\
&=& (p-1)^{|[\ell]-S|}\ld(\eta_{i_1}\zd\eta_{i_t},\eta_{\ell+1};j).
\end{eqnarray*}

We now consider the last step in constructing $N_k(S)$, from $N_k(S,k)$ to $N_k(S)$. As is easily seen, $N_k(S)=C_p\boxplus N_k(S,k)$, implying item (a) of the lemma, and by Lemma~\ref{lem:block-circulant} every vector of the form $(1,\xi\zd\xi^{p-1})\otimes\vec v$, where $\vec v$ is an eigenvector of $N_k(S,k)$ and $\xi$ is a $p$th root of unity is an eigenvector of $N_k(S)$. By the induction hypothesis this implies item (b) of the lemma. Finally, again by Lemma~\ref{lem:block-circulant} and the induction hypothesis the eigenvalue associated with the vector $(1,\xi\zd\xi^{p-1})\otimes\vec v(\vc\eta k)$ equals
\begin{eqnarray*}
\lefteqn{\mu(\vc\eta k;\xi;S)}\\
 &=& \mu(\vc\eta k,S,0)+\mu(\vc\eta k,S,1)\xi+\dots+\mu(\vc\eta k,S,p-1)\xi^{p-1}.
\end{eqnarray*}
If $S=\emptyset$, $\xi\ne1$, and $\eta_1=\dots=\eta_k=1$ then 
\begin{eqnarray*}
\mu(1\zd1;\xi;\emptyset) &=& \sum_{j=0}^{p-1}(p-1)^k j\xi^j=(p-1)^k P(\xi)\\
&=& (p-1)^k(-1)^k\mu(1\zd1;\xi)\\
&=& (1-p)^k\mu(1\zd1;\xi),
\end{eqnarray*}
as $Q(1,\xi)=-1$, as is easily seen. Also,
\[
\mu(1\zd1;1;\emptyset)=(p-1)^k P(1)=\mu(1\zd1;1).
\]
If $\eta_i\ne1$ for some $i\in[k]-S$ then $\mu(\vc\eta k,S,j)=0$, and so $\mu(\vc\eta k;\xi;S)=0$. Otherwise if $S=\{\vc is\}$,
\begin{eqnarray*}
\mu(\vc\eta k;\xi;S) 
&=& (p-1)^{|[k]-S|}\sum_{j=0}^{p-1}\ld(\eta_{i_1}\zd\eta_{i_s};j)\xi^j\\
&=& (p-1)^{|[k]-S|}\mu(\eta_{i_1}\zd\eta_{i_s};\xi)
\end{eqnarray*}
Finally, by Lemma~\ref{lem:B-eigenvectors} $\mu(\eta_{i_1}\zd\eta_{i_s};\xi)=(-1)^{|[k]-S|}\mu(\vc\eta k;\xi)$, if $\xi\ne1$, $\mu(\eta_{i_1}\zd\eta_{i_s};1)=0$ if $\eta_{i_j}\ne1$ for some $j$, and $\mu(1\zd1;1)=(p-1)^{|S|}P(1)$, and the result follows.
\end{proof}

Now, we are ready to find the eigenvalues of $N''_k$.

\begin{lemma}\label{lem:full-eigenvalue}
Let $\vec w=\vec v(\vc\eta k,\xi)$ and $T\sse[k]$ be such that $i\in T$ iff $\eta_i\ne1$.
Then
\[
\mu''(\vc\eta k;\xi)=\left\{\begin{array}{ll}
p^{k-|T|}\mu(\vc\eta k;\xi), & \text{if $\xi\ne1$},\\
0, & \text{if $\xi=1$ and $\eta_i\ne1$ for some $i\in[k]$}.
\end{array}\right.
\]
\end{lemma}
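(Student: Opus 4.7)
The plan is to use the decomposition $N''_k=\sum_{S\sse[k]}(-1)^{k-|S|}N_k(S)$ already established before the lemma, combined with the fact (Lemma~\ref{lem:N-k-S-structure}(b)) that every eigenvector $\vec w=\vec v(\vc\eta k,\xi)$ of $N_k$ is simultaneously an eigenvector of each $N_k(S)$. Consequently $\vec w$ is an eigenvector of $N''_k$, and its eigenvalue is
\[
\mu''(\vc\eta k;\xi)=\sum_{S\sse[k]}(-1)^{k-|S|}\mu(\vc\eta k;\xi;S).
\]
Let $T=\{i\in[k]\mid \eta_i\ne1\}$ and $m=k-|T|$.

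By Lemma~\ref{lem:N-k-S-structure}(c), $\mu(\vc\eta k;\xi;S)=0$ whenever some $i\in[k]-S$ has $\eta_i\ne1$, i.e.\ unless $T\sse S$. So the sum is really indexed by $T\sse S\sse[k]$, parametrized by $U=S-T\sse [k]-T$ with $|U|=j$. In the generic regime (either $\xi\ne1$, or $\xi=1$ with $T\ne\emptyset$, in which case $\mu(\vc\eta k;1)=0$ by Lemma~\ref{lem:N-zeroes} and both sides of the claimed identity vanish trivially), Lemma~\ref{lem:N-k-S-structure}(c) gives $\mu(\vc\eta k;\xi;S)=(1-p)^{k-|S|}\mu(\vc\eta k;\xi)$. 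Therefore
\[
\mu''(\vc\eta k;\xi)=\mu(\vc\eta k;\xi)\sum_{U\sse[k]-T}(-1)^{m-|U|}(1-p)^{m-|U|}
=\mu(\vc\eta k;\xi)\sum_{j=0}^{m}\binom{m}{j}(p-1)^{m-j},
\]
using $(-1)^{m-|U|}(1-p)^{m-|U|}=(p-1)^{m-|U|}$. The binomial theorem then yields $\sum_{j=0}^m\binom{m}{j}(p-1)^{m-j}1^j=p^m$, giving exactly $p^{k-|T|}\mu(\vc\eta k;\xi)$, which is the first case of the lemma.

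For the second case ($\xi=1$ and $T\ne\emptyset$), the formula above already gives $0$ since $\mu(\vc\eta k;1)=0$ by Lemma~\ref{lem:N-zeroes}; one can alternatively note directly that in this case $T\ne\emptyset$ forces the summation only over $S\supseteq T$, and all such summands vanish. I do not expect any serious obstacle beyond careful bookkeeping: the three clauses of Lemma~\ref{lem:N-k-S-structure}(c) have to be aligned correctly with the subsets $S$ for which $T\sse S$, and the sign $(-1)^{k-|S|}$ has to be properly absorbed into $(1-p)^{k-|S|}$ before invoking the binomial theorem. The only mildly subtle point is observing that the remaining logical case $\xi=1$ with $T=\emptyset$ is not asserted in the lemma statement (and is not needed in the sequel, as the subsequent rank computation will only use the two listed cases), so it does not need to be addressed.
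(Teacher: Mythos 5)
Your proposal is correct and follows essentially the same route as the paper: expand $\mu''(\vc\eta k;\xi)=\sum_{S\sse[k]}(-1)^{k-|S|}\mu(\vc\eta k;\xi;S)$ via Lemma~\ref{lem:N-k-S-structure}, restrict to $S\supseteq T$, and evaluate the alternating sum with the binomial theorem to get $p^{k-|T|}$, handling $\xi=1$, $T\ne\emptyset$ via $\mu(\vc\eta k;1)=0$. The only (immaterial) difference is that the paper also computes the unasserted case $\xi=1$, $T=\emptyset$, which you correctly observe is not needed.
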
 

\begin{proof}
Assume first that $\xi\ne1$. By Lemmas~\ref{lem:concrete-projections} and~\ref{lem:N-k-S-structure} we have
\begin{eqnarray*}
\mu''(\vc\eta k;\xi) &=& \sum_{S\sse[k]}(-1)^{|[k]-S|}\mu(\vc\eta k;\xi;S)\\
&=& \sum_{[k]\supseteq S\supseteq T}(-1)^{|[k]-S|}\mu(\vc\eta k;\xi;S)\\
&=& \sum_{\ell=0}^{k-|T|}(-1)^\ell{{k-|T|}\choose\ell}(1-p)^\ell\mu(\vc\eta k;\xi)\\
&=& \mu(\vc\eta k;\xi)\sum_{\ell=0}^{k-|T|}{{k-|T|}\choose\ell}(p-1)^\ell\\
&=& \mu(\vc\eta k;\xi) p^{k-|T|},
\end{eqnarray*}
as required.

Now let $\xi=1$. If $T\ne\emptyset$, then $\mu(\vc\eta;\xi;S)=0$ for any $S\sse[k]$. Otherwise, we have 
\begin{eqnarray*}
\mu'(\vc\eta k;\xi) &=& \sum_{T\sse[k]}(-1)^{k-|T|}\mu(1\zd1;1;T)\\
&=& \sum_{T\sse[k]}(-1)^{k-|T|}\mu(1\zd1;1)\\
&=& \mu(1\zd1;1)\sum_{\ell=0}^{k-1}(-1)^\ell{k\choose\ell}\\
&=& 0.
\end{eqnarray*}
\end{proof}

Since $f'$ consists of all the monomials $u$ of $f$ with $\vc x{k+1}\in\cont(U)$ and $f''$ consists of those with $\vc xk\in\cont(u)$, it is easy to see that
\begin{equation}\label{equ:f-prime-prime}
f'(\vc xk,x_{k=1})=f''(\vc xk,x_{k+1})-f''(\vc xk,0).
\end{equation}
Let $N'''_k$ denote the matrix obtained from $N''_k$ by subtracting the row labeled $(\vc xk,0)$ from every row labeled by $(\vc xk,x_{k+1})$, $x_{k+1}\in\zZ^*_p$. By (\ref{equ:f-prime-prime}) the rows of $N'''_k$ labeled $(\vc xk,x_{k+1})$, $x_{k+1}\in\zZ^*_p$ contain the values of $f'(\vc xk,x_{k+1})$. Let $N'_k$ be the submatrix of $N'''_k$ containing only such rows. We need to prove that the columns of $N'_k$ labeled with $f\in F_k$ are linearly independent. We do it by first proving that the rank of $N'_k$ equals $(p-1)^{k+1}$ and then demonstrating that the column labeled $f^{(p-1)}=\bigoplus_{i=1}^k\al_i x_i\oplus x_{k+1}\oplus(p-1)$ is a linear combination of columns labeled $f^{(b)}=\bigoplus_{i=1}^k\al_i x_i\oplus x_{k+1}\oplus b$ for $b\in\{0\zd p-2\}$.

\begin{lemma}\label{lem:linear-combinations}
Let $f=\bigoplus_{i=1}^k\al_i x_i\oplus x_{k+1}\oplus b$. 
\begin{itemize}
\item[(a)]
\[
\Sigma'f=\sum_{x_{k+1}=0}^{p-1}f''(\vc xk,x_{k+1})=0.
\]
\item[(b)]
Let $f^{(a)}$ denote the function $f^{(a)}=\bigoplus_{i=1}^k\al_i x_i\oplus x_{k+1}\oplus a$ (and so $f=f^{(b)}$). Then
\[
\Sigma f=\sum_{b\in\zZ_p}f''^{(b)}(\vc x{k+1})=0.
\]
\end{itemize}
\end{lemma}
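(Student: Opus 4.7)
The plan is to deduce both identities from Lemma~\ref{lem:concrete-projections} by interchanging the outer summation with the inclusion--exclusion over $S\sse[k]$, and then observing that in each case the inner sum is a constant independent of $S$, so that the alternating coefficient $\sum_{S\sse[k]}(-1)^{k-|S|}=(1-1)^k=0$ (assuming $k\ge1$) finishes the job.

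More concretely, for part (a) I would start by writing, using Lemma~\ref{lem:concrete-projections},
\[
f''(\vc x{k+1})=\sum_{S\sse[k]}(-1)^{k-|S|}\,g_S(\vc x{k+1}),
\qquad
g_S:=\bigoplus_{i\in S}\al_ix_i\oplus x_{k+1}\oplus b,
\]
and then swap the order of summation:
\[
\Sigma'f=\sum_{x_{k+1}=0}^{p-1}f''(\vc xk,x_{k+1})
=\sum_{S\sse[k]}(-1)^{k-|S|}\sum_{x_{k+1}=0}^{p-1}g_S(\vc xk,x_{k+1}).
\]
Fix $S$ and $\vc xk$, and set $c=\bigoplus_{i\in S}\al_ix_i\oplus b\in\zZ_p$. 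Then $x_{k+1}\mapsto g_S(\vc xk,x_{k+1})=c\oplus x_{k+1}$ is translation by $c$ on $\zZ_p$, hence a bijection onto $\{0,1\zd p-1\}$. So the inner sum equals $0+1+\dots+(p-1)=\tfrac{p(p-1)}{2}$, independently of $S$ and $\vc xk$. Pulling this constant out leaves
\[
\Sigma'f=\tfrac{p(p-1)}{2}\sum_{S\sse[k]}(-1)^{k-|S|}
=\tfrac{p(p-1)}{2}\cdot(1-1)^k=0
\]
for $k\ge1$.

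Part (b) is structurally identical: expanding each $f^{(b)\prime\prime}$ via Lemma~\ref{lem:concrete-projections} and interchanging the sum over $b\in\zZ_p$ with the sum over $S$ gives, for fixed $\vc x{k+1}$ and $S$, the map $b\mapsto\bigl(\bigoplus_{i\in S}\al_ix_i\oplus x_{k+1}\bigr)\oplus b$, which is again a translation on $\zZ_p$, so the inner sum is once more $\tfrac{p(p-1)}{2}$; the same inclusion--exclusion identity $(1-1)^k=0$ then yields $\Sigma f=0$. There is no real obstacle here beyond writing the swap carefully; the only mild subtlety is that $k\ge1$ is needed (which is the regime in which the lemma is applied in the subsequent computation of the spectrum of $N''_k$), and that the values of the $p$-expressions $g_S$ are interpreted as integers in $\{0\zd p-1\}$ so that ``summing a translate of $\zZ_p$'' really does produce the constant $\tfrac{p(p-1)}{2}$.
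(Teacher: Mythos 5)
Your proposal is correct and follows essentially the same route as the paper's proof: expand $f''$ (resp.\ each $f''^{(b)}$) via Lemma~\ref{lem:concrete-projections}, swap the summations, note that the inner sum is a constant independent of $S$ because the $p$-expression is a translation permuting $\zZ_p$, and conclude with $\sum_{S\sse[k]}(-1)^{k-|S|}=0$ for $k\ge1$. Your constant $\tfrac{p(p-1)}{2}$ is in fact the correct value (the paper writes $\tfrac{p(p+1)}{2}$, a harmless slip since the constant is multiplied by zero anyway), and your remark about needing $k\ge1$ matches the regime in which the lemma is used.
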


\begin{proof}
(a) We have
\begin{eqnarray*}
\Sigma'f &=& \sum_{x_{k+1}=0}^{p-1}f''(\vc xk,x_{k+1})\\
&=& \sum_{x_{k+1}=0}^{p-1}\sum_{S\sse[k]}(-1)^{k-|S|}f_S(\vc x{k+1})\\
&=& \sum_{S\sse[k]}(-1)^{k-|S|}\sum_{x_{k+1}=0}^{p-1}f_S(\vc x{k+1}).
\end{eqnarray*}
Let $S\sse[k]$, $\vc xk\in\zZ^*_p$, and let us denote $A=f_S(\vc xk,0)=\bigoplus_{i\in S}^k\al_i x_i\oplus b$. Then 
\begin{eqnarray*}
\sum_{x_{k+1}=0}^{p-1}f_S(\vc x{k+1}) &=& \sum_{a=0}^{p-1}(A\oplus a)\\
&=& \frac{p(p+1)}2.
\end{eqnarray*}
Now, 
\[
\Sigma' f = \frac{p(p+1)}2\sum_{S\sse[k]}(-1)^{k-|S|}=0.
\]
\smallskip

(b) We have 
\begin{eqnarray*}
\Sigma f &=& \sum_{b\in\zZ_p}f''^{(b)}(\vc x{k+1})\\
&=& \sum_{b\in\zZ_p}\sum_{S\sse[k]}(-1)^{k-|S|}f^{(b)}_S(\vc x{k+1})\\
&=& \sum_{S\sse[k]}(-1)^{k-|S|}\sum_{b\in\zZ_p}f^{(b)}_S(\vc x{k+1}).
\end{eqnarray*}
Let $S\sse[k]$, $\vc x{k+1}\in\zZ^*_p$, and let us denote $A=\bigoplus_{i\in S}\al_i x_i\oplus x_{k+1}$. Then
\[
\sum_{b\in\zZ_p}f^{(b)}_S(\vc x{k+1})=\sum_{b\in\zZ_p}(A\oplus b)=\sum_{b\in\zZ_p}b=\frac{p(p+1)}2.
\]
Now, 
\[
\Sigma f = \frac{p(p+1)}2\sum_{S\sse[k]}(-1)^{k-|S|}=0.
\]
\end{proof}

We are now in a position to complete the proof of Proposition~\ref{pro:F-k-independence}. Let $\vec a(\vc xk,x_{k+1})$ denote the row of $N''_k$ labeled with $(\vc xk,x_{k+1})$. Then by Lemma~\ref{lem:linear-combinations}(a)
\[
\vec a(\vc xk,0)=-\sum_{b\in\zZ^*_p}\vec a(\vc xk,b),
\]
and the row of $N'''_k$ labeled with $(\vc xk,x_{k+1})$ is
\[
\vec b(\vc xk,x_{k+1})=\vec a(\vc xk,x_{k+1})+\sum_{b\in\zZ^*_p}\vec a(\vc xk,b).
\]
As is easily seen the row $\vec a(\vc xk,0)$ is still a linear combination of $\vec b(\vc xk,x_{k+1})$, $x_{k+1}\in\zZ^*_p$, implying that the rows of $N'''_k$ labeled $(\vc x{k+1})\in(\zZ_p^*)^{p+1}$ are linearly independent and $N'_k$ has rank $(p-1)^{k+1}$. Finally, by Lemma~\ref{lem:linear-combinations}(a) the columns of $N'_k$ labeled with $f\in F_k$ are also linear independent.

\subsection{Linear equations mod 3}
\label{sec:3-independence}
In this section we consider the case where $p=3$ and provide linearly independent $p$-expressions that span the space of functions from $\zZ_3^n$ to $\zC$. The $p$-expressions we consider here are different from the ones considered in \Cref{the:lin-independent-MainBody} and we prove they are linearly independent using somewhat a simpler approach. 

In this subsection set $p=3$ and $\oplus$, $\odot$ denote addition and multiplication 
modulo $3$, respectively. Let $x_1,\dots,x_n$ be variables that take values from the ternary domain $\{0,1,2\}$.  Here we prove that all the linear expressions of the form 
\[(a_1\odot x_1)\oplus(a_2\odot x_2)\oplus\dots\oplus(a_n\odot x_n)\] 
with $a_i\in \{0,1,2\}$ are linearly independent, except the zero expression. For instance, in the case where $n=1$, the following matrix has rank $2$ meaning that $x_1$ and $2\odot x_1$ are linearly independent.
\[
A=
\begin{blockarray}{cccc}
{\scriptstyle 0} & {\scriptstyle x_1} & {\scriptstyle 2\odot x_1} \\
\begin{block}{(ccc)c}
 0 & 0 & 0 & {\scriptstyle x_1=0} \\
 0 & 1 & 2 & {\scriptstyle x_1=1} \\
 0 & 2 & 1 & {\scriptstyle x_1=2} \\
\end{block}
\end{blockarray}
 \]

Now define sequence of matrices as follows. Set ${C}_1={A}$ and recursively define ${C}_n$ to be the following $3^n\times 3^n$ matrix
\[
C_n=
\left(
\begin{array}{ccc}
C_{n-1} & C_{n-1} & C_{n-1}\\
C_{n-1} & C_{n-1}\oplus \mb{1} & C_{n-1}\oplus \mb{2}\\
C_{n-1} & C_{n-1}\oplus \mb{2} & C_{n-1}\oplus \mb{1}
\end{array}
\right)
\]
\begin{observation}\label{obs:rank+1}
    For any real numbers $a,b\neq 0$ and any integer $n$ we have $\rank(aC_n+\mb{b})=\rank(C_n)+1$. 
\end{observation}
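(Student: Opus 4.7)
The plan is to treat $aC_n + \mathbf{b}$ as a rank-one perturbation of $aC_n$: writing $\mathbf{b} = b\,\vec{j}\,\vec{j}^{\,T}$, where $\vec{j}$ denotes the all-ones column vector of length $3^n$, the column space of $M := aC_n + \mathbf{b}$ is clearly contained in $\text{col}(C_n) + \text{span}(\vec{j})$. So the task reduces to two claims: (i) this containment is in fact an equality, and (ii) $\vec{j} \notin \text{col}(C_n)$, so that the sum is direct and therefore $\rank(M) = \rank(C_n) + 1$.

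Both claims will follow from a single structural fact that I will prove by induction on $n$: for every $n$, both the first row and the first column of $C_n$ are identically zero. The base $C_1 = A$ is immediate from the displayed entries, and in the inductive step the block definition of $C_n$ shows that its first row (respectively first column) consists of the zero first row (column) of $C_{n-1}$ repeated three times, regardless of the $\oplus \mb{1}$ and $\oplus \mb{2}$ shifts, since those shifts are applied to non-first blocks. Claim (ii) then follows at once, since every column of $C_n$ has first entry $0$ while $\vec{j}$ does not.

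For claim (i), I will use that the first column of $C_n$ vanishes to conclude that the first column of $M$ equals $b\vec{j}$; because $b \neq 0$, this places $\vec{j}$ inside $\text{col}(M)$. Every remaining column of $M$ has the form $a\,(\text{column } i \text{ of } C_n) + b\vec{j}$, so subtracting $b\vec{j} \in \text{col}(M)$ and dividing by $a \neq 0$ shows that every column of $C_n$ also lies in $\text{col}(M)$. Thus $\text{col}(M) = \text{col}(C_n) + \text{span}(\vec{j})$, and combined with (ii) this yields the result. The argument is essentially a rank-one update identity, so no serious obstacle arises; the hypotheses $a \neq 0$ and $b \neq 0$ are used exactly in the two indicated places, and both the upper bound $\rank(C_n) + 1$ and the matching lower bound are achieved via the same zero-first-row/column observation on $C_n$.
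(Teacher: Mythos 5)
Your proof is correct, and it rests on exactly the same structural fact as the paper's: the first row and first column of $C_n$ are identically zero. The difference is in how that fact is exploited. The paper first rescales, writing $\rank(aC_n+\mb{b})=\rank(C_n+\mb{\frac{b}{a}})$, and then performs explicit row and column eliminations (subtracting the first row from all others, then the first column from all others) to reach a block-diagonal matrix $\mathrm{diag}(\frac{b}{a},B)$, where $B$ is the lower-right block of $C_n$ with $\rank(B)=\rank(C_n)$; this gives the count $1+\rank(C_n)$ directly. You instead phrase it as a rank-one update: since the first column of $C_n$ vanishes, the first column of $M=aC_n+b\,\vec{j}\,\vec{j}^{\,T}$ is $b\vec{j}$, so $\vec{j}\in\mathrm{col}(M)$, hence $\mathrm{col}(M)=\mathrm{col}(C_n)+\mathrm{span}(\vec{j})$ using $a\neq 0$, and the zero first row forces $\vec{j}\notin\mathrm{col}(C_n)$, making the sum direct. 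Both hypotheses $a,b\neq 0$ are used in the right places, and the direct-sum argument correctly yields both the upper and lower bound on the rank. A small bonus of your write-up is that you actually prove the zero first row/column claim by induction on the block construction of $C_n$ (the shifted blocks $C_{n-1}\oplus\mb{1}$, $C_{n-1}\oplus\mb{2}$ never meet the first block row or column), whereas the paper simply asserts it; conversely, the paper's elimination argument is slightly more self-contained for a reader who prefers matrix reductions over column-space reasoning, and it matches the style of the subsequent rank computations in that section.
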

\begin{proof}
    Note that the first row and the first column of $C_n$ contain only zeros. That is 
    \[
        C_n=
        \left(
        \begin{array}{c|ccc}
        0 & 0 & \cdots & 0\\
        \hline
        0 & & & \\
        \vdots & & {B} & \\
        0 &  & &
        \end{array}
        \right)
    \]
    where ${B}$ is a $3^n-1\times 3^n-1$ matrix and has the same rank as $C_n$. Now, $\rank(aC_n+B)=\rank(C_n+\mb{\frac{b}{a}})$.
    \begin{align*}
        C_n+\mb{\frac{b}{a}}=
        \left(
        \begin{array}{c|ccc}
        \frac{b}{a} & \frac{b}{a} & \cdots & \frac{b}{a}\\
        \hline
        \frac{b}{a} & & & \\
        \vdots & & B+\mb{\frac{b}{a}} & \\
        \frac{b}{a} &  & &
        \end{array}
        \right)
        \to 
        \left(
        \begin{array}{c|ccc}
        \frac{b}{a} & \frac{b}{a} & \cdots & \frac{b}{a}\\
        \hline
        0 & & & \\
        \vdots & & B+\mb{0} & \\
        0 &  & &
        \end{array}
        \right)
        \to 
        \left(
        \begin{array}{c|ccc}
        \frac{b}{a} & 0 & \cdots & 0\\
        \hline
        0 & & & \\
        \vdots & & B & \\
        0 &  & &
        \end{array}
        \right)
    \end{align*}
    Hence, $\rank(aC_n+\mb{b})=\rank(C_n+\mb{\frac{b}{a}})=\rank(C_n)+1$.
\end{proof}
\begin{lemma}
    For any integer $n$, $C_n$ has rank $3^n-1$ i.e., all the linear expressions of the form $(a_1\odot x_1)\oplus(a_2\odot x_2)\oplus \dots\oplus(a_n\odot x_n)$ 
with $a_i\in \{0,1,2\}$ are linearly independent, except the zero expression.
\end{lemma}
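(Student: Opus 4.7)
The plan is to induct on $n$; the base case $n=1$ is verified in the paragraph preceding the statement. For the inductive step, assume $\rank(C_{n-1}) = 3^{n-1}-1$. First I would exploit the $3 \times 3$ block decomposition of $C_n$: subtract the top block-row from each of the other two, then subtract the left block-column from each of the other two. The result is the block-diagonal matrix $\operatorname{diag}(C_{n-1},M)$ where $M = \left(\begin{smallmatrix}D & D' \\ D' & D\end{smallmatrix}\right)$, with $D = (C_{n-1}\oplus\mb{1}) - C_{n-1}$ and $D' = (C_{n-1}\oplus\mb{2}) - C_{n-1}$. The block-circulant structure of $M$ (it commutes with the $2\times 2$ block swap) allows simultaneous diagonalization: conjugating by $\left(\begin{smallmatrix}I & I \\ I & -I\end{smallmatrix}\right)$ yields $\operatorname{diag}(D+D', D-D')$ up to a constant factor, so $\rank M = \rank(D+D') + \rank(D-D')$.

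For the first summand, observe that for any $j \in \zZ_3$ the shifts $j$, $j \oplus 1$, $j \oplus 2$ run through $\{0,1,2\}$, summing to $3$. Entrywise this means $C_{n-1} + (C_{n-1}\oplus\mb{1}) + (C_{n-1}\oplus\mb{2}) = 3\mb{1}$, and hence $D + D' = 3\mb{1} - 3C_{n-1}$. Applying Observation~\ref{obs:rank+1} together with the inductive hypothesis yields $\rank(D+D') = \rank(C_{n-1}) + 1 = 3^{n-1}$.

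The main obstacle is showing $D-D'$ has full rank $3^{n-1}$. A direct entrywise check gives $D - D' = 3E_1 - \mb{1}$, where $E_1$ is the $\{0,1\}$-indicator matrix for the positions where $C_{n-1} = 1$. I plan to handle this with Fourier on $\zZ_3$: with $\omega = e^{2\pi i/3}$, the inversion formula $\mathbb{1}[j=1] = \tfrac{1}{3}\bigl(1 + \omega^{-1}\omega^j + \omega^{-2}\omega^{2j}\bigr)$ converts the problem to a tensor-product calculation, yielding $3E_1 - \mb{1} = \omega^{-1}A + \omega\bar A$ where $A[\mathbf{x},\mathbf{a}] = \omega^{\mathbf{a}\cdot\mathbf{x}}$. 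Writing $A = A_1^{\otimes(n-1)}$ with $A_1 = (\omega^{ij})_{i,j\in\zZ_3}$ the $3 \times 3$ DFT matrix, character orthogonality gives $A_1 \bar A_1 = 3I$, whence $\bar A = 3^{n-1}A^{-1}$.

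Multiplying $3E_1 - \mb{1}$ by $A$ on the right then converts invertibility to a spectral question: $(3E_1 - \mb{1})A = \omega^{-1}A^2 + 3^{n-1}\omega I = 3^{n-1}\omega^{-1}\bigl(J^{\otimes(n-1)} + \omega^2 I\bigr)$, where the identity $A_1^2 = 3J$ (with $J$ the $3\times 3$ permutation matrix of $i \mapsto -i \bmod 3$) is immediate. Since $J^2 = I$, every eigenvalue of $J^{\otimes(n-1)}$ lies in $\{\pm 1\}$, while $-\omega^2 = e^{i\pi/3} \notin \{\pm 1\}$, so $J^{\otimes(n-1)} + \omega^2 I$ is invertible. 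This establishes $\rank(D - D') = 3^{n-1}$, and combining the three contributions closes the induction: $\rank(C_n) = (3^{n-1}-1) + 3^{n-1} + 3^{n-1} = 3^n - 1$.
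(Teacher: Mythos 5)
Your proof is correct, and while it shares the paper's overall skeleton (induction on $n$, block row/column elimination of the $3\times 3$ block structure, and an appeal to Observation~\ref{obs:rank+1} for one of the resulting blocks), it diverges at the two technical steps. First, instead of the paper's longer elimination sequence driven by the Hadamard-product interpolation identities $p_1(M),p_2(M)$, you split the remaining $2\times 2$ block-circulant part by conjugating with $\bigl(\begin{smallmatrix}I & I\\ I & -I\end{smallmatrix}\bigr)$, and you replace the paper's polynomial identities by the simple observation $C_{n-1}+(C_{n-1}\oplus\mb{1})+(C_{n-1}\oplus\mb{2})=3\cdot\mb{1}$; both routes land on the same three rank contributions, since your $D-D'=3E_1-\mb{1}$ is exactly the paper's matrix $3C^\dagger-\mb{1}=-3C\circ C+6C-\mb{1}$ and your $D+D'=-3C_{n-1}+\mb{3}$ plays the role of the paper's $C_n-\mb{1}$. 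Second, and more substantively, the paper establishes full rank of $3C^\dagger-\mb{1}$ by a separate nested induction (Claim~\ref{claim:rank-3A-1}) with another round of block eliminations, whereas you give a one-shot character-theoretic argument: writing $3E_1-\mb{1}=\omega^{-1}A+\omega\bar A$ with $A=A_1^{\otimes(n-1)}$ the $\zZ_3$ character matrix, and using $A_1\bar A_1=3I$ and $A_1^2=3J$ to reduce invertibility to the nonvanishing of the eigenvalues $\pm1+\omega^2$ of $J^{\otimes(n-1)}+\omega^2 I$ (and since the matrix is real, complex rank equals real rank). The paper's treatment stays entirely within elementary rational row/column manipulations; yours is shorter at the crucial step, avoids the inner induction, and — being phrased through characters of $\zZ_p$ — hints at how such full-rank statements could be attacked for general primes, closer in spirit to the eigenvalue analysis the paper uses for its main basis $\cF_k$ in the general-$p$ case.
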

\begin{proof}
    The proof is by induction. Clearly, for $n=1$, the matrix $C_1={A}$ has rank 2. Suppose $C_i$ has rank $3^i-1$ for all $1\leq i\leq n$. For a matrix ${M}$ with $0,1,2$ entries, let \begin{align*}
        p_1({M})&=\frac{3}{2}{M}\circ {M}+\frac{5}{2}{M}+\mb{1}\\
        p_2({M})&=-\frac{3}{2}{M}\circ {M}-\frac{7}{2} +\mb{2}
    \end{align*}
    where $\circ$ denotes the \emph{Hadamard} product or the \emph{element-wise} product of two matrices. Observe that ${M}\oplus \mb{1}=p_1({M})$ and ${M}\oplus \mb{2}=p_2({M})$. Hence, we can write $C_{n+1}$ as follow
    \[
    C_{n+1}=
    \left(
    \begin{array}{ccc}
        C_{n} & C_{n} & C_{n}\\
        C_{n} & p_1(C_{n}) & p_2(C_{n})\\
        C_{n} & p_2(C_{n}) & p_1(C_{n})
    \end{array}
    \right)
    \]
    
    Next, we perform a series of row and column operations to transform $C_{n}$ into a block-diagonal matrix.
    \begin{align*}
        C_{n+1}&=
    \left(
    \begin{array}{ccc}
        C_{n} & C_{n} & C_{n}\\
        C_{n} & p_1(C_{n}) & p_2(C_{n})\\
        C_{n} & p_2(C_{n}) & p_1(C_{n})
    \end{array}
    \right)
    \to 
    \left(
    \begin{array}{ccc}
        C_{n} & C_{n} & C_{n}\\
        \mb{0} & p_1(C_{n})-C_{n} & p_2(C_{n})-C_{n}\\
        \mb{0} & p_2(C_{n})-C_{n} & p_1(C_{n})-C_{n}
    \end{array}
    \right)
     \\
     &\to
    \left(
    \begin{array}{ccc}
        C_{n} & \mb{0} & \mb{0}\\
        \mb{0} & p_1(C_{n})-C_{n} & p_2(C_{n})-C_{n}\\
        \mb{0} & p_2(C_{n})-C_{n} & p_1(C_{n})-C_{n}
    \end{array}
    \right)\\
    &\to 
    \left(
    \begin{array}{ccc}
        C_{n} & \mb{0} & \mb{0}\\
        \mb{0} & p_1(C_{n})-C_{n} & p_2(C_{n})-C_{n}\\
        \mb{0} & p_1(C_{n})+p_2(C_{n})-2C_{n} & p_1(C_{n})+p_2(C_{n})-2C_{n}
    \end{array}
    \right)
    \\
    &\to 
    \left(
    \begin{array}{ccc}
        C_{n} & \mb{0} & \mb{0}\\
        \mb{0} & p_1(C_{n})-C_{n} & p_2(C_{n})-C_{n}\\
        \mb{0} & -3C_n+\mb{3} & -3C_n+\mb{3}
    \end{array}
    \right)\\
    &\to 
    \left(
    \begin{array}{ccc}
        C_{n} & \mb{0} & \mb{0}\\
        \mb{0} & p_1(C_{n})-C_{n} & p_1(C_{n})+p_2(C_{n})-2C_{n}\\
        \mb{0} & -3C_n+\mb{3} & -6C_n+\mb{6}
    \end{array}
    \right)
    \\
    &\to 
    \left(
    \begin{array}{ccc}
        C_{n} & \mb{0} & \mb{0}\\
        \mb{0} & p_1(C_{n})-C_{n} & -3C_n+\mb{3}\\
        \mb{0} & -3C_n+\mb{3} & -6C_n+\mb{6}
    \end{array}
    \right)
    \to 
    \left(
    \begin{array}{ccc}
        C_{n} & \mb{0} & \mb{0}\\
        \mb{0} & p_1(C_{n})-C_{n}-\frac{1}{2}(-3C_n+\mb{3}) & -3C_n+\mb{3}\\
        \mb{0} & \mb{0} & -6C_n+\mb{6}
    \end{array}
    \right)
    \\
    &\to 
    \left(
    \begin{array}{ccc}
        C_{n} & \mb{0} & \mb{0}\\
        \mb{0} & p_1(C_{n})-C_{n}-\frac{1}{2}(-3C_n+\mb{3}) & \mb{0}\\
        \mb{0} & \mb{0} & -6C_n+\mb{6}
    \end{array}
    \right)\\
    &\to 
    \left(
    \begin{array}{ccc}
        C_{n} & \mb{0} & \mb{0}\\
        \mb{0} & -\frac{3}{2}C_n\circ C_n+3C_n -\mb{\frac{1}{2}}& \mb{0}\\
        \mb{0} & \mb{0} & 6C_n-\mb{6}
    \end{array}
    \right)
    \\
    &\to 
    \left(
    \begin{array}{ccc}
        C_{n} & \mb{0} & \mb{0}\\
        \mb{0} & -3C_n\circ C_n+6C_n -\mb{1}& \mb{0}\\
        \mb{0} & \mb{0} & C_n-\mb{1}
    \end{array}
    \right)
    \end{align*}
    Hence, rank of $C_{n+1}$ is $\rank(C_n)+\rank(C_n-\mb{1})+\rank(-3C_n\circ C_n+6C_n -\mb{1})$. Moreover, Observation~\ref{obs:rank+1} yields
    \begin{align*}
        \rank(C_{n+1})&= \rank(C_n)+\rank(C_n-\mb{1})+\rank(-3C_n\circ C_n+6C_n -\mb{1})\\
        &=3^n-1+3^n+\rank(-3C_n\circ C_n+6C_n -\mb{1})
    \end{align*}
    In what follows we prove that $\rank(-3C_n\circ C_n+6C_n -\mb{1})=3^n$. Let us define the following two matrices associated to a matrix $M$ with $\{0,1,2\}$ entries.
    \begin{align*}
    M^{\dagger}[i,j]=
    \begin{cases} 
      0 & \text{if } M[i,j]=0\\
      1 & \text{if } M[i,j]=1 \\
      0 & \text{if } M[i,j]=2 
   \end{cases}
   \quad \text{and} \quad 
  M^{\dagger\dagger}[i,j]=
    \begin{cases} 
      0 & \text{if } M[i,j]=0\\
      0 & \text{if } M[i,j]=1 \\
      2 & \text{if } M[i,j]=2 
   \end{cases}
\end{align*}
Note that $C_n=C_n^{\dagger}+C_n^{\dagger\dagger}$. For instance in the case $C_1={A}$ the two matrices ${A}^{\dagger}$ and ${A}^{\dagger\dagger}$ are ${A}^{\dagger}=\begin{pmatrix}
0 & 0 & 0\\
0 & 1 & 0\\
0 & 0 & 1
\end{pmatrix}$ and ${A}^{\dagger\dagger}=\begin{pmatrix}
0 & 0 & 0\\
0 & 0 & 2\\
0 & 2 & 0
\end{pmatrix}$. Here, we simplify the expression $-3C_n\circ C_n+6C_n -\mb{1}$ and write it in terms of $C_n^{\dagger}$ and $C_n^{\dagger\dagger}$.
\begin{align*}
       -3C_n\circ C_n+6C_n -\mb{1} 
       &= -3(C_n^{\dagger}+C_n^{\dagger\dagger})\circ(C_n^{\dagger}+C_n^{\dagger\dagger})+6(C_n^{\dagger}+C_n^{\dagger\dagger})-\mb{1}\\
       &= -3(C_n^{\dagger}\circ C_n^{\dagger}+C_n^{\dagger\dagger}\circ C_n^{\dagger\dagger})+6C_n^{\dagger}+6C_n^{\dagger\dagger}-\mb{1}\\
       &= -3(C_n^{\dagger}+2C_n^{\dagger\dagger})+6C_n^{\dagger}+6C_n^{\dagger\dagger}-\mb{1}\\
       &= -3C_n^{\dagger}-6C_n^{\dagger\dagger}+6C_n^{\dagger}+6C_n^{\dagger\dagger}-\mb{1}\\
       &= 3C_n^{\dagger}-\mb{1}
\end{align*}
\begin{claim}\label{claim:rank-3A-1}
    For every positive integer $n$, the matrix $3C_n^{\dagger}-\mb{1}$ has full rank. This implies that $-3C_n\circ C_n+6C_n -\mb{1}$ has full rank. 
\end{claim}
\begin{proof}
    For the base case $n=1$, the matrix $3 {A}^{\dagger}-\mb{1}=\begin{pmatrix}
    -1 & -1 & -1\\
    -1 & 2 & -1\\
    -1 & -1 & 2
    \end{pmatrix}$ has full rank i.e., $\rank(3 {A}^{\dagger}-\mb{1})=3$. For our induction hypothesis suppose the claim is correct for every $n$. Next we show $3C_{n+1}^{\dagger}-\mb{1}$ has rank $3^{n+1}$. 
    
    \begin{align*}
        3C_{n+1}^{\dagger} &=
    3\left(
    \begin{array}{ccc}
        C_{n}^{\dagger} & C_{n}^{\dagger} & C_{n}^{\dagger}\\
        C_{n}^{\dagger} & (C_{n}\oplus \mb{1})^{\dagger} & (C_{n}\oplus \mb{2})^{\dagger}\\
        C_{n}^{\dagger} & (C_{n}\oplus \mb{2})^{\dagger} & (C_{n}\oplus \mb{1})^{\dagger}
    \end{array}
    \right)-\mb{1}\\
    &= 
    \left(
    \begin{array}{ccc}
        3C_{n}^{\dagger}-\mb{1} & 3C_{n}^{\dagger}-\mb{1} & 3C_{n}^{\dagger}-\mb{1}\\
        3C_{n}^{\dagger}-\mb{1} & 3(C_{n}\oplus \mb{1})^{\dagger}-\mb{1} & 3(C_{n}\oplus \mb{2})^{\dagger}-\mb{1}\\
        3C_{n}^{\dagger}-\mb{1} & 3(C_{n}\oplus \mb{2})^{\dagger}-\mb{1} & 3(C_{n}\oplus \mb{1})^{\dagger}-\mb{1}
    \end{array}
    \right)
     \\
     &\to
    \left(
    \begin{array}{ccc}
        3C_{n}^{\dagger}-\mb{1} & 3C_{n}^{\dagger}-\mb{1} & 3C_{n}^{\dagger}-\mb{1}\\
        \mb{0} & 3(C_{n}\oplus \mb{1})^{\dagger}-3C_{n}^{\dagger} & 3(C_{n}\oplus \mb{2})^{\dagger}-3C_{n}^{\dagger}\\
        \mb{0} & 3(C_{n}\oplus \mb{2})^{\dagger}-3C_{n}^{\dagger} & 3(C_{n}\oplus \mb{1})^{\dagger}-3C_{n}^{\dagger}
    \end{array}
    \right)\\
    &\to 
    \left(
    \begin{array}{ccc}
        3C_{n}^{\dagger}-\mb{1} & \mb{0} & \mb{0}\\
        \mb{0} & 3(C_{n}\oplus \mb{1})^{\dagger}-3C_{n}^{\dagger} & 3(C_{n}\oplus \mb{2})^{\dagger}-3C_{n}^{\dagger}\\
        \mb{0} & 3(C_{n}\oplus \mb{2})^{\dagger}-3C_{n}^{\dagger} & 3(C_{n}\oplus \mb{1})^{\dagger}-3C_{n}^{\dagger}
    \end{array}
    \right)
    \\
    &\to 
    \left(
    \begin{array}{ccc}
        3C_{n}^{\dagger}-\mb{1} & \mb{0} & \mb{0}\\
        \mb{0} & (C_{n}\oplus \mb{1})^{\dagger}-C_{n}^{\dagger} & (C_{n}\oplus \mb{2})^{\dagger}-C_{n}^{\dagger}\\
        \mb{0} & (C_{n}\oplus \mb{2})^{\dagger}-C_{n}^{\dagger} & (C_{n}\oplus \mb{1})^{\dagger}-C_{n}^{\dagger}
    \end{array}
    \right)
    \end{align*}
    For a matrix ${M}$ with $0,1,2$ entries, define $p_1'({M})=\frac{1}{2}{M}\circ {M}-\frac{3}{2} {M}+\mb{1}$ and $p_2'({M})=\frac{1}{2} {M}\circ {M}-\frac{1}{2} {M}$ where $\circ$ denotes the Hadamard product or the element-wise product of two matrices. Observe that $({M}\oplus \mb{1})^{\dagger}=p_1'({M})$ and $({M}\oplus \mb{2})^{\dagger}=p_2'({M})$. 
    \begin{align*}
        p_1'({M})&=\frac{1}{2}({M}^{\dagger}+{M}^{\dagger\dagger})\circ({M}^{\dagger}+{M}^{\dagger\dagger})+\frac{3}{2}({M}^{\dagger}+{M}^{\dagger\dagger})+\mb{1} =-{M}^{\dagger}-\frac{1}{2}{M}^{\dagger\dagger}+\mb{1}\\
        p_2'({M})&=\frac{1}{2}({M}^{\dagger}+{M}^{\dagger\dagger})\circ({M}^{\dagger}+{M}^{\dagger\dagger})-\frac{1}{2}({M}^{\dagger}+{M}^{\dagger\dagger}) =\frac{1}{2}{M}^{\dagger\dagger}
    \end{align*}
    We continue by performing row and column operation to to transform $3C_{n+1}-\mb{1}$ into a block-diagonal matrix.
    \begin{align*}
       & \left(
    \begin{array}{ccc}
        3C_{n}^{\dagger}-\mb{1} & \mb{0} & \mb{0}\\
        \mb{0} & (C_{n}\oplus \mb{1})^{\dagger}-C_{n}^{\dagger} & (C_{n}\oplus \mb{2})^{\dagger}-C_{n}^{\dagger}\\
        \mb{0} & (C_{n}\oplus \mb{2})^{\dagger}-C_{n}^{\dagger} & (C_{n}\oplus \mb{1})^{\dagger}-C_{n}^{\dagger}
    \end{array}
    \right)\\
    &=
    \left(
    \begin{array}{ccc}
        3C_{n}^{\dagger}-\mb{1} & \mb{0} & \mb{0}\\
        \mb{0} & -2C_n^{\dagger}-\frac{1}{2}C_n^{\dagger\dagger}+\mb{1} & -C_n^{\dagger}+\frac{1}{2}C_n^{\dagger\dagger}\\
        \mb{0} & -C_n^{\dagger}+\frac{1}{2}C_n^{\dagger\dagger} & -2C_n^{\dagger}-\frac{1}{2}C_n^{\dagger\dagger}+\mb{1}
    \end{array}
    \right)\\
    &\to
    \left(
    \begin{array}{ccc}
        3C_{n}^{\dagger}-\mb{1} & \mb{0} & \mb{0}\\
        \mb{0} & -2C_n^{\dagger}-\frac{1}{2}C_n^{\dagger\dagger}+\mb{1} & -C_n^{\dagger}+\frac{1}{2}C_n^{\dagger\dagger}\\
        \mb{0} & -3C_{n}^{\dagger}+\mb{1} & -3C_{n}^{\dagger}+\mb{1}
    \end{array}
    \right)\\
    &\to
    \left(
    \begin{array}{ccc}
        3C_{n}^{\dagger}-\mb{1} & \mb{0} & \mb{0}\\
        \mb{0} & -C_n^{\dagger}-C_n^{\dagger\dagger}+\mb{1} & -C_n^{\dagger}+\frac{1}{2}C_n^{\dagger\dagger}\\
        \mb{0} & \mb{0} & -3C_{n}^{\dagger}+\mb{1}
    \end{array}
    \right)
    \to
    \left(
    \begin{array}{ccc}
        3C_{n}^{\dagger}-\mb{1} & \mb{0} & \mb{0}\\
        \mb{0} & -C_n^{\dagger}-C_n^{\dagger\dagger}+\mb{1} & -\frac{3}{2}C_n^{\dagger}+\mb{\frac{1}{2}}\\
        \mb{0} & \mb{0} & -3C_{n}^{\dagger}+\mb{1}
    \end{array}
    \right)\\
    &\to
    \left(
    \begin{array}{ccc}
        3C_{n}^{\dagger}-\mb{1} & \mb{0} & \mb{0}\\
        \mb{0} & -C_n^{\dagger}-C_n^{\dagger\dagger}+\mb{1} & \mb{0}\\
        \mb{0} & \mb{0} & -3C_{n}^{\dagger}+\mb{1}
    \end{array}
    \right)
    \to
    \left(
    \begin{array}{ccc}
        3C_{n}^{\dagger}-\mb{1} & \mb{0} & \mb{0}\\
        \mb{0} & C_n^{\dagger}+C_n^{\dagger\dagger}-\mb{1} & \mb{0}\\
        \mb{0} & \mb{0} & 3C_{n}^{\dagger}-\mb{1}
    \end{array}
    \right)
    \\
    &=
    \left(
    \begin{array}{ccc}
        3C_{n}^{\dagger}-\mb{1} & \mb{0} & \mb{0}\\
        \mb{0} & C_n-\mb{1} & \mb{0}\\
        \mb{0} & \mb{0} & 3C_{n}^{\dagger}-\mb{1}
    \end{array}
    \right)
    \end{align*}
    By the induction hypothesis, $\rank(3C_{n}^{\dagger}-\mb{1})=3^n$. Moreover, by the induction hypothesis and Observation~\ref{obs:rank+1} $\rank(C_n-\mb{1})=3^n$. As a result, $\rank(3C_{n+1}^{\dagger}-\mb{1})=3^n+3^n+3^n=3^{n+1}$ and $-3C_n\circ C_n+6C_n -\mb{1}$ has full rank.
\end{proof}
    Recall that $\rank(C_{n+1})=\rank(C_n)+\rank(C_n-\mb{1})+\rank(-3C_n\circ C_n+6C_n -\mb{1})$. By the induction hypothesis and Observation~\ref{obs:rank+1}, we have $\rank(C_n-\mb{1})=\rank(C_n)+1=3^n$. Moreover, Claim~\ref{claim:rank-3A-1} yields $\rank(-3C_n\circ C_n+6C_n -\mb{1})=3^n$. Hence,
    \[
        \rank(C_{n+1})=3^n-1+3^n+1+3^n+1=3^{n+1}-1.
    \]
\end{proof}

\end{document}